\theoremstyle{plain}
\newtheorem{theorem}{Theorem}[section]
\newtheorem{proposition}[theorem]{Proposition}
\newtheorem{lemma}[theorem]{Lemma}
\newtheorem{corollary}[theorem]{Corollary}
\theoremstyle{definition}
\newtheorem{remark}[theorem]{Remark}
\numberwithin{equation}{section}
\def\ba{{\mathbf{a}}}
\def\sA{{\mathsf{A}}}
\def\bb{{\mathbf{b}}}
\def\cB{{\mathcal{B}}}
\def\C{{\mathbb{C}}}
\def\cC{{\mathcal{C}}}
\def\be{{\mathbf{e}}}
\def\sF{{\mathsf{F}}}
\def\cG{{\mathcal{G}}}
\def\cH{{\mathcal{H}}}
\def\sH{{\mathsf{H}}}
\def\bk{{\mathbf{k}}}
\def\cM{{\mathcal{M}}}
\def\N{{\mathbb{N}}}
\def\sN{{\mathsf{N}}}
\def\cO{{\mathcal{O}}}
\def\bp{{\mathbf{p}}}
\def\bq{{\mathbf{q}}}
\def\cQ{{\mathcal{Q}}}
\def\R{{\mathbb{R}}}
\def\cR{{\mathcal{R}}}
\def\S{{\mathbb{S}}}
\def\cS{{\mathcal{S}}}
\def\bs{{\mathbf{s}}}
\def\sS{{\mathsf{S}}}
\def\T{{\mathbb{T}}}
\def\bT{{\mathbf{T}}}
\def\bt{{\mathbf{t}}}
\def\cU{{\mathcal{U}}}
\def\bu{{\mathbf{u}}}
\def\cV{{\mathcal{V}}}
\def\sV{{\mathsf{V}}}
\def\bv{{\mathbf{v}}}
\def\bw{{\mathbf{w}}}
\def\bW{{\mathbf{W}}}
\def\cW{{\mathcal{W}}}
\def\sW{{\mathsf{W}}}
\def\bx{{\mathbf{x}}}
\def\bX{{\mathbf{X}}}
\def\hbx{{\hat{\mathbf{x}}}}
\def\by{{\mathbf{y}}}
\def\bY{{\mathbf{Y}}}
\def\hby{{\hat{\mathbf{y}}}}
\def\Z{{\mathbb{Z}}}
\def\bz{{\mathbf{z}}}
\def\bZ{{\mathbf{Z}}}
\def\bphi{{\mbox{\boldmath$\phi$}}}
\def\opsi{{\overline{\psi}}}
\def\spin{{\{\uparrow,\downarrow\}}}
\def\ua{{\uparrow}}
\def\da{{\downarrow}}
\def\la{{\lambda}}
\def\bla{{\mbox{\boldmath$\lambda$}}}
\def\O{{\Omega}}
\def\o{{\omega}}
\def\eps{{\varepsilon}}
\def\g{{\gamma}}
\def\G{{\Gamma}}
\def\s{{\sigma}}
\def\D{{\Delta}}
\def\<{{\langle}}
\def\>{{\rangle}}
\def\Tr{\mathop{\mathrm{Tr}}\nolimits}
\def\Map{\mathop{\mathrm{Map}}}
\def\dis{\mathop{\mathrm{dis}}\nolimits}
\def\sgn{\mathop{\mathrm{sgn}}\nolimits}
\def\supp{\mathop{\mathrm{supp}}\nolimits}
\def\Im{\mathop{\mathrm{Im}}}
\def\Re{\mathop{\mathrm{Re}}}
\def\b0{{\mathbf{0}}}
\def\frah{{\left(\frac{1}{h}\right)}}
\def\0betah{{[0,\beta)_h}}
\begin{document}

\title{Superconducting phase in the BCS model with imaginary magnetic field}

\author{Yohei Kashima \medskip\\
Graduate School of Mathematical Sciences, \\
the University of Tokyo, 
Komaba, Tokyo 153-8914, Japan\\
kashima@ms.u-tokyo.ac.jp}

\date{}

\maketitle

\begin{abstract} We prove that in the reduced quartic BCS model with an
 imaginary external magnetic field a spontaneous $U(1)$-symmetry
 breaking (SSB) and an off-diagonal long range order (ODLRO) occur. The
 system is defined on a hyper-cubic lattice with periodic boundary
 conditions at positive temperature. In the free part of the Hamiltonian
 we assume the nearest-neighbor hopping. The chemical potential is fixed
 so that the free Fermi surface does not degenerate. The term
 representing the interaction between electrons' spin and the imaginary
 external magnetic field is the $z$-component of the spin operator
 multiplied by a pure imaginary parameter. The SSB and the ODLRO are
 shown in the infinite-volume limit of the thermal average over the
 full Fermionic Fock space. The magnitude of the negative coupling
 constant must be larger than a certain value so that the gap equation
 is solvable. The gap equation is different from that of the
 conventional mean field BCS model because of the presence of the
 imaginary magnetic field. By adjusting the imaginary magnetic field
 this model shows the SSB and the ODLRO in high temperature, weak
 coupling regimes where the conventional reduced BCS model does not show
 these phenomena. The proof is based on Grassmann Gaussian integral formulations
 and a double-scale integration scheme to analytically control the
 formulations.
\footnote{\textit{2010 Mathematics Subject Classification.} Primary 82D55, Secondary 81T28.\\
\textit{keywords and phrases.} The BCS model, spontaneous symmetry
 breaking, off-diagonal long range order, Grassmann integral formulation,
 tree expansion}
\end{abstract}

\tableofcontents

\section{Introduction}\label{sec_introduction}

\subsection{Introductory remarks}\label{subsec_remarks}
In 1957 (\cite{BCS}) Bardeen, Cooper and Schrieffer proposed a microscopic theory of
superconductivity which is widely known as the BCS theory today. 
As the importance of the BCS theory was recognized, many began to
mathematically verify effective approximations made in the
theory. Proving a superconductivity within the fundamental principles
proposed by Bardeen, Cooper and Schrieffer has been a stimulating topic
in mathematical physics until today. See e.g. the review article
\cite{HS} for a recent trend of the subject. Many papers concerning
mathematical physics of the BCS theory have been published since the
early stage. The historical review \cite{BP0} reported in 100th year 
after the discovery of superconductivity is enlightening. 
However, if we focus our attention on a basic simple
question whether the BCS model shows superconductivity
characterized by spontaneous $U(1)$-symmetry breaking (SSB) and 
off-diagonal long range order (ODLRO), we notice that there are
unexpectedly few mathematical results answering this question. Here the
BCS model is meant to be the Hamiltonian consisting of a kinetic part,
quadratic in Fermionic operators, describing free movements of electrons
and an interacting part, quartic in Fermionic operators, describing a
long range interaction between Cooper pairs. We also require SSB and
ODLRO to be shown in the infinite-volume limit of the thermal average
over the full Fermionic Fock space. 

In the strong coupling limit of the reduced BCS model, where the free
part is the number operator multiplied by the chemical potential only and the
interacting part is a product of the Cooper pair operators, a SSB and
an ODLRO in the above sense were proved by a $C^*$-algebraic approach by
Bru and de Siqueire Pedra in \cite{BP1}. The model considered in
\cite{BP1} is allowed to contain the Hubbard type on-site interaction as
well. The same authors also extended their $C^*$-algebraic framework to be
applicable to the BCS model having a non-constant kinetic term and gave
a mathematical sense of ODLRO in a limit of the finite systems under
periodic boundary conditions in \cite{BP2}. Before \cite{BP1} many
researchers had continued their efforts to analyze the BCS model in the 
quasi-spin formulation at positive temperature. The achievements of 
these authors are listed in the references of \cite{BP0}. Here we refer 
to the original article \cite{T} where the equivalence between
correlation functions in the strong coupling limit of the reduced BCS 
model and those in the mean field BCS model was proved in the quasi-spin formulation. See also 
\cite{DP} for an analysis of the BCS model with non-constant free
dispersion relations in the quasi-spin formulation at positive temperature.
It should be remarked that the thermal average in the quasi-spin
formulation amounts to the average
over a proper subspace of the full Fermionic Fock space. There were
also attempts to demonstrate SSB in the BCS model in Grassmann
integral formulations. When the grand canonical partition function of
the reduced BCS model is formulated into a Grassmann Gaussian integral,
a quartic Grassmann polynomial resembling the BCS interaction appears to
be integrated with a time-variable in its action. By artificially
dividing the single time-integral into a double time-integral and thus
deriving the so-called doubly reduced BCS model, Lehmann showed that a
SSB occurs in a form of the Schwinger function in \cite{Leh}. In
\cite{M} Mastropietro extended the approach based on the Grassmann
integral formulation and showed that a SSB occurs in the Schwinger
function where the interaction is of the doubly reduced BCS type
tempered by time-integration with a Kac potential.
 The insertion of the Kac potential is in effect an
interpolation between the doubly reduced BCS interaction and the reduced
BCS interaction in the Grassmannian level. The gap equation in these
studies is equal to that of the mean field BCS model.

Despite a long history of the research we can hardly find a thoroughly
explicit demonstration of SSB and ODLRO in the full BCS model.
In this situation this paper is devoted to demonstrating them in the BCS model in
a non-standard parameter region of the complex plane. We will prove that 
a SSB and an ODLRO occur in the reduced BCS model in a way fulfilling 
the above-mentioned requirements, provided a term
representing the interaction between electrons' spin and an imaginary
external magnetic field is added to the Hamiltonian. More precisely, the
interacting term with the imaginary magnetic field is given by the
$z$-component of the spin operator multiplied by a pure imaginary
parameter. The model is initially defined on a finite hyper-cubic lattice with 
periodic boundary conditions. In the free part of the Hamiltonian we
assume the nearest-neighbor hopping. We restrict the range of the
chemical potential so that the free Fermi surface does not
degenerate. The magnitude of the negative coupling constant must be
larger than a certain value so that the gap equation has a positive
solution. At the same time it must be smaller than a certain value so
that our perturbative treatments make sense. Thus, there are two kinds of
constraint on the magnitude of the negative coupling constant. It is
due to a fine tuning of the imaginary magnetic field that we can
actually choose a coupling constant satisfying both the constraints.
The gap equation is different from that of the  conventional mean field
BCS model because of the insertion of the imaginary magnetic
field. Consequently it turns out that the SSB and the ODLRO can occur
in high temperature, weak coupling regimes where these phenomena do not
show up in the conventional reduced BCS model. The presence of the
imaginary magnetic field breaks the hermiticity of the whole
Hamiltonian. However, it will be proved as a part of our main results that
the grand canonical partition function takes a real positive value in 
parameter regions where our analytical methods are valid.

From a technical view point this paper is seen to be a continuation of
the Grassmann integration approach by Lehmann (\cite{Leh}) and
Mastropietro (\cite{M}). In the Grassmannian level we divide the reduced
BCS interaction into the doubly reduced BCS interaction and the
correction term, transform the doubly reduced interaction into an
integral of  quadratic Grassmann polynomials by means of the
Hubbard-Stratonovich transformation and estimate the Grassmann Gaussian
integral having the quartic correction term in its exponent by the
tree expansion. The use of the Hubbard-Stratonovich transformation was
motivated by \cite{Leh}, \cite{M} and an important division technique of
truncated expectations which works to produce an extra inverse volume
factor was influenced by \cite{M}. However, there are notable
differences between the conclusions of this paper and those of the
preceding articles. This paper starts with Fermionic operators defined
on the Fock space and concludes the SSB and the ODLRO in the
infinite-volume limit of the full trace thermal expectations, while the
conclusions of \cite{Leh}, \cite{M} concern limit values of the
Schwinger functions on Grassmann algebras. In fact it is not yet known
how to realize the doubly reduced BCS interaction with or without the
Kac potential in a concrete form of Fermionic operators. Moreover, since our
gap equation is different from that of the mean field BCS model, the SSB
and the ODLRO take place in a parameter region where the
conventional BCS gap equation is not solvable and thus where SSB and ODLRO do not
emerge in the sense of \cite{Leh}, \cite{M}. As another
new technical aspect we show the convergence of the infinite-volume
limit of the finite-volume thermal expectations by sending the box size
to infinity without taking a subsequence. As the result the SSB and the
ODLRO can be claimed in the limit, not only in some accumulation points of
the finite-volume formulations. To prove this, it is essential to
establish the full convergent property of the Grassmann Gaussian
integral of the correction term which does not simply follow from a
uniform boundedness of the Grassmann integration and requires a detailed analysis
of the tree expansion of truncated expectations. The analysis is
performed in Subsection \ref{subsec_L_limit}.

The technical core of our construction is the estimation of the
correction to the doubly reduced BCS model. The estimation is completed
by a double-scale integration process over the Matsubara frequency. The
first integration involves a covariance with all but one time-momentum,
while the covariance in the second integration contains only one
time-momentum. We should declare in this remark that we use
Pedra-Salmhofer's type determinant bound (PS bound, \cite{PS}) to bound
the determinant of the first covariance with large Matsubara frequencies.
In general the application of the PS bound is a very efficient
alternative to a multi-scale 
integration procedure over large Matsubara frequencies. By applying it
one can prepare the input to the succeeding infrared integration process
by a simple single-scale integration. In this paper, which aims at
providing the first convincing proof of SSB and ODLRO in the BCS model
with an imaginary magnetic field, we decide to make the construction
simple and thus choose to apply the PS bound rather than go through a
self-contained but lengthy multi-scale Matsubara ultra-violet
integration. Also, in the interest of simplicity we do not perform a
multi-scale infrared integration to improve the dependency of possible
magnitude of the coupling constant on the temperature. As a consequence, 
this paper does not ensure that one can take
the temperature close to zero while keeping the magnitude of the
coupling constant positive. This may be seen as a shortcoming of this paper's
results. Since it needs to classify Grassmann polynomials
at each scale, the proof based on a multi-scale integration would be 
substantially longer. This thought together with
the hope that a simpler construction must be more convincing led us to
conclude our construction only by the double-scale integration. A
qualitative improvement of the temperature-dependency of the coupling
constant by means of a multi-scale infrared integration should 
be performed elsewhere.

The main novelty of this paper is to reveal the mathematical fact that
the insertion of the imaginary magnetic field in the BCS model makes it
possible to prove SSB and ODLRO in wide parameter regions. 
We should note that the extension of the external magnetic field to the
complex plane in many-body systems has been an important subject of
mathematical physics since the pioneering study by Lee and Yang
(\cite{YL}, \cite{LY}). 
At the same time the presence of the imaginary 
magnetic field admittedly makes it difficult to find a conventional physical
meaning of the model. It is interesting
and encouraging to know that the Lee-Yang zeros of the partition
function of the Ising model are being experimentally realized through a
time domain measurement of the spin system (\cite{PZWCDL}).
What appears particularly interesting to us in \cite{PZWCDL} is that the
partition function of the ferromagnetic Ising model with long range
interaction under an imaginary magnetic field was identified with the
coherence of a central spin coupled to the spin bath, which was
experimentally measured. See also \cite{WL}, \cite{WCPL} for the idea
behind the experiment \cite{PZWCDL}. A message from \cite{PZWCDL} is
that if the partition function with an imaginary magnetic field is
measurable, then extending the magnetic field into the complex plane is
not only a way to solve a mathematical problem but could be an analysis
of a model of the real world.  Amid the latest progress of physical
experiments our hope from a mathematical side is that the
superconducting phase in the BCS model with 
an imaginary magnetic field should be experimentally realized someday.

The contents of this paper are outlined as follows. In the rest of this
section we define the model and officially state the main results of
this paper. In Section \ref{sec_formulation} we formulate the grand
canonical partition function of the model Hamiltonian by means of the
Hubbard-Stratonovich transformation and the Grassmann Gaussian
integration. In Section \ref{sec_Grassmann_integration} we construct a
double-scale integration process in a generalized setting with the aim
of applying it to estimate the Grassmann Gaussian integral of the correction
term in the following section. In Section \ref{sec_proof_theorem} we
apply the general results obtained in the previous section to the actual
model problem and derive necessary bound properties. Then we 
show necessary convergent properties of the Grassmann Gaussian integral
of the correction term in the time-continuum, infinite-volume
limit. After these preparations we complete the proof of the main
theorem. In Appendix \ref{app_P_S_bound} we provide a short proof to 
Pedra-Salmhofer's type determinant bound used in our construction for
completeness. We also list notations which are used over multiple
sections for readers' convenience at the end of the paper.

\subsection{The model and the main results}\label{subsec_main_results}
Throughout the paper the spatial dimension is denoted by $d$. With
$L(\in \N=\{1,2,\cdots\})$ we define the spatial lattice $\G$ by
$\G:=\{0,1,2,\cdots,L-1\}^d$. For $(\bx,\s)\in\G\times \spin$ let
$\psi_{\bx\s}^*$, $\psi_{\bx\s}$ denote the Fermionic creation /
annihilation operator respectively. We impose periodic boundary
conditions on the finite-volume system. To describe the periodicity, it is
convenient to use the map $r_L:\Z^d\to \G$ which satisfies
$r_L(\bx)=\bx$ in $(\Z/L\Z)^d$ for any $\bx\in \Z^d$. For
$(\bx,\s)\in\Z^d\times \spin$ we identify $\psi_{\bx\s}^{(*)}$ with
$\psi_{r_L(\bx)\s}^{(*)}$. The free part $\sH_0$
of our Hamiltonian is defined by
\begin{align*}
\sH_0:=\sum_{\bx\in\G}\sum_{\s\in \spin}
\left((-1)^{hop}\sum_{j=1}^d(\psi_{\bx\s}^*\psi_{\bx+\be_j\s}+
\psi_{\bx\s}^*\psi_{\bx-\be_j\s})-\mu
\psi_{\bx\s}^*\psi_{\bx\s}\right),
\end{align*}
where $hop\in\{0,1\}$,
$\be_j$ $(j=1,2,\cdots,d)$ are the standard basis of $\R^d$ and
the real parameter $\mu$ is the chemical potential.
For simplicity we adopt the unit where the hopping amplitude is scaled
to be 1. We use the parameter $hop$ to treat the positive hopping and
the negative hopping at once. 
Moreover, we include the number operator multiplied by the
chemical potential in the free Hamiltonian. Also we restrict the hopping
of electrons to be only between nearest-neighbor sites. Throughout the
paper except Remark \ref{rem_dispersion_extension} we assume that
$$
\mu\in (-2d,2d)
$$ 
so that the free Fermi surface $\{\bk\in [0,2\pi)^d\ |\
(-1)^{hop}2\sum_{j=1}^d\cos k_j-\mu=0\}$ does not degenerate.
Only in Remark \ref{rem_dispersion_extension} we consider the degenerate case.  In \cite{BCS} the complex phonon-electron interaction was reduced into a
sum of product of 2 Cooper pair operators. We consider the reduced BCS
interaction with constant matrix element defined as follows.
$$
\sV:=\frac{U}{L^d}\sum_{\bx,\by\in\G}\psi_{\bx\ua}^*\psi_{\bx\da}^*\psi_{\by\da}\psi_{\by\ua},
$$
where $U$ is a real negative parameter controlling the strength of
non-local attraction between Cooper pairs. The BCS model $\sH$ is
defined by 
\begin{align*}
\sH&:=\sH_0+\sV\\
&=\sum_{\bx\in\G}\sum_{\s\in \spin}
\left((-1)^{hop}\sum_{j=1}^d(\psi_{\bx\s}^*\psi_{\bx+\be_j\s}+
\psi_{\bx\s}^*\psi_{\bx-\be_j\s})-\mu
\psi_{\bx\s}^*\psi_{\bx\s}\right)\\
&\quad +\frac{U}{L^d}\sum_{\bx,\by\in\G}\psi_{\bx\ua}^*\psi_{\bx\da}^*\psi_{\by\da}\psi_{\by\ua},\end{align*}
which is a self-adjoint operator on the
Fermionic Fock space $F_f(L^2(\G\times \spin))$.

In this paper we focus on two characteristics of superconductivity and
try to prove their existence in the infinite-volume
limit of the system.  One characteristic is spontaneous symmetry breaking (SSB). The other
characteristic of our interest is off-diagonal long range order (ODLRO). A
mathematical description of SSB is the following. 
We add a $U(1)$-symmetry breaking external field 
$$
\sF=\g \sum_{\bx\in
\G}(\psi_{\bx\ua}^*\psi_{\bx\da}^*+\psi_{\bx\da}\psi_{\bx\ua}),\quad \g\in\R
$$
to the system and observe the thermal expectation value of the pairing
operator in the limit $\g\to 0$ after taking the limit $L\to \infty$,
$$
\lim_{\g\to 0}\lim_{L\to \infty\atop L\in\N}\frac{\Tr(e^{-\beta
(\sH+\sF)}\psi_{\hbx\ua}^*\psi_{\hbx\da}^*)}{\Tr e^{-\beta (\sH+\sF)}}.
$$
Here the trace operation is taken over the Fermionic Fock space and $\beta(\in
\R_{>0})$ is the inverse temperature. If the expectation value converges
to a non-zero value, it is said that a SSB occurs in the system. This is
because the $U(1)$-gauge symmetry which the original system possesses remains
broken even after removing the symmetry-breaking external field. 
A long range correlation between Cooper pairs is explained by 
the behavior of the 4-point correlation function in the infinite-volume
limit,
$$
\lim_{L\to \infty\atop L\in\N}\frac{\Tr (e^{-\beta
\sH}\psi_{\hbx\ua}^*\psi_{\hbx\da}^*\psi_{\hby\da}\psi_{\hby\ua})}{\Tr e^{-\beta \sH}}.
$$
If the correlation function in the infinite-volume limit converges to a
non-zero value as the distance between $\hbx$ and $\hby$ goes to
infinity, the system is said to exhibit an ODLRO (see \cite{Y}). These
phenomena have been desired to be proven in the BCS model. Despite many years of research
after \cite{BCS}, the full rigorous demonstration of SSB and ODLRO in the BCS
model seems unexpectedly scarce. Amid this situation this paper is
devoted to revealing a new fact of the BCS model that a SSB and an ODLRO are
 present under an external imaginary magnetic field. More precise
 explanation of our plan is that we add the operator $i\theta \sS_z$
 $(\theta \in\R)$ to the BCS model $\sH$ and prove the existence of
 a SSB and an ODLRO. Here $\sS_z$ is the $z$-component of the spin operator.
$$
\sS_z:=\frac{1}{2}\sum_{\bx\in \G}(\psi_{\bx\ua}^*\psi_{\bx\ua}-\psi_{\bx\da}^*\psi_{\bx\da}).
$$
The term $i\theta \sS_z$ is formally interpreted as an interaction between
the imaginary magnetic field $(0,0,i\theta)$ and the electrons' spin.

Since adding $i\theta \sS_z$ to the Hamiltonian breaks hermiticity, we
do not know whether the partition function $\Tr e^{-\beta (\sH+i\theta
\sS_z+\sF)}$ remains non-zero. Thus, even the well-definedness of the 
thermal expectation is unclear. We know at least the following. 
Set
$$
\sA_1:=\psi_{\hbx\ua}^*\psi_{\hbx\da}^*,\quad 
\sA_2:=\psi_{\hbx\ua}^*\psi_{\hbx\da}^*\psi_{\hby\da}\psi_{\hby\ua}.
$$  

\begin{lemma}\label{lem_real_value_confirmation}
\begin{align*}
\Tr e^{-\beta(\sH+i\theta \sS_z+\sF)},\ \Tr(e^{-\beta(\sH+i\theta
 \sS_z+\sF)}\sA_1),\  \Tr(e^{-\beta(\sH+i\theta \sS_z+\sF)}\sA_2)\in\R
\end{align*}
and 
$$
\Tr(e^{-\beta(\sH+i\theta \sS_z+\sF)}\sA_1)= \Tr(e^{-\beta(\sH+i\theta \sS_z+\sF)}\sA_1^*).
$$
\end{lemma}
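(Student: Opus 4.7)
The plan is to combine three symmetries of the model: the spin-flip unitary $V$ on the Fermionic Fock space determined by $V\psi_{\bx\ua}V^{-1}=\psi_{\bx\da}$ and $V\psi_{\bx\da}V^{-1}=\psi_{\bx\ua}$; the global $U(1)$-gauge rotation $W$ characterized by $W\psi_{\bx\s}W^{-1}=i\psi_{\bx\s}$ and $W\psi_{\bx\s}^*W^{-1}=-i\psi_{\bx\s}^*$; and the antiunitary operator $K$ of entrywise complex conjugation in the occupation-number basis indexed by $\{0,1\}^{\G\times\spin}$. In that basis each $\psi_{\bx\s}^{(*)}$ is represented by a real matrix, and since $\sH,\sS_z,\sF,\sA_1,\sA_1^*,\sA_2$ are polynomials with real coefficients in the $\psi_{\bx\s}^{(*)}$, $K$ fixes every one of these operators; in addition $KiK^{-1}=-i$ and $\Tr(KMK^{-1})=\overline{\Tr M}$ holds for every operator $M$.

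A straightforward anticommutation computation verifies the following transformation rules: $\sH_0$, $\sV$ and $\sA_2$ are invariant under both $V$ and $W$; $V\sS_zV^{-1}=-\sS_z$ and $V$ sends each of $\sF,\sA_1,\sA_1^*$ to its negative; $W\sS_zW^{-1}=\sS_z$ and $W$ also sends each of $\sF,\sA_1,\sA_1^*$ to its negative. Conjugating $\Tr(e^{-\beta(\sH+i\theta\sS_z+\sF)}X)$ by $V$ and by $W$ and using cyclicity then yields, for every $X\in\{I,\sA_1,\sA_1^*,\sA_2\}$, two functional equations in the pair $(\theta,\g)$, and their combination forces this trace to be an even function of $\theta$.

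Next, $K(\sH+i\theta\sS_z+\sF)K^{-1}=\sH-i\theta\sS_z+\sF$ together with $KXK^{-1}=X$ gives
\[
\overline{\Tr(e^{-\beta(\sH+i\theta\sS_z+\sF)}X)}=\Tr(e^{-\beta(\sH-i\theta\sS_z+\sF)}X),
\]
and combined with the $\theta$-evenness already established this proves that each of the three traces in the lemma is real. For the remaining identity $\Tr(e^{-\beta(\sH+i\theta\sS_z+\sF)}\sA_1)=\Tr(e^{-\beta(\sH+i\theta\sS_z+\sF)}\sA_1^*)$, I would take the complex conjugate of the left side, pass to the adjoint inside the trace to get $\Tr(e^{-\beta(\sH-i\theta\sS_z+\sF)}\sA_1^*)$, remove the sign of $\theta$ via the $\theta$-evenness of the $\sA_1^*$-trace, and then use that the left side is already real. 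The step most susceptible to slips is the sign bookkeeping in the conjugation of the quartic operators $\sV$ and $\sA_2$ by $V$ and $W$, including the on-site contribution $\bx=\by$; once that is settled the rest is a mechanical combination of the three symmetry identities.
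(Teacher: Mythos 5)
Your proposal is correct and follows essentially the same route as the paper: the paper's unitaries $\cU_1$ and $\cU_2$ are precisely your spin-flip $V$ and gauge rotation $W$, and the final step of its chain of trace identities is your complex-conjugation argument via $K$. You merely reorganize the same three ingredients (evenness in $\theta$ first, then conjugation), and your treatment of the identity $\Tr(e^{-\beta(\sH+i\theta\sS_z+\sF)}\sA_1)=\Tr(e^{-\beta(\sH+i\theta\sS_z+\sF)}\sA_1^*)$ is the natural instance of what the paper summarizes as ``the other claims can be checked in the same way.''
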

\begin{proof}
Let us define the transforms $\cU_1,\cU_2$ on $F_f(L^2(\G\times \spin))$
 by 
\begin{align*}
&\cU_1\O:=\O,\\
&\cU_1\psi_{\bx_1\s_1}^*\psi_{\bx_2\s_2}^*\cdots\psi_{\bx_n\s_n}^*\O:=
\psi_{\bx_1-\s_1}^*\psi_{\bx_2-\s_2}^*\cdots\psi_{\bx_n-\s_n}^*\O,\\
&\cU_2\O:=\O,\\
&\cU_2\psi_{\bx_1\s_1}^*\psi_{\bx_2\s_2}^*\cdots\psi_{\bx_n\s_n}^*\O:=
i^n\psi_{\bx_1\s_1}^*\psi_{\bx_2\s_2}^*\cdots\psi_{\bx_n\s_n}^*\O,\\
&(\forall n\in \N,\ (\bx_j,\s_j)\in\G\times \spin\ (j=1,2,\cdots,n))
\end{align*}
and by linearity, where $\O$ is the vacuum of the Fock space. The
 transforms $\cU_1,\cU_2$ are unitary. Moreover,
\begin{align*}
\Tr e^{-\beta(\sH+i\theta \sS_z+\sF)}&=\Tr e^{-\beta\cU_1(\sH+i\theta
 \sS_z+\sF)\cU_1^*}
=\Tr e^{-\beta(\sH-i\theta \sS_z-\sF)}=\Tr e^{-\beta\cU_2(\sH-i\theta
 \sS_z-\sF)\cU_2^*}\\
&=\Tr e^{-\beta(\sH-i\theta \sS_z+\sF)}
=\overline{\Tr e^{-\beta(\sH+i\theta \sS_z+\sF)}}.
\end{align*}
Thus, $\Tr e^{-\beta(\sH+i\theta \sS_z+\sF)}\in\R$. The other claims can
 be checked in the same way. 
\end{proof}
It will be proved as a part of the main theorem that
$\Tr e^{-\beta(\sH+i\theta \sS_z+\sF)}>0$ for sufficiently
large $L$.
The next lemma tells us that it suffices to analyze the system for
$\theta \in [0,2\pi/\beta]$.

\begin{lemma}\label{lem_periodicity_with_phase}
Assume that $\theta\in\R$, $\theta'\in (-2\pi/\beta,2\pi/\beta]$ and
 $\theta=\theta'$ in $\R/\frac{4\pi}{\beta}\Z$. Then,
\begin{align*}
&\Tr e^{-\beta(\sH+i\theta \sS_z+\sF)}=\Tr e^{-\beta(\sH+i|\theta'|
 \sS_z+\sF)},\\
&\Tr (e^{-\beta(\sH+i\theta \sS_z+\sF)}\sA_j)=\Tr (e^{-\beta(\sH+i|\theta'|
 \sS_z+\sF)}\sA_j),\ (j=1,2).
\end{align*}
\end{lemma}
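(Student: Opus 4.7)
The strategy is to combine two symmetries of the relevant traces: $4\pi/\beta$-periodicity in $\theta$, and the reflection $\theta\mapsto-\theta$. Any $\theta\in\R$ has a unique representative $\theta'\in(-2\pi/\beta,2\pi/\beta]$ modulo $4\pi/\beta$, and the two symmetries together let me replace $\theta$ first by $\theta'$ and then by $|\theta'|$.

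For periodicity I would first verify that $\sS_z$ commutes with each of $\sH_0$, $\sV$, $\sF$, $\sA_1$, $\sA_2$. This is immediate: $\sH_0$ preserves the up- and down-spin number operators separately, and every pair operator $\psi_{\bx\ua}^*\psi_{\bx\da}^*$ or $\psi_{\bx\da}\psi_{\bx\ua}$ appearing in $\sV$, $\sF$, $\sA_1$, $\sA_2$ changes the numbers of up- and down-spin particles by the same amount. This gives the factorization
\begin{equation*}
e^{-\beta(\sH+i\theta\sS_z+\sF)}=e^{-\beta(\sH+\sF)}\,e^{-i\beta\theta\sS_z},
\end{equation*}
with $e^{-i\beta\theta\sS_z}$ also commuting with $\sA_j$. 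Since $2\sS_z=\sum_{\bx\in\G}(\psi_{\bx\ua}^*\psi_{\bx\ua}-\psi_{\bx\da}^*\psi_{\bx\da})$ has integer spectrum on $F_f(L^2(\G\times\spin))$, the operator $e^{-i4\pi\sS_z}$ equals the identity, so all three traces are $4\pi/\beta$-periodic in $\theta$.

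For the reflection symmetry I would reuse the composite unitary $\cU_2\cU_1$ from the proof of Lemma \ref{lem_real_value_confirmation}, which already sends $\sH+i\theta\sS_z+\sF$ to $\sH-i\theta\sS_z+\sF$. It remains to check that $\sA_1$ and $\sA_2$ are invariant under this combined conjugation. For $\sA_2$ this is manifest. For $\sA_1$, $\cU_1$ sends it to $\psi_{\hbx\da}^*\psi_{\hbx\ua}^*=-\sA_1$, while $\cU_2$ contributes a factor $i^2=-1$ from its two creation operators; the two minus signs cancel. Cyclicity of the trace then yields
\begin{equation*}
\Tr(e^{-\beta(\sH+i\theta\sS_z+\sF)}\sA_j)=\Tr(e^{-\beta(\sH-i\theta\sS_z+\sF)}\sA_j)
\end{equation*}
and the analogous identity for the partition function.

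With both symmetries in hand, I would replace $\theta$ by $\theta'$ by periodicity and then, if $\theta'<0$, replace $\theta'$ by $|\theta'|$ by reflection. The main step deserving attention is the sign bookkeeping for $\sA_1$ under the $\cU_2\cU_1$-conjugation, since $\sA_1$ is not self-adjoint; this is the only real obstacle, and it amounts to a routine verification rather than a conceptual difficulty.
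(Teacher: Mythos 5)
Your proof is correct and follows essentially the same route as the paper: periodicity with period $4\pi/\beta$ from the half-integer spectrum of $\sS_z$ together with its commutation with all other operators, and the reflection $\theta\mapsto-\theta$ via the unitaries $\cU_1,\cU_2$ already used for Lemma \ref{lem_real_value_confirmation}. The paper merely packages the reflection step as a citation of Lemma \ref{lem_real_value_confirmation} instead of redoing the conjugation, and your explicit sign check for $\sA_1$ under $\cU_2\cU_1$ is exactly the computation that citation implicitly relies on.
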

\begin{proof}
Note that the operator $\sS_z$ commutes with $\sH$, $\sF$,
 $\psi_{\hbx\ua}^*\psi_{\hbx\da}^*$, $\psi_{\hbx\da}\psi_{\hbx\ua}$ for
 any $\hbx\in\Z^d$. The trace operation over the Fock space can be
 decomposed into the sum of the trace over each eigenspace of
 $\sS_z$. Since each eigenvalue of $\sS_z$ belongs to $\frac{1}{2}\Z$,
$$
\Tr e^{-\beta(\sH+i\theta \sS_z+\sF)}=\Tr(e^{-\beta
 (\sH+\sF)}e^{-i\beta\theta \sS_z})
=\Tr(e^{-\beta (\sH+\sF)}e^{-i\beta\theta' \sS_z})
=\Tr e^{-\beta(\sH+i\theta' \sS_z+\sF)}.
$$ 
Moreover, by Lemma \ref{lem_real_value_confirmation},
$$
\Tr e^{-\beta(\sH+i\theta' \sS_z+\sF)}=\Tr e^{-\beta(\sH+i|\theta'| \sS_z+\sF)}.
$$ 
Thus, the first equality is obtained. The other equalities can be
 derived in the same way.
\end{proof}
From here we always assume that 
$$\theta \in \left[0,\frac{2\pi}{\beta}\right).$$ 
In this
paper we do not treat the case $\theta =2\pi/\beta$. As in this case
the free partition function vanishes (see Lemma
\ref{lem_free_partition_function}), we are unable to define the free
covariance which plays a central role in our analysis.

In order to officially state the main results of this paper, we should
make clear notations used in the statements.
Let $\|\cdot\|_{\R^d}$ denote the euclidean norm of $\R^d$. For a
function $f:\Z^d\times \Z^d\to \C$ and $c\in \C$ we write
$$
\lim_{\|\bx-\by\|_{\R^d}\to\infty}f(\bx,\by)=c
$$
if for any $\eps\in\R_{>0}$ there exists $r\in\R_{>0}$ such that
$|f(\bx,\by)-c|< \eps$ for any $\bx,\by\in\Z^d$ satisfying
$\|\bx-\by\|_{\R^d}\ge r$. For a proposition $P$ let $1_P$ be $1$ if $P$
is true, $0$ otherwise. We define the function $e:\R^d\to \R$ by
$$
e(\bk):=(-1)^{hop}2\sum_{j=1}^d\cos k_j-\mu,\quad \bk=(k_1,k_2,\cdots,k_d)\in\R^d,
$$
which is in fact the free dispersion relation. To estimate possible
magnitude of the coupling constant, we use the function
$g_d:(0,\infty)\to\R$ defined as follows.
\begin{align}
g_d(x):=1_{d\ge
 2}(\log(x^{-1}+1))^{\frac{d}{d+1}}x^{-\frac{1}{d+1}}
+1_{d=1}(4-\mu^2)^{-\frac{1}{2}}\log(x^{-1}+1).\label{eq_magnitude_function}
\end{align}
The main result of this paper is the following.
\newpage

\begin{theorem}\label{thm_main_theorem}
Assume that $\beta\in \R_{>0}$, $U\in\R_{<0}$. 
There exist constants $c_1(d)\in\R_{>0}$, $c_2(d)\in (0,1]$ depending
 only on $d$ such that the following statements hold true.
\begin{enumerate}[(i)]
\item\label{item_partition_positivity}
Assume that $\theta\in
     [0,2\pi/\beta)$ and 
\begin{align*}
|U|<c_2(d)\left(1+\beta^{d+3}+(1+\beta^{-1})g_d\left(\left|\frac{\theta}{2}-\frac{\pi}{\beta}\right|\right)\right)^{-2}.
\end{align*}
Then, there exists $L_0\in\N$ such that 
\begin{align*}
\Tr e^{-\beta(\sH+i\theta \sS_z+\sF)}\in\R_{>0},\quad (\forall
 L\in\N\text{ with }L\ge L_0,\ \g\in [0,1]).
\end{align*}
\item\label{item_superconducting_phase}
Assume that 
$\theta\in
     [0,2\pi/\beta)$ and 
\begin{align}
&c_1(d)(2d-|\mu|)^{1-d}\beta\left|\frac{\theta}{2}-\frac{\pi}{\beta}\right|\label{eq_sufficient_condition_superconductivity}\\
&\quad \cdot\left(1_{|\frac{\theta}{2}-\frac{\pi}{\beta}|\le \frac{1}{2}(2d-|\mu|)}+
1_{|\frac{\theta}{2}-\frac{\pi}{\beta}|> \frac{1}{2}(2d-|\mu|)}(2d-|\mu|)^{-1}\left|\frac{\theta}{2}-\frac{\pi}{\beta}\right|\right)\notag\\
&<|U|<c_2(d)\left(1+\beta^{d+3}+(1+\beta^{-1})g_d\left(\left|\frac{\theta}{2}-\frac{\pi}{\beta}\right|\right)\right)^{-2}.\notag
\end{align}
Then, there uniquely exists $\D\in \R$ such that $\D>0$ and 
\begin{align}
-\frac{2}{|U|}+\frac{1}{(2\pi)^d}\int_{[0,2\pi]^d}d\bk\frac{\sinh(\beta\sqrt{e(\bk)^2+\D^2})}{(\cos(\beta\theta/2)+\cosh(\beta\sqrt{e(\bk)^2+\D^2}))\sqrt{e(\bk)^2+\D^2}}=0.
\label{eq_gap_equation}
\end{align}
Moreover, 
\begin{align}
&\lim_{L\to \infty\atop L\in \N}\left(-\frac{1}{\beta L^d}\log(\Tr
 e^{-\beta(\sH+i\theta \sS_z)})\right)\label{eq_free_energy_density}\\
&=\frac{\D^2}{|U|}-\frac{1}{\beta(2\pi)^d}\int_{[0,2\pi]^d}d\bk\log\Bigg(2\cos\left(\frac{\beta\theta}{2}\right)e^{-\beta
 e(\bk)}\notag\\
&\qquad\qquad\qquad\qquad\qquad\qquad\qquad +e^{\beta(\sqrt{e(\bk)^2+\D^2}-e(\bk))}
+e^{-\beta(\sqrt{e(\bk)^2+\D^2}+e(\bk))}\Bigg).\notag\\
&\lim_{\g\searrow 0\atop \g\in (0,1]}\lim_{L\to \infty\atop L\in\N}\frac{\Tr(e^{-\beta
(\sH+i\theta \sS_z+\sF)}\psi_{\hbx\ua}^*\psi_{\hbx\da}^*)}{\Tr e^{-\beta (\sH+i\theta \sS_z+\sF)}}
\label{eq_SSB}\\
&=\lim_{\g\searrow 0\atop \g\in (0,1]}\lim_{L\to \infty\atop L\in\N}\frac{\Tr(e^{-\beta
(\sH+i\theta \sS_z+\sF)}\psi_{\hbx\da}\psi_{\hbx\ua})}{\Tr e^{-\beta (\sH+i\theta \sS_z+\sF)}}=-\frac{\D}{|U|}.\notag\\
&\lim_{\|\hbx-\hby\|_{\R^d}\to\infty}\lim_{L\to \infty\atop L\in\N}\frac{\Tr (e^{-\beta
(\sH+i\theta \sS_z)}\psi_{\hbx\ua}^*\psi_{\hbx\da}^*\psi_{\hby\da}\psi_{\hby\ua})}{\Tr
 e^{-\beta (\sH+i\theta \sS_z)}}=\frac{\D^2}{U^2}.\label{eq_ODLRO}
\end{align}
\item\label{item_no_superconductivity}
Assume that $\theta\in
     [0,\pi/\beta)$ and 
\begin{align*}
|U|<c_2(d)\left(1+\beta^{d+3}+(1+\beta^{-1})g_d\left(\left|\frac{\theta}{2}-\frac{\pi}{\beta}\right|\right)\right)^{-2}.
\end{align*}
Then, for any $\D\in \R$ the equation \eqref{eq_gap_equation} does not
     hold. Moreover, the statements \eqref{eq_free_energy_density},
     \eqref{eq_SSB}, \eqref{eq_ODLRO} hold with $\D=0$.
\item\label{item_confirmation_condition_superconductivity}
For any $\beta\in\R_{>0}$ there exists $\delta \in\R_{>0}$ such that 
\begin{align}
&c_1(d)(2d-|\mu|)^{1-d}\beta\left|\frac{\theta}{2}-\frac{\pi}{\beta}\right|
\label{eq_sufficient_condition_superconductivity_extract}\\
&\quad \cdot\left(1_{|\frac{\theta}{2}-\frac{\pi}{\beta}|\le \frac{1}{2}(2d-|\mu|)}+
1_{|\frac{\theta}{2}-\frac{\pi}{\beta}|> \frac{1}{2}(2d-|\mu|)}(2d-|\mu|)^{-1}\left|\frac{\theta}{2}-\frac{\pi}{\beta}\right|\right)\notag\\
&<c_2(d)\left(1+\beta^{d+3}+(1+\beta^{-1})g_d\left(\left|\frac{\theta}{2}-\frac{\pi}{\beta}\right|\right)\right)^{-2}\notag
\end{align}
for any $\theta  \in [2\pi/\beta -\delta,2\pi/\beta)$. Thus,
     for any $\theta  \in [2\pi/\beta
     -\delta,2\pi/\beta)$ there exists $U\in \R_{<0}$ such that
     \eqref{eq_sufficient_condition_superconductivity} holds.
\item\label{item_imply_no_superconductivity}
Assume that $\theta\in
     [\pi/\beta,2\pi/\beta)$ and
     \eqref{eq_sufficient_condition_superconductivity} holds. Then, 
\begin{align*}
|U| <c_2(d)\left(1+\beta^{d+3}+(1+\beta^{-1})g_d\left(\left|\frac{\eta}{2}-\frac{\pi}{\beta}\right|\right)\right)^{-2}
\end{align*}
for any $\eta  \in [0,\pi/\beta)$ and thus the conclusions of
     \eqref{item_no_superconductivity} hold with these $\beta$, $U$ and
     $\eta$ in place of $\theta$.
\end{enumerate}
\end{theorem}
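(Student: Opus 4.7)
The plan is to express the partition function $\Tr e^{-\beta(\sH+i\theta\sS_z+\sF)}$ as a Grassmann Gaussian integral via the Trotter product formula (Section \ref{sec_formulation}) and then split the reduced BCS interaction $\sV$ into a \emph{doubly reduced} piece (with creation- and annihilation-pair operators integrated over two independent time-variables, as in \cite{Leh, M}) plus a quartic correction that restores the time identification. The doubly reduced piece is linearized by a Hubbard-Stratonovich transformation introducing a single complex order-parameter integration, after which the Grassmann action is purely quadratic and evaluates to a determinant. Steepest descent in the complex variable as $L^d\to\infty$ then reduces everything to a one-variable minimization whose critical-point equation is exactly \eqref{eq_gap_equation}; the $\cos(\beta\theta/2)$ factor is the direct imprint of the imaginary magnetic field on the Fermionic covariance.

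The technical core (Sections \ref{sec_Grassmann_integration}--\ref{sec_proof_theorem}) is to show that the Grassmann Gaussian integral of the quartic correction, kept inside the Hubbard-Stratonovich integral, is a controlled perturbation of this saddle-point picture. I would expand its logarithm by the tree formula for truncated expectations and bound each term via a double-scale split of the covariance: an ultraviolet part carrying all Matsubara frequencies except the one closest to the Cooper singularity at $\theta/2-\pi/\beta$, and an infrared part retaining only that remaining mode. The ultraviolet determinant is bounded by Pedra-Salmhofer's inequality (Appendix \ref{app_P_S_bound}), which makes the single-scale treatment viable; the infrared part is tamed by $\D$ and the function $g_d(|\theta/2-\pi/\beta|)$. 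The upper bound on $|U|$ in the theorem is exactly the convergence radius of this expansion. The hardest step, and the main novelty over \cite{Leh, M}, is upgrading uniform $L$-boundedness to genuine $L\to\infty$ convergence without subsequences; this is handled in Subsection \ref{subsec_L_limit} by a termwise analysis of the tree expansion.

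With the Grassmann control in hand, \eqref{item_partition_positivity} follows because the leading saddle-point contribution is real positive and the correction is a small multiplicative perturbation. For \eqref{item_superconducting_phase}, the lower bound on $|U|$ in \eqref{eq_sufficient_condition_superconductivity} is calibrated so that the left-hand side of \eqref{eq_gap_equation} is negative at $\D=0$; monotonicity in $\D>0$ then gives a unique positive root. Formula \eqref{eq_free_energy_density} is the value at the saddle point, \eqref{eq_SSB} is obtained by $\g$-differentiation using \eqref{eq_gap_equation}, and \eqref{eq_ODLRO} follows by cluster factorization of the four-point function into the squared two-point function. Item \eqref{item_no_superconductivity} is parallel with $\D=0$: the failure of the lower bound in \eqref{eq_sufficient_condition_superconductivity} forces the left-hand side of \eqref{eq_gap_equation} to be strictly negative for every $\D\ge 0$, so no positive solution exists.

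Items \eqref{item_confirmation_condition_superconductivity} and \eqref{item_imply_no_superconductivity} are elementary analytic statements about the bounding functions. For \eqref{item_confirmation_condition_superconductivity}, setting $x:=|\theta/2-\pi/\beta|$, the left-hand side of \eqref{eq_sufficient_condition_superconductivity_extract} is linear in $x$ as $x\to 0$, while the right-hand side decays only like $x^{2/(d+1)}$ up to logarithmic factors for $d\ge 2$ or merely like $(\log(1/x))^{-2}$ for $d=1$; the inequality therefore holds for $\theta$ sufficiently close to $2\pi/\beta$, yielding the existence of the claimed $\delta$. For \eqref{item_imply_no_superconductivity}, the decisive observation is that $g_d$ is strictly decreasing on $(0,\infty)$: since $\theta\in[\pi/\beta,2\pi/\beta)$ gives $|\theta/2-\pi/\beta|\in(0,\pi/(2\beta)]$ whereas $\eta\in[0,\pi/\beta)$ gives $|\eta/2-\pi/\beta|\in(\pi/(2\beta),\pi/\beta]$, we obtain $g_d(|\eta/2-\pi/\beta|)<g_d(|\theta/2-\pi/\beta|)$, so the upper bound on $|U|$ from \eqref{eq_sufficient_condition_superconductivity} implies the corresponding upper bound in \eqref{item_no_superconductivity} with $\eta$ in place of $\theta$, and that item's conclusions follow.
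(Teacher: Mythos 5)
Your proposal follows the paper's own route essentially step for step: Grassmann integral formulation, Hubbard--Stratonovich linearization of the doubly reduced part with the quartic correction $-\sV+\sW$ kept inside, a double-scale split of the covariance around the single Matsubara frequency $\pi/\beta$ closest to $\theta/2$ with the Pedra--Salmhofer bound controlling the ultraviolet block, Laplace asymptotics in the order-parameter variable whose critical-point equation is \eqref{eq_gap_equation}, and the termwise tree-expansion analysis of Subsection \ref{subsec_L_limit} to upgrade uniform boundedness to genuine $L\to\infty$ convergence. The treatment of items \eqref{item_confirmation_condition_superconductivity} and \eqref{item_imply_no_superconductivity} is correct and coincides with the paper's (monotonicity of $g_d$, and comparison of the asymptotics of the two sides of \eqref{eq_sufficient_condition_superconductivity_extract} as $|\theta/2-\pi/\beta|\searrow 0$).

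One step is stated with the wrong sign. Since $x\mapsto \sinh(x)/(x(\cos(\beta\theta/2)+\cosh(x)))$ is strictly decreasing to $0$ on $[0,\infty)$ (Lemma \ref{lem_monotoniticity}), the left-hand side of \eqref{eq_gap_equation} is a strictly decreasing function of $\D\ge 0$ with limit $-2/|U|<0$; hence a unique positive root exists if and only if that left-hand side is \emph{positive} at $\D=0$, not negative as you write for item \eqref{item_superconducting_phase}. As written, your criterion would preclude any root, and it contradicts your (correct) criterion for item \eqref{item_no_superconductivity}. The lower bound on $|U|$ in \eqref{eq_sufficient_condition_superconductivity} is precisely what guarantees positivity at $\D=0$, via the lower bound on the Fermi-surface measure in Lemma \ref{lem_estimation_free_fermi_surface} and Lemma \ref{lem_sufficient_condition_gap_equation} \eqref{item_sufficient_solvability}. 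Relatedly, for item \eqref{item_no_superconductivity} what is used is not the mere failure of that lower bound (which is only a sufficient condition for solvability) but the estimate $\sinh(\beta|x|)/((\cos(\beta\theta/2)+\cosh(\beta x))|x|)\le\tanh(\beta|x|/2)\cdot 2/|x|\le\beta$ available because $\theta\in[0,\pi/\beta]$ makes $\cos(\beta\theta/2)\ge 0$, combined with $|U|<2\beta^{-1}$, which the theorem's upper bound on $|U|$ implies (Lemma \ref{lem_sufficient_condition_gap_equation} \eqref{item_sufficient_nonsolvability}).
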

 
The claims \eqref{item_confirmation_condition_superconductivity},
\eqref{item_imply_no_superconductivity} can be proved here. Set
$\Theta:=|\theta/2-\pi/\beta|$. 
If $\Theta \le \frac{1}{2}(2d-|\mu|)$, the inequality
\eqref{eq_sufficient_condition_superconductivity_extract} is
equivalently written as follows.
\begin{align*}
&(1+\beta^{d+3})\Theta^{\frac{1}{2}}
+1_{d=1}(1+\beta^{-1})
(4-\mu^2)^{-\frac{1}{2}}(\log(\Theta^{-1}+1))\Theta^{\frac{1}{2}}\\
&+1_{d\ge
 2}(1+\beta^{-1})(\log(\Theta^{-1}+1))^{\frac{d}{d+1}}\Theta^{\frac{1}{2}-\frac{1}{d+1}}\\&< (c_1(d)^{-1}(2d-|\mu|)^{d-1}\beta^{-1}c_2(d))^{\frac{1}{2}}.
\end{align*}
Since the left-hand side converges to 0 as $\Theta\searrow 0$, the
claim \eqref{item_confirmation_condition_superconductivity} holds true.
The claim \eqref{item_imply_no_superconductivity} follows from the fact that
$g_d:(0,\infty)\to \R$ is decreasing.

\begin{remark}
The implication of the claim
 \eqref{item_confirmation_condition_superconductivity} is that at any
 temperature we can choose $\theta\in [\pi/\beta,2\pi/\beta)$ and the
 negative coupling constant $U$ so that SSB and ODLRO occur in the system. Moreover, since\begin{align*}
\lim_{\theta\nearrow
 \frac{2\pi}{\beta}}c_2(d)
\left(1+\beta^{d+3}+(1+\beta^{-1})g_d\left(\left|\frac{\theta}{2}-\frac{\pi}{\beta}\right|\right)\right)^{-2}=0,
\end{align*}
for any small $U_0\in\R_{>0}$ and at any positive temperature we can choose
 $\theta\in [\pi/\beta,2\pi/\beta)$ and the negative coupling constant
 $U$ so that $|U|\le U_0$ and SSB and ODLRO occur in the
 system. In other words, at arbitrarily high temperature, for
 arbitrarily weak coupling SSB and ODLRO take place in the system with
 an imaginary magnetic field. 
\end{remark}

\begin{remark}
The implication of the claim \eqref{item_imply_no_superconductivity} is
 the following. Assume that
 \eqref{eq_sufficient_condition_superconductivity} holds with some
 $(U,\beta,\theta)\in \R_{<0}\times \R_{>0}\times
 [\pi/\beta,2\pi/\beta)$. Then, SSB and ODLRO occur in the system with
 $(U,\beta,\theta)$ by the claim \eqref{item_superconducting_phase}, while SSB and
 ODLRO do not occur in the system with $(U,\beta,\eta)$ for any $\eta\in
 [0,\pi/\beta)$. By the claim
 \eqref{item_confirmation_condition_superconductivity} we can always choose 
$(U,\beta,\theta)\in \R_{<0}\times \R_{>0}\times
 [\pi/\beta,2\pi/\beta)$ such that
 \eqref{eq_sufficient_condition_superconductivity} holds with
 $(U,\beta,\theta)$. By fixing these $U$, $\beta$ and taking $\eta$ to
 be 0 we can conclude in
 other words that superconductivity characterized by SSB and ODLRO
 emerges in the BCS model with an imaginary magnetic field and it
  does not emerge in the BCS model without an imaginary
 magnetic field.
\end{remark}

\begin{remark}
The condition 
\begin{align*}
|U|<c_2(d)\left(1+\beta^{d+3}+(1+\beta^{-1})g_d\left(\left|\frac{\theta}{2}-\frac{\pi}{\beta}\right|\right)\right)^{-2}
\end{align*}
is necessary to ensure that our double-scale integration converges. Thus,
 the claim \eqref{item_no_superconductivity} especially means that
within the analytical framework of the present paper we cannot
 prove the existence of superconductivity in the
 conventional reduced BCS Hamiltonian,  which is the case $\theta=0$.
\end{remark}

\begin{remark}
There is no essential reason to choose the spatial lattice to be
 $\{0,1,\cdots,L-1\}^d$. One can prove that all the partition functions
 and the thermal expectations in the theorem are equivalent to those
 defined in the system on the spatial lattice $\{0,1,\cdots,L-1\}^d+\ba$
 with the periodic boundary conditions for any $\ba\in\Z^d$.
\end{remark}

\begin{remark} We introduce the parameter $hop(\in \{0,1\})$ to treat
 the model with positive hopping and the model with negative hopping at
 the same time. However, if $L\in 2\N$, by the unitary transform
 $\psi_{\bx\s}^*\to (-1)^{\sum_{j=1}^dx_j}\psi_{\bx\s}^*$
 $(\bx=(x_1,\cdots,x_d)\in\G)$ we can change the sign of hopping by
 keeping all the other terms unchanged. Thus, the role of the parameter
 $hop$ seems not essential. We add it for completeness. It causes no
 technical complication.
\end{remark}

\begin{remark}\label{rem_dispersion_extension} 
The reason why we only consider the nearest-neighbor hopping in the free 
Hamiltonian is that in this case the free dispersion relation
 takes the relatively simple form $e(\bk)$ which allows us to make explicit
the condition
 \eqref{eq_sufficient_condition_superconductivity}.
We made this choice to claim the main
 results of this paper simply and explicitly. 
In fact the condition 
\begin{align}
|U|>&c_1(d)(2d-|\mu|)^{1-d}\beta\left|\frac{\theta}{2}-\frac{\pi}{\beta}\right|\label{eq_coupling_constant_lower_bound}\\
&\cdot\left(1_{|\frac{\theta}{2}-\frac{\pi}{\beta}|\le \frac{1}{2}(2d-|\mu|)}+
1_{|\frac{\theta}{2}-\frac{\pi}{\beta}|> \frac{1}{2}(2d-|\mu|)}(2d-|\mu|)^{-1}\left|\frac{\theta}{2}-\frac{\pi}{\beta}\right|\right)\notag
\end{align}
in \eqref{eq_sufficient_condition_superconductivity} is a 
 sufficient condition for the inequality
\begin{align*}
-\frac{2}{|U|}+\frac{1}{(2\pi)^d}\int_{[0,2\pi]^d}d\bk\frac{\sinh(\beta|e(\bk)|)}{(\cos(\beta\theta/2)+\cosh(\beta
 e(\bk)))|e(\bk)|}>0,
\end{align*}
which is a necessary and sufficient condition for the existence of a
 positive solution to the gap equation \eqref{eq_gap_equation}.
The theorem can be claimed under the condition 
\begin{align*}
|U|>2\left(\frac{1}{(2\pi)^d}\int_{[0,2\pi]^d}d\bk\frac{\sinh(\beta|e(\bk)|)}{(\cos(\beta\theta/2)+\cosh(\beta
 e(\bk)))|e(\bk)|}\right)^{-1}
\end{align*}
in place of the condition
 \eqref{eq_coupling_constant_lower_bound}. We should also remark that
 apart from a multiplication of irrelevant positive constant,
 the term $g_d(|\theta/2-\pi/\beta|)$ is derived as an upper bound on the integral
\begin{align}
\int_{[0,2\pi]^d}d\bk\frac{1}{\sqrt{|\theta/2-\pi/\beta|^2+e(\bk)^2}}.\label{eq_dispersion_integral_determinant_bound}
\end{align}
In fact we can replace the term $g_d(|\theta/2-\pi/\beta|)$ in the
 condition 
\begin{align*}
|U|<c_2(d)\left(1+\beta^{d+3}+(1+\beta^{-1})g_d\left(\left|\frac{\theta}{2}-\frac{\pi}{\beta}\right|\right)\right)^{-2}
\end{align*}
by the integral \eqref{eq_dispersion_integral_determinant_bound}.
The validity of these modifications would be clearly seen after completing
 the proof of Theorem \ref{thm_main_theorem}. 
 Since it is important
 in this approach to guarantee the solvability of the gap equation and
 the convergence of Grassmann integrations at the same time, the
 sufficient conditions should be explicitly comparable.
We decide not to pursue the issue of
 generalization of the dispersion relation or the whole Hamiltonian in
 this paper. Here we list the lemmas which use
 the specific form of the free dispersion relation and eventually lead
 to the condition
 \eqref{eq_sufficient_condition_superconductivity}. These are Lemma
 \ref{lem_covariance_crucial_inequality}, Lemma
 \ref{lem_estimation_free_fermi_surface} and Lemma
 \ref{lem_sufficient_condition_gap_equation}. 

However, based on the above modifications of the crucial conditions, let
 us see that the results hold for the degenerate case $\mu\in \{2d,
 -2d\}$ as the least extension of the theorem. By letting $\Theta$,
 $c(d)$ denote $|\theta/2-\pi/\beta|$, a positive constant depending
 only on $d$ respectively we observe in this case that 
\begin{align*}
&\int_{[0,2\pi]^d}d\bk\frac{\sinh(\beta|e(\bk)|)}{(\cos(\beta\theta/2)+\cosh(\beta
 e(\bk)))|e(\bk)|}\\
&\ge 
c(d)\beta^{-1} \int_{[0,2\pi]^d}d\bk
 \frac{1_{|e(\bk)|\le\beta^{-1}}}{e(\bk)^2+\Theta^2}\ge c(d)\beta^{-1}\int_0^{1}dr r^{d-1}\frac{1_{r^2\le
 \beta^{-1}}}{r^4+\Theta^2}\\
&\ge c(d)\beta^{-1}\big(
1_{\Theta\le \min\{1,\beta^{-1}\}}\\
&\qquad\qquad\quad\cdot(1_{d\le
 3}\Theta^{\frac{d}{2}-2}+1_{d=4}\log(\min\{1,\beta^{-1}\}\Theta^{-1}) +1_{d\ge
 5}(\min\{1,\beta^{-1}\})^{\frac{d}{2}-2})\\
&\quad+1_{\Theta>\min\{1,\beta^{-1}\}}(\min\{1,\beta^{-1}\})^{\frac{d}{2}}\Theta^{-2}\big),\\
&\int_{[0,2\pi]^d}d\bk
 \frac{1}{\sqrt{\Theta^2+e(\bk)^2}}\le c(d)\int_0^1dr\frac{r^{d-1}}{\sqrt{\Theta^2+r^4}}\\
&\le c(d)\big( 1_{\Theta \le
 1}(1_{d=1}\Theta^{-\frac{1}{2}}+1_{d=2}\log(1+\Theta^{-1})+1_{d\ge
 3})+1_{\Theta >1}\Theta^{-1}\big).
\end{align*}
Define the function $g_{d,s}:(0,\infty)\to\R$ by
\begin{align*}
g_{d,s}(x):=1_{x\le 1}(1_{d=1} x^{-\frac{1}{2}}
+ 1_{d=2}(\log 2)^{-1}\log(1+x^{-1})+1_{d\ge 3})+1_{x>1}x^{-1}.
\end{align*}
Here we inserted $\log 2$ in order to make the function
 decreasing. Then, by using the lower, upper bounds obtained above the
 condition \eqref{eq_sufficient_condition_superconductivity} is modified as follows. 
\begin{align}
&c_1(d)\beta\big(
1_{\Theta\le
 \min\{1,\beta^{-1}\}}\label{eq_coupling_constant_upper_lower_degenerate}\\
&\qquad\quad\cdot(1_{d\le
 3}\Theta^{2-\frac{d}{2}}+1_{d=4}\log(\min\{1,\beta^{-1}\}\Theta^{-1})^{-1} +1_{d\ge
 5}(\min\{1,\beta^{-1}\})^{2-\frac{d}{2}})\notag\\
&\quad+1_{\Theta>\min\{1,\beta^{-1}\}}(\min\{1,\beta^{-1}\})^{-\frac{d}{2}}\Theta^{2}\big)\notag\\
&<|U|<c_2(d)(1+\beta^{d+3}+(1+\beta^{-1})g_{d,s}(\Theta))^{-2}.\notag
\end{align}
Since with positive constants $c_1(d,\beta)$, $c_2(d,\beta)$ depending
 only on $d,\beta$,
\begin{align*}
&(\text{L.H.S of
 \eqref{eq_coupling_constant_upper_lower_degenerate}})
\le c_1(d,\beta)(1_{d\le 3}\Theta^{2-\frac{d}{2}}+1_{d=4}|\log
 \Theta|^{-1}+1_{d\ge 5}),\\
&(\text{R.H.S of
 \eqref{eq_coupling_constant_upper_lower_degenerate}})
\ge c_2(d,\beta)(1_{d=1}\Theta + 1_{d=2}|\log\Theta|^{-2}+1_{d\ge 3})
\end{align*}
for small $\Theta$, we can find $U\in \R_{<0}$ satisfying 
the condition
 \eqref{eq_coupling_constant_upper_lower_degenerate} for small $\Theta$
in the case $d=1,2,3,4$. We can
 expect that the claims parallel to those of Theorem
 \ref{thm_main_theorem} hold for $\mu \in \{-2d,2d\}$ in the case $d=1,2,3,4$. We should note that in the
 case $d=3,4$ the upper bound on $|U|$ can be independent of $\Theta$
 and thus we can take $\Theta$ arbitrarily close to zero.
\end{remark}

\section{Formulation}\label{sec_formulation}

In this section we will derive a finite-dimensional Grassmann integral
formulation of the grand canonical partition function. Then, by means
of the Hubbard-Stratonovich transformation we will transform it into 
a Gaussian integral formulation involving both Grassmann variables and
real variables, which will be analyzed as a central object in the
following sections.

\subsection{Grassmann algebra}\label{subsec_Grassmann_algebra}

To begin with, let us recall some basics of finite-dimensional Grassmann
integration. Let $S_0$ be a finite set and let $S:=S_0\times
\{1,-1\}$. In practice we will need to change the index set $S_0$ several
times during the construction. Here we do not fix any detail of
$S_0$. Let $R$ be the complex vector space spanned by the abstract basis
$\{\psi_X\ |\ X\in S\}$. We should remark that $\psi_X$ $(X\in S)$ are
not operators on the Fock space, though we use the same symbol as the
Fermionic annihilation operator. For any $X\in S_0$ we let $\opsi_X$,
$\psi_X$ denote $\psi_{(X,1)}$, $\psi_{(X,-1)}$ respectively. For $n\in
\N$, $\bigwedge^n R$ denotes the $n$-fold anti-symmetric tensor product
of $R$. We set $\bigwedge^0 R:=\C$ by convention. The Grassmann algebra
$\bigwedge R$ generated by $\{\psi_X\ |\ X\in S\}$ is the direct sum of 
$\bigwedge^n R$.
$$
\bigwedge R:=\bigoplus_{n=0}^{\sharp S}\bigwedge^n R.
$$

We will often work in a situation where $R$ is the direct sum $\bigoplus_{p=1}^{m}R^p$
of other vector spaces $R^p$ $(p=1,2,\cdots,m)$. We assume that 
the basis of $R^p$ is $\{\psi_X^p\ |\ X\in S\}$. For a function
$D:S_0^2\to \C$ 
the Grassmann Gaussian integral
$\int\cdot d\mu_{D}(\psi^1)$ is a linear map from $\bigwedge \left(
\bigoplus_{p=1}^mR^p\right)$ to $\bigwedge \left(
\bigoplus_{p=2}^mR^p\right)$ defined as follows. For $f\in \bigwedge \left(
\bigoplus_{p=2}^mR^p\right)$, $X_1,X_2,\cdots,X_a$,
$Y_1,Y_2,\cdots,Y_b\in S_0$,
\begin{align*}
&\int f\opsi_{X_1}^1\opsi_{X_2}^1\cdots \opsi_{X_a}^1\psi_{Y_b}^1\cdots
 \psi_{Y_2}^1\psi_{Y_1}^1d\mu_{D}(\psi^1)\\
&\quad:=\left\{\begin{array}{ll}\det(D(X_i,Y_j))_{1\le
					    i, j\le a}f & \text{if
					     }a=b,\\
0 & \text{otherwise,}\end{array}
\right.\\
&\int fd\mu_{D}(\psi^1):=f.
\end{align*}
Then by linearity and anti-symmetry the value $\int g
d\mu_{D}(\psi^1)$ is uniquely determined for any $g\in \bigwedge \left(
\bigoplus_{p=1}^mR^p\right)$. We can define $\int\cdot d\mu_D(\psi)$ as
a linear functional on $\bigwedge R$ in the same way. 

Exponential and logarithm of a Grassmann polynomial appear in many parts
of this paper. Let us recall their definitions. For $f\in \bigwedge R$
with the constant part $f_0(\in\C)$
$$
e^f:=e^{f_0}\sum_{n=0}^{\sharp S}\frac{1}{n!}(f-f_0)^n.
$$
If $f_0\in \C\backslash \R_{\le 0}$,
$$
\log f:=\log f_0+\sum_{n=1}^{\sharp S}\frac{(-1)^{n-1}}{n}\left(\frac{f-f_0}{f_0}\right)^n.
$$
Throughout the paper $\log \alpha $ for $\alpha\in \C\backslash \R_{\le
0}$ is assumed to be representing the principal value
$\log|\alpha|+i\theta$,
where $\theta\in (-\pi,\pi)$ satisfies $\alpha=|\alpha|e^{i\theta}$. See
e.g. \cite{FKT} for more properties of Grassmann algebra.

\subsection{One-band formulation}\label{subsec_one_band}

It is systematic to introduce artificial parameters $\la_1,\la_2\in\C$
and deal with the normalized partition function
\begin{align}
\frac{\Tr e^{-\beta (\sH+i\theta \sS_z+\sF+\sA)}}{\Tr e^{-\beta
 (\sH_0+i\theta \sS_z)}},\label{eq_partition_function}
\end{align}
where $\sA:=\la_1\sA_1+\la_2\sA_2$. 
From now we always assume that 
$$r_L(\hbx)\neq r_L(\hby).$$ 
We can assume this condition to prove Theorem \ref{thm_main_theorem}, since the theorem
concerns the limit $\|\hbx-\hby\|_{\R^d}\to \infty$ and 
for any $\hbx,\hby\in \Z^d$ with $\hbx\neq \hby$ there exists $N_0\in
\N$ such that $r_L(\hbx)\neq
r_L(\hby)$ for any $L\in \N$ with $L\ge N_0$. 
We will derive the thermal
expectation values of our interest by differentiating
\eqref{eq_partition_function} with $\la_1,\la_2$. We are going to formulate
\eqref{eq_partition_function} into a limit of finite-dimensional
Grassmann integration. First of all we should make sure that the
denominator is non-zero. Let us define the momentum lattice $\G^*$ by
$$
\G^*:=\left\{0,\frac{2\pi}{L},\frac{2\pi}{L}\cdot 2,\cdots,\frac{2\pi}{L}(L-1)
\right\}^d.
$$ 

\begin{lemma}\label{lem_free_partition_function}
\begin{align}
\Tr e^{-\beta (\sH_0+i\theta\sS_z)}&=\prod_{\bk\in\G^*}\left(1+2\cos\left(\frac{\beta
 \theta}{2}\right)e^{-\beta e(\bk)}+e^{-2\beta
 e(\bk)}\right)\label{eq_free_partition_function}\\
&=e^{-\beta \sum_{\bk\in \G^*}e(\bk)}2^{L^d}\prod_{\bk\in\G^*}
\left(\cos\left(\frac{\beta
 \theta}{2}\right)+\cosh(\beta e(\bk))\right)\neq 0.\notag
\end{align}
\end{lemma}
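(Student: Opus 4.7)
The plan is to diagonalize the quadratic Hamiltonian $\sH_0+i\theta\sS_z$ via the discrete Fourier transform on $\G$, evaluate the trace of the resulting one-body exponential as a product over modes, and then simplify the spin product into the stated cosine form. Since this Hamiltonian is quadratic in the Fermionic operators, the full trace over $F_f(L^2(\G\times\spin))$ reduces to a standard one-particle calculation.

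More concretely, I would introduce the Fourier modes $\hat\psi_{\bk\s}:=L^{-d/2}\sum_{\bx\in\G}e^{-i\<\bk,\bx\>}\psi_{\bx\s}$ for $\bk\in\G^*$, which together with their adjoints satisfy the canonical anti-commutation relations. A direct substitution shows
\begin{align*}
\sH_0=\sum_{\bk\in\G^*}\sum_{\s\in\spin}e(\bk)\hat\psi_{\bk\s}^*\hat\psi_{\bk\s},\qquad
\sS_z=\frac12\sum_{\bk\in\G^*}\bigl(\hat\psi_{\bk\ua}^*\hat\psi_{\bk\ua}-\hat\psi_{\bk\da}^*\hat\psi_{\bk\da}\bigr),
\end{align*}
so that $\sH_0+i\theta\sS_z=\sum_{\bk,\s}(e(\bk)+i\theta\s/2)\hat n_{\bk\s}$, with $\s=+1$ for $\ua$ and $\s=-1$ for $\da$ and $\hat n_{\bk\s}=\hat\psi_{\bk\s}^*\hat\psi_{\bk\s}$.

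Since the number operators $\hat n_{\bk\s}$ mutually commute and have spectrum $\{0,1\}$, the Fock space factorizes as the tensor product of the two-dimensional modes, so
\begin{align*}
\Tr e^{-\beta(\sH_0+i\theta\sS_z)}
=\prod_{\bk\in\G^*}\prod_{\s\in\spin}\bigl(1+e^{-\beta(e(\bk)+i\theta\s/2)}\bigr).
\end{align*}
Expanding the product over $\s$ gives $1+e^{-\beta(e(\bk)+i\theta/2)}+e^{-\beta(e(\bk)-i\theta/2)}+e^{-2\beta e(\bk)}$, which equals $1+2\cos(\beta\theta/2)e^{-\beta e(\bk)}+e^{-2\beta e(\bk)}$. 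This is the first equality. Factoring $e^{-\beta e(\bk)}$ out of each factor yields $2e^{-\beta e(\bk)}\bigl(\cos(\beta\theta/2)+\cosh(\beta e(\bk))\bigr)$, and taking the product over $\bk\in\G^*$ (noting $\sharp\G^*=L^d$) produces the second equality.

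Finally, for the non-vanishing claim, the standing assumption $\theta\in[0,2\pi/\beta)$ gives $\beta\theta/2\in[0,\pi)$, so $\cos(\beta\theta/2)>-1$, while $\cosh(\beta e(\bk))\ge1$ for every $\bk$; hence each factor in the product is strictly positive, proving the result is non-zero. There is no real obstacle here — the only point that requires a moment of care is confirming that one is allowed to collapse the trace mode by mode, which is justified by the commutativity of the $\hat n_{\bk\s}$ together with the tensor-product structure of the Fock space.
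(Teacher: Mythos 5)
Your proposal is correct and follows essentially the same route as the paper: diagonalize the quadratic operator in Fourier modes and factor the trace over the modes (the paper organizes this by first splitting into the two spin sectors and observing that the spin-down trace is the complex conjugate of the spin-up one, which is a cosmetic variant of your direct expansion of the product over $\s$). Your explicit positivity argument for the non-vanishing claim, using $\cos(\beta\theta/2)>-1$ for $\theta\in[0,2\pi/\beta)$ together with $\cosh(\beta e(\bk))\ge 1$, is also fine.
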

\begin{proof}
For $\s\in \spin$ we consider the Fermionic Fock space
 $F_f(L^2(\G\times\{\s\}))$ as a subspace of $F_f(L^2(\G\times
 \spin))$. Set 
\begin{align*}
&\sH_{0,\s}:=\sum_{\bx\in\G}
\left((-1)^{hop}\sum_{j=1}^d(\psi_{\bx\s}^*\psi_{\bx+\be_j\s}+
\psi_{\bx\s}^*\psi_{\bx-\be_j\s})-\mu
\psi_{\bx\s}^*\psi_{\bx\s}\right),\\
&\sN_{\s}:=\sum_{\bx\in\G}\psi_{\bx\s}^*\psi_{\bx\s}.
\end{align*}
By letting $\Tr_{\s}$ mean the trace operation over $F_f(L^2(\G\times
 \{\s\}))$ we have that
\begin{align*}
\Tr e^{-\beta (\sH_0+i\theta \sS_z)}&=\Tr_{\ua}e^{-\beta
 (\sH_{0,\ua}+i\frac{\theta}{2} \sN_{\ua})}
\Tr_{\da}e^{-\beta (\sH_{0,\da}-i\frac{\theta}{2} \sN_{\da})}
=|\Tr_{\ua}e^{-\beta
 (\sH_{0,\ua}+i\frac{\theta}{2}
 \sN_{\ua})}|^2\\
&=\prod_{\bk\in\G^*}|1+e^{-\beta(e(\bk)+i\frac{\theta}{2})}|^2,
\end{align*}
which is the right-hand side of \eqref{eq_free_partition_function}.
\end{proof}

The covariance in our Grassmann Gaussian integral formulation of
\eqref{eq_partition_function} is equal to a restriction of the following 
free two-point correlation function. For $(\bx,\s,s)$, $(\by,\tau,t)\in
\Z^d\times \spin\times [0,\beta)$, set
\begin{align*}
G(\bx\s s,\by \tau t):=\frac{\Tr(e^{-\beta(\sH_0+i\theta \sS_z)}(1_{s\ge
 t}\psi_{\bx\s}^*(s)\psi_{\by\tau}(t)-1_{s<t}\psi_{\by\tau}(t)\psi_{\bx\s}^*(s)))}{\Tr
 e^{-\beta(\sH_0+i\theta \sS_z)}},
\end{align*}
where $\psi_{\bx\s}^{(*)}(s):=e^{s(\sH_0+i\theta\sS_z)}\psi_{\bx
\s}^{(*)}e^{-s(\sH_0+i\theta\sS_z)}$. 
We introduce the finite index set of Grassmann algebra by discretizing
the time interval. With $h\in \frac{2}{\beta}\N$ we set
$$
[0,\beta)_h:=\left\{0,\frac{1}{h},\frac{2}{h},\cdots,\beta-\frac{1}{h}\right\},
$$
which is a discretization of $[0,\beta)$. We take the parameter $h$ from
$\frac{2}{\beta}\N$ rather than from $\frac{1}{\beta}\N$, since it is technically
convenient according to the earlier study \cite[\mbox{Appendix C}]{K9}.
The index sets of Grassmann algebra for our one-band model are defined
by
$$
J_0:=\G\times \spin\times[0,\beta)_h,\quad J:=J_0\times \{1,-1\}.
$$
The restriction $G|_{J_0^2}$ is the covariance of our Grassmann Gaussian
integral formulation. For simplicity let us omit the notation
$\cdot|_{J_0^2}$ in the following.

Let $\cW$ be the complex vector space spanned by the basis $\{\psi_X\ |\
X\in J\}$. Set $N:=4L^d\beta h$ so that $\sharp J=N$.
Here we state the Grassmann integral formulation of
\eqref{eq_partition_function} in the Grassmann algebra $\bigwedge
\cW$. For $r\in\R_{> 0}$ let $D(r)$ denote the open disk $\{z\in \C\ |\
|z|<r\}$.
Set
\begin{align*}
&\sV(\psi):=\frac{U}{hL^d}\sum_{\bx,\by\in\G}\sum_{s\in[0,\beta)_h}\opsi_{\bx\ua
 s}\opsi_{\bx\da
 s}\psi_{\by\da s}\psi_{\by\ua s},\\
&\sF(\psi):=\frac{\g}{h}\sum_{\bx\in \G}\sum_{s\in [0,\beta)_h}(\opsi_{\bx\ua
 s}\opsi_{\bx\da
 s}+\psi_{\bx\da s}\psi_{\bx\ua s}),\\
&\sA^1(\psi):=\frac{1}{h}\sum_{s\in [0,\beta)_h}\opsi_{r_L(\hbx)\ua
 s}\opsi_{r_L(\hbx)\da s},\\
&\sA^2(\psi):=\frac{1}{h}\sum_{s\in [0,\beta)_h}\opsi_{r_L(\hbx)\ua
 s}\opsi_{r_L(\hbx)\da s}\psi_{r_L(\hby)\da
 s}\psi_{r_L(\hby)\ua s},\\
&\sA(\psi):=\la_1\sA^1(\psi)+\la_2\sA^2(\psi).
\end{align*}
We let $\bla$ denote $(\la_1,\la_2)$ $(\in \C^2)$.
\begin{lemma}\label{lem_grassmann_formulation}
For any $r\in\R_{>0}$,
\begin{align}
\lim_{h\to\infty\atop
 h\in\frac{2}{\beta}\N}\sup_{\bla\in\overline{D(r)}^2}
\left|
\int e^{-\sV(\psi)-\sF(\psi)-\sA(\psi)}d\mu_G(\psi)
-\frac{\Tr e^{-\beta (\sH+i\theta\sS_z+\sF+\sA)}}{\Tr e^{-\beta
 (\sH_0+i\theta \sS_z)}}\right|=0.\label{eq_formulation_partition}
\end{align}
\end{lemma}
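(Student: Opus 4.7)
My plan is to derive the formulation by passing from the operator partition function to a time-sliced Grassmann Gaussian integral through the Lie--Trotter product formula followed by a Fermionic coherent-state (Grassmann) representation of the trace at each time slice.

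First, I would apply Trotter's formula to decompose
\begin{align*}
 e^{-\beta(\sH+i\theta\sS_z+\sF+\sA)} = \lim_{h\to\infty}\bigl(e^{-\frac{1}{h}(\sH_0+i\theta\sS_z)}\,e^{-\frac{1}{h}(\sV+\sF+\sA)}\bigr)^{\beta h},
\end{align*}
where the product is taken over $h\beta$ slices indexed by $s\in[0,\beta)_h$. Since $\sV+\sF+\sA$ is bounded on the Fock space uniformly in $\bla\in\overline{D(r)}^2$, the Trotter error estimate is uniform in $\bla$. After inserting the Grassmann resolution of the identity at each slice (in the sense of the Berezin integral coherent-state formalism), the normalized trace in the left-hand side of the denominator of \eqref{eq_partition_function} becomes a finite-dimensional Grassmann integral of $e^{-\sV(\psi)-\sF(\psi)-\sA(\psi)}$ against a quadratic Gaussian measure whose covariance is built from the slicewise propagator of $\sH_0+i\theta\sS_z$. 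Dividing by $\Tr e^{-\beta(\sH_0+i\theta\sS_z)}$ normalizes that Gaussian measure to have unit total integral.

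Second, I would identify the normalized Gaussian measure with $d\mu_G$. The slicewise covariance equals the discrete time-ordered free two-point function with imaginary magnetic field, and as $h\to\infty$ it converges entrywise to $G|_{J_0^2}$; because $\sH_0+i\theta\sS_z$ is quadratic and normal (diagonalized in momentum and spin), its slicewise approximation can be computed explicitly in the $(\bk,\s)$-basis, giving an error of order $1/h$ in each matrix entry, uniformly in the indices. Since the integrand $e^{-\sV(\psi)-\sF(\psi)-\sA(\psi)}$ is a Grassmann polynomial whose coefficients depend polynomially on $\bla$ and the covariance enters the Grassmann integral through determinants of bounded size $\le N$, I would combine determinantal continuity with the Pedra--Salmhofer type bounds cited later in the paper (or elementary operator-norm bounds on $G$) to conclude that the Grassmann integral converges to the required limit uniformly in $\bla\in\overline{D(r)}^2$.

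The main obstacle is controlling the Trotter error in the presence of the non-Hermitian term $i\theta\sS_z$ and in a way uniform in $\bla$ on the closed disk. The standard argument from \cite[Appendix C]{K9} (invoked by the authors' choice $h\in\frac{2}{\beta}\N$) seems tailored to exactly this setting: the imaginary part of the generator shifts the Matsubara frequencies by $\theta/2$ without spoiling the exponential decay needed to bound the Trotter remainder, provided $\theta\in[0,2\pi/\beta)$ keeps $\cos(\beta\theta/2)+\cosh(\beta e(\bk))$ bounded away from zero. I would reduce to that reference for the slicewise combinatorics and focus any new work on checking that the presence of the parameters $\la_1,\la_2$, entering only through the linear local polynomials $\sA^1,\sA^2$, causes at most a polynomial growth in $|\bla|$ of the Trotter remainder, hence uniform convergence on $\overline{D(r)}^2$.
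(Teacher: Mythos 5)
Your route (Lie--Trotter slicing plus Grassmann coherent states) is genuinely different from the paper's. The paper never invokes Trotter: it expands the trace in a Duhamel/Dyson series in powers of $\sV+\sF+\sA$ around $\sH_0+i\theta\sS_z$, writes each time-ordered free expectation as a determinant of $G$ (a Wick theorem for the non-self-adjoint quadratic generator), passes from the time integrals to Riemann sums on $[0,\beta)_h$ using the uniform determinant bound \eqref{eq_full_determinant_bound} and dominated convergence, and then recognizes the discretized series --- after dropping the coincident-time constraint at the cost of an $O(\beta^{n-1}/h)$ error per order --- as the term-by-term expansion of $\int e^{-\sV(\psi)-\sF(\psi)-\sA(\psi)}d\mu_G(\psi)$. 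The crucial structural point is that the order-$n$ term only ever involves determinants of size $2n$, with $n$ summable uniformly in $h$; the covariance entries are exact at every stage, and the only $h$-dependent approximation is of Riemann-sum type.

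This is where your error control has a genuine gap. You propose to conclude from an entrywise $O(1/h)$ convergence of the sliced covariance to $G|_{J_0^2}$ together with ``determinantal continuity'' for ``determinants of bounded size $\le N$.'' But $N=4L^d\beta h\to\infty$ as $h\to\infty$, and an entrywise perturbation of size $1/h$ of an $O(h)\times O(h)$ determinant is not small: the natural estimate gives an error of order $n\cdot h^{-1}\cdot D^{n-1}$ for an $n\times n$ block, which is $O(1)$ when $n\sim h$. To close this you must either (i) reorganize the comparison through the perturbative expansion so that only determinants of $h$-independent size are perturbed (which essentially reproduces the paper's proof), or (ii) use the fact that the discrete-time propagator produced by the slicing coincides \emph{exactly} with $G$ at lattice times --- this is the discrete Matsubara identity behind the choice $h\in\frac{2}{\beta}\N$ (cf.\ Lemma \ref{lem_covariance_matsubara_sum} and \cite[Appendix C]{K9}) --- so that no covariance error arises at all. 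Your proposal relies on neither. A second, smaller omission: the coherent-state matrix element of $e^{-\frac{1}{h}(\sV+\sF+\sA)}$ is not literally $e^{-\frac{1}{h}(\sV(\psi)+\sF(\psi)+\sA(\psi))}$ with all fields at equal times; the normal-ordering corrections and the shift of the barred fields to the adjacent slice must be shown to contribute errors vanishing as $h\to\infty$, and this again is most safely done order by order in the vertex expansion rather than at the level of the full exponential.
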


\begin{proof} The proof is close to the Grassmann integral formulation
 process in \cite{K9}, \cite{K10}, \cite{K14}. However, we sketch the
 procedure for readers' convenience. For any objects
 $\alpha_1,\alpha_2,\cdots,\alpha_n$ we let $\prod_{j=1}^n\alpha_j$
 denote
$\alpha_1\alpha_2\cdots \alpha_n$. This definition should be reminded especially
 when $\alpha_j$ $(j=1,2,\cdots, n)$ are non-commutative. For
 $\bx,\by\in\G$, $a\in \{0,-1,1\}$ set
\begin{align*}
V(\bx,\by,a):=&1_{a=0}\left(\frac{U}{L^d}+\la_21_{(\bx,\by)=(r_L(\hbx),r_L(\hby))}\right)+1_{a=1}\left(\frac{\g}{L^d}+\frac{\la_1}{L^d}1_{\bx=r_L(\hbx)}\right)\\
&+1_{a=-1}\frac{\g}{L^d}.
\end{align*}
The partition function \eqref{eq_partition_function} can be expanded as
 follows (see e.g. \cite[\mbox{Lemma B.3}]{K9}).
\begin{align}
&\frac{\Tr e^{-\beta (\sH+i\theta \sS_z+\sF+\sA)}}{\Tr e^{-\beta
 (\sH_0+i\theta \sS_z)}}\label{eq_perturbation_series}\\
&=1+\sum_{n=1}^{\infty}(-1)^n\prod_{j=1}^n\left(
\sum_{\bx_j,\by_j\in\G}\int_0^{\beta}ds_j\sum_{a_j\in \{0,1,-1\}}V(\bx_j,\by_j,a_j)
\right)1_{s_1>s_2>\cdots>s_n}1_{\sum_{j=1}^na_j=0}\notag\\
&\qquad\qquad\cdot
\<\prod_{j=1}^n(1_{a_j=0}\psi_{\bx_j\ua}^*(s_j)\psi_{\bx_j\da}^*(s_j)
\psi_{\by_j\da}(s_j)\psi_{\by_j\ua}(s_j)
+
1_{a_j=1}\psi_{\bx_j\ua}^*(s_j)\psi_{\bx_j\da}^*(s_j)\notag
\\
&\qquad\qquad\qquad\quad+
1_{a_j=-1}
\psi_{\by_j\da}(s_j)\psi_{\by_j\ua}(s_j))\>_0,\notag
\end{align}
where
$$
\<\cO\>_0:=\frac{\Tr(e^{-\beta(\sH_0+i\theta \sS_z)}\cO)}{\Tr
 e^{-\beta(\sH_0+i\theta \sS_z)}}
$$
for any operator $\cO$ on $F_f(L^2(\G\times \spin))$.
The constraint $1_{\sum_{j=1}^na_j=0}$ is due to the fact that
 $\sH_0+i\theta \sS_z$ conserves the particle number.

Assume that $\beta >s_1 > s_2>\cdots >s_n>0$ and $\sum_{j=1}^na_j=0$. We
 can choose $\{i_p\}_{p=1}^l$, $\{j_p\}_{p=1}^m$,
 $\{k_p\}_{p=1}^m\subset\{1,2,\cdots,n\}$ so that $l+2m=n$ and 
\begin{align*}
&i_1<i_2<\cdots <i_l,\quad a_{i_p}=0\ (\forall p\in \{1,2,\cdots,l\}),\\
&j_1<j_2<\cdots <j_m,\quad a_{j_p}=1\ (\forall p\in \{1,2,\cdots,m\}),\\
&k_1<k_2<\cdots <k_m,\quad a_{k_p}=-1\ (\forall p\in \{1,2,\cdots,m\}).
\end{align*}
Then, let us set
\begin{align*}
&\bX^0:=((\bx_{i_1},\ua,s_{i_1}),(\bx_{i_1},\da,s_{i_1}),\cdots,
(\bx_{i_l},\ua,s_{i_l}),(\bx_{i_l},\da,s_{i_l}))\in (\G\times\spin
 \times[0,\beta))^{2l},\\
&\bX^1:=((\bx_{j_1},\ua,s_{j_1}),(\bx_{j_1},\da,s_{j_1}),\cdots,
(\bx_{j_m},\ua,s_{j_m}),(\bx_{j_m},\da,s_{j_m}))\\
&\quad\in (\G\times\spin
 \times[0,\beta))^{2m},\\
&\bY^0:=((\by_{i_1},\ua,s_{i_1}),(\by_{i_1},\da,s_{i_1}),\cdots,
(\by_{i_l},\ua,s_{i_l}),(\by_{i_l},\da,s_{i_l}))\in (\G\times\spin
 \times[0,\beta))^{2l},\\
&\bY^1:=((\by_{k_1},\ua,s_{k_1}),(\by_{k_1},\da,s_{k_1}),\cdots,
(\by_{k_m},\ua,s_{k_m}),(\by_{k_m},\da,s_{k_m}))\\
&\quad \in (\G\times\spin
 \times[0,\beta))^{2m}.
\end{align*}
By giving a number to each component we write as follows.
$$(\bX^0,\bX^1)=(X_j)_{1\le j\le 2l+2m},\quad 
(\bY^0,\bY^1)=(Y_j)_{1\le j\le 2l+2m}. $$
Because of the assumption $s_1>\cdots>s_n$, the operator
\begin{align*}
&\prod_{j=1}^n(1_{a_j=0}\psi_{\bx_j\ua}^*(s_j)\psi_{\bx_j\da}^*(s_j)
\psi_{\by_j\da}(s_j)\psi_{\by_j\ua}(s_j)
+
1_{a_j=1}\psi_{\bx_j\ua}^*(s_j)\psi_{\bx_j\da}^*(s_j)\\
&\quad+
1_{a_j=-1}
\psi_{\by_j\da}(s_j)\psi_{\by_j\ua}(s_j))
\end{align*}
is already ordered with respect to the standard lexicographical order in
 the product set $[0,\beta)\times \{\text{particle},
 \text{hole}\}$. Thus, 
\begin{align*}
&\<\prod_{j=1}^n(1_{a_j=0}\psi_{\bx_j\ua}^*(s_j)\psi_{\bx_j\da}^*(s_j)
\psi_{\by_j\da}(s_j)\psi_{\by_j\ua}(s_j)
+
1_{a_j=1}\psi_{\bx_j\ua}^*(s_j)\psi_{\bx_j\da}^*(s_j)\\
&\quad+
1_{a_j=-1}
\psi_{\by_j\da}(s_j)\psi_{\by_j\ua}(s_j))\>_0\\
&=\det(G(X_i,Y_j))_{1\le i,j\le 2l+2m}.
\end{align*}
See e.g. \cite[\mbox{Lemma B.7,\ Lemma B.8,\ Lemma B.9}]{K9} for a proof
 of the above equality. By substituting
 this into \eqref{eq_perturbation_series} we have
\begin{align}
&\frac{\Tr e^{-\beta (\sH+i\theta \sS_z+\sF+\sA)}}{\Tr e^{-\beta
 (\sH_0+i\theta \sS_z)}}\label{eq_perturbation_series_determinant}\\
&=1+\sum_{n=1}^{\infty}(-1)^n\prod_{j=1}^n\left(
\sum_{\bx_j,\by_j\in\G}\int_0^{\beta}ds_j\sum_{a_j\in \{0,1,-1\}}V(\bx_j,\by_j,a_j)
\right)1_{s_1>s_2>\cdots>s_n}1_{\sum_{j=1}^na_j=0}
\notag\\
&\qquad\qquad\cdot \det(G(X_i,Y_j))_{1\le i,j\le 2l+2m}.\notag
\end{align}
Define the function $P(\bla)$ $(\bla\in\C^2)$ by the right-hand side of
 \eqref{eq_perturbation_series_determinant}. We also define its discrete
 analogue $P_h(\bla)$ by
\begin{align}
&P_h(\bla):=1+\sum_{n=1}^{\infty}(-1)^n\prod_{j=1}^n\left(\frac{1}{h}
\sum_{\bx_j,\by_j\in\G}\sum_{s_j\in \0betah}\sum_{a_j\in \{0,1,-1\}}V(\bx_j,\by_j,a_j)
\right)\label{eq_discrete_perturbation_series}\\
&\qquad\qquad\qquad\qquad\cdot 1_{s_1>s_2>\cdots>s_n}1_{\sum_{j=1}^na_j=0}
\det(G(X_i,Y_j))_{1\le i,j\le 2l+2m}.\notag
\end{align}
Based on the fact that the function 
$$
\bs\mapsto 1_{s_1>s_2>\cdots>s_n} \det(G(X_i,Y_j))_{1\le i,j\le 2l+2m}:[0,\beta)^n\to\C
$$
is continuous almost everywhere in $[0,\beta)^n$ and the uniform bounds
\begin{align}
&\prod_{j=1}^n\left(\int_0^{\beta}ds_j\right)1_{s_1>s_2>\cdots>s_n}\le
 \frac{\beta^n}{n!},\notag\\
&|\det(G(W_i,Z_j))_{1\le i,j\le n}|\le
 \frac{2^{2L^d}e^{(2n+1)\beta\|\sH_0+i\theta \sS_z\|_{\cB(F_f)}}}
{|\Tr e^{-\beta (\sH_0+i\theta
 \sS_z)}|},\label{eq_full_determinant_bound}\\
&(\forall n\in\N,\ W_j,Z_j\in \G\times \spin\times[0,\beta)\ (j=1,2,\cdots,n)),\notag
\end{align}
where $\|\cdot\|_{\cB(F_f)}$ is the operator norm of operators on
 $F_f(L^2(\G\times \spin))$, we can prove that for any $r\in \R_{>0}$,
 $n\in \N$, 
\begin{align*}
\lim_{h\to \infty\atop h\in \frac{2}{\beta}\N}\sup_{\bla\in\overline{D(r)}^2}
\Bigg|&
\prod_{j=1}^n\left(\frac{1}{h}
\sum_{\bx_j,\by_j\in\G}\sum_{s_j\in \0betah}\sum_{a_j\in \{0,1,-1\}}V(\bx_j,\by_j,a_j)
\right)\\
&\quad\cdot 1_{s_1>s_2>\cdots>s_n}1_{\sum_{j=1}^na_j=0}
\det(G(X_i,Y_j))_{1\le i,j\le 2l+2m}\\
&-
\prod_{j=1}^n\left(
\sum_{\bx_j,\by_j\in\G}\int_0^{\beta}ds_j
\sum_{a_j\in \{0,1,-1\}}V(\bx_j,\by_j,a_j)
\right)\\
&\qquad\cdot 1_{s_1>s_2>\cdots>s_n}1_{\sum_{j=1}^na_j=0}
\det(G(X_i,Y_j))_{1\le i,j\le 2l+2m}\Bigg|=0
\end{align*}
and by the dominated convergence theorem that
\begin{align}
\lim_{h\to\infty\atop h\in
 \frac{2}{\beta}\N}\sup_{\bla\in\overline{D(r)}^2}|P_h(\bla)-P(\bla)|=0.\label{eq_formulation_time_continuum_pre}
\end{align}
By the definition of the Grassmann Gaussian integral, it holds inside
 \eqref{eq_discrete_perturbation_series} that
\begin{align*}
\det(G(X_i,Y_j))_{1\le i,j\le
 2l+2m}=\int\prod_{j=1}^n(&1_{a_j=0}\opsi_{\bx_j\ua s_j}\opsi_{\bx_j\da
 s_j}\psi_{\by_j\da s_j}\psi_{\by_j\ua s_j}
+
1_{a_j=1}\opsi_{\bx_j\ua s_j}\opsi_{\bx_j\da
 s_j}\\
&+1_{a_j=-1}
\psi_{\by_j\da s_j}\psi_{\by_j\ua s_j})
d\mu_{G}(\psi).
\end{align*}
By substituting this we observe that 
\begin{align*}
P_h(\bla)=1+\sum_{n=1}^{2L^2\beta h}&\frac{(-1)^n}{n!}\prod_{j=1}^n\left(\frac{1}{h}
\sum_{\bx_j,\by_j\in\G}\sum_{s_j\in \0betah}\sum_{a_j\in \{0,1,-1\}}V(\bx_j,\by_j,a_j)
\right)1_{j\neq k\to s_j\neq s_k}\\
&\cdot\int\prod_{j=1}^n(1_{a_j=0}\opsi_{\bx_j\ua s_j}\opsi_{\bx_j\da
 s_j}\psi_{\by_j\da s_j}\psi_{\by_j\ua s_j}
+
1_{a_j=1}\opsi_{\bx_j\ua s_j}\opsi_{\bx_j\da
 s_j}\\
&\qquad\quad+1_{a_j=-1}
\psi_{\by_j\da s_j}\psi_{\by_j\ua s_j})
d\mu_{G}(\psi).
\end{align*}
Note that if we drop the constraint $1_{j\neq k\to s_j\neq s_k}$, the
 right-hand side is equal to \\
$\int
 e^{-\sV(\psi)-\sF(\psi)-\sA(\psi)}d\mu_G(\psi)$.
By using the estimate
$$
\prod_{j=1}^n\left(\frac{1}{h}\sum_{s_j\in \0betah}\right)1_{\exists
 j\exists k (j\neq k\land s_j=s_k)}\le 1_{n\ge
	 2}\left(\begin{array}{c}n\\2\end{array}\right)\frac{\beta^{n-1}}{h}
$$
and the uniform bound \eqref{eq_full_determinant_bound} we can prove
 that for any $r\in \R_{>0}$
\begin{align*}
\lim_{h\to\infty\atop
 h\in\frac{2}{\beta}\N}\sup_{\bla\in\overline{D(r)}^2}
\left|P_h(\bla)-
\int e^{-\sV(\psi)-\sF(\psi)-\sA(\psi)}d\mu_G(\psi)\right|=0.
\end{align*}
This convergence property and \eqref{eq_formulation_time_continuum_pre}
 imply \eqref{eq_formulation_partition}.
\end{proof}

As the second step we decompose the quartic Grassmann polynomial
$\sV(\psi)$ into quadratic polynomials and a quartic correction term by means of
the Hubbard-Stratonovich transformation. Let us define $\sV_+(\psi)$,
$\sV_-(\psi)$, $\sW(\psi)\in \bigwedge \cW$ by 
\begin{align*}
&\sV_+(\psi):=\frac{|U|^{\frac{1}{2}}}{\beta^{\frac{1}{2}}L^{\frac{d}{2}}h}\sum_{\bx\in\G}\sum_{s\in\0betah}\opsi_{\bx\ua
 s}\opsi_{\bx\da s},\\
&\sV_-(\psi):=\frac{|U|^{\frac{1}{2}}}{\beta^{\frac{1}{2}}L^{\frac{d}{2}}h}\sum_{\bx\in\G}\sum_{s\in\0betah}\psi_{\bx\da
 s}\psi_{\bx\ua s},\\
&\sW(\psi):=\frac{U}{\beta L^{d}h^2}\sum_{\bx,\by\in\G}\sum_{s,t\in\0betah}
\opsi_{\bx\ua
 s}\opsi_{\bx\da s}\psi_{\by\da
 t}\psi_{\by\ua t}.
\end{align*}

\begin{lemma}\label{lem_H_S_transformation}
\begin{align}
&\int e^{-\sV(\psi)-\sF(\psi)-\sA(\psi)}d\mu_G(\psi)\label{eq_H_S_transformation}\\
&=
\frac{1}{\pi}\int_{\R^2}d\phi_1d\phi_2 e^{-|\phi|^2}
\int e^{-\sV(\psi)+\sW(\psi)
-\sF(\psi)-\sA(\psi)+\phi\sV_+(\psi)+\overline{\phi}\sV_-(\psi)}d\mu_G(\psi),\notag
\end{align}
where $\phi:=\phi_1+i\phi_2$, $|\phi|:=\sqrt{\phi_1^2+\phi_2^2}$.
\end{lemma}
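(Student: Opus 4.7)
The plan is to apply the standard complex Hubbard-Stratonovich transformation to the pair $A=\sV_+(\psi)$, $B=\sV_-(\psi)$, after verifying that $\sV_+(\psi)\sV_-(\psi)=-\sW(\psi)$. The key auxiliary identity I need is
\begin{align*}
\frac{1}{\pi}\int_{\R^2}d\phi_1 d\phi_2\, e^{-|\phi|^2+\phi A+\overline{\phi}B}=e^{AB},
\end{align*}
valid for any $A,B$ in the even subalgebra of $\bigwedge \cW$ (with $\phi=\phi_1+i\phi_2$).

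To establish this auxiliary identity I would proceed purely algebraically. Since $A$ and $B$ are finite sums of degree-$2$ Grassmann monomials, they are nilpotent and lie in the commutative even subalgebra, so $e^{\phi A+\overline{\phi}B}$ is a polynomial of bounded total degree in $(\phi_1,\phi_2)$ with coefficients in $\bigwedge \cW$. Formally completing the square gives
\begin{align*}
-|\phi|^2+\phi A+\overline{\phi}B=-\Bigl(\phi_1-\tfrac{A+B}{2}\Bigr)^2-\Bigl(\phi_2-\tfrac{i(A-B)}{2}\Bigr)^2+AB,
\end{align*}
and the only subtlety is that the shifts of $\phi_1,\phi_2$ take values in the Grassmann algebra. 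I would bypass this by expanding $e^{\phi A+\overline{\phi}B}$ as a finite polynomial in $\phi_1,\phi_2$, integrating each monomial against $e^{-|\phi|^2}$ via the standard Gaussian moments $\frac{1}{\pi}\int_{\R^2}\phi_1^j\phi_2^k e^{-|\phi|^2}d\phi_1 d\phi_2$, and resumming to identify the outcome with the terminating power series for $e^{AB}$.

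Next I would compute directly
\begin{align*}
\sV_+(\psi)\sV_-(\psi)=\frac{|U|}{\beta L^d h^2}\sum_{\bx,\by\in\G}\sum_{s,t\in\0betah}\opsi_{\bx\ua s}\opsi_{\bx\da s}\psi_{\by\da t}\psi_{\by\ua t}=-\sW(\psi),
\end{align*}
using $U<0$ so that $|U|=-U$. Applying the auxiliary identity with $A=\sV_+(\psi)$ and $B=\sV_-(\psi)$ then yields
\begin{align*}
\frac{1}{\pi}\int_{\R^2}d\phi_1 d\phi_2\, e^{-|\phi|^2+\phi \sV_+(\psi)+\overline{\phi}\sV_-(\psi)}=e^{-\sW(\psi)}.
\end{align*}

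Finally I would insert this identity into the right-hand side of \eqref{eq_H_S_transformation}. Since the $\phi$-integrand is a finite polynomial in $\phi_1,\phi_2$ with Grassmann coefficients and the Grassmann Gaussian integral is itself a finite sum, Fubini trivially permits exchanging the two integrations. Because $\sV$, $\sW$, $\sF$, $\sA$, $\sV_+$ and $\sV_-$ all lie in the even subalgebra and pairwise commute,
\begin{align*}
e^{-\sV+\sW-\sF-\sA+\phi\sV_++\overline{\phi}\sV_-}=e^{-\sV-\sF-\sA}\,e^{\sW}\,e^{\phi\sV_++\overline{\phi}\sV_-},
\end{align*}
so performing the $\phi$-integration produces a factor $e^{-\sW}$ which cancels $e^{\sW}$, leaving $\int e^{-\sV-\sF-\sA}d\mu_G(\psi)$. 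The only point requiring care is the algebraic justification of the Grassmann-valued completion of the square, but as described above this is handled by reducing everything to a finite polynomial identity in $(\phi_1,\phi_2)$.
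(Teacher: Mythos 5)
Your proposal is correct and follows essentially the same route as the paper: the paper also establishes the auxiliary identity $\frac{1}{\pi}\int_{\R^2}d\phi_1d\phi_2\,e^{-|\phi|^2+\phi\sV_++\overline{\phi}\sV_-}=e^{\sV_+\sV_-}$ by expanding the exponential, evaluating the one-dimensional Gaussian moments, and resumming, then substitutes $\sV_+(\psi)\sV_-(\psi)=-\sW(\psi)$. The only cosmetic difference is that you mention completing the square as motivation before discarding it in favor of the moment expansion, which is exactly the computation the paper performs.
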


\begin{proof}
For $f_j(\psi)\in\bigwedge \cW$, $g_j\in L^1(\R^2)$ $(j=1,2,\cdots,n)$
 we can define the Grassmann polynomial
 $\int_{\R^2}d\phi_1d\phi_2\sum_{j=1}^ng_j(\phi_1,\phi_2)f_j(\psi)$ by
\begin{align*}
\int_{\R^2}d\phi_1d\phi_2\sum_{j=1}^ng_j(\phi_1,\phi_2)f_j(\psi):=\sum_{j=1}^n\left(
\int_{\R^2}d\phi_1d\phi_2 g_j(\phi_1,\phi_2)\right)f_j(\psi).
\end{align*}
Bearing this definition in mind, the Hubbard-Stratonovich transformation
 gives that
\begin{align}
e^{\sV_+(\psi)\sV_-(\psi)}=\frac{1}{\pi}\int_{\R^2}d\phi_1d\phi_2e^{-|\phi|^2+\phi
 \sV_+(\psi)+\overline{\phi}\sV_-(\psi)}.\label{eq_H_S_transformation_partial}
\end{align}
This equality can be confirmed without difficulty. In fact,
\begin{align*}
&\frac{1}{\pi}\int_{\R^2}d\phi_1d\phi_2e^{-|\phi|^2+\phi
 \sV_+(\psi)+\overline{\phi}\sV_-(\psi)}\\
&=\sum_{m,n=0}^{N/2}\frac{1}{\pi(2m)!(2n)!}(\sV_+(\psi)+\sV_-(\psi))^{2m}
(i\sV_+(\psi)-i\sV_-(\psi))^{2n}\\
&\quad\cdot\int_{\R}d\phi_1e^{-\phi_1^2}\phi_1^{2m}
\int_{\R}d\phi_2e^{-\phi_2^2}\phi_2^{2n}\\
&=\sum_{m,n=0}^{N/2}\frac{2^{-2m-2n}}{m!n!}(\sV_+(\psi)+\sV_-(\psi))^{2m}
(i\sV_+(\psi)-i\sV_-(\psi))^{2n}\\
&=e^{\frac{1}{4}(\sV_+(\psi)+\sV_-(\psi))^2+\frac{1}{4}(i\sV_+(\psi)-i\sV_-(\psi))^2}=e^{\sV_+(\psi)\sV_-(\psi)}.
\end{align*}
By substituting the equality $\sV_+(\psi)\sV_-(\psi)=-\sW(\psi)$ and
 \eqref{eq_H_S_transformation_partial} we can derive the result.
\end{proof}

\subsection{Two-band formulation}\label{subsec_two_band}

To complete the formulation, we will include the quadratic terms
$\sV_+(\psi)$, $\sV_-(\psi)$, $\sF(\psi)$ in the covariance. This
procedure leads to another Grassmann integration where Grassmann
algebra is indexed by the band index $\{1,2\}$ rather than the spin
$\spin$. To this end, let us introduce some notations. We define the new index
sets $I_0$, $I$ by 
$$
I_0:=\{1,2\}\times \G \times \0betah,\quad I:=I_0\times\{1,-1\}.
$$ 
Let $\cV$ be the complex vector space spanned by the basis $\{\psi_X\ |\
X\in I\}$. Then, define the Grassmann polynomials $V(\psi)$, $W(\psi)$, 
 $A^1(\psi)$, $A^2(\psi)$, 
$A(\psi)\in\bigwedge \cV$ by
\begin{align}
&V(\psi):=\frac{U}{L^dh}\sum_{\bx\in\G}\sum_{s\in\0betah}
\opsi_{1\bx
 s}\psi_{1\bx s}+
\frac{U}{L^dh}\sum_{\bx,\by\in\G}\sum_{s\in\0betah}\opsi_{1\bx
 s}\psi_{2\bx s}\opsi_{2\by s}\psi_{1\by s},\label{eq_2_band_Grassmann_V}\\
&W(\psi):=\frac{U}{\beta L^dh^2}\sum_{\bx,\by\in\G}\sum_{s,t\in\0betah}\opsi_{1\bx
 s}\psi_{2\bx s}\opsi_{2\by t}\psi_{1\by t},\label{eq_2_band_Grassmann_W}\\
&A^1(\psi):=\frac{1}{h}\sum_{s\in
 \0betah}\opsi_{1r_L(\hbx)s}\psi_{2r_L(\hbx)s},\label{eq_2_band_Grassmann_A_parts}\\
&A^2(\psi):=\frac{1}{h}\sum_{s\in
 \0betah}\opsi_{1r_L(\hbx)s}\psi_{2r_L(\hbx)s} \opsi_{2r_L(\hby)s}\psi_{1r_L(\hby)s},\notag\\
&A(\psi):=\la_1A^1(\psi)+\la_2A^2(\psi).
\label{eq_artificial_Grassmann_term}
\end{align}

In order to introduce the Grassmann Gaussian integral formulation, we need
to define its covariance.
To define the covariance as a free 2-point correlation
function, first we need to introduce a free Hamiltonian on the Fermionic Fock
space $F_f(L^2(\{1,2\}\times \G))$. For $\phi\in \C$ set
\begin{align}
&H_0(\phi):=\frac{1}{L^d}\sum_{\bx,\by\in \G}\sum_{\bk\in
 \G^*}e^{i\<\bk,\bx-\by\>}
\<\left(\begin{array}{c}\psi_{1\bx}^* \\ \psi_{2\bx}^*
	\end{array}\right),
\left(\begin{array}{cc} i\frac{\theta}{2}+e(\bk) & \phi \\ 
                         \overline{\phi} & i\frac{\theta}{2}-e(\bk)
\end{array}\right)
\left(\begin{array}{c}\psi_{1\by} \\ \psi_{2\by}
	\end{array}\right)\>,\label{eq_definition_one_band_free_hamiltonian}
\end{align}
where $\psi^*_{\rho\bx}$ $(\psi_{\rho\bx})$ is the creation
(annihilation) operator in $F_f(L^2(\{1,2\}\times \G))$. 
Because of the presence of $i\frac{\theta}{2}$, $H_0(\phi)$ is not
self-adjoint. Therefore, it may not be appropriate to call $H_0(\phi)$
Hamiltonian. We included the imaginary magnetic field inside only for
conciseness. The covariance of the 2-band formulation is the restriction
of the free 2-point correlation function $C(\phi):(\{1,2\}\times
\Z^d\times [0,\beta))^2\to \C$ defined by
\begin{align}\label{eq_2_band_covariance_original_definition}
C(\phi)(\rho\bx s,\eta\by t):=\frac{\Tr (e^{-\beta H_0(\phi)}(1_{s\ge
 t}\psi_{\rho\bx}^*(s)\psi_{\eta\by}(t)- 1_{s<
 t}\psi_{\eta\by}(t)\psi_{\rho\bx}^*(s)))}{\Tr e^{-\beta H_0(\phi)}},
\end{align}
where
$\psi_{\rho\bx}^{(*)}(s):=e^{sH_0(\phi)}\psi_{\rho\bx}^{(*)}e^{-sH_0(\phi)}$.
Here again we identify $\psi_{\rho\bx}^{(*)}$ with $\psi_{\rho
r_L(\bx)}^{(*)}$ for $\bx\in\Z^d$. 
Since $H_0(\phi)$ is not self-adjoint, the denominator could be zero. We
have to make sure that this is not the case.

\begin{lemma}\label{lem_partition_function_2_band}
\begin{enumerate}[(i)]
\item\label{item_partition_function_2_band}
\begin{align*}
\Tr e^{-\beta H_0(\phi)}&=\prod_{\bk\in \G^*}\prod_{\delta\in
 \{1,-1\}}
\left(1+e^{-\beta(i\frac{\theta}{2}+\delta\sqrt{e(\bk)^2+|\phi|^2})}
\right)\\
&=e^{-i\frac{\beta\theta}{2}L^d}2^{L^d}\prod_{\bk\in
 \G^*}\left(\cos\left(\frac{\beta\theta}{2}\right)+\cosh\left(\beta\sqrt{e(\bk)^2+|\phi|^2}\right)\right)
\neq 0.
\end{align*}
\item\label{item_covariance_2_band}
For any $(\rho,\bx,s),(\eta,\by,t)\in \{1,2\}\times \Z^d\times [0,\beta)$,
\begin{align}
&C(\phi)(\rho\bx s,\eta \by t)\label{eq_covariance_2_band}\\
&=\frac{1}{L^d}\sum_{\bk\in
 \G^*}e^{i\<\bk,\bx-\by\>}e^{(s-t)(i\frac{\theta}{2}I_2+E(\phi)(\bk))}\notag\\
&\qquad\qquad\cdot (1_{s\ge t}(I_2+e^{\beta(i\frac{\theta}{2}I_2+E(\phi)(\bk))})^{-1}
-
1_{s< t}(I_2+e^{-\beta(i\frac{\theta}{2}I_2+E(\phi)(\bk))})^{-1})(\rho,\eta),\notag
\end{align}
where $I_2$ is the
$2\times 2$ unit matrix and 
\begin{align}
E(\phi)(\bk):=\left(\begin{array}{cc} e(\bk) & \overline{\phi} \\
                                     \phi & -e(\bk) \end{array}
\right).
\label{eq_dispersion_matrix}
\end{align}
\end{enumerate}
\end{lemma}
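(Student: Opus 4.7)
The plan is to Fourier-diagonalize $H_0(\phi)$ into independent $2\times 2$ blocks indexed by $\bk\in\G^*$ and then apply standard identities for quadratic bilinear operators on the Fock space.

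For part (i), I would pass to momentum space via $\hat c_{\rho\bk}:=L^{-d/2}\sum_{\bx\in\G}e^{-i\<\bk,\bx\>}\psi_{\rho\bx}$, under which $H_0(\phi)=\sum_{\bk\in\G^*}\sum_{\rho,\eta}\hat c^*_{\rho\bk}M(\bk)_{\rho\eta}\hat c_{\eta\bk}$ with $M(\bk):=i\frac{\theta}{2}I_2+\tilde E(\phi)(\bk)$ where $\tilde E(\phi)(\bk)$ is the $2\times 2$ matrix whose rows are $(e(\bk),\phi)$ and $(\overline\phi,-e(\bk))$. Since the Fock space factorizes across $\bk$, the standard identity $\Tr e^{-\sum c^*Mc}=\det(I+e^{-M})$ (valid for any, not necessarily self-adjoint, matrix $M$) gives $\Tr e^{-\beta H_0(\phi)}=\prod_{\bk\in\G^*}\det(I_2+e^{-\beta M(\bk)})$. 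The matrix $\tilde E(\phi)(\bk)$ is traceless with determinant $-(e(\bk)^2+|\phi|^2)$, so its eigenvalues are $\pm\sqrt{e(\bk)^2+|\phi|^2}$ and those of $M(\bk)$ are $i\frac{\theta}{2}\pm\sqrt{e(\bk)^2+|\phi|^2}$, yielding the first product form. To reach the second I would pair the two $\delta\in\{1,-1\}$ factors and apply the elementary identity
\begin{align*}
(1+e^{-\beta(i\theta/2+X)})(1+e^{-\beta(i\theta/2-X)})=2e^{-i\beta\theta/2}\bigl(\cos(\beta\theta/2)+\cosh(\beta X)\bigr).
\end{align*}
Nonvanishing is immediate: because $\theta\in[0,2\pi/\beta)$ one has $\cos(\beta\theta/2)>-1$, while $\cosh(\cdot)\geq 1$, so each factor is strictly positive in modulus.

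For part (ii), the same momentum diagonalization yields, block by block, the Heisenberg evolutions $\hat c^*_{\rho\bk}(s)=\sum_{\rho'}\hat c^*_{\rho'\bk}(e^{sM(\bk)})_{\rho'\rho}$ and $\hat c_{\eta\bk}(t)=\sum_{\eta'}(e^{-tM(\bk)})_{\eta\eta'}\hat c_{\eta'\bk}$. Diagonalizing $M(\bk)=VDV^{-1}$ and observing that $V^{-1}\hat c_\bk$ and $\hat c^*_\bk V$ still satisfy canonical anticommutation relations reduces the equal-time covariance to a product of scalar Fermi--Dirac formulas, giving $\<\hat c^*_{\rho\bk}\hat c_{\eta\bk}\>=((I_2+e^{\beta M(\bk)})^{-1})_{\eta\rho}$, the index swap encoding the standard transposition in the fermionic two-point function. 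Inserting the evolutions and using that $e^{(s-t)M(\bk)}$ commutes with $(I_2+e^{\beta M(\bk)})^{-1}$ produces, for $s\geq t$,
\begin{align*}
\<\hat c^*_{\rho\bk}(s)\hat c_{\eta\bk}(t)\>=\bigl(e^{(s-t)M(\bk)}(I_2+e^{\beta M(\bk)})^{-1}\bigr)_{\eta\rho};
\end{align*}
for $s<t$ the anticommutation $\{\hat c_{\eta\bk}(t),\hat c^*_{\rho\bk}(s)\}=\delta_{\rho\eta}$ combined with $I_2-(I_2+e^{\beta M})^{-1}=(I_2+e^{-\beta M})^{-1}$ yields the second term of \eqref{eq_covariance_2_band} with the correct sign. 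Transposing the resulting $2\times 2$ matrix turns $M(\bk)$ into $M(\bk)^T=i\frac{\theta}{2}I_2+E(\phi)(\bk)$ and swaps the free indices $(\eta,\rho)\to(\rho,\eta)$; inverse Fourier transforming back to position space while using the evenness $e(-\bk)=e(\bk)$ and the invariance of $\G^*$ under $\bk\mapsto-\bk$ to convert $e^{i\<\bk,\by-\bx\>}$ into $e^{i\<\bk,\bx-\by\>}$ then yields the claimed formula.

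The main obstacle will be bookkeeping: correctly tracking the transposition $E(\phi)(\bk)=\tilde E(\phi)(\bk)^T$ through the chain of manipulations. This transposition originates in the fermionic Wick formula $\<c_i^*c_j\>=((I+e^{\beta M})^{-1})_{ji}$, and once it is correctly isolated the remaining algebraic reductions and the Fourier inversion are routine. A secondary technical point is justifying $\Tr e^{-\sum c^*Mc}=\det(I+e^{-M})$ for non-Hermitian $M$, which follows by diagonalizing $M$ via a (generally non-unitary) basis change that nevertheless preserves the canonical anticommutation relations, so that the trace reduces to a product over one-mode Fock spaces each evaluated by the scalar identity $1+e^{-\lambda}$.
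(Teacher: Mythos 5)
Your proposal is correct and follows essentially the same route as the paper: diagonalize the momentum-space $2\times 2$ blocks (the paper does this via the explicit unitary $U(\phi)(\bk)$ lifted to a Fock-space unitary $\cU(\phi)$, since it reuses that unitary later) and apply the standard trace and two-point identities, with the transposition $E(\overline{\phi})(\bk)^{T}=E(\phi)(\bk)$ accounting for which matrix appears in the covariance versus the Hamiltonian. One small imprecision: at unequal imaginary times $\{\hat{c}_{\eta\bk}(t),\hat{c}^*_{\rho\bk}(s)\}=(e^{(s-t)M(\bk)})_{\eta\rho}$ rather than $\delta_{\rho\eta}$; either use this, or apply the equal-time CAR before factoring out the evolution $e^{(s-t)M(\bk)}$ --- the formula you ultimately write down is unaffected.
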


\begin{proof} \eqref{item_partition_function_2_band}:
Since the materials will be used later, we describe
 the derivation in some detail. Define the $(\phi,\bk)$-dependent
 $2\times 2$ matrix $U(\phi)(\bk)$ as follows. 
\begin{align}
U(\phi)(\bk):=1_{\phi=0}I_2+1_{\phi\neq 0}
\left(\frac{\bX(\phi)(\bk)}{\|\bX(\phi)(\bk)\|_{\C^2}},\frac{\bY(\phi)(\bk)}{\|\bY(\phi)(\bk)\|_{\C^2}}\right),\label{eq_unitary_for_diagonalization}
\end{align}
where
\begin{align*}
\bX(\phi)(\bk):=\left(\begin{array}{c} \overline{\phi} \\
		      \sqrt{e(\bk)^2+|\phi|^2}-e(\bk)\end{array}\right),\quad
\bY(\phi)(\bk):=\left(\begin{array}{c} -\overline{\phi} \\
		      \sqrt{e(\bk)^2+|\phi|^2}+e(\bk)\end{array}\right)
\end{align*}
and $\|\cdot\|_{\C^2}$ is the norm of $\C^2$ induced by the hermitian
 inner product.
Moreover, set 
\begin{align}
e(\phi)(\bk):=1_{\phi=0}e(\bk)+1_{\phi\neq 0}\sqrt{e(\bk)^2+|\phi|^2}.
\label{eq_full_eigen_value}
\end{align}
One can check that $U(\phi)(\bk)$ is unitary and 
\begin{align}
U(\phi)(\bk)^*E(\phi)(\bk)U(\phi)(\bk)=\left(\begin{array}{cc}
					e(\phi)(\bk) & 0 \\ 
                                        0 & -e(\phi)(\bk)
\end{array}\right).
\label{eq_dispersion_matrix_diagonalization}
\end{align}
Note that 
$$
H_0(\phi)=\frac{1}{L^d}\sum_{\bx,\by\in \G}\sum_{\bk\in
 \G^*}e^{i\<\bk,\bx-\by\>}
\<\left(\begin{array}{c}\psi_{1\bx}^* \\ \psi_{2\bx}^*
	\end{array}\right), \left(i\frac{\theta}{2}I_2+E(\overline{\phi})(\bk)\right)
\left(\begin{array}{c}\psi_{1\by} \\ \psi_{2\by}
	\end{array}\right)\>.
$$
With the matrix $U(\phi)(\bk)$ one can define a unitary transform
 $\cU(\phi)$ on $F_f(L^2(\{1,2\}\times \G))$ satisfying that 
\begin{align}
&\cU(\phi)H_0(\phi)\cU(\phi)^*\label{eq_band_diagonalization}\\
&=\frac{1}{L^d}\sum_{\bx,\by\in \G}\sum_{\bk\in
 \G^*}e^{i\<\bk,\bx-\by\>}
\<\left(\begin{array}{c}\psi_{1\bx}^* \\ \psi_{2\bx}^*
	\end{array}\right),
\left(\begin{array}{cc} i\frac{\theta}{2} + e(\phi)(\bk) & 0 \\
                                        0 & i\frac{\theta}{2} - e(\phi)(\bk)
\end{array}\right)
\left(\begin{array}{c}\psi_{1\by} \\ \psi_{2\by}
	\end{array}\right)\>,\notag\\
&\cU(\phi)\psi_{\rho\bx}^*\cU(\phi)^*=\frac{1}{L^d}\sum_{\by\in
 \G}\sum_{\eta\in \{1,2\}}\sum_{\bk\in \G^*}e^{-i\<\bk,\bx-\by\>}\overline{U(\overline{\phi})(\bk)(\rho,\eta)}\psi_{\eta\by}^*.\label{eq_unitary_characterization}
\end{align}
Since $\Tr e^{-\beta H_0(\phi)}=\Tr e^{-\beta
 \cU(\phi)H_0(\phi)\cU(\phi)^*}$ and $\cU(\phi)H_0(\phi)\cU(\phi)^*$ is
 diagonalized with the band index, the result follows.

\eqref{item_covariance_2_band}: By the periodicity of both sides of
 \eqref{eq_covariance_2_band} we can restrict the spatial variables to
 $\G$ during the proof. By \eqref{eq_band_diagonalization},
 \eqref{eq_unitary_characterization} and
 $\overline{U(\overline{\phi})(\bk)}=U(\phi)(\bk)$, 
\begin{align*}
&C(\phi)(\rho\bx s,\eta\by t)\\
&=\frac{1}{L^{2d}}\sum_{\bk,\bp\in\G^*}\sum_{\bx',\by'\in
 \G}\sum_{\rho',\eta'\in
 \{1,2\}}e^{-i\<\bk,\bx-\bx'\>+i\<\bp,\by-\by'\>}
U(\phi)(\bk)(\rho,\rho')\overline{U(\phi)(\bp)(\eta,\eta')}\\
&\quad \cdot\frac{\Tr (e^{-\beta \cU(\phi)H_0(\phi)\cU(\phi)^*}(1_{s\ge
 t}\tilde{\psi}_{\rho'\bx'}^*(s)\tilde{\psi}_{\eta'\by'}(t)- 1_{s<
 t}\tilde{\psi}_{\eta'\by'}(t)\tilde{\psi}_{\rho'\bx'}^*(s)))}{\Tr e^{-\beta \cU(\phi)H_0(\phi)\cU(\phi)^*}},
\end{align*}
where $\tilde{\psi}_{\rho\bx}^{(*)}(s):=e^{s
 \cU(\phi)H_0(\phi)\cU(\phi)^*}\psi_{\rho\bx}^{(*)}
e^{-s \cU(\phi)H_0(\phi)\cU(\phi)^*}$. Since
 $\cU(\phi)H_0(\phi)\cU(\phi)^*$ is diagonalized with the band index, it
 can be derived by a standard procedure (see e.g. \cite[\mbox{Appendix B}]{K9})
 that
\begin{align*}
&\frac{\Tr (e^{-\beta \cU(\phi)H_0(\phi)\cU(\phi)^*}(1_{s\ge
 t}\tilde{\psi}_{\rho'\bx'}^*(s)\tilde{\psi}_{\eta'\by'}(t)- 1_{s<
 t}\tilde{\psi}_{\eta'\by'}(t)\tilde{\psi}_{\rho'\bx'}^*(s)))}{\Tr
 e^{-\beta \cU(\phi)H_0(\phi)\cU(\phi)^*}}\\
&=\frac{1_{\rho'=\eta'}}{L^d}\sum_{\bk'\in
 \G^*}e^{i\<\bk',\bx'-\by'\>}e^{(s-t)(i\frac{\theta}{2}+(-1)^{1_{\rho'=2}}e(\phi)(\bk'))}\\
&\qquad\qquad\cdot \left(\frac{1_{s\ge
 t}}{1+e^{\beta(i\frac{\theta}{2}+(-1)^{1_{\rho'=2}}e(\phi)(\bk'))}}-\frac{1_{s<
 t}}{1+e^{-\beta(i\frac{\theta}{2}+
 (-1)^{1_{\rho'=2}}e(\phi)(\bk'))}}\right).
\end{align*}
Then by using the equality $e(\phi)(\bk)=e(\phi)(-\bk)$ we obtain that 
\begin{align}
&C(\phi)(\rho\bx s,\eta\by t)\label{eq_covariance_2_band_pre}\\
&=\frac{1}{L^{d}}\sum_{\bk\in\G^*}\sum_{\xi\in
 \{1,2\}}e^{i\<\bk,\bx-\by\>}
U(\phi)(\bk)(\rho,\xi)\overline{U(\phi)(\bk)(\eta,\xi)} e^{(s-t)(i\frac{\theta}{2}+(-1)^{1_{\xi=2}}e(\phi)(\bk))}\notag\\
&\qquad\qquad\cdot \left(\frac{1_{s\ge
 t}}{1+e^{\beta(i\frac{\theta}{2}+(-1)^{1_{\xi=2}}e(\phi)(\bk))}}-\frac{1_{s<
 t}}{1+e^{-\beta(i\frac{\theta}{2}+
 (-1)^{1_{\xi=2}}e(\phi)(\bk))}}\right).\notag
\end{align}
Then by using \eqref{eq_dispersion_matrix_diagonalization} again we reach the
 claimed equality.
\end{proof}

The following lemma will form the basis of our analysis which eventually
leads to the proof of Theorem \ref{thm_main_theorem}.

\begin{lemma}\label{lem_grassmann_formulation_2_band}
The following statements hold true for any $r\in\R_{>0}$.
\begin{enumerate}[(i)]
\item\label{item_integrand_time_continuum_limit}
For any $\phi\in\C$,
$$
\lim_{h\to\infty\atop h\in\frac{2}{\beta}\N}\int e^{-V(\psi)+W(\psi)-A(\psi)}d\mu_{C(\phi)}(\psi)
$$
converges in $C(\overline{D(r)}^2)$ as a sequence of function with the
     variable 
     $\bla(\in \overline{D(r)}^2)$.
\item\label{item_integrand_integrability}
The $C(\overline{D(r)}^2)$-valued function 
\begin{align*}
(\phi_1,\phi_2)\mapsto &e^{-\frac{\beta L^d}{|U|}|\phi-\g|^2}
 \frac{\prod_{\bk\in\G^*}(\cos(\beta\theta/2)+\cosh(\beta\sqrt{e(\bk)^2+|\phi|^2}))}{\prod_{\bk\in\G^*}(\cos(\beta\theta/2)+\cosh(\beta e(\bk)))}\\
&\cdot \lim_{h\to\infty\atop h\in\frac{2}{\beta}\N}\int e^{-V(\psi)+W(\psi)-A(\psi)}d\mu_{C(\phi)}(\psi) 
\end{align*}
belongs to $L^1(\R^2,C(\overline{D(r)}^2))$.
\item\label{item_grassmann_formulation_2_band}
\begin{align}
&\frac{\Tr e^{-\beta (\sH+i\theta \sS_z+\sF+\sA)}}{\Tr e^{-\beta
 (\sH_0+i\theta \sS_z)}}\label{eq_grassmann_formulation_2_band}\\
&=\frac{\beta L^d}{\pi |U|}\int_{\R^2}d\phi_1d\phi_2
e^{-\frac{\beta L^d}{|U|}|\phi-\g|^2}
 \frac{\prod_{\bk\in\G^*}(\cos(\beta\theta/2)+\cosh(\beta\sqrt{e(\bk)^2+|\phi|^2}))}{\prod_{\bk\in\G^*}(\cos(\beta\theta/2)+\cosh(\beta e(\bk)))}\notag\\
&\quad\cdot \lim_{h\to\infty\atop h\in\frac{2}{\beta}\N}\int
 e^{-V(\psi)+W(\psi)-A(\psi)}d\mu_{C(\phi)}(\psi).\notag\\
 &\frac{\Tr (e^{-\beta (\sH+i\theta \sS_z+\sF)}\sA_j)}{\Tr e^{-\beta
 (\sH_0+i\theta \sS_z)}}\label{eq_grassmann_formulation_2_band_correlation}\\
&=\frac{L^d}{\pi |U|}\int_{\R^2}d\phi_1d\phi_2
e^{-\frac{\beta L^d}{|U|}|\phi-\g|^2}
 \frac{\prod_{\bk\in\G^*}(\cos(\beta\theta/2)+\cosh(\beta\sqrt{e(\bk)^2+|\phi|^2}))}{\prod_{\bk\in\G^*}(\cos(\beta\theta/2)+\cosh(\beta e(\bk)))}\notag\\
&\quad\cdot \lim_{h\to\infty\atop h\in\frac{2}{\beta}\N}\int
 e^{-V(\psi)+W(\psi)}A^j(\psi)d\mu_{C(\phi)}(\psi),\notag\\
&(j=1,2).\notag
\end{align}
\end{enumerate}
\end{lemma}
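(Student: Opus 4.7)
The plan is to chain Lemmas~\ref{lem_grassmann_formulation}--\ref{lem_H_S_transformation}, perform an affine substitution in the Hubbard--Stratonovich variable $\phi$ that absorbs the $\sF$-term, and then apply a Nambu-type relabeling of Grassmann generators that converts the spin-labelled one-band setting into the two-band one with covariance $C(\phi)$. After the Hubbard--Stratonovich step the inner integrand is $e^{-\sV+\sW-\sF-\sA+\phi'\sV_++\overline{\phi'}\sV_-}$ weighted by $e^{-|\phi'|^2}/\pi$. The affine substitution $\phi'=(\beta L^d/|U|)^{1/2}(\phi-\g)$ (possibly with a sign flip on $\g$) cancels $\sF$ against the linear-in-$\phi'$ part of the quadratic Grassmann perturbation, produces the Jacobian $\beta L^d/|U|$, and transforms the Gaussian weight into $e^{-(\beta L^d/|U|)|\phi-\g|^2}$, matching the prefactor of~\eqref{eq_grassmann_formulation_2_band}.

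The Nambu relabeling I would use is $\psi_{1\bx s}=\psi_{\bx\ua s}$, $\bar{\psi}_{1\bx s}=\bar{\psi}_{\bx\ua s}$, $\psi_{2\bx s}=\bar{\psi}_{\bx\da s}$, $\bar{\psi}_{2\bx s}=\psi_{\bx\da s}$. Under it, $\sV\mapsto V$ (the diagonal number-operator term in $V$ appears from Grassmann reordering of the down-spin sector), $\sW\mapsto W$, $\sA^j\mapsto A^j$, and the remaining quadratic perturbation becomes precisely the off-diagonal $(\phi,\overline{\phi})$-coupling between bands $1$ and $2$. Completing the square in the Grassmann exponent---the standard identity that absorbs a bilinear perturbation into a Grassmann Gaussian measure---replaces the substituted covariance by $C(\phi)$ and extracts a determinantal prefactor equal to $\Tr e^{-\beta H_0(\phi)}/\Tr e^{-\beta(\sH_0+i\theta\sS_z)}$. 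By Lemmas~\ref{lem_free_partition_function} and~\ref{lem_partition_function_2_band}\,(\ref{item_partition_function_2_band}), after cancellation of the imaginary phases $e^{-i\beta\theta L^d/2}$, this ratio equals exactly the cosh quotient appearing in~\eqref{eq_grassmann_formulation_2_band}. The correlator identity~\eqref{eq_grassmann_formulation_2_band_correlation} then follows by $\la_j$-differentiation at $\bla=0$, which is permitted by the analyticity supplied by Lemma~\ref{lem_grassmann_formulation}.

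For the analytic claims, I would use the explicit representation~\eqref{eq_covariance_2_band} together with the spectral decomposition of $E(\phi)(\bk)$ from Lemma~\ref{lem_partition_function_2_band} to obtain an $h$-uniform Gram-type bound on the kernel of $C(\phi)$. This yields termwise convergence of the Taylor expansion of $e^{-V+W-A}$ as $h\to\infty$ in $C(\overline{D(r)}^2)$, establishing~\eqref{item_integrand_time_continuum_limit}, and also shows that the Grassmann integral grows at most like $e^{c_L|\phi|}$ in $\phi$, so that the Gaussian factor $e^{-(\beta L^d/|U|)|\phi-\g|^2}$ dominates it together with the cosh quotient (which grows like $e^{\beta L^d|\phi|}$) and secures the $L^1$-integrability in~\eqref{item_integrand_integrability}. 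Dominated convergence then interchanges the $\phi$-integral with the time-continuum limit, upgrading the identity from its finite-$h$ algebraic version to the claimed form~\eqref{eq_grassmann_formulation_2_band}. The principal obstacle I anticipate is obtaining the $h$-uniform bound on the Grassmann integral with the right $\phi$-dependence: the $1/h$ and $1/h^2$ prefactors in $V$, $W$, $A$ combine with the $\beta h$ Matsubara summations to produce $O(1)$ quantities only delicately, and one must keep careful track of the combinatorics of contractions in $C(\phi)$ to prevent growth in $h$ while keeping the $\phi$-growth sub-Gaussian.
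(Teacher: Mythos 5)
Your overall architecture (Lemma \ref{lem_grassmann_formulation}, Hubbard--Stratonovich, affine shift in $\phi$ absorbing $\sF$, Nambu change to the two-band picture, dominated convergence, $\la_j$-differentiation) matches the paper's, but the middle step is executed by a genuinely different mechanism --- purely Grassmann-level relabeling plus ``completing the square'' --- and as written it has two concrete gaps.

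First, your claim that under the relabeling $\psi_{2\bx s}=\opsi_{\bx\da s}$, $\opsi_{2\bx s}=\psi_{\bx\da s}$ the quadratic term $\frac{U}{L^dh}\sum\opsi_{1\bx s}\psi_{1\bx s}$ of $V(\psi)$ ``appears from Grassmann reordering of the down-spin sector'' is false. Grassmann generators anticommute exactly, so reordering only produces signs, never a constant term; the relabeled $\sV(\psi)$ is exactly the quartic part of $V(\psi)$ and the number-operator piece is missing. That piece comes from the canonical anticommutation relation $\psi_{2\bx}\psi_{2\by}^*=\delta_{\bx\by}-\psi_{2\by}^*\psi_{2\bx}$ when one normal-orders $\cU\sV\cU^*$ \emph{on the Fock space}. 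This is precisely why the paper does not stay at the Grassmann level: it first takes $h\to\infty$ to convert the Grassmann integral back into the operator trace (after a second Hubbard--Stratonovich step decomposing $\sW$, which has a double time-sum and hence is not realizable as a Hamiltonian), applies the unitary $\cU$ of \eqref{eq_unitary_creation_annihilation} there, normal-orders to get $H_0(\phi')+V+A-\xi W_+-\overline{\xi}W_-$, and only then re-derives a fresh Grassmann formulation with covariance $C(\phi')$ via \eqref{eq_2_band_formulation_inside}. Your route would produce the wrong $V(\psi)$.

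Second, ``completing the square'' at finite $h$ absorbs the bilinear $(\phi,\overline\phi)$-coupling into the Gaussian measure with the \emph{perturbed discrete} covariance of the form $G(1+AG)^{-1}$, together with a determinant $\det(1+GA)$. Neither object is the one appearing in the statement: $C(\phi)$ is defined in \eqref{eq_2_band_covariance_original_definition} as the restriction of the \emph{continuum} two-point function of $H_0(\phi)$, and the prefactor in \eqref{eq_grassmann_formulation_2_band} is the ratio of operator traces from Lemmas \ref{lem_free_partition_function} and \ref{lem_partition_function_2_band}. At finite $h$ these differ from your $G(1+AG)^{-1}$ and $\det(1+GA)$ by $O(1/h)$ corrections, so you would still need a separate argument that the discrepancy vanishes in the limit and that the determinant converges to the trace ratio $B(\phi')$; your proposal treats both identifications as automatic. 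The remaining points (the affine substitution producing the Jacobian $\beta L^d/|U|$ and the Gaussian weight, the $h$-uniform determinant bound for item (\ref{item_integrand_time_continuum_limit}), the Gaussian domination for item (\ref{item_integrand_integrability}), and Cauchy/differentiation for \eqref{eq_grassmann_formulation_2_band_correlation}) are consistent with the paper's proof.
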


\begin{remark} At this point we do not prove that we can change the
 order of the integration over $\R^2$ and the limit operation $h\to
 \infty$ in \eqref{eq_grassmann_formulation_2_band},
 \eqref{eq_grassmann_formulation_2_band_correlation}. It suffices to
 establish a suitable uniform bound on 
\begin{align*}
\int
 e^{-V(\psi)+W(\psi)-A(\psi)}d\mu_{C(\phi)}(\psi),\quad
\int
 e^{-V(\psi)+W(\psi)}A^j(\psi)d\mu_{C(\phi)}(\psi)\quad (j=1,2)
\end{align*}
with $\phi$, $h$ 
in order to ensure that these operations are exchangeable. Later we will
 prove the uniform bound \eqref{eq_grassmann_formulations_all_bound} and thus we will be able to
 exchange these operations in \eqref{eq_grassmann_formulation_2_band}
 with $\bla=(0,0)$ and in \eqref{eq_grassmann_formulation_2_band_correlation}.
 It is also possible to use 
 Pedra-Salmhofer's type determinant bound Proposition
 \ref{prop_P_S_bound_application} to directly establish a desirable uniform
 boundedness of these Grassmann integrals.
\end{remark}

\begin{proof}[Proof of Lemma \ref{lem_grassmann_formulation_2_band}]
We decompose the Grassmann polynomial $\sW(\psi)$ in the right-hand side
 of \eqref{eq_H_S_transformation} temporarily by the
 Hubbard-Stratonovich transformation. Set
\begin{align*}
&\sW_+(\psi):=\frac{i|U|^{\frac{1}{2}}}{\beta^{\frac{1}{2}}L^{\frac{d}{2}}h}\sum_{\bx\in\G}\sum_{s\in\0betah}\opsi_{\bx\ua
 s}\opsi_{\bx\da s},\\
&\sW_-(\psi):=\frac{i|U|^{\frac{1}{2}}}{\beta^{\frac{1}{2}}L^{\frac{d}{2}}h}\sum_{\bx\in\G}\sum_{s\in\0betah}\psi_{\bx\da
 s}\psi_{\bx\ua s}.
\end{align*}
For the same reason as the equality
 \eqref{eq_H_S_transformation_partial} holds, the following equality
 holds true.
\begin{align}
(\text{R.H.S of }\eqref{eq_H_S_transformation})
&=\frac{1}{\pi^2}\int_{\R^2}d\phi_1d\phi_2 \int_{\R^2}d\xi_1d\xi_2 e^{-|\phi|^2-|\xi|^2}\label{eq_double_H_S_transformation}\\
&\quad\cdot \int e^{-\sV(\psi)-\sF(\psi)-\sA(\psi)+\phi
 \sV_+(\psi)+\overline{\phi}\sV_-(\psi)+\xi
 \sW_+(\psi)+\overline{\xi}\sW_-(\psi)}d\mu_G(\psi),\notag
\end{align}
where we set $\xi:=\xi_1+i\xi_2$. Let us transform the Grassmann integral
 inside the Gaussian integral. 
By expanding each exponential of the
 Grassmann polynomials and using the determinant bound
 \eqref{eq_full_determinant_bound} we can derive that
\begin{align}
&\left|
\int e^{-\sV(\psi)-\sF(\psi)-\sA(\psi)+\phi
 \sV_+(\psi)+\overline{\phi}\sV_-(\psi)+\xi
 \sW_+(\psi)+\overline{\xi}\sW_-(\psi)}d\mu_G(\psi)\right|\label{eq_decomposed_formulation_bound}\\
&\le \max\left\{
1,\frac{2^{2L^d}e^{\beta \|\sH_0+i\theta \sS_z\|_{\cB(F_f)}}}{|\Tr
 e^{-\beta (\sH_0+i\theta\sS_z)}|}
\right\}\notag\\
&\quad\cdot e^{|U|\beta L^d D^2+2|\g|\beta L^d D+|\la_1|\beta D+|\la_2|\beta
 D^2+2|\phi||U|^{\frac{1}{2}}\beta^{\frac{1}{2}}L^{\frac{d}{2}}D+
2|\xi||U|^{\frac{1}{2}}\beta^{\frac{1}{2}}L^{\frac{d}{2}}D},\notag
\end{align}
where $D:=e^{2\beta \|\sH_0+i\theta \sS_z\|_{\cB(F_f)}}$.
The same argument as in the proof of Lemma
 \ref{lem_grassmann_formulation} proves that for any $r\in\R_{>0}$
\begin{align}
\lim_{h\to\infty\atop h\in \frac{2}{\beta}\N}\sup_{\bla\in
 \overline{D(r)}^2}
\Bigg| &\int e^{-\sV(\psi)-\sF(\psi)-\sA(\psi)+\phi
 \sV_+(\psi)+\overline{\phi}\sV_-(\psi)+\xi
 \sW_+(\psi)+\overline{\xi}\sW_-(\psi)}d\mu_G(\psi)\label{eq_decomposed_time_continuum_limit}\\
&-\frac{\Tr e^{-\beta (\sH+i\theta \sS_z+\sF+\sA-\phi
 \sV_+-\overline{\phi}\sV_--\xi\sW_+-\overline{\xi}\sW_-)}}{\Tr
 e^{-\beta (\sH_0+i\theta \sS_z)}}
\Bigg|=0,\notag
\end{align}
where $\sV_+$, $\sV_-$, $\sW_+$, $\sW_-$ are operators on
 $F_f(L^2(\G\times\spin))$ defined by
\begin{align*}
&\sV_+:=\frac{|U|^{\frac{1}{2}}}{\beta^{\frac{1}{2}}L^{\frac{d}{2}}}\sum_{\bx\in\G}\psi_{\bx\ua}^*\psi_{\bx\da}^*,\quad
\sV_-:=\frac{
|U|^{\frac{1}{2}}}{\beta^{\frac{1}{2}}L^{\frac{d}{2}}}\sum_{\bx\in\G}\psi_{\bx\da}\psi_{\bx\ua},\\
&\sW_+:=i\sV_+,\quad \sW_-:=i\sV_-.
\end{align*}

Here we introduce the band index $\{1,2\}$ and relate the partition
 function in \eqref{eq_decomposed_time_continuum_limit} to a partition
 function in the Fermionic Fock space $F_f(L^2(\{1,2\}\times \G))$. Let
 us give a number to each $\bx\in \G$ so that we can write
 $\G=\{\bx_j\}_{j=1}^{L^d}$. Define the linear map $\cU$ from
 $F_f(L^2(\G\times \spin))$ to $F_f(L^2(\{1,2\}\times \G))$ by
\begin{align*}
&\cU\O:=\prod_{j=1}^{L^d}\psi_{2\bx_j}^*\O_2,\\
&\cU(\psi_{\bx_{i_1}\ua}^*\psi_{\bx_{i_2}\ua}^*\cdots
 \psi_{\bx_{i_l}\ua}^*
\psi_{\bx_{j_1}\da}^*\psi_{\bx_{j_2}\da}^*\cdots \psi_{\bx_{j_m}\da}^*
 \O)\\
&:=\psi_{1\bx_{i_1}}^*\psi_{1\bx_{i_2}}^*\cdots
 \psi_{1\bx_{i_l}}^*
\psi_{2\bx_{j_1}}\psi_{2\bx_{j_2}}\cdots \psi_{2\bx_{j_m}}\cU \O,\\
&(\forall i_1,i_2,\cdots,i_l,j_1,j_2,\cdots,j_m\in \{1,2,\cdots,L^d\})
\end{align*}
and by linearity. Here $\O$, $\O_2$ are the vacuum of 
$F_f(L^2(\G\times \spin))$, $F_f(L^2(\{1,2\}\times\G))$ respectively.
We can see that the map $\cU$ is unitary and 
\begin{align}
\cU\psi_{\bx\ua}^*\cU^*=\psi_{1\bx}^*,\quad  
\cU\psi_{\bx\ua}\cU^*=\psi_{1\bx},\quad 
\cU\psi_{\bx\da}^*\cU^*=\psi_{2\bx},\quad  
\cU\psi_{\bx\da}\cU^*=\psi_{2\bx}^*,\quad (\forall
 \bx\in\G).\label{eq_unitary_creation_annihilation}
\end{align}
Let us define the operators $V$, $A$, $W_+$, $W_-$ on
 $F_f(L^2(\{1,2\}\times \G))$ by
\begin{align*}
&V:=\frac{U}{L^d}\sum_{\bx\in\G}\psi_{1\bx}^*\psi_{1\bx}
-\frac{U}{L^d}\sum_{\bx,\by\in\G}\psi_{1\bx}^*\psi_{2\by}^*\psi_{2\bx}\psi_{1\by},\\
&A:=\la_1\psi_{1\hbx}^*\psi_{2\hbx}-
\la_2\psi_{1\hbx}^*\psi_{2\hby}^*\psi_{2\hbx}\psi_{1\hby},\\
&W_+:=\frac{i|U|^{\frac{1}{2}}}{\beta^{\frac{1}{2}}L^{\frac{d}{2}}}\sum_{\bx\in\G}\psi_{1\bx}^*\psi_{2\bx},\quad
 W_-:=\frac{i|U|^{\frac{1}{2}}}{\beta^{\frac{1}{2}}L^{\frac{d}{2}}}\sum_{\bx\in\G}\psi_{2\bx}^*\psi_{1\bx}.
\end{align*}
We can see from \eqref{eq_definition_one_band_free_hamiltonian}, \eqref{eq_unitary_creation_annihilation} that
\begin{align*}
&\cU(\sH+i\theta
 \sS_z+\sF+\sA-\phi\sV_+-\overline{\phi}\sV_--\xi
 \sW_+-\overline{\xi}\sW_-)\cU^*\\
&=H_0(\phi')+V+A-\xi W_+-\overline{\xi}
 W_-+\sum_{\bk\in \G^*}e(\bk)-i\frac{\theta}{2}L^d,
\end{align*}
where we set $\phi':=\g-|U|^{\frac{1}{2}}\beta^{-\frac{1}{2}}L^{-\frac{d}{2}}\phi$.
Therefore,
\begin{align*}
&\frac{\Tr e^{-\beta (\sH+i\theta \sS_z+\sF+\sA-\phi
 \sV_+-\overline{\phi}\sV_--\xi\sW_+-\overline{\xi}\sW_-)}}{\Tr
 e^{-\beta (\sH_0+i\theta \sS_z)}}\\
&=\frac{e^{-\beta (\sum_{\bk\in \G^*}e(\bk)-i\frac{\theta}{2}L^d)}\Tr e^{-\beta H_0(\phi')}}{\Tr
 e^{-\beta (\sH_0+i\theta \sS_z)}}
\frac{\Tr e^{-\beta (H_0(\phi')+V+A-\xi W_+-\overline{\xi} W_-)}}{\Tr
 e^{-\beta H_0(\phi')}}.
\end{align*}
For conciseness, set
\begin{align*}
B(\phi):=
 \frac{\prod_{\bk\in\G^*}(\cos(\beta\theta/2)+\cosh(\beta\sqrt{e(\bk)^2+|\phi|^2}))}{\prod_{\bk\in\G^*}(\cos(\beta\theta/2)+\cosh(\beta e(\bk)))}.
\end{align*}
Recalling Lemma \ref{lem_free_partition_function} and Lemma
 \ref{lem_partition_function_2_band}, we observe that
\begin{align}
 \frac{\Tr e^{-\beta (\sH+i\theta \sS_z+\sF+\sA-\phi
 \sV_+-\overline{\phi}\sV_--\xi\sW_+-\overline{\xi}\sW_-)}}{\Tr
 e^{-\beta (\sH_0+i\theta \sS_z)}}
=B(\phi')
\frac{\Tr e^{-\beta (H_0(\phi')+V+A-\xi W_+-\overline{\xi} W_-)}}{\Tr
 e^{-\beta H_0(\phi')}}.\label{eq_partition_equality_inside}
\end{align}
The normalized partition function for the 2-band Hamiltonian 
$H_0(\phi')+V+A-\xi W_+-\overline{\xi} W_-$ can be formulated into the
 time-continuum limit of a Grassmann Gaussian integral in $\bigwedge \cV$
 in the same way as the proof of Lemma
 \ref{lem_grassmann_formulation}. 
Here we especially need to make sure that the creation operators are on the left of
 the annihilation operators in $V+A-\xi W_+-\overline{\xi} W_-$.
The result is that for any
 $r\in\R_{>0}$
\begin{align}
\lim_{h\to\infty\atop h\in \frac{2}{\beta}\N}\sup_{\bla\in
 \overline{D(r)}^2}
\Bigg| &\int e^{-V(\psi)-A(\psi)+\xi W_+(\psi)+\overline{\xi}W_-(\psi)}
d\mu_{C(\phi')}(\psi)\label{eq_2_band_formulation_inside}\\
&-\frac{\Tr e^{-\beta (H_0(\phi')+V+A-\xi W_+-\overline{\xi} W_-)}}{\Tr
 e^{-\beta H_0(\phi')}}
\Bigg|=0,\notag
\end{align}
where 
\begin{align*}
W_+(\psi):=\frac{i|U|^{\frac{1}{2}}}{\beta^{\frac{1}{2}}L^{\frac{d}{2}}h}\sum_{\bx\in\G}\sum_{s\in\0betah}\opsi_{1\bx
 s}\psi_{2\bx s},\quad W_-(\psi):=\frac{i|U|^{\frac{1}{2}}}{\beta^{\frac{1}{2}}L^{\frac{d}{2}}h}\sum_{\bx\in\G}\sum_{s\in\0betah}\opsi_{2\bx
 s}\psi_{1\bx s}.
\end{align*}
By considering the original definition
 \eqref{eq_2_band_covariance_original_definition} we can see that $C(\phi)$
 has a determinant bound like
 \eqref{eq_full_determinant_bound}. We can expand each exponential
 of the Grassmann polynomials to derive that
\begin{align}
&\left|
\int e^{-V(\psi)-A(\psi)+\xi
 W_+(\psi)+\overline{\xi}W_-(\psi)}d\mu_{C(\phi')}(\psi)\right|\label{eq_decomposed_partition_function_bound}\\
&\le \max\left\{
1,\frac{2^{2L^d}e^{\beta \|H_0(\phi')\|_{\cB(F_{f,2})}}}{|\Tr
 e^{-\beta H_0(\phi')}|}
\right\} e^{|U|\beta D_2+
|U|\beta L^d D_2^2+|\la_1|\beta  D_2+|\la_2|\beta
  D_2^2+
2|\xi||U|^{\frac{1}{2}}\beta^{\frac{1}{2}}L^{\frac{d}{2}}D_2},\notag
\end{align}
where 
$ \|\cdot \|_{\cB(F_{f,2})}$ denotes the operator norm of operators on
 $F_f(L^2(\{1,2\}\times \G))$ and $D_2:=e^{2\beta \|H_0(\phi')\|_{\cB(F_{f,2})} }$.

Here let us put these pieces together. By
 \eqref{eq_decomposed_formulation_bound},
\eqref{eq_decomposed_time_continuum_limit} and
\eqref{eq_partition_equality_inside} we can apply the dominated
 convergence theorem in $L^1(\R^2,C(\overline{D(r)}^2))$, 
$L^1(\R^4,C(\overline{D(r)}^2))$ to prove that
\begin{align}
&\lim_{h\to\infty\atop h\in \frac{2}{\beta}\N}\sup_{\bla\in
 \overline{D(r)}^2}\label{eq_1_band_Grassmann_2_band_partition}\\
&\cdot\Bigg|\frac{1}{\pi}\int_{\R^2}d\xi_1d\xi_2 e^{-|\xi|^2}
 \int e^{-\sV(\psi)-\sF(\psi)-\sA(\psi)+\phi
 \sV_+(\psi)+\overline{\phi}\sV_-(\psi)+\xi
 \sW_+(\psi)+\overline{\xi}\sW_-(\psi)}d\mu_G(\psi)\notag\\
&\quad-\frac{1}{\pi}\int_{\R^2}d\xi_1d\xi_2 e^{-|\xi|^2}B(\phi')
\frac{\Tr e^{-\beta (H_0(\phi')+V+A-\xi W_+-\overline{\xi} W_-)}}{\Tr
 e^{-\beta H_0(\phi')}}
\Bigg|=0,\quad (\forall \phi\in\C),\notag\\
&\lim_{h\to\infty\atop h\in \frac{2}{\beta}\N}\sup_{\bla\in
 \overline{D(r)}^2}\label{eq_1_band_Grassmann_2_band_partition_double}\\
&\cdot\Bigg|\frac{1}{\pi^2}\int_{\R^2}d\phi_1d\phi_2\int_{\R^2}d\xi_1d\xi_2 e^{-|\phi|^2-|\xi|^2}\notag\\
&\qquad\cdot \int e^{-\sV(\psi)-\sF(\psi)-\sA(\psi)+\phi
 \sV_+(\psi)+\overline{\phi}\sV_-(\psi)+\xi
 \sW_+(\psi)+\overline{\xi}\sW_-(\psi)}d\mu_G(\psi)\notag\\
&\quad-\frac{1}{\pi^2}\int_{\R^2}d\phi_1d\phi_2\int_{\R^2}d\xi_1d\xi_2
 e^{-|\phi|^2-|\xi|^2} B(\phi')
\frac{\Tr e^{-\beta (H_0(\phi')+V+A-\xi W_+-\overline{\xi} W_-)}}{\Tr
 e^{-\beta H_0(\phi')}}\Bigg|=0.\notag
\end{align}
By \eqref{eq_2_band_formulation_inside},
 \eqref{eq_decomposed_partition_function_bound} the dominated
 convergence theorem in $L^1(\R^2,C(\overline{D(r)}^2))$ ensures that
\begin{align*}
\lim_{h\to\infty\atop h\in \frac{2}{\beta}\N}\sup_{\bla\in
 \overline{D(r)}^2}\Bigg|&\frac{1}{\pi}\int_{\R^2}d\xi_1d\xi_2 e^{-|\xi|^2}
 \int e^{-V(\psi)-A(\psi)+\xi
 W_+(\psi)+\overline{\xi}W_-(\psi)}d\mu_{C(\phi')}(\psi)\notag\\
&\quad-\frac{1}{\pi}\int_{\R^2}d\xi_1d\xi_2 e^{-|\xi|^2}
\frac{\Tr e^{-\beta (H_0(\phi')+V+A-\xi W_+-\overline{\xi} W_-)}}{\Tr
 e^{-\beta H_0(\phi')}}
\Bigg|=0,\quad (\forall \phi\in\C),
\end{align*}
or by using the Hubbard-Stratonovich transformation again,
\begin{align}
\lim_{h\to\infty\atop h\in \frac{2}{\beta}\N}\sup_{\bla\in
 \overline{D(r)}^2}\Bigg|&
 \int e^{-V(\psi)+W(\psi)-A(\psi)}d\mu_{C(\phi')}(\psi)\label{eq_2_band_2_band}\\
&\quad-\frac{1}{\pi}\int_{\R^2}d\xi_1d\xi_2 e^{-|\xi|^2}
\frac{\Tr e^{-\beta (H_0(\phi')+V+A-\xi W_+-\overline{\xi} W_-)}}{\Tr
 e^{-\beta H_0(\phi')}}
\Bigg|=0,\quad (\forall \phi\in\C),\notag
\end{align}
which implies the claim \eqref{item_integrand_time_continuum_limit}. 
We can deduce from \eqref{eq_decomposed_formulation_bound},
\eqref{eq_1_band_Grassmann_2_band_partition} that the 
$C(\overline{D(r)}^2)$-valued function 
\begin{align*}
&(\phi_1,\phi_2)\mapsto \frac{e^{-|\phi|^2}}{\pi}\int_{\R^2}d\xi_1d\xi_2
 e^{-|\xi|^2}B(\phi')\frac{\Tr e^{-\beta (H_0(\phi')+V+A-\xi W_+-\overline{\xi} W_-)}}{\Tr e^{-\beta H_0(\phi')}}
\end{align*}
belongs to $L^1(\R^2,C(\overline{D(r)}^2))$. By combining this fact with
 \eqref{eq_2_band_2_band} we see that the $C(\overline{D(r)}^2)$-valued
 function
\begin{align*}
(\phi_1,\phi_2)\mapsto \frac{e^{-|\phi|^2}}{\pi}B(\phi')\lim_{h\to
 \infty\atop h\in \frac{2}{\beta}\N}\int e^{-V(\psi)+W(\psi)-A(\psi)}d\mu_{C(\phi')}(\psi)
\end{align*}
belongs to $L^1(\R^2,C(\overline{D(r)}^2))$. Then, by changing $\phi$ to
 $|U|^{-\frac{1}{2}}\beta^{\frac{1}{2}}L^{\frac{d}{2}}(\g-\phi)$ we see
 that the claim \eqref{item_integrand_integrability} holds. 
Moreover, by \eqref{eq_formulation_partition},
 \eqref{eq_H_S_transformation},
\eqref{eq_double_H_S_transformation},
\eqref{eq_1_band_Grassmann_2_band_partition_double},
\eqref{eq_2_band_2_band} and changing $\phi$ to
 $|U|^{-\frac{1}{2}}\beta^{\frac{1}{2}}L^{\frac{d}{2}}(\g-\phi)$,
\begin{align}\label{eq_grassmann_formulation_2_band_pre}
&\frac{\Tr e^{-\beta (\sH+i\theta\sS_z+\sF+\sA)}}{\Tr
 e^{-\beta (\sH_0+i\theta \sS_z)}}\\
&=
\frac{1}{\pi}\int_{\R^2}d\phi_1d\phi_2
 e^{-|\phi|^2}B(\phi')
\lim_{h\to\infty\atop h\in \frac{2}{\beta}\N}
 \int
 e^{-V(\psi)+W(\psi)-A(\psi)}d\mu_{C(\phi')}(\psi)\notag\\
&=\frac{\beta L^d}{\pi|U|}\int_{\R^2}d\phi_1d\phi_2
 e^{-\frac{\beta L^d}{|U|}|\phi-\g|^2}B(\phi)
\lim_{h\to\infty\atop h\in \frac{2}{\beta}\N}
 \int
 e^{-V(\psi)+W(\psi)-A(\psi)}d\mu_{C(\phi)}(\psi),\notag
\end{align}
which is \eqref{eq_grassmann_formulation_2_band}.
Furthermore, by Cauchy's integral formula, 
\begin{align}
&\frac{\Tr (e^{-\beta (\sH+i\theta\sS_z+\sF)}\sA_j)}{\Tr
 e^{-\beta (\sH_0+i\theta \sS_z)}}=-\frac{1}{2\pi
 i\beta}\oint_{|\la_j|=r}d\la_j\frac{1}{\la_j^2}
\frac{\Tr e^{-\beta (\sH+i\theta\sS_z+\sF+\la_j\sA_j)}}{\Tr
 e^{-\beta (\sH_0+i\theta \sS_z)}}\label{eq_grassmann_formulation_2_band_correlation_pre}\\
&=-\frac{L^d}{\pi|U|}\int_{\R^2}d\phi_1d\phi_2\frac{1}{2\pi
 i}\oint_{|\la_j|=r}d\la_j\frac{1}{\la_j^2}
 e^{-\frac{\beta L^d}{|U|}|\phi-\g|^2}B(\phi)\notag\\
&\qquad\cdot\lim_{h\to\infty\atop h\in \frac{2}{\beta}\N}
\int e^{-V(\psi)+W(\psi)-\la_j A^j(\psi)}d\mu_{C(\phi)}(\psi)\notag\\
&=-\frac{L^d}{\pi|U|}
\int_{\R^2}d\phi_1d\phi_2
  e^{-\frac{\beta L^d}{|U|}|\phi-\g|^2}
B(\phi)\notag\\
&\qquad\cdot \lim_{h\to\infty\atop h\in \frac{2}{\beta}\N}
 \frac{1}{2\pi
 i}\oint_{|\la_j|=r}d\la_j
\frac{1}{\la_j^2}
 \int
 e^{-V(\psi)+W(\psi)-\la_j A^j(\psi)}d\mu_{C(\phi)}(\psi)\notag\\
&=\frac{L^d}{\pi|U|}\int_{\R^2}d\phi_1d\phi_2
  e^{-\frac{\beta L^d}{|U|}|\phi-\g|^2} B(\phi)
\lim_{h\to\infty\atop h\in \frac{2}{\beta}\N}
  \int
 e^{-V(\psi)+W(\psi)}A^j(\psi)d\mu_{C(\phi)}(\psi),\notag
\end{align}
which is \eqref{eq_grassmann_formulation_2_band_correlation}.
Note that the claim \eqref{item_integrand_integrability}
justifies the change of order of the integrals in the 2nd equality. The
 uniform convergence property claimed in  \eqref{item_integrand_time_continuum_limit} justifies the change of
 order of the integral and the limit operation in the 3rd equality.
\end{proof}

\section{Estimation of Grassmann integration}\label{sec_Grassmann_integration}

Thanks to Lemma \ref{lem_grassmann_formulation_2_band}, our objective is
set to analyze the Grassmann integral 
$$
\int e^{-V(\psi)+W(\psi)-A(\psi)}d\mu_{C(\phi)}(\psi)
$$
with $\phi(\in\C)$ fixed. We especially need to find out which term will
remain relevant after taking the limit $h\to\infty$, $L\to \infty$. To
achieve this aim, it is efficient to generalize the problem to some
extent so that we can describe the basic mechanism of convergence
without taking care of a bunch of physical parameters. In the next
section we will substitute the physical parameters into the general
results obtained in this section. 

In \eqref{eq_2_band_Grassmann_V}, \eqref{eq_2_band_Grassmann_W} the
Grassmann polynomials $V(\psi)$, $W(\psi)$ were defined with
the coupling constant $U(\in\R_{<0})$. It is convenient for our analysis
to extend the coupling constant to be a complex parameter. To avoid
confusion, let us use the notation $V(u)(\psi)$, $W(u)(\psi)$ when we
consider a complex parameter $u$ in place of $U$. More precisely, we set for
$u\in\C$
\begin{align*}
&V(u)(\psi):=
\frac{u}{L^dh}\sum_{\bx\in\G}\sum_{s\in
 [0,\beta)_h}\opsi_{1\bx s}\psi_{1 \bx s}+
\frac{u}{L^dh}\sum_{\bx,\by\in\G}\sum_{s\in
 [0,\beta)_h}\opsi_{1\bx s}\psi_{2 \bx s}\opsi_{2\by s}\psi_{1\by s},\\
&W(u)(\psi):=\frac{u}{\beta L^dh^2}\sum_{\bx,\by\in\G}\sum_{s,t\in
 [0,\beta)_h}\opsi_{1\bx s}\psi_{2 \bx s}\opsi_{2\by t}\psi_{1\by t}
\end{align*}
so that $V(U)(\psi)=V(\psi)$, $W(U)(\psi)=W(\psi)$. 

\subsection{Preliminaries}\label{subsec_preliminaries}

In addition to the brief introduction of Grassmann algebra in Subsection
\ref{subsec_Grassmann_algebra} here we need to define more notations,
notational conventions and other tools necessary in the forthcoming
 analysis. We keep using $I_0$ and $I$ defined in Subsection
 \ref{subsec_two_band} as the index sets of Grassmann algebra.
Let us admit that for any set $S$, $n\in\N$ and $\bX$ belonging to the
product set $S^n$, $X_j$ denotes the $j$-th component of $\bX$. 
Thus, $\bX$ is equal to $(X_1,X_2,\cdots,X_n)$. We will use this
notational rule, which helps to shorten formulas, without any additional comment.
In many occasions we will apply this 
rule to the sets $I_0^n$, $I^n$. Size of a Grassmann polynomial can be
measured through norms on its kernel functions. Thus, it is
important to organize various notions concerning kernel functions. For $n\in\N$
let $\S_n$ denote the set of permutations over $\{1,2,\cdots,n\}$. For
$\bX\in I^n$ and $\s\in \S_n$ we let $\bX_{\s}$ denote
$(X_{\s(1)},X_{\s(2)},\cdots, X_{\s(n)})$. For a function $f:I^n\to \C$
we call it anti-symmetric if 
$$
f(\bX)=\sgn(\s)f(\bX_{\s}),\quad (\forall \bX\in I^n,\ \s\in\S_n).
$$
For a function $g:I^m\times I^n\to \C$ we call it bi-anti-symmetric if 
\begin{align*}
g(\bX,\bY)=\sgn(\s)\sgn(\tau)g(\bX_{\s},\bY_{\tau}),\quad (\forall
 (\bX,\bY)\in I^m\times I^n,\ \s\in\S_m,\ \tau\in \S_n).
\end{align*}
For any function $f:I^n\to \C$ $(n\in\N_{\ge 2})$ we define the norms
$\|f\|_{1,\infty}$, $\|f\|_1$ by
\begin{align*}
&\|f\|_{1,\infty}:=\sup_{j\in \{1,2,\cdots,n\}}\sup_{X_0\in
 I}\frah^{n-1}\sum_{\bX\in I^{j-1}}\sum_{\bY\in
 I^{n-j}}|f(\bX,X_0,\bY)|,\\
&\|f\|_1:=\frah^n\sum_{\bX\in I^n}|f(\bX)|.
\end{align*}

Since anti-symmetric functions on $I^2$ play special roles in our
analysis, we need to define other kinds of norm on them. For
this purpose as well as other later use let us introduce a few
notational conventions. 
For $\bX=(\rho_1\bx_1s_1\xi_1,\rho_2\bx_2s_2\xi_2,$
$\cdots,\rho_n\bx_ns_n\xi_n)\in 
(\{1,2\}\times \Z^d\times\frac{1}{h}\Z\times\{1,-1\})^n$,
$s\in\frac{1}{h}\Z$,
we set
\begin{align}
\bX+s:=(\rho_1\bx_1(s_1+s)\xi_1,\rho_2\bx_2(s_2+s)\xi_2,
\cdots,\rho_n\bx_n(s_n+s)\xi_n).\label{eq_time_translation_notational_convention}
\end{align}
Similarly for $\bX=(\rho_1\bx_1s_1,\cdots,\rho_n\bx_ns_n)\in 
(\{1,2\}\times \Z^d\times\frac{1}{h}\Z)^n$,
$s\in\frac{1}{h}\Z$
we set
$\bX+s:=(\rho_1\bx_1(s_1+s),\cdots,\rho_n\bx_n(s_n+s))$.
Define the index set $I^0$ by 
$$
I^0:=\{1,2\}\times\G\times\{0\}\times\{1,-1\}.
$$
It follows that for $\bX\in (I^0)^n$, $s\in [0,\beta)_h$, $\bX+s\in I^n$.
With these notational rules we define the norms 
$\|\cdot\|_{1,\infty}'$, $\|\cdot\|$ on anti-symmetric functions on
$I^2$ as follows. For any anti-symmetric function $g:I^2\to\C$,
\begin{align*}
\|g\|_{1,\infty}':=\sup_{X_0\in I\atop s\in \0betah}\sum_{X\in
 I^0}|g(X_0,X+s)|,\quad
\|g\|:=\|g\|_{1,\infty}'+\beta^{-1}\|g\|_{1,\infty}.
\end{align*}

We will also deal with bi-anti-symmetric functions on the product set $I^m\times I^n$. By
considering that these functions are defined on $I^{m+n}$ the norms
$\|\cdot\|_{1,\infty}$, $\|\cdot\|_1$ can be defined on them. We will
need to measure these functions coupled with another anti-symmetric
function on $I^2$. The measurement will be carried out in terms of the
following quantities. For a bi-anti-symmetric function $f_{m,n}:I^m\times I^n\to \C$
$(m,n\in \N_{\ge 2})$ and an anti-symmetric function $g:I^2\to \C$ we
set
\begin{align*}
&[f_{m,n},g]_{1,\infty}\\
&:=\max\Bigg\{\sup_{X_0\in I}\frah^{m-1}\sum_{\bX\in
 I^{m-1}}
\Bigg\{\sup_{Y_0\in I}\frah^n\sum_{\bY\in
 I^n}|f_{m,n}((X_0,\bX),\bY)||g(Y_0,Y_1)|\Bigg\},\\
&\qquad\qquad \sup_{Y_0\in I}\frah^{n-1}\sum_{\bY\in
 I^{n-1}}
\Bigg\{\sup_{X_0\in I}\frah^{m}\sum_{\bX\in
 I^m}|f_{m,n}(\bX,(Y_0,\bY))||g(X_0,X_1)|\Bigg\}\Bigg\},\\
&[f_{m,n},g]_{1}:=\frah^{m+n}\sum_{\bX\in I^m\atop \bY\in
 I^n}|f_{m,n}(\bX,\bY)||g(X_1,Y_1)|.
\end{align*}

For $\bX\in I^n$ let $\psi_{\bX}$ denote $\psi_{X_1}\psi_{X_2}\cdots\psi_{X_n}$.
Anti-symmetry of Grassmann variables implies that for any
$f(\psi)\in\bigwedge\cV$ there uniquely exist $f_0\in \C$ and
anti-symmetric functions $f_n:I^n\to\C$ $(n=1,2,\cdots,N)$ such that
$$
f(\psi)=\sum_{n=0}^N\frah^n\sum_{\bX\in I^n}f_n(\bX)\psi_{\bX}.
$$
Based on this fact we admit that for $f(\psi)\in\bigwedge\cV$, $f_n$
$(n=0,1,\cdots,N)$ denote the unique anti-symmetric kernels of
$f(\psi)$. A norm can be defined in the vector space $\bigwedge
\cV$ by defining a norm in every space of
anti-symmetric kernels. Finite dimensionality of $\bigwedge \cV$ implies
that $\bigwedge \cV$ is a Banach space with the norm. Then, by
considering as a Banach-space-valued function
the standard notions
such as continuity, differentiability and analyticity of a Grassmann
polynomial parameterized by real or complex variables are defined. 
Since we introduce various norms on anti-symmetric kernels, it
is clearer to define these notions without specifying a norm on
$\bigwedge \cV$. 
We say that a sequence of elements of $\bigwedge \cV$, $f^m(\psi)$
($m=1,2,\cdots$) converges in $\bigwedge \cV$ if each anti-symmetric
kernel function of $f^m(\psi)$ converges point-wise, or more precisely
 $\lim_{m\to\infty}f^m_n(\bX)$ converges in $\C$ for any $n\in
 \{0,1,\cdots,N\}$, $\bX\in I^n$. 
For a domain $O$ of
$\R^m$ or $\C^m$ and $f(\bz)(\psi)\in \bigwedge \cV$ parameterized by
$\bz\in \overline{O}$ we say that $f(\bz)(\psi)$ is continuous with
$\bz$ in $\overline{O}$, differentiable with $\bz$ in $O$ and analytic
with $\bz$ in $O$ if so is $f(\bz)_n(\bX)$ for any $n\in
 \{0,1,\cdots,N\}$, $\bX\in I^n$. Moreover, when it is differentiable, for
$j\in\{1,2,\cdots,m\},\bz=(z_1,z_2,\cdots,z_m)\in O$ we define the
Grassmann polynomial $\frac{\partial}{\partial z_j}f(\bz)(\psi)\in
\bigwedge \cV$ by
\begin{align*}
\frac{\partial}{\partial
 z_j}f(\bz)(\psi):=\sum_{n=0}^N\frah^n\sum_{\bX\in I^n}
\frac{\partial}{\partial z_j}f(\bz)_n(\bX)\psi_{\bX}.
\end{align*}

The single-scale integration is well-described in terms of trees. 
We refer to the clear statement of the tree formula
 with a self-contained proof presented in \\
\cite[\mbox{Theorem 3,\ Appendix A}]{SW}. 
We should also lead the readers to the references of \cite{SW} for more
original versions of such expansion techniques known as the
Brydges-Battle-Federbush formula. To state the formula, we need to recall the definition of Grassmann
left-derivatives. Let $\cV^j$ be the vector
space spanned by the basis $\{\psi_X^j\ |\ X\in I\}$ for $j=1,2,\cdots,n$.
For $p\in \{1,\cdots,n\}$, $X\in I$ the Grassmann
left-derivative $\partial/\partial \psi_{X}^p$ is a linear transform on
$\bigwedge(\cV^1\oplus\cdots\oplus \cV^n)$ defined by
\begin{align*}
&\frac{\partial}{\partial
 \psi_{X}^p}(\psi_{X_1}^{p_1}\cdots\psi_{X_j}^{p_j}
\psi_X^p\psi_{X_{j+1}}^{p_{j+1}}\cdots \psi_{X_m}^{p_m}):=(-1)^j
\psi_{X_1}^{p_1}\cdots\psi_{X_j}^{p_j}
\psi_{X_{j+1}}^{p_{j+1}}\cdots \psi_{X_m}^{p_m},\\
&\frac{\partial}{\partial
 \psi_{X}^p}(\psi_{X_1}^{p_1}\cdots\psi_{X_j}^{p_j}
\psi_{X_{j+1}}^{p_{j+1}}\cdots \psi_{X_m}^{p_m}):=0
\end{align*}
for any $(p_j,X_j)\in \{1,2,\cdots,n\}\times I$ satisfying
$(p_j,X_j)\neq (p,X)$ $(j=1,2,\cdots,m)$ and by
linearity. The Grassmann left-derivative $\partial/\partial \psi_{X}$
$(X\in I)$ can be defined as a linear transform on $\bigwedge(\cV\oplus
\cV^1\oplus\cdots \oplus \cV^n)$ in the same way.

Let $\bigwedge_{even}\cV$ denote a subspace of $\bigwedge \cV$
consisting of even polynomials. More precisely, 
$$
\bigwedge_{even}\cV:=\bigoplus_{n=0}^{N/2}\bigwedge^{2n}\cV.
$$
For a covariance $\cC:I_0^2\to\C$ and $f^j(\psi)\in \bigwedge_{even}\cV$
 $(j=1,2,\cdots,n)$
the Grassmann polynomial
$$
\log\left(\int e^{\sum_{j=1}^nz_jf^j(\psi+\psi^1)}d\mu_{\cC}(\psi^1)
\right)
$$
is well-defined and analytic with $(z_1,z_2,\cdots,z_n)$ in a neighborhood of the origin.
The strategy of single-scale analysis is to expand the logarithm into the
Taylor series around the origin and estimate each order term. 
To this end we need to know a formula for 
$$
\frac{1}{n!}\prod_{j=1}^n\left(\frac{\partial}{\partial z_j}\right)
\log\left(\int e^{\sum_{j=1}^nz_jf^j(\psi+\psi^1)}d\mu_{\cC}(\psi^1)
\right)\Bigg|_{z_j=0\atop (\forall j\in\{1,2,\cdots,n\})}.
$$
The formula for $n=1$ can be derived from the definition as follows.
\begin{align*}
\frac{d}{dz}\log\left(\int
 e^{zf^1(\psi+\psi^1)}d\mu_{\cC}(\psi^1)\right)\Bigg|_{z=0}&=\int f^1(\psi+\psi^1)d\mu_{\cC}(\psi^1)\\
&=e^{-\sum_{\bX\in I_0^2}\cC(\bX)\frac{\partial}{\partial
 \opsi_{X_1}^1}\frac{\partial}{\partial
 \psi_{X_2}^1}}f^1(\psi+\psi^1)\Big|_{\psi^1=0}.
\end{align*}
The tree formula characterizes the derivative for $n\in\N_{\ge 2}$.
\begin{align*}
&\frac{1}{n!}\prod_{j=1}^n\left(\frac{\partial}{\partial z_j}\right)
\log\left(\int e^{\sum_{j=1}^nz_jf^j(\psi+\psi^1)}d\mu_{\cC}(\psi^1)
\right)\Bigg|_{z_j=0\atop (\forall j\in\{1,2,\cdots,n\})}\\
&=\frac{1}{n!}\sum_{T\in \T(\{1,2,\cdots,n\})}\prod_{\{p,q\}\in
 T}(\D_{p,q}(\cC)+\D_{q,p}(\cC))\\
&\quad\cdot\int_{[0,1]^{n-1}}d\bs \sum_{\s\in
 \S_n(T)}\varphi(T,\s,\bs)
e^{\sum_{a,b=1}^nM(T,\s,\bs)_{a,b}\D_{a,b}(\cC)}
\prod_{j=1}^nf^j(\psi^j+\psi)\Bigg|_{\psi^j=0\atop(\forall
 j\in\{1,2,\cdots,n\})},
\end{align*}
where $\T(\{1,2,\cdots,n\})$ is the set of all trees over the vertices
$\{1,2,\cdots,n\}$, 
$$
\D_{p,q}(\cC):=-\sum_{\bX\in I^2_0}\cC(\bX)\frac{\partial}{\partial
\opsi_{X_1}^p}\frac{\partial}{\partial \psi_{X_2}^q},
$$
$\S_n(T)$ is a $T$-dependent subset of $\S_n$, $\varphi(T,\s,\cdot)$ is
a real non-negative function on $[0,1]^{n-1}$ depending on $T\in
\T(\{1,2,\cdots,n\})$, $\s\in \S_n(T)$ and 
$(M(T,\s,\bs)_{a,b})_{1\le a,b\le n}$ is a $(T,\s,\bs)$-dependent real
symmetric non-negative matrix which satisfies that 
\begin{align*}
&M(T,\s,\bs)_{a,a}=1,\\
&(\forall a\in \{1,\cdots,n\},\ T\in \T(\{1,\cdots,n\}),\ \s\in
\S_n(T),\ \bs\in [0,1]^{n-1}).\\
&\bs\mapsto M(T,\s,\bs)_{a,b}\text{ is continuous in }[0,1]^{n-1},\\
&(\forall a,b\in \{1,\cdots,n\},\ T\in \T(\{1,\cdots,n\}),\ \s\in
\S_n(T)).
\end{align*}
Moreover, the function $\varphi(T,\s,\cdot)$ satisfies that
\begin{align}
\int_{[0,1]^{n-1}}d\bs\sum_{\s\in \S_n(T)}\varphi(T,\s,\bs)=1,\quad
 (\forall T\in \T(\{1,2,\cdots,n\})).\label{eq_phi_good_property}
\end{align}

Because of the property \eqref{eq_phi_good_property} the function
$\varphi(T,\s,\cdot)$ does not affect our estimation of Grassmann
polynomials in practice. We can deduce from the fact that the matrix
$M(T,\s,\bs)$ is real symmetric non-negative and all the diagonal
elements are 1 that there are $\bv_1,\cdots,\bv_n\in\R^n$ such
that $\|\bv_i\|_{\R^n}=1$ $(i=1,\cdots,n)$ and
\begin{align}
M(T,\s,\bs)_{a,b}=\<\bv_a,\bv_b\>,\quad(\forall a,b\in \{1,\cdots,n\}).
\label{eq_gram_representation_primitive}
\end{align}
Thus,
\begin{align}
&|M(T,\s,\bs)_{a,b}|\le 1,\label{eq_exponent_matrix_bound}\\
&(\forall a,b\in \{1,\cdots,n\},\ T\in \T(\{1,\cdots,n\}),\ \s\in
 \S_n(T),\ \bs\in [0,1]^{n-1}).\notag
\end{align}

To systematize our estimation, let us define operators on
Grassmann algebras which are slight generalization of the above formulas.
For $p,q\in \Z$, set 
$$
\D_{\{p,q\}}(\cC):=\sum_{\bX\in
I^2}\tilde{\cC}(\bX)\frac{\partial}{\partial \psi_{X_1}^p}\frac{\partial}{\partial \psi_{X_2}^q},
$$
where $\tilde{\cC}:I^2\to\C$ is the anti-symmetric extension of $\cC$
defined by
\begin{align}
&\tilde{\cC}((X,\xi),(Y,\zeta)):=\frac{1}{2}(1_{(\xi,\zeta)=(1,-1)}\cC(X,Y)-1_{(\xi,\zeta)=(-1,1)}\cC(Y,X)),\label{eq_anti_symmetric_extension}\\
&(\forall X,Y\in I_0,\ \xi,\zeta\in \{1,-1\}).\notag
\end{align}
We can see that 
\begin{align*}
&-2\D_{\{p,q\}}(\cC)=\D_{p,q}(\cC)+\D_{q,p}(\cC),\\
 &-\sum_{p,q=1}^nM(T,\s,\bs)_{p,q}\D_{\{p,q\}}(\cC)=
\sum_{p,q=1}^nM(T,\s,\bs)_{p,q}\D_{p,q}(\cC).
\end{align*}
For $S=\{s_1,s_2,\cdots,s_n\}(\subset \Z)$ with $\sharp S=n\ge 2$ let
$\T(S)$ denote the set of all trees over the vertices
$\{s_1,s_2,\cdots,s_n\}$. Using these notations we set 
for $S=\{s_1,s_2,\cdots,s_n\}(\subset\Z)$, if $n=1$,
\begin{align*}
&Tree(S,\cC):=e^{\D_{\{s_1,s_1\}(\cC)}},
\end{align*}
if $n\ge 2$,
\begin{align*}
&Tree(S,\cC):=(-2)^{n-1}\sum_{T\in \T(S)}\prod_{\{p,q\}\in T}\D_{\{p,q\}}(\cC)
\int_{[0,1]^{n-1}}d\bs \sum_{\s\in
 \S_n(T)}\varphi(T,\s,\bs)\\
&\qquad\qquad\qquad\quad\cdot e^{-\sum_{a,b=1}^nM(T,\s,\bs)_{a,b}\D_{\{s_a,s_b\}}(\cC)}.
\end{align*}
It follows that for any $n\in \N$, $f^j(\psi)\in \bigwedge_{even}\cV$
$(j=1,2,\cdots,n)$,
\begin{align}
&\frac{1}{n!}\prod_{j=1}^n\left(\frac{\partial}{\partial z_j}\right)\log\left(\int
 e^{\sum_{j=1}^nz_jf^j(\psi+\psi^1)}d\mu_{\cC}(\psi^1)\right)\Bigg|_{z_j=0\atop(\forall
 j\in\{1,2,\cdots,n\})}\label{eq_tree_formula_basic}\\
&=\frac{1}{n!}Tree(\{1,2,\cdots,n\},\cC)\prod_{j=1}^nf^j(\psi+\psi^j)
\Bigg|_{\psi^j=0\atop(\forall
 j\in\{1,2,\cdots,n\})}.\notag
\end{align}
When $f^j(\psi)=f(\psi)$ for any $j\in \{1,2,\cdots,n\}$, the formula
\eqref{eq_tree_formula_basic} implies that
\begin{align}
&\frac{1}{n!}\left(\frac{d}{dz}\right)^n\log\left(\int
 e^{zf(\psi+\psi^1)}d\mu_{\cC}(\psi^1)\right)\Bigg|_{z=0}\label{eq_tree_formula_simple}\\
&=\frac{1}{n!}Tree(\{1,2,\cdots,n\},\cC)\prod_{j=1}^nf(\psi+\psi^j)
\Bigg|_{\psi^j=0\atop(\forall
 j\in\{1,2,\cdots,n\})}.\notag
\end{align}

The following inequalities will be crucially important.
\begin{align}
&\left|e^{\D_{\{s_1,s_1\}}(\cC)}\psi_{\bX}^{s_1}\big|_{\psi^{s_1}=0}
\right|
\le \left\{\begin{array}{ll} 0 &\text{if $m$ is odd,}\\
                             \displaystyle\sup_{Y_j,Z_j\in I_0\atop
			      (j=1,2,\cdots,\frac{m}{2})}\left|
\det(\cC(Y_i,Z_j))_{1\le i,j\le \frac{m}{2}}
\right|   &\text{if $m$ is even,}
\end{array}
\right.\label{eq_free_integration_bound_general}\\
&(\forall m\in\N,\ \bX\in I^m).\notag\\
&\left|
e^{-\sum_{a,b=1}^nM(T,\s,\bs)_{a,b}\D_{\{s_a,s_b\}}(\cC)}\prod_{j=1}^n\psi_{\bX_j}^{s_j}
\Bigg|_{\psi^{s_j}=0\atop(\forall
 j\in\{1,2,\cdots,n\})}
\right|\label{eq_determinant_bound_general}\\
&\le  \left\{\begin{array}{ll} 0 &\text{if $m$ is odd,}\\
\displaystyle \sup_{\bu_j,\bv_j\in \C^n\text{ with
 }\|\bu_j\|_{\C^n},\|\bv_j\|_{\C^n}\le 1\atop
(j=1,\cdots,\frac{m}{2})}                             
\sup_{Y_j,Z_j\in I_0\atop
			      (j=1,\cdots,\frac{m}{2})} & \\
\qquad\cdot\left|
\det(\<\bu_i,\bv_j\>_{\C^n}\cC(Y_i,Z_j))_{1\le i,j\le \frac{m}{2}}
\right|   &\text{if $m$ is even,}
\end{array}
\right.\notag\\
&(\forall m_j\in \{0,1,\cdots,N\},\ \bX_j\in I^{m_j}\
 (j=1,2,\cdots,n),\notag\\
&\quad  T\in\T(\{s_1,s_2,\cdots,s_n\}),\ \s\in \S_n(T),\ \bs\in [0,1]^{n-1}),\notag
\end{align}
where $m:=\sum_{j=1}^nm_j$ and $\<\cdot,\cdot\>_{\C^n}$ is the hermitian
inner product and $\|\cdot\|_{\C^n}$ is the norm induced by
$\<\cdot,\cdot\>_{\C^n}$. The inequality
\eqref{eq_determinant_bound_general} is based on the Gram representation \eqref{eq_gram_representation_primitive} of the
matrix $M(T,\s,\bs)$. See e.g. \cite[\mbox{Lemma 4.5}]{K9} for details
of how to derive an inequality of this kind.

\subsection{General estimation}\label{subsec_general_estimation}

Here we estimate Grassmann polynomials produced by applying the operator\\
$Tree(S,\cC)$ to given Grassmann polynomials. As explained in the
beginning of the section our purpose here is to summarize generic
structures of single-scale integrations. Let us introduce some notions 
which are necessary to describe properties of the Grassmann input and the
covariances
 of the single-scale
integrations. To describe periodicity and translation invariance with the time
variable, we define the map $r_{\beta}:\frac{1}{h}\Z\to [0,\beta)_{h}$
by the condition that $r_{\beta}(s)\in \0betah$ and $r_{\beta}(s)=s$ in
$\frac{1}{h}\Z/\beta \Z$ for $s\in \frac{1}{h}\Z$. Then we define
the map $\cR_{\beta}$ from $(\{1,2\}\times\G\times
\frac{1}{h}\Z\times\{1,-1\})^n$ to $I^n$ by
$$\cR_{\beta}(\rho_1\bx_1s_1\xi_1,\cdots,\rho_n\bx_ns_n\xi_n):=
(\rho_1\bx_1r_{\beta}(s_1)\xi_1,\cdots,\rho_n\bx_nr_{\beta}(s_n)\xi_n).$$
We will sometimes consider $\cR_{\beta}$ as the map from 
$(\{1,2\}\times\G\times
\frac{1}{h}\Z)^n$ to $I_0^n$ satisfying that
$$
\cR_{\beta}(\rho_1\bx_1s_1,\cdots,\rho_n\bx_ns_n)= (\rho_1\bx_1r_{\beta}(s_1),\cdots,\rho_n\bx_nr_{\beta}(s_n)),
$$
by admitting the notational abuse. The meaning of the map
$\cR_{\beta}$ should be understood from the context. 

We assume that the covariance $\cC:I_0^2\to \C$ satisfies that
\begin{align}
&\cC(\cR_{\beta}(\bX+s))=\cC(\bX),\quad \left(\forall \bX\in I_0^2,\
 s\in
 \frac{1}{h}\Z\right),\label{eq_time_translation_generic_covariance}\\
&|\det(\<\bu_i,\bv_j\>_{\C^m}\cC(X_i,Y_j))_{1\le i,j\le n}|\le
 D^n,\label{eq_determinant_bound_generic}\\
&(\forall m,n\in\N,\ \bu_i,\bv_i\in\C^m\text{ with
 }\|\bu_i\|_{\C^m},\|\bv_i\|_{\C^m}\le 1,\ X_i,Y_i\in I_0\
 (i=1,2,\cdots,n)),\notag
\end{align}
where $D$ is a fixed positive constant. The condition
\eqref{eq_time_translation_generic_covariance} might appear unnatural if
$\cC$ is thought to be a sum over the Matsubara frequency. However, one
can modify such a covariance to satisfy
\eqref{eq_time_translation_generic_covariance} by a simple gauge transform.

One implication of the property
\eqref{eq_time_translation_generic_covariance} is that
\begin{align}
&Tree(\{1,2,\cdots,n\},\cC)\prod_{j=1}^n\psi_{\cR_{\beta}(\bX_j+s)}^j\Bigg|_{\psi^j=0\atop(\forall
 j\in\{1,2,\cdots,n\})}\label{eq_time_translation_tree_formula}\\
&=
Tree(\{1,2,\cdots,n\},\cC)\prod_{j=1}^n\psi_{\bX_j}^j\Bigg|_{\psi^j=0\atop(\forall
 j\in\{1,2,\cdots,n\})},\notag\\
&\left(\forall n\in\N,\ m_j\in \{0,1,\cdots,N\},\ \bX_j\in I^{m_j}\
 (j=1,2,\cdots,n),\ s\in\frac{1}{h}\Z\right).\notag
\end{align}

For $j\in \N$ let $F^j(\psi)\in \bigwedge_{even}\cV$ be such that its
anti-symmetric kernels $F_m^j:I^m\to \C$ $(m=2,4,\cdots,N)$ satisfy
\begin{align}
F_m^j(\cR_{\beta}(\bX+s))=F_m^j(\bX),\quad \left(\forall \bX\in I^m,\
 s\in\frac{1}{h}\Z\right).\label{eq_time_translation_generic}
\end{align}
In this subsection we will give the Grassmann polynomials $F^j(\psi)$
$(j\in \N)$ as the input to the single-scale integrations. 

For $n\in\N$ we define $A^{(n)}(\psi)\in \bigwedge_{even}\cV$ by
\begin{align*}
A^{(n)}(\psi):=Tree(\{1,2,\cdots,n\},\cC)\prod_{j=1}^nF^j(\psi^j+\psi)\Bigg|_{\psi^j=0\atop(\forall
 j\in\{1,2,\cdots,n\})}.
\end{align*}
Since $Tree(S,\cC)$ consists of Grassmann left-derivatives of even
degree, it is clear that the output belongs to $\bigwedge_{even}\cV$ if
so does the input.

For conciseness of formulas we let $\|f_0\|_{1,\infty}=\|f_0\|_1:=|f_0|$
for the constant term $f_0$ of $f(\psi)\in\bigwedge\cV$. We admit this
notational convention throughout this section. The next lemma is the
simplest among other lemmas in this subsection.

\begin{lemma}\label{lem_tree_bound}
For any $m\in\{2,4,\cdots,N\}$, $n\in \N$ the anti-symmetric kernel
 $A_m^{(n)}(\cdot)$ satisfies
 \eqref{eq_time_translation_generic}. Moreover the following
 inequalities hold for any $m\in \{0,2,\cdots, N\}$, $n\in \N_{\ge 2}$.
\begin{align}
&\|A_m^{(1)}\|_{1,\infty}\le 
\sum_{p=m}^{N}\left(\frac{N}{h}\right)^{1_{m=0\land p\neq 0}}\left(\begin{array}{c}p\\ m\end{array}\right)
D^{\frac{p-m}{2}}\|F_p^1\|_{1,\infty}.\label{eq_tree_1_1_infinity}\\
&\|A_m^{(1)}\|_{1}\le  \sum_{p=m}^{N}\left(\begin{array}{c}p\\ m\end{array}\right)
D^{\frac{p-m}{2}}\|F_p^1\|_{1}.\label{eq_tree_1_1}\\
&\|A_m^{(n)}\|_{1,\infty}\le \left(\frac{N}{h}\right)^{1_{m=0}}
(n-2)!D^{-n+1-\frac{m}{2}}2^{-2m}\|\tilde{\cC}\|_{1,\infty}^{n-1}\label{eq_tree_1_infinity}\\
&\qquad\qquad\quad\cdot\prod_{j=1}^n\left(\sum_{p_j=2}^N2^{3p_j}D^{\frac{p_j}{2}}\|F_{p_j}^j\|_{1,\infty}
\right)1_{\sum_{j=1}^np_j-2(n-1)\ge m}.\notag\\
&\|A_m^{(n)}\|_{1}\le (n-2)!
 D^{-n+1-\frac{m}{2}}2^{-2m}\|\tilde{\cC}\|_{1,\infty}^{n-1}
\sum_{p_1=2}^N2^{3p_1}D^{\frac{p_1}{2}}\|F_{p_1}^1\|_{1}\label{eq_tree_1}\\
&\qquad\qquad\cdot\prod_{j=2}^n\left(\sum_{p_j=2}^N2^{3p_j}D^{\frac{p_j}{2}}\|F_{p_j}^j\|_{1,\infty}
\right)1_{\sum_{j=1}^np_j-2(n-1)\ge m}.\notag
\end{align}
Here $\tilde{\cC}(:I^2\to \C)$ is the anti-symmetric extension of $\cC$
 defined as in \eqref{eq_anti_symmetric_extension}.
\end{lemma}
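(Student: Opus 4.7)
The plan is to reduce all four bounds to an organised application of the tree formula \eqref{eq_tree_formula_basic} paired with the Gram-type determinant bound \eqref{eq_determinant_bound_general}. The translation-invariance claim \eqref{eq_time_translation_generic} for $A_m^{(n)}$ follows structurally: the anti-symmetric extension $\tilde{\cC}$ inherits the time-shift symmetry from \eqref{eq_time_translation_generic_covariance}, the Grassmann left-derivatives commute with the shift action on kernels, and the hypothesis on the inputs $F^j$ then propagates to the output via \eqref{eq_time_translation_tree_formula}.

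For $n=1$ the tree operator reduces to the single Gaussian contraction $e^{\D_{\{1,1\}}(\cC)}$. First I would expand in powers of $\D_{\{1,1\}}(\cC)$, apply the result to the kernel expansion of $F^1(\psi+\psi^1)$, and then set $\psi^1=0$: the $k$-th term pairs $2k$ of the $\psi^1$-legs of each kernel $F^1_p$ and leaves $p-2k$ external $\psi$-legs. The combinatorial factor for choosing which $m=p-2k$ legs remain external is $\binom{p}{m}$, while the $k$ internal contractions yield a $k\times k$ minor of $\tilde{\cC}$ bounded by $D^k=D^{(p-m)/2}$ via \eqref{eq_determinant_bound_generic}. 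Summing over $p\ge m$ produces \eqref{eq_tree_1_1_infinity} and \eqref{eq_tree_1_1}; the $N/h$ appearing in \eqref{eq_tree_1_1_infinity} when $m=0$ and $p\neq 0$ compensates the conversion of $\|F^1_p\|_{1,\infty}$ into the full sum over all legs that is needed to form a scalar.

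For $n\ge 2$ I would apply the tree formula to $n!\,A^{(n)}$. The factor $e^{-\sum_{a,b}M(T,\sigma,\bs)_{a,b}\D_{\{s_a,s_b\}}(\cC)}$ is a residual Gaussian integration whose covariance is the Hadamard product of $M$ with $\tilde{\cC}$; by the Gram representation \eqref{eq_gram_representation_primitive} and the bound \eqref{eq_determinant_bound_general}, any $k$-pair internal contraction is dominated by $D^k$. The tree product $\prod_{\{p,q\}\in T}\D_{\{p,q\}}(\cC)$ consumes $2(n-1)$ legs from the $F^j$'s and introduces $n-1$ factors of $\tilde{\cC}$, each of whose $X$-arguments can be integrated at a cost of $\|\tilde{\cC}\|_{1,\infty}$. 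For a tree $T$ with vertex-degree sequence $(d_1,\ldots,d_n)$ there are $(n-2)!/\prod_j(d_j-1)!$ labelled trees with that sequence; at each vertex $j$ there are $p_j!/(p_j-d_j)!$ ways to choose which $d_j$ of its $p_j$ legs are consumed by tree edges, and the remaining legs split into external $\psi$- and internal $\psi^j$-types at a further cost of a power of $2$ per leg. The identity \eqref{eq_phi_good_property} removes the $(\sigma,\bs)$-average and $|M_{a,b}|\le 1$ from \eqref{eq_exponent_matrix_bound} controls the Gram exponential. Summing degree sequences subject to $\sum_j d_j=2(n-1)$, together with elementary binomial bounds of the shape $\sum_d p!/((d-1)!(p-d)!)\le p\cdot 2^{p}$, consolidates the per-vertex weight into $2^{3p_j}D^{p_j/2}\|F^j_{p_j}\|$ (in the appropriate norm), together with overall factors $(n-2)!$, $D^{-n+1-m/2}$, $\|\tilde{\cC}\|_{1,\infty}^{n-1}$, and $2^{-2m}$.

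The distinction among the four inequalities is governed by where the $\|\cdot\|_1$ and $\|\cdot\|_{1,\infty}$ norms of the $F^j$'s are placed: for $\|A_m^{(n)}\|_1$ the outer sum runs over all $m$ external arguments, which forces exactly one $F^j$ (take $j=1$) to carry the $\|\cdot\|_1$ norm while the others are absorbed by routing a single free variable through the tree edges one edge at a time, each edge contributing one $\|\tilde{\cC}\|_{1,\infty}$; for $\|A_m^{(n)}\|_{1,\infty}$ with $m\ge 2$ the fixed external variable plays the role of the routed variable, while for $m=0$ one additional summation is left over and produces the $(N/h)^{1_{m=0}}$ factor. The main obstacle is the combinatorial bookkeeping that consolidates the tree count, the in-vertex leg choices, the external/internal leg splits, and the Gram-determinant bound so that everything collapses to the advertised per-vertex constant $2^{3p_j}$ and to the claimed joint dependence on $n$, $m$, and $D$.
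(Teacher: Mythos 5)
Your proposal follows essentially the same route as the paper's proof: the translation invariance is propagated through \eqref{eq_time_translation_tree_formula}, the $n=1$ case is the single Gaussian contraction with the $\binom{p}{m}$ leg-selection factor and the determinant bound \eqref{eq_free_integration_bound_general}--\eqref{eq_determinant_bound_generic}, and the $n\ge 2$ case combines \eqref{eq_phi_good_property}, the Gram bound \eqref{eq_determinant_bound_general}, Cayley's count $(n-2)!/\prod_j(d_j(T)-1)!$ of trees with fixed degree sequence, and the recursive estimation rooted at the vertex carrying either the $\|\cdot\|_1$-normed $F^1$ or the fixed external variable, exactly as in the paper. The combinatorial consolidation you flag as the remaining obstacle is precisely the paper's inequality \eqref{eq_number_trees_pre}--\eqref{eq_number_trees}, so the plan is sound and complete in outline.
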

\begin{proof}
By anti-symmetry, 
\begin{align*}
A^{(1)}(\psi)=\sum_{p=0}^N\sum_{m=0}^p
\left(\begin{array}{c}p\\ m\end{array}\right)
\frah^p\sum_{\bX\in I^m\atop \bY\in
 I^{p-m}}F_p^1(\bY,\bX)Tree(\{1\},\cC)\psi_{\bY}^1\Big|_{\psi^1=0}\psi_{\bX}.
\end{align*}
Thus, by the uniqueness of anti-symmetric kernels, for any
 $m\in\{0,2,\cdots,N\}$, $\bX\in I^m$,
\begin{align}
A^{(1)}_m(\bX)=\sum_{p=m}^N
\left(\begin{array}{c}p\\ m\end{array}\right)
\frah^{p-m}\sum_{\bY\in
 I^{p-m}}F_p^1(\bY,\bX)Tree(\{1\},\cC)\psi_{\bY}^1\Big|_{\psi^1=0}.\label{eq_tree_1_1_kernel}
\end{align}
By using the translation invariant properties
\eqref{eq_time_translation_tree_formula},
 \eqref{eq_time_translation_generic} we see that for any $\bX\in I^m$,
$s\in \frac{1}{h}\Z$
\begin{align*}
&A^{(1)}_m(\cR_{\beta}(\bX+s))\\
&=\sum_{p=m}^N
\left(\begin{array}{c}p\\ m\end{array}\right)
\frah^{p-m}\sum_{\bY\in
 I^{p-m}}F_p^1(\bY,\cR_{\beta}(\bX+s))Tree(\{1\},\cC)\psi_{\bY}^1\Big|_{\psi^1=0}\\
&=\sum_{p=m}^N
\left(\begin{array}{c}p\\ m\end{array}\right)
\frah^{p-m}\sum_{\bY\in
 I^{p-m}}F_p^1(\cR_{\beta}((\bY,\bX)+s))Tree(\{1\},\cC)\psi_{\cR_{\beta}(\bY+s)}^1\Big|_{\psi^1=0}\\
&=A^{(1)}_m(\bX).
\end{align*}
Thus, $A_m^{(1)}$ 
 satisfies \eqref{eq_time_translation_generic}.
The inequalities
 \eqref{eq_tree_1_1_infinity}, \eqref{eq_tree_1_1} can be derived from
 \eqref{eq_tree_1_1_kernel} by using \eqref{eq_free_integration_bound_general},
 \eqref{eq_determinant_bound_generic}.

Next let us consider the case $n\ge 2$. By anti-symmetry, 
\begin{align*}
A^{(n)}(\psi)
=&\prod_{j=1}^n\Bigg(\sum_{p_j=2}^N\sum_{m_j=0}^{p_j-1}\left(\begin{array}{c}p_j\\
						       m_j\end{array}
\right)\frah^{p_j}\sum_{\bX_j\in I^{m_j}\atop \bY_j\in I^{p_j-m_j}}F_{p_j}^j(\bY_j,\bX_j)\Bigg)\\
&\cdot Tree(\{1,2,\cdots,n\},\cC)\prod_{j=1}^n\psi_{\bY_j}^j
\Bigg|_{\psi^j=0\atop(\forall
 j\in\{1,2,\cdots,n\})}
(-1)^{\sum_{j=1}^{n-1}m_j\sum_{k=j+1}^n(p_k-m_k)}\\
&\cdot \prod_{k=1}^n\psi_{\bX_k}.
\end{align*}
Thus, for $m\in \{0,2,\cdots,N\}$, $\bX\in I^m$,
\begin{align}
A_m^{(n)}(\bX)&=\frac{1}{m!}
\sum_{\s\in \S_m}\sgn(\s)
\prod_{j=1}^n\Bigg(\sum_{p_j=2}^N\sum_{m_j=0}^{p_j-1}\left(\begin{array}{c}p_j\\
						       m_j\end{array}
\right)\frah^{p_j-m_j}\sum_{\bY_j\in I^{p_j-m_j}}F_{p_j}^j(\bY_j,\bX_j')\Bigg)\label{eq_tree_1_higher_kernel}\\
&\quad\cdot Tree(\{1,2,\cdots,n\},\cC)\prod_{j=1}^n\psi_{\bY_j}^j
\Bigg|_{\psi^j=0\atop(\forall
 j\in\{1,2,\cdots,n\})}
(-1)^{\sum_{j=1}^{n-1}m_j\sum_{k=j+1}^n(p_k-m_k)}\notag\\
&\quad\cdot
 1_{(\bX_1',\bX_2',\cdots,\bX_n')=\bX_{\s}}1_{\sum_{j=1}^nm_j=m}1_{\sum_{j=1}^np_j-2(n-1)\ge m}.\notag
\end{align}
The constraint $\sum_{j=1}^np_j-2(n-1)\ge m$ is added since the operator
 $\prod_{\{p,q\}\in T}\D_{\{p,q\}}(\cC)$ inside
 $Tree(\{1,2,\cdots,n\},\cC)$ erases $2(n-1)$ Grassmann variables. Again
 by using \eqref{eq_time_translation_tree_formula},
 \eqref{eq_time_translation_generic} we can check that
 $A_m^{(n)}:I^m\to\C$ satisfies \eqref{eq_time_translation_generic}.

To establish upper bounds on the norms of $A_m^{(n)}$ we need to replace the sum
 over trees by a sum over possible degrees of trees. For $j\in \{1,2,\cdots,n\}$,
 $T\in \T(\{1,2,\cdots,n\})$ let $d_j(T)$ denote the degree of the
 vertex $j$ in $T$. The following calculation, which we will frequently
 refer to during this subsection, is based on Cayley's theorem on the number
 of trees with fixed degrees. For any $k_1,k_2,\cdots,k_n\in \N$,
\begin{align}
&\sum_{T\in
 \T(\{1,2,\cdots,n\})}\prod_{j=1}^n\left(\left(\begin{array}{c}k_j\\
					       d_j(T)\end{array}\right)d_j(T)!
\right)\label{eq_number_trees_pre}\\
&=\prod_{j=1}^n \left(\sum_{d_j=1}^{k_j}
\left(\begin{array}{c}k_j\\ d_j\end{array}\right)d_j!
\right)
\frac{(n-2)!}{\prod_{k=1}^n(d_k-1)!}1_{\sum_{j=1}^nd_j=2(n-1)}\notag\\
&\le(n-2)!\prod_{j=1}^n(k_j2^{k_j-1})\le
 (n-2)!2^{-n}2^{2\sum_{j=1}^{n}k_j}.\notag
\end{align}
Since we will need to deal with the case $n=1$ at the same time, let us 
give a meaning to the left-hand side of
 \eqref{eq_number_trees_pre} for $n=1$ and generalize the combinatorial
 estimate \eqref{eq_number_trees_pre} to be valid for any $n\in\N$. We
 assume that $\T(\{1\})=\{\{1\}\}$ and $d_1(\{1\})=0$ so that the left-hand
 side is 1. It follows from this convention that for any $n\in\N$
\begin{align}
\sum_{T\in
 \T(\{1,2,\cdots,n\})}\prod_{j=1}^n\left(\left(\begin{array}{c}k_j\\
					       d_j(T)\end{array}\right)d_j(T)!
\right)\le(1_{n=1}+1_{n\ge 2}
(n-2)!2^{-n})2^{2\sum_{j=1}^nk_j}.\label{eq_number_trees}
\end{align}

By \eqref{eq_phi_good_property}, \eqref{eq_determinant_bound_general},
 \eqref{eq_determinant_bound_generic},
\begin{align*}
|A_m^{(n)}(\bX)|
&\le \frac{2^{n-1}}{m!}\sum_{\s\in \S_m}\sum_{T\in\T(\{1,2,\cdots,n\})}\prod_{j=1}^n
\Bigg(
\sum_{p_j=2}^N\sum_{m_j=0}^{p_j-1}
\left(\begin{array}{c}p_j\\ m_j\end{array}\right)
\left(\begin{array}{c}p_j-m_j\\ d_j(T)\end{array}\right)
\frah^{p_j-m_j}\\
&\qquad\qquad\cdot
\sum_{\bX_j'\in I^{m_j},\bY_j\in I^{p_j-m_j-d_j(T)}\atop
\bZ_j\in I^{d_j(T)}}|F_{p_j}^j(\bY_j,\bZ_j,\bX_j')|
\Bigg)D^{\frac{1}{2}(\sum_{j=1}^np_j-2(n-1)-m)}\\
&\quad\cdot\left|
\prod_{\{p,q\}\in T}\D_{\{p,q\}}(\cC)\prod_{j=1}^n\psi_{\bZ_j}^j\right|
1_{(\bX_1',\bX_2',\cdots,\bX_n')=\bX_{\s}}1_{\sum_{j=1}^nm_j=m}1_{\sum_{j=1}^np_j-2(n-1)\ge
 m}.
\end{align*}
Note that
$$
\prod_{\{p,q\}\in T}\D_{\{p,q\}}(\cC)\prod_{j=1}^n\psi_{\bZ_j}^j
$$
creates at most $\prod_{j=1}^nd_j(T)!$ terms,
since 
$$\left(\sum_{X\in I}\frac{\partial}{\partial
 \psi_{X}^j}\right)^{d_j(T)}\psi_{\bZ_j}^{j}$$
 creates $d_j(T)!$ terms for $j\in \{1,2,\cdots,n\}$. For every $T\in \T(\{1,2,\cdots,n\})$ we
 consider the vertex 1 as the root. Then, by recursively estimating
 along the lines of $T$ from younger branches to the root and using
 \eqref{eq_number_trees} for $k_j=p_j-m_j$ $(j=1,2,\cdots,n)$
we observe that
\begin{align*}
&\|A_m^{(n)}\|_1\\
&\le 2^{n-1}\sum_{T\in\T(\{1,2,\cdots,n\})}
\sum_{p_1=2}^N\sum_{m_1=0}^{p_1-1}\left(\begin{array}{c}p_1\\ m_1\end{array}\right)
\left(\begin{array}{c}p_1-m_1\\ d_1(T)\end{array}\right)d_1(T)!
\|F_{p_1}^1\|_1\\
&\quad\cdot\prod_{j=2}^n
\Bigg(
\sum_{p_j=2}^N\sum_{m_j=0}^{p_j-1}
\left(\begin{array}{c}p_j\\ m_j\end{array}\right)
\left(\begin{array}{c}p_j-m_j\\ d_j(T)\end{array}\right)
d_j(T)!\\
&\qquad\quad\cdot
\sup_{X_0\in I}
\frah^{p_j}\sum_{\bX\in
 I^{p_j}}|F_{p_j}^j(\bX)||\tilde{\cC}(X_0,X_1)|\Bigg)
D^{\frac{1}{2}(\sum_{j=1}^np_j-2(n-1)-m)}\\
&\quad\cdot1_{\sum_{j=1}^nm_j=m}1_{\sum_{j=1}^np_j-2(n-1)\ge
 m}\\
&\le (n-2)!D^{-n+1-\frac{m}{2}}\|\tilde{\cC}\|_{1,\infty}^{n-1}
\sum_{p_1=2}^N\sum_{m_1=0}^{p_1-1}\left(\begin{array}{c}p_1\\
					m_1\end{array}\right)
2^{2(p_1-m_1)}D^{\frac{p_1}{2}}\|F_{p_1}^1\|_1\\
&\quad\cdot\prod_{j=2}^n\Bigg(
\sum_{p_j=2}^N\sum_{m_j=0}^{p_j-1}
\left(\begin{array}{c}p_j\\
					m_j\end{array}\right)
2^{2(p_j-m_j)}D^{\frac{p_j}{2}}
\|F_{p_j}^j\|_{1,\infty}
\Bigg)1_{\sum_{j=1}^nm_j=m}1_{\sum_{j=1}^np_j-2(n-1)\ge
 m},
\end{align*}
which is bounded from above by the right-hand side of \eqref{eq_tree_1}.
 
We can estimate $\|A_m^{(n)}\|_{1,\infty}$ from
 \eqref{eq_tree_1_higher_kernel} in a way parallel to the above
 argument. In the case $m\ge 2$, first we fix a component of $\bX(\in I^m)$. For
 fixed $\s\in\S_m$ there uniquely exists $j_1\in \{1,2,\cdots,n\}$ such
 that the fixed component is one component of the variable
 $\bX_{j_1}'(\in I^{m_{j_1}})$. Then we consider $j_1$ as the root of
 each tree and repeat the same recursive calculation as above to reach
 the claimed inequality \eqref{eq_tree_1_infinity}. The inequality 
\eqref{eq_tree_1_infinity} for $m=0$ follows from \eqref{eq_tree_1} for
 $m=0$ and $\|F_{p_1}^1\|_1\le \frac{N}{h}\|F_{p_1}^1\|_{1,\infty}$.
\end{proof}

In addition to $F^j(\psi)$ $(j=1,2,\cdots,n)$ we give a Grassmann
polynomial having bi-anti-symmetric kernels as one
piece of the input. Assume that we have bi-anti-symmetric functions
$F_{p,q}:I^p\times I^q\to \C$ $(p,q\in\{2,4,\cdots,N\})$ satisfying
\eqref{eq_time_translation_generic} and the following property.
For any function $g:[0,\beta)_h^p\to\C$, $h:[0,\beta)_h^q\to\C$ satisfying 
\begin{align}
&g(r_{\beta}(s_1+s),r_{\beta}(s_2+s),\cdots,r_{\beta}(s_p+s))=g(s_1,s_2,\cdots,s_p)\label{eq_test_function_time_translation}\\
&\left(\forall (s_1,s_2,\cdots,s_p)\in [0,\beta)_h^p,\ s\in
 \frac{1}{h}\Z\right),\notag\\
&h(r_{\beta}(s_1+s),r_{\beta}(s_2+s),\cdots,r_{\beta}(s_q+s))=h(s_1,s_2,\cdots,s_q)\notag\\&\left(\forall (s_1,s_2,\cdots,s_q)\in [0,\beta)_h^q,\ s\in
 \frac{1}{h}\Z\right),\notag
\end{align}
\begin{align}
&\sum_{(s_1,\cdots,s_p)\in
 [0,\beta)_h^p}F_{p,q}((\rho_1\bx_1s_1\xi_1,\cdots,\rho_p\bx_ps_p\xi_p),\bY)g(s_1,\cdots,s_p)=0,\label{eq_vanishing_property}\\
&(\forall \bY\in I^q,\ (\rho_j,\bx_j,\xi_j)\in
 \{1,2\}\times\G\times\{1,-1\}\ (j=1,2,\cdots,p)),\notag\\
&\sum_{(t_1,\cdots,t_q)\in
 [0,\beta)_h^q}F_{p,q}(\bX,
(\eta_1\by_1t_1\zeta_1,\cdots,\eta_q\by_qt_q\zeta_q))h(t_1,\cdots,t_q)=0,\notag\\
&(\forall \bX\in I^p,\ (\eta_j,\by_j,\zeta_j)\in
 \{1,2\}\times\G\times\{1,-1\}\ (j=1,2,\cdots,q)).\notag
\end{align}
We are going to analyze the Grassmann polynomial
$B^{(n)}(\psi)\in\bigwedge_{even}\cV$ $(n\in\N)$ defined by
\begin{align*}
B^{(n)}(\psi)
:=&\sum_{p,q=2}^{N}1_{p,q\in 2\N}\frah^{p+q}\sum_{\bX\in I^p,\bY\in
 I^q}F_{p,q}(\bX,\bY)Tree(\{1,2,\cdots,n+1\},\cC)\\
&\cdot(\psi^1+\psi)_{\bX}(\psi^2+\psi)_{\bY}\prod_{j=3}^{n+1}F^j(\psi^j+\psi)\Bigg|_{\psi^j=0\atop(\forall
 j\in\{1,2,\cdots,n+1\})}.
\end{align*}

\begin{lemma}\label{lem_tree_double_bound}
For any $m\in \{2,4,\cdots,N\}$, $n\in\N$ the anti-symmetric
 kernel $B_m^{(n)}(\cdot)$ satisfies
 \eqref{eq_time_translation_generic}. Moreover, for any
 $m\in\{0,2,\cdots,N\}$, $n\in \N_{\ge 2}$,
\begin{align}
&\|B_m^{(1)}\|_{1,\infty}
\le\left(\frac{N}{h}\right)^{1_{m=0}}
 D^{-1-\frac{m}{2}}\sum_{p_1,p_2=2}^{N}1_{p_1,p_2\in
 2\N}2^{2p_1+2p_2}D^{\frac{p_1+p_2}{2}}[F_{p_1,p_2},\tilde{\cC}]_{1,\infty}1_{p_1+p_2-2\ge
 m}.\label{eq_double_1_1_infinity}\\
&\|B_m^{(1)}\|_1\le
 D^{-1-\frac{m}{2}}\sum_{p_1,p_2=2}^{N}1_{p_1,p_2\in
 2\N}2^{2p_1+2p_2}D^{\frac{p_1+p_2}{2}}[F_{p_1,p_2},\tilde{\cC}]_11_{p_1+p_2-2\ge
 m}.\label{eq_double_1_1}\\
&\|B_m^{(n)}\|_{1,\infty}\le \left(\frac{N}{h}\right)^{1_{m=0}}
(n-1)!D^{-n-\frac{m}{2}}2^{-2m}\|\tilde{\cC}\|_{1,\infty}^{n-1}\label{eq_double_1_infinity}\\
&\qquad\qquad\quad\cdot\sum_{p_1,p_2=2}^{N}1_{p_1,p_2\in2\N}2^{3p_1+3p_2}D^{\frac{p_1+p_2}{2}}[F_{p_1,p_2},\tilde{\cC}]_{1,\infty}
\prod_{j=3}^{n+1}\left(\sum_{p_j=2}^N2^{3p_j}D^{\frac{p_j}{2}}\|F_{p_j}^j\|_{1,\infty}
\right)\notag\\
&\qquad\qquad\quad\cdot 1_{\sum_{j=1}^{n+1}p_j-2n\ge m}.\notag\\
&\|B_m^{(n)}\|_{1}\le 
(n-1)!D^{-n-\frac{m}{2}}2^{-2m}\|\tilde{\cC}\|_{1,\infty}^{n-1}\label{eq_double_1}\\
&\qquad\qquad\quad\cdot\sum_{p_1,p_2=2}^{N}1_{p_1,p_2\in2\N}2^{3p_1+3p_2}D^{\frac{p_1+p_2}{2}}[F_{p_1,p_2},\tilde{\cC}]_{1,\infty}
\prod_{j=3}^{n}\left(\sum_{p_j=2}^N2^{3p_j}D^{\frac{p_j}{2}}\|F_{p_j}^j\|_{1,\infty}
\right)\notag\\
&\qquad\qquad\quad\cdot
\sum_{p_{n+1}=2}^N2^{3p_{n+1}}D^{\frac{p_{n+1}}{2}}\|F_{p_{n+1}}^{n+1}\|_1
1_{\sum_{j=1}^{n+1}p_j-2n\ge m}.\notag
\end{align}
\end{lemma}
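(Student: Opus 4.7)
The plan is to parallel the proof of Lemma \ref{lem_tree_bound}, the essential new ingredient being that the two arguments of $F_{p,q}$ sit at vertices $1$ and $2$ of each tree $T\in\T(\{1,\ldots,n+1\})$, so exactly one edge of $T$ must be absorbed into the coupling norm $[F_{p_1,p_2},\tilde{\cC}]$ in place of one factor of $\|\tilde{\cC}\|_{1,\infty}$. First I would expand $F_{p,q}$ and $F^j$ in their anti-symmetric kernels and organize the result by the output variable $\bX$, obtaining by uniqueness of anti-symmetric kernels an explicit formula for $B_m^{(n)}(\bX)$ of the same structure as \eqref{eq_tree_1_higher_kernel}. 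The translation invariance \eqref{eq_time_translation_generic} of $B_m^{(n)}$ then follows immediately from the analogous property of $F_{p,q}$ and $F^j$ combined with \eqref{eq_time_translation_tree_formula}.

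For $n=1$ the unique tree is $T=\{\{1,2\}\}$ and $Tree(\{1,2\},\cC)$ reduces to $-2\D_{\{1,2\}}(\cC)$ composed with an exponential of further contractions. The explicit factor $\D_{\{1,2\}}(\cC)$ produces a single $\tilde{\cC}(X_1,Y_1)$ coupling one component of each of the two arguments of $F_{p_1,p_2}$; applying \eqref{eq_determinant_bound_general} with \eqref{eq_determinant_bound_generic} to the residual exponential and rearranging the remaining sum then directly recovers $[F_{p_1,p_2},\tilde{\cC}]_{1,\infty}$ or $[F_{p_1,p_2},\tilde{\cC}]_1$, giving \eqref{eq_double_1_1_infinity} and \eqref{eq_double_1_1}. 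For $n\geq 2$ I would sum over $T\in\T(\{1,\ldots,n+1\})$, single out one edge of $T$ incident to vertex $2$ as the distinguished coupling edge, absorb its $\tilde{\cC}$ into a factor $[F_{p_1,p_2},\tilde{\cC}]_{1,\infty}$, and treat the remaining $n-1$ edges by the leaves-first recursion of Lemma \ref{lem_tree_bound}, rooted at vertex $n+1$ for the $\|\cdot\|_1$ bound and at the vertex carrying the fixed external variable for the $\|\cdot\|_{1,\infty}$ bound. Cayley's counting estimate \eqref{eq_number_trees} applied to $n+1$ vertices yields $(n+1-2)!=(n-1)!$, explaining the factorial in \eqref{eq_double_1_infinity} and \eqref{eq_double_1}, and the exponential factor in the tree operator is bounded using \eqref{eq_determinant_bound_general} as in Lemma \ref{lem_tree_bound}.

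The main obstacle will be cleanly isolating a single distinguished edge while keeping the combinatorial bookkeeping tight. The freedom to choose which edge incident to vertex $2$ is the distinguished one contributes at worst a factor $d_2(T)$, which is absorbed harmlessly into \eqref{eq_number_trees} and into the prefactor $2^{3p_1+3p_2}$ appearing on the right-hand sides. The additional factor $D^{-1}$ compared with the corresponding bound of Lemma \ref{lem_tree_bound} simply reflects the one extra $\tilde{\cC}$ contraction arising from the extra vertex; the vanishing property \eqref{eq_vanishing_property} of $F_{p,q}$ plays no role at this stage and is presumably invoked only later to produce the additional decay needed for the double-scale integration.
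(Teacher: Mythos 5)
Your proposal follows essentially the same route as the paper: expand the kernels by anti\nobreakdash-symmetry, bound the tree operator via \eqref{eq_phi_good_property}, \eqref{eq_determinant_bound_general}, \eqref{eq_determinant_bound_generic} and the Cayley estimate \eqref{eq_number_trees} on $n+1$ vertices (whence $(n-1)!$), and absorb the covariance on the edge leaving vertex $2$ into $[F_{p_1,p_2},\tilde{\cC}]_{1,\infty}$, so the stated bounds come out as claimed. The one point you gloss over is that $F_{p_1,p_2}$ cannot be factored across vertices $1$ and $2$, so the leaves-to-root recursion must be ordered so that the subtree containing vertex $2$ is summed (with a supremum over the covariance index entering it) before the recursion reaches vertex $1$; the paper secures this with the distance assumption \eqref{eq_tree_order_temporal_assumption} and the case analysis on the conditions $P_j$ and $Q$, rather than by any ``choice of distinguished edge'' costing a factor $d_2(T)$.
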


\begin{proof}
By anti-symmetry,
\begin{align*}
&B^{(n)}(\psi)\\
&=\sum_{p_1,p_2=2}^{N}1_{p_1,p_2\in2\N}\sum_{m_1=0}^{p_1-1}\sum_{m_2=0}^{p_2-1}
\left(\begin{array}{c}p_1\\
						       m_1\end{array}
\right)
\left(\begin{array}{c}p_2\\
						       m_2\end{array}
\right)\\
&\quad\cdot
\frah^{p_1+p_2}\sum_{\bX_1\in I^{m_1},\bY_1\in I^{p_1-m_1}\atop 
\bX_2\in I^{m_2},\bY_2\in
 I^{p_2-m_2}}F_{p_1,p_2}((\bY_1,\bX_1),(\bY_2,\bX_2))\\
&\quad\cdot 
\prod_{j=3}^{n+1}\Bigg(\sum_{p_j=2}^N\sum_{m_j=0}^{p_j-1}\left(\begin{array}{c}p_j\\
						       m_j\end{array}
\right)\frah^{p_j}\sum_{\bX_j\in I^{m_j}\atop \bY_j\in I^{p_j-m_j}}F_{p_j}^j(\bY_j,\bX_j)\Bigg)\\
&\quad\cdot Tree(\{1,2,\cdots,n+1\},\cC)\prod_{j=1}^{n+1}\psi_{\bY_j}^j
\Bigg|_{\psi^j=0\atop(\forall
 j\in\{1,2,\cdots,n+1\})}
(-1)^{\sum_{j=1}^{n}m_j\sum_{k=j+1}^{n+1}(p_k-m_k)}\\
&\quad\cdot \prod_{k=1}^{n+1}\psi_{\bX_k}.
\end{align*}
Then, the uniqueness of anti-symmetric kernels ensures that 
for $m\in \{0,2,\cdots,N\}$, $\bX\in I^m$,
\begin{align*}
&B^{(n)}_m(\bX)\\
&=\frac{1}{m!}\sum_{\s\in\S_m}\sgn(\s)
\sum_{p_1,p_2=2}^{N}1_{p_1,p_2\in2\N}\sum_{m_1=0}^{p_1-1}\sum_{m_2=0}^{p_2-1}
\left(\begin{array}{c}p_1\\
						       m_1\end{array}
\right)
\left(\begin{array}{c}p_2\\
						       m_2\end{array}
\right)\\
&\quad\cdot
\frah^{p_1+p_2-m_1-m_2}\sum_{\bY_1\in I^{p_1-m_1},\bY_2\in
 I^{p_2-m_2}}F_{p_1,p_2}((\bY_1,\bX_1'),(\bY_2,\bX_2'))\\
&\quad\cdot 
\prod_{j=3}^{n+1}\Bigg(\sum_{p_j=2}^N\sum_{m_j=0}^{p_j-1}\left(\begin{array}{c}p_j\\
						       m_j\end{array}
\right)\frah^{p_j-m_j}\sum_{\bY_j\in I^{p_j-m_j}}F_{p_j}^j(\bY_j,\bX_j')\Bigg)\\
&\quad\cdot Tree(\{1,2,\cdots,n+1\},\cC)\prod_{j=1}^{n+1}\psi_{\bY_j}^j
\Bigg|_{\psi^j=0\atop(\forall
 j\in\{1,2,\cdots,n+1\})}
(-1)^{\sum_{j=1}^{n}m_j\sum_{k=j+1}^{n+1}(p_k-m_k)}\\
&\quad\cdot 1_{\sum_{j=1}^{n+1}m_j=m}1_{\sum_{j=1}^{n+1}p_j-2n\ge m}1_{(\bX_1',\bX_2',\cdots,\bX_{n+1}')=\bX_{\s}}.
\end{align*}
The property \eqref{eq_time_translation_generic} of $B^{(n)}_m$
 follows from \eqref{eq_time_translation_tree_formula} and the property
 \eqref{eq_time_translation_generic} of the input.

By \eqref{eq_phi_good_property}, \eqref{eq_determinant_bound_general},
 \eqref{eq_determinant_bound_generic},
\begin{align}
&|B^{(n)}_m(\bX)|\label{eq_tree_double_inside_inequality}\\
&\le\frac{2^n}{m!}\sum_{\s\in\S_m}\sum_{T\in \T(\{1,2,\cdots,n+1\})}
\sum_{p_1,p_2=2}^{N}1_{p_1,p_2\in2\N}\notag\\
&\quad\cdot\sum_{m_1=0}^{p_1-1}\sum_{m_2=0}^{p_2-1}
\left(\begin{array}{c}p_1\\
						       m_1\end{array}
\right)
\left(\begin{array}{c}p_2\\
						       m_2\end{array}
\right)
\left(\begin{array}{c}p_1-m_1\\
						       d_1(T)\end{array}
\right)
\left(\begin{array}{c}p_2-m_2\\
						       d_2(T)\end{array}
\right)\notag\\
&\quad\cdot
\frah^{p_1+p_2-m_1-m_2}\sum_{\bY_1\in I^{p_1-m_1-d_1(T)},\bY_2\in
 I^{p_2-m_2-d_2(T)}\atop \bZ_1\in I^{d_1(T)},\bZ_2\in I^{d_2(T)}}|F_{p_1,p_2}((\bY_1,\bZ_1,\bX_1'),(\bY_2,\bZ_2,\bX_2'))|\notag\\
&\quad\cdot 
\prod_{j=3}^{n+1}\Bigg(\sum_{p_j=2}^N\sum_{m_j=0}^{p_j-1}\left(\begin{array}{c}p_j\\
						       m_j\end{array}
\right)
\left(\begin{array}{c}p_j-m_j\\
						       d_j(T)\end{array}
\right)\notag\\
&\qquad\qquad\cdot
\frah^{p_j-m_j}\sum_{\bY_j\in I^{p_j-m_j-d_j(T)}\atop \bZ_j\in I^{d_j(T)}}|F_{p_j}^j(\bY_j,\bZ_j,\bX_j')|\Bigg)\notag\\
&\quad\cdot D^{\frac{1}{2}(\sum_{j=1}^{n+1}p_j-2n-m)}\left|
\prod_{\{p,q\}\in T}\D_{\{p,q\}}(\cC)\prod_{j=1}^{n+1}\psi_{\bZ_j}^j
\right|\notag\\
&\quad\cdot 1_{\sum_{j=1}^{n+1}m_j=m}1_{\sum_{j=1}^{n+1}p_j-2n\ge m}1_{(\bX_1',\bX_2',\cdots,\bX_{n+1}')=\bX_{\s}}.\notag
\end{align}

Let us derive the inequality \eqref{eq_double_1_1} from
 \eqref{eq_tree_double_inside_inequality}. For any $m\in
 \{0,2,\cdots,N\}$,
\begin{align*}
\|B_m^{(1)}\|_1
\le& 2\sum_{p_1,p_2=2}^{N}1_{p_1,p_2\in2\N}
\sum_{m_1=0}^{p_1-1}\sum_{m_2=0}^{p_2-1}
\left(\begin{array}{c}p_1\\
						       m_1\end{array}
\right)
\left(\begin{array}{c}p_2\\
						       m_2\end{array}
\right)
(p_1-m_1)(p_2-m_2)\\
&\cdot [F_{p_1,p_2},\tilde{\cC}]_1 D^{\frac{1}{2}(\sum_{j=1}^{2}p_j-2-m)}1_{m_1+m_2=m}1_{p_1+p_2-2\ge m},
\end{align*}
which is less than or equal to the right-hand side of \eqref{eq_double_1_1}.
The inequality \eqref{eq_double_1_1_infinity} can be derived in the same
 way.

Let us consider the case that $n\ge 2$. We decompose
 \eqref{eq_tree_double_inside_inequality} as follows.
\begin{align}
|B^{(n)}_m(\bX)|\le \sum_{T\in \T(\{1,2,\cdots,n+1\})}B_m^{(n)}(T)(\bX),
\label{eq_tree_double_inside_decomposition}
\end{align}
\begin{align}
&B_m^{(n)}(T)(\bX)\label{eq_tree_double_inside_decomposition_expansion}\\
&:=\frac{2^n}{m!}\sum_{\s\in\S_m}
\sum_{p_1,p_2=2}^{N}1_{p_1,p_2\in2\N}\notag\\
&\quad\cdot\sum_{m_1=0}^{p_1-1}\sum_{m_2=0}^{p_2-1}
\left(\begin{array}{c}p_1\\
						       m_1\end{array}
\right)
\left(\begin{array}{c}p_2\\
						       m_2\end{array}
\right)
\left(\begin{array}{c}p_1-m_1\\
						       d_1(T)\end{array}
\right)
\left(\begin{array}{c}p_2-m_2\\
						       d_2(T)\end{array}
\right)\notag\\
&\quad\cdot
\frah^{p_1+p_2-m_1-m_2}\sum_{\bY_1\in I^{p_1-m_1-d_1(T)},\bY_2\in
 I^{p_2-m_2-d_2(T)}\atop \bZ_1\in I^{d_1(T)},\bZ_2\in I^{d_2(T)}}|F_{p_1,p_2}((\bY_1,\bZ_1,\bX_1'),(\bY_2,\bZ_2,\bX_2'))|\notag\\
&\quad\cdot 
\prod_{j=3}^{n+1}\Bigg(\sum_{p_j=2}^N\sum_{m_j=0}^{p_j-1}\left(\begin{array}{c}p_j\\
						       m_j\end{array}
\right)
\left(\begin{array}{c}p_j-m_j\\
						       d_j(T)\end{array}
\right)\notag\\
&\qquad\qquad\cdot
\frah^{p_j-m_j}\sum_{\bY_j\in I^{p_j-m_j-d_j(T)}\atop \bZ_j\in I^{d_j(T)}}|F_{p_j}^j(\bY_j,\bZ_j,\bX_j')|\Bigg)\notag\\
&\quad\cdot D^{\frac{1}{2}(\sum_{j=1}^{n+1}p_j-2n-m)}\left|
\prod_{\{p,q\}\in T}\D_{\{p,q\}}(\cC)\prod_{j=1}^{n+1}\psi_{\bZ_j}^j
\right|\notag\\
&\quad\cdot 1_{\sum_{j=1}^{n+1}m_j=m}1_{\sum_{j=1}^{n+1}p_j-2n\ge m}1_{(\bX_1',\bX_2',\cdots,\bX_{n+1}')=\bX_{\s}}.\notag
\end{align}
Let us estimate $\|B_m^{(n)}(T)\|_1$. For $T\in \T(\{1,2,\cdots,n+1\})$
 we consider the vertex $n+1$ as the root of $T$. Without loss of
 generality we can assume that 
\begin{align}
&\text{the distance between 1 and $n+1$ is shorter than or equal
 to}\label{eq_tree_order_temporal_assumption}\\
 &\text{that
 between 2 and $n+1$ in $T$.}\notag
\end{align}
We can derive the same inequality by assuming otherwise. For $j\in
 \{1,2,\cdots,n+1\}$ let us introduce the conditions $P_j$, $Q$ as
 follows.
\begin{align*}
&P_j : \text{ The vertex 1 is on the shortest path between $j$ and 2 in
 $T$.}\\
&Q : \text{ The distance between 1 and 2 in $T$ is 1.}
\end{align*}
Then, only one of the following cases occurs.
$$
P_{n+1}\land Q \qquad P_{n+1}\land \lnot Q\qquad \lnot P_{n+1} 
$$
By recursively estimating along the lines of $T$ from younger branches
 to the root $n+1$ we obtain from \eqref{eq_tree_double_inside_decomposition_expansion}
 that for any $m\in \{0,2,\cdots,N\}$
\begin{align}
&\|B_m^{(n)}(T)\|_1\label{eq_double_1_pre}\\
&\le 2^n
\sum_{p_1,p_2=2}^{N}1_{p_1,p_2\in2\N}\notag\\
&\quad\cdot\sum_{m_1=0}^{p_1-1}\sum_{m_2=0}^{p_2-1}
\left(\begin{array}{c}p_1\\
						       m_1\end{array}
\right)
\left(\begin{array}{c}p_2\\
						       m_2\end{array}
\right)
\left(\begin{array}{c}p_1-m_1\\
						       d_1(T)\end{array}
\right)d_1(T)!
\left(\begin{array}{c}p_2-m_2\\
						       d_2(T)\end{array}
\right)d_2(T)!
\notag\\
&\quad\cdot\prod_{j=3}^n\Bigg(\sum_{p_j=2}^N\sum_{m_j=0}^{p_j-1}
\left(\begin{array}{c}p_j\\
						       m_j\end{array}
\right)
\left(\begin{array}{c}p_j-m_j\\
						       d_j(T)\end{array}
\right)d_j(T)!\notag\\
&\qquad\qquad\cdot \sup_{X_0\in I}\frah^{p_j}\sum_{\bX\in I^{p_j}}|F_{p_j}^j(\bX)||\tilde{\cC}(X_0,X_1)|
\Bigg)\notag\\
&\quad\cdot \sum_{p_{n+1}=2}^N\sum_{m_{n+1}=0}^{p_{n+1}-1}
\left(\begin{array}{c}p_{n+1}\\
						       m_{n+1}\end{array}
\right)
\left(\begin{array}{c}p_{n+1}-m_{n+1}\\
						       d_{n+1}(T)\end{array}
\right)d_{n+1}(T)!
\|F_{p_{n+1}}^{n+1}\|_1
D^{\frac{1}{2}(\sum_{j=1}^{n+1}p_j-2n-m)}\notag\\
&\quad\cdot\Bigg(1_{P_{n+1}\land Q}\sup_{X_0\in
 I}\frah^{p_1+p_2}\sum_{\bX\in I^{p_1}\atop \bY\in
 I^{p_2}}|F_{p_1,p_2}(\bX,\bY)||\tilde{\cC}(X_0,X_1)|
|\tilde{\cC}(X_2,Y_1)|\notag\\
&\qquad + (1_{P_{n+1}\land \lnot Q}+1_{\lnot P_{n+1}})
\sup_{X_0\in I}\Bigg(\frah^{p_1+p_2}\sum_{\bX\in I^{p_1}}\notag\\
&\qquad\qquad\qquad\qquad\qquad\qquad\qquad\cdot \sup_{Y_0\in I}
\sum_{\bY\in
 I^{p_2}}|F_{p_1,p_2}(\bX,\bY)||\tilde{\cC}(X_0,X_1)|
|\tilde{\cC}(Y_0,Y_1)|\Bigg)\Bigg)\notag\\
&\quad\cdot 1_{\sum_{j=1}^{n+1}m_j=m}1_{\sum_{j=1}^{n+1}p_j-2n\ge
 m}\notag\\
&\le 2^n D^{-n-\frac{m}{2}}\|\tilde{\cC}\|_{1,\infty}^{n-1}\notag\\
&\quad\cdot 
\sum_{p_1,p_2=2}^{N}1_{p_1,p_2\in2\N}D^{\frac{p_1+p_2}{2}}
\sum_{m_1=0}^{p_1-1}\sum_{m_2=0}^{p_2-1}
\left(\begin{array}{c}p_1\\
						       m_1\end{array}
\right)
\left(\begin{array}{c}p_2\\
						       m_2\end{array}
\right)\notag\\
&\qquad\cdot
\left(\begin{array}{c}p_1-m_1\\
						       d_1(T)\end{array}
\right)d_1(T)!
\left(\begin{array}{c}p_2-m_2\\
						       d_2(T)\end{array}
\right)d_2(T)!
[F_{p_1,p_2},\tilde{\cC}]_{1,\infty}\notag\\
&\quad\cdot\prod_{j=3}^n\Bigg(\sum_{p_j=2}^ND^{\frac{p_j}{2}}
\sum_{m_j=0}^{p_j-1}
\left(\begin{array}{c}p_j\\
						       m_j\end{array}
\right)
\left(\begin{array}{c}p_j-m_j\\
						       d_j(T)\end{array}
\right)d_j(T)!\|F_{p_j}^j\|_{1,\infty}\Bigg)\notag\\
&\quad\cdot \sum_{p_{n+1}=2}^ND^{\frac{p_{n+1}}{2}}
\sum_{m_{n+1}=0}^{p_{n+1}-1}
\left(\begin{array}{c}p_{n+1}\\
						       m_{n+1}\end{array}
\right)
\left(\begin{array}{c}p_{n+1}-m_{n+1}\\
						       d_{n+1}(T)\end{array}
\right)d_{n+1}(T)!
\|F_{p_{n+1}}^{n+1}\|_1\notag\\
&\quad\cdot
 1_{\sum_{j=1}^{n+1}m_j=m}1_{\sum_{j=1}^{n+1}p_j-2n\ge
 m}.\notag
\end{align}
When $P_{n+1}$ holds, the first inequality above can be derived
 smoothly. The point of the derivation of the first inequality when
 $P_{n+1}$ does not hold is to complete the recursive estimation along
 the branch containing the vertex 2 before the recursive estimation
 along the branch containing the vertex 1. Here we use the assumption 
 \eqref{eq_tree_order_temporal_assumption} to exclude that the vertex 2
 is on the shortest path between $n+1$ and 1 in $T$. By applying
 \eqref{eq_number_trees} we can derive \eqref{eq_double_1} from
 \eqref{eq_tree_double_inside_decomposition}, \eqref{eq_double_1_pre}. Note that \eqref{eq_double_1} for $m=0$
 implies \eqref{eq_double_1_infinity} for $m=0$, since
 $\|B_0^{(n)}\|_1=\|B_0^{(n)}\|_{1,\infty}=|B_0^{(n)}|$ and 
$\|F_{p_{n+1}}^{n+1}\|_1\le\frac{N}{h} \|F_{p_{n+1}}^{n+1}\|_{1,\infty}$.

In order to derive the claimed upper bound on $\|B_m^{(n)}\|_{1,\infty}$
 for $m\ge 2$ from
 \eqref{eq_tree_double_inside_decomposition_expansion}, we fix
 $T\in \T(\{1,2,\cdots,n+1\})$ and the first component $X_1$ of the
 variable $\bX(\in I^m)$. For any $\s\in \S_m$ there uniquely exists
 $j_1\in \{1,2,\cdots,n+1\}$ such that $X_1$ is a component of
 $\bX_{j_1}'$. Then, we consider the vertex $j_1$ as the root of the tree $T$. Again
 without loss of generality we can assume that 
\begin{align}
&\text{the distance between 1 and $j_1$ is shorter than or equal
 to}\label{eq_tree_order_temporal_assumption_second}\\
 &\text{that
 between 2 and $j_1$ in $T$.}\notag
\end{align}
Only one of the following cases happens.
$$
P_{j_1}\land Q \land j_1=1\quad\ P_{j_1}\land Q \land j_1\neq 1\quad\ 
P_{j_1}\land \lnot Q \land j_1=1\quad\  
P_{j_1}\land \lnot Q \land j_1\neq 1\quad\ \lnot P_{j_1} 
$$
By the recursive estimation from younger branches to the root $j_1$
 we deduce that
\begin{align*}
&\|B_m^{(n)}(T)\|_{1,\infty}\\
&\le \frac{2^n}{m!}\sum_{\s\in\S_m}
\sum_{p_1,p_2=2}^{N}1_{p_1,p_2\in2\N}\notag\\
&\quad\cdot\sum_{m_1=0}^{p_1-1}\sum_{m_2=0}^{p_2-1}
\left(\begin{array}{c}p_1\\
						       m_1\end{array}
\right)
\left(\begin{array}{c}p_2\\
						       m_2\end{array}
\right)
\left(\begin{array}{c}p_1-m_1\\
						       d_1(T)\end{array}
\right)d_1(T)!
\left(\begin{array}{c}p_2-m_2\\
						       d_2(T)\end{array}
\right)d_2(T)!
\notag\\
&\quad\cdot\prod_{j=3}^{n+1}\Bigg(\sum_{p_j=2}^N\sum_{m_j=0}^{p_j-1}
\left(\begin{array}{c}p_j\\
						       m_j\end{array}
\right)
\left(\begin{array}{c}p_j-m_j\\
						       d_j(T)\end{array}
\right)d_j(T)!\Bigg)D^{\frac{1}{2}(\sum_{j=1}^{n+1}p_j-2n-m)}\\
&\quad\cdot 1_{\sum_{j=1}^{n+1}m_j=m}1_{\sum_{j=1}^{n+1}p_j-2n\ge
 m}\notag\\
&\quad\cdot\Bigg(
1_{P_{j_1}\land Q\land j_1=1}
 \sup_{X_0\in I}\Bigg(\frah^{p_1+p_2-1}\sum_{\bX\in I^{p_1-1}\atop \bY\in
 I^{p_2}}|F_{p_1,p_2}((X_0,\bX),\bY)||\tilde{\cC}(X_1,Y_1)|\Bigg)\notag\\
&\qquad\qquad\cdot \prod_{j=3}^{n+1}\Bigg(\sup_{X_0\in I}
\frah^{p_j}\sum_{\bX\in I^{p_j}}|F_{p_j}^j(\bX)||\tilde{\cC}(X_0,X_1)|
\Bigg)\\
&\qquad + 1_{P_{j_1}\land Q\land j_1\neq 1}
\sup_{X_0\in I}\Bigg(\frah^{p_1+p_2}\sum_{\bX\in I^{p_1}\atop \bY\in
 I^{p_2}}|F_{p_1,p_2}(\bX,\bY)||\tilde{\cC}(X_0,X_1)|
|\tilde{\cC}(X_2,Y_1)|\Bigg)\\
&\qquad\qquad\cdot 
\|F_{p_{j_1}}^{j_1}\|_{1,\infty}\prod_{j=3\atop j\neq j_1}^{n+1}\Bigg(\sup_{X_0\in
 I}\frah^{p_j}\sum_{\bX\in I^{p_j}}|F_{p_j}^j(\bX)||\tilde{\cC}(X_0,X_1)|
\Bigg)\notag\\
&\qquad + 1_{P_{j_1}\land \lnot Q\land j_1= 1}\\
&\qquad\qquad\cdot\sup_{X_0\in I}\Bigg(\frah^{p_1+p_2-1}\sum_{\bX\in I^{p_1-1}}\sup_{Y_0\in
 I}\Bigg(\sum_{\bY\in I^{p_2}}|F_{p_1,p_2}((X_0,\bX),\bY)||\tilde{\cC}(Y_0,Y_1)|
\Bigg)\Bigg)\\
&\qquad\qquad\cdot \prod_{j=3}^{n+1}\Bigg(\sup_{X_0\in I}
\frah^{p_j}\sum_{\bX\in I^{p_j}}|F_{p_j}^j(\bX)||\tilde{\cC}(X_0,X_1)|
\Bigg)\\
&\qquad + (1_{P_{j_1}\land \lnot Q\land j_1\neq 1}+1_{\lnot P_{j_1}})\\
&\qquad\qquad\cdot \sup_{X_0\in I}\Bigg(\frah^{p_1+p_2}\sum_{\bX\in
 I^{p_1}}
\sup_{Y_0\in
 I}\Bigg(\sum_{\bY\in I^{p_2}}|F_{p_1,p_2}(\bX,\bY)||\tilde{\cC}(X_0,X_1)||\tilde{\cC}(Y_0,Y_1)|
\Bigg)\Bigg)\\
&\qquad\qquad\cdot \|F_{p_{j_1}}^{j_1}\|_{1,\infty}\prod_{j=3\atop j\neq j_1}^{n+1}\Bigg(\sup_{X_0\in I}
\frah^{p_j}\sum_{\bX\in I^{p_j}}|F_{p_j}^j(\bX)||\tilde{\cC}(X_0,X_1)|
\Bigg)\Bigg)\\
&\le 2^n D^{-n-\frac{m}{2}}\|\tilde{\cC}\|_{1,\infty}^{n-1}\\
&\quad\cdot 
\sum_{p_1,p_2=2}^{N}1_{p_1,p_2\in2\N}D^{\frac{p_1+p_2}{2}}
\sum_{m_1=0}^{p_1-1}\sum_{m_2=0}^{p_2-1}
\left(\begin{array}{c}p_1\\
						       m_1\end{array}
\right)
\left(\begin{array}{c}p_2\\
						       m_2\end{array}
\right)\notag\\
&\qquad\cdot
\left(\begin{array}{c}p_1-m_1\\
						       d_1(T)\end{array}
\right)d_1(T)!
\left(\begin{array}{c}p_2-m_2\\
						       d_2(T)\end{array}
\right)d_2(T)!
[F_{p_1,p_2},\tilde{\cC}]_{1,\infty}\notag\\
&\quad\cdot\prod_{j=3}^{n+1}\Bigg(\sum_{p_j=2}^ND^{\frac{p_j}{2}}
\sum_{m_j=0}^{p_j-1}
\left(\begin{array}{c}p_j\\
						       m_j\end{array}
\right)
\left(\begin{array}{c}p_j-m_j\\
						       d_j(T)\end{array}
\right)d_j(T)!\|F_{p_j}^j\|_{1,\infty}\Bigg)\notag\\
&\quad\cdot 
 1_{\sum_{j=1}^{n+1}m_j=m}1_{\sum_{j=1}^{n+1}p_j-2n\ge
 m}.
\end{align*}
Again when $P_{j_1}$ does not hold, the assumption
 \eqref{eq_tree_order_temporal_assumption_second} excludes that the
 vertex 2 is on the shortest path between $j_1$ and $1$ in $T$ so that
 we can carry
 out the recursive estimation along the branch containing the vertex 2
 before that along the branch containing the vertex 1. Combining this
 inequality with \eqref{eq_tree_double_inside_decomposition} and
 \eqref{eq_number_trees} results in \eqref{eq_double_1_infinity} for
 $m\ge 2$.
\end{proof}

Next we consider the Grassmann polynomials $E^{(n)}(\psi)\in
\bigwedge_{even}\cV$ $(n\in \N)$ defined as follows.
\begin{align*}
E^{(n)}(\psi)
:=&\sum_{p,q=2}^{N}1_{p,q\in 2\N}\frah^{p+q}\sum_{\bX\in I^p\atop
 \bY\in I^q}F_{p,q}(\bX,\bY)\\
&\cdot
 Tree(\{s_j\}_{j=1}^{m+1},\cC)(\psi^1+\psi)_{\bX}\prod_{j=2}^{m+1}F^{s_j}(\psi^{s_j}+\psi)\Bigg|_{\psi^{s_j}=0\atop(\forall
 j\in\{1,2,\cdots,m+1\})}\\
&\cdot
 Tree(\{t_k\}_{k=1}^{n-m},\cC)(\psi^1+\psi)_{\bY}\prod_{k=2}^{n-m}F^{t_k}(\psi^{t_k}+\psi)\Bigg|_{\psi^{t_k}=0\atop(\forall
 k\in\{1,2,\cdots,n-m\})},
\end{align*}
where the functions $F_{p,q}:I^p\times I^q\to\C$ $(p,q\in
\{2,4,\cdots,N\})$ are bi-anti-symmetric and
satisfy \eqref{eq_time_translation_generic},
\eqref{eq_vanishing_property} and 
\begin{align*}
&m\in \{0,1,\cdots,n-1\},\\
&1=s_1<s_2<\cdots <s_{m+1}\le n,\quad 1=t_1<t_2<\cdots <t_{n-m}\le n,\\
&\{s_j\}_{j=2}^{m+1}\cup \{t_k\}_{k=2}^{n-m}=\{2,3,\cdots,n\},\quad
 \{s_j\}_{j=2}^{m+1}\cap \{t_k\}_{k=2}^{n-m}=\emptyset.
\end{align*}
Here we assume that $\{s_j\}_{j=2}^{m+1}=\emptyset$ if $m=0$, $\{t_k\}_{k=2}^{n-m}=\emptyset$ if $m=n-1$. 

\begin{lemma}\label{lem_tree_divided_bound}
For any $n\in \N$, $a,b\in \{2,4,\cdots,N\}$ there exists a
 function $E_{a,b}^{(n)}:I^a\times I^b\to \C$ such that $E_{a,b}^{(n)}$
 is bi-anti-symmetric, satisfies \eqref{eq_time_translation_generic},
 \eqref{eq_vanishing_property} and 
$$
E^{(n)}(\psi)=\sum_{a,b=2}^{N}1_{a,b\in 2\N}\frah^{a+b}\sum_{\bX\in
 I^a\atop \bY\in I^b}E_{a,b}^{(n)}(\bX,\bY)\psi_{\bX}\psi_{\bY}.
$$
Moreover, the following inequalities hold for any $a,b\in
 \{2,4,\cdots,N\}$, $n\in\N_{\ge 2}$. 
\begin{align}
&\|E_{a,b}^{(1)}\|_{1,\infty}\le
 \sum_{p=a}^{N}\sum_{q=b}^{N}1_{p,q\in 2\N}
\left(\begin{array}{c} p \\ a \end{array}\right)
\left(\begin{array}{c} q \\ b \end{array}\right)
D^{\frac{1}{2}(p+q-a-b)}\|F_{p,q}\|_{1,\infty}.\label{eq_divided_1_1_infinity}\\
&\|E_{a,b}^{(1)}\|_{1}\le
 \sum_{p=a}^{N}\sum_{q=b}^{N}1_{p,q\in 2\N}
\left(\begin{array}{c} p \\ a \end{array}\right)
\left(\begin{array}{c} q \\ b \end{array}\right)
D^{\frac{1}{2}(p+q-a-b)}\|F_{p,q}\|_{1}.\label{eq_divided_1_1}\\
&\|E_{a,b}^{(n)}\|_{1,\infty}\le (1_{m\neq 0}(m-1)!+1_{m=0})(1_{m\neq
 n-1}(n-m-2)!+1_{m=n-1})\label{eq_divided_1_infinity}\\
&\qquad\qquad\quad\cdot 2^{-2a-2b}D^{-n+1-\frac{1}{2}(a+b)}\|\tilde{\cC}\|_{1,\infty}^{n-1}\sum_{p_1,q_1=2}^{N}1_{p_1,q_1\in
 2\N}2^{3p_1+3q_1}D^{\frac{p_1+q_1}{2}}\|F_{p_1,q_1}\|_{1,\infty}\notag\\
&\qquad\qquad\quad\cdot\prod_{j=2}^{m+1}\Bigg(
\sum_{p_j=2}^N2^{3p_j}D^{\frac{p_j}{2}}\|F_{p_j}^{s_j}\|_{1,\infty}\Bigg)
\prod_{k=2}^{n-m}\Bigg(
\sum_{q_k=2}^N2^{3q_k}D^{\frac{q_k}{2}}
\|F_{q_k}^{t_k}\|_{1,\infty}\Bigg)\notag\\
&\qquad\qquad\quad\cdot 1_{\sum_{j=1}^{m+1}p_j-2m\ge a}
1_{\sum_{k=1}^{n-m}q_k-2(n-m-1)\ge b}.\notag\\
&\|E_{a,b}^{(n)}\|_{1}
\le (1_{m\neq 0}(m-1)!+1_{m=0})(1_{m\neq
 n-1}(n-m-2)!+1_{m=n-1})\label{eq_divided_1}\\
&\qquad\qquad\cdot2^{-2a-2b}D^{-n+1-\frac{1}{2}(a+b)}\|\tilde{\cC}\|_{1,\infty}^{n-1} \sum_{p_1,q_1=2}^{N}1_{p_1,q_1\in
 2\N}2^{3p_1+3q_1}D^{\frac{p_1+q_1}{2}}\|F_{p_1,q_1}\|_{1,\infty}\notag\\
&\qquad\qquad\cdot\prod_{j=2}^{m+1}\Bigg(
\sum_{p_j=2}^N2^{3p_j}D^{\frac{p_j}{2}}
(1_{s_j\neq
 n}\|F_{p_j}^{s_j}\|_{1,\infty}+1_{s_j=n}\|F_{p_j}^{s_j}\|_1)\Bigg)\notag\\
&\qquad\qquad\cdot\prod_{k=2}^{n-m}\Bigg(
\sum_{q_k=2}^N2^{3q_k}D^{\frac{q_k}{2}}
(1_{t_k\neq
 n}\|F_{q_k}^{t_k}\|_{1,\infty}+1_{t_k=n}\|F_{q_k}^{t_k}\|_1)\Bigg)\notag\\
&\qquad\qquad\cdot 1_{\sum_{j=1}^{m+1}p_j-2m\ge a}
1_{\sum_{k=1}^{n-m}q_k-2(n-m-1)\ge b}.\notag
\end{align}
\end{lemma}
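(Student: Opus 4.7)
The plan is to follow the same pattern as the proofs of Lemmas~\ref{lem_tree_bound} and \ref{lem_tree_double_bound}: first extract $E_{a,b}^{(n)}$ explicitly from $E^{(n)}(\psi)$ by binomial expansion and anti-symmetrization, then verify the claimed structural properties, and finally estimate the kernels using the tree formula together with the Gram-type determinant bound \eqref{eq_determinant_bound_general} and the Cayley-type combinatorial identity \eqref{eq_number_trees}. The key novelty here is that $E^{(n)}(\psi)$ is produced by two \emph{independent} tree operators $Tree(\{s_j\}_{j=1}^{m+1},\cC)$ and $Tree(\{t_k\}_{k=1}^{n-m},\cC)$ that couple only through the single common bi-anti-symmetric kernel $F_{p,q}$, so the estimates should factorise over the two trees.

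First I would expand $(\psi^1+\psi)_{\bX}$, $(\psi^1+\psi)_{\bY}$ and each $F^{s_j}(\psi^{s_j}+\psi)$, $F^{t_k}(\psi^{t_k}+\psi)$ as in Lemma~\ref{lem_tree_double_bound} and read off the coefficient of $\psi_{\bX'}\psi_{\bY'}$ with $\bX'\in I^a$, $\bY'\in I^b$ after anti-symmetrizing over $\bX'$ and $\bY'$ separately. This identifies $E_{a,b}^{(n)}(\bX',\bY')$ as a product of two separate Grassmann-derivative outputs joined by a single sum against $F_{p,q}$. Bi-anti-symmetry is then immediate, and time-translation invariance \eqref{eq_time_translation_generic} of $E_{a,b}^{(n)}$ follows from \eqref{eq_time_translation_tree_formula} applied separately to each tree together with \eqref{eq_time_translation_generic} for the inputs. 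The delicate point is the vanishing property \eqref{eq_vanishing_property}: for a translation-invariant test function $g$ on the time components of $\bX'$, I would use \eqref{eq_time_translation_tree_formula} and the translation invariance \eqref{eq_time_translation_generic_covariance} of $\cC$ to move the time-shift from the external variables onto the internal argument $\bX$ of $F_{p,q}$; since the first tree does not touch the $\bY$-variables at all, the sum against $g$ reduces to $\sum_{\bX} F_{p,q}(\bX,\bY)\tilde g(\text{$\bX$-times})$ with $\tilde g$ again translation-invariant, which vanishes by \eqref{eq_vanishing_property} for $F_{p,q}$. The analogous argument on the $\bY$-side yields the full vanishing property.

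For the norm bounds the case $n=1$ (no actual tree) follows immediately by combining the extracted kernel formula with \eqref{eq_free_integration_bound_general} and \eqref{eq_determinant_bound_generic}, giving \eqref{eq_divided_1_1_infinity} and \eqref{eq_divided_1_1}. For $n\ge 2$ I would decompose $|E_{a,b}^{(n)}(\bX,\bY)|$ as a sum over pairs $(T_1,T_2)\in \T(\{s_j\}_{j=1}^{m+1})\times \T(\{t_k\}_{k=1}^{n-m})$ and estimate each summand by \eqref{eq_determinant_bound_general}, running the recursion along each tree from younger branches to its root exactly as in the derivation of \eqref{eq_double_1_infinity}, \eqref{eq_double_1}. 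Because $T_1$ and $T_2$ share only vertex~$1$ (hosting $F_{p,q}$) and are otherwise independent, the recursion factorises: every tree edge contributes $\|\tilde{\cC}\|_{1,\infty}$ (there are $n-1$ edges in total), every vertex contributes a $D^{p_j/2}$ or $D^{q_k/2}$, and application of \eqref{eq_number_trees} to each tree separately produces $(m-1)!$ when $m\ge 1$ and $(n-m-2)!$ when $n-m\ge 2$ (and the trivial factor $1$ in the remaining extreme cases), which is exactly the combinatorial prefactor stated in \eqref{eq_divided_1_infinity}, \eqref{eq_divided_1}.

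The main obstacle, as in Lemma~\ref{lem_tree_double_bound}, is the careful choice of root for each tree. For the $\|\cdot\|_{1,\infty}$-estimate I would root $T_1$ at the vertex carrying the fixed external leg in the $\bX$-argument and $T_2$ at the one carrying the fixed external leg in the $\bY$-argument; for the $\|\cdot\|_1$-estimate the unique index equal to $n$ lies in exactly one of $\{s_j\}_{j=2}^{m+1}$ or $\{t_k\}_{k=2}^{n-m}$, and I would root the tree containing $n$ at that vertex (so that the corresponding $F^n$ is paid in $\|\cdot\|_1$) while rooting the other tree at the shared vertex~$1$, so that $F_{p_1,q_1}$ is absorbed through the joint norm $[F_{p_1,q_1},\tilde{\cC}]_{1,\infty}$. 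Once the rootings are made and the two trees are estimated independently, the remaining bookkeeping is a direct parallel of the computation leading to \eqref{eq_double_1_infinity} and \eqref{eq_double_1}.
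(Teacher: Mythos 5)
Your kernel extraction, the verification of bi-anti-symmetry, time-translation invariance and the vanishing property, the $n=1$ case, and the combinatorial prefactors all match the paper's proof. The problem is the core of the $n\ge 2$ estimate. You assert that ``the recursion factorises'' and that the two trees can be ``estimated independently'' with separate roots; this is exactly what one cannot do. In the recursive estimation a tree contributes $\|\tilde{\cC}\|_{1,\infty}$ per edge only because, at every non-root vertex, one variable is already anchored by the propagator arriving from the direction of the root; a tree with no externally anchored vertex forces you to pay its root function in the $\|\cdot\|_1$-norm, i.e.\ an extra volume factor of order $N/h$ that is absent from \eqref{eq_divided_1_infinity}--\eqref{eq_divided_1}. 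Concretely, your $\|\cdot\|_{1,\infty}$ prescription (root $T_1$ at the vertex carrying the fixed external leg in the $\bX$-argument and $T_2$ at the one in the $\bY$-argument) presupposes a fixed leg in each argument, but $\|\cdot\|_{1,\infty}$ fixes exactly one component in total, so one of your two trees has no anchor. Your $\|\cdot\|_1$ prescription produces a bound in terms of $[F_{p_1,q_1},\tilde{\cC}]_{1,\infty}$, which is not the quantity $\|F_{p_1,q_1}\|_{1,\infty}$ appearing in the statement (that joint norm is the right object in Lemma \ref{lem_tree_double_bound}, not here).

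The missing idea is that the two trees must be merged: the paper draws an artificial line between the vertex $s_1=1$ of $S$ and the vertex $t_1=1$ of $T$ to form a single enlarged tree and runs \emph{one} recursion with \emph{one} root (the vertex $n$ for the $\|\cdot\|_1$-bound, the vertex hosting the fixed component for the $\|\cdot\|_{1,\infty}$-bound). The artificial edge is not a propagator; crossing it means passing through $F_{p_1,q_1}$, and the corresponding step is the bound
$\sup_{X_0}\frac{1}{h}^{p_1+q_1}\sum_{\bX,\bY}|F_{p_1,q_1}(\bX,\bY)||\tilde{\cC}(X_0,X_1)|\le \|\tilde{\cC}\|_{1,\infty}\|F_{p_1,q_1}\|_{1,\infty}$,
in which the incoming $\tilde{\cC}$ anchors one leg of $F_{p_1,q_1}$ and the summation over all its remaining legs (of both arguments) anchors the recursion in the other half. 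This interleaving at the junction is what replaces your ``independence'' and yields $\|\tilde{\cC}\|_{1,\infty}^{n-1}\|F_{p_1,q_1}\|_{1,\infty}$ with no spurious volume factor. With that correction the rest of your outline (the binomial bookkeeping, and \eqref{eq_number_trees} applied to $S$ and $T$ separately giving $(m-1)!$ and $(n-m-2)!$) goes through as in the paper.
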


\begin{proof}
By using anti-symmetry we can transform $E^{(n)}(\psi)$ as follows.
\begin{align}
&E^{(n)}(\psi)\label{eq_divided_part_characterization}\\
 &=\sum_{p_1,q_1=2}^{N}1_{p_1,q_1\in 2\N}\sum_{u_1=0}^{p_1}
(1_{m=0}+1_{m\neq 0}1_{u_1\le
 p_1-1})\left(\begin{array}{c} p_1 \\ u_1\end{array}\right)
\frah^{u_1}\sum_{\bX_1\in I^{u_1}}\notag\\
&\quad\cdot \sum_{v_1=0}^{q_1}
(1_{m=n-1}+1_{m\neq n-1}1_{v_1\le
 q_1-1})\left(\begin{array}{c} q_1 \\ v_1\end{array}\right)
\frah^{v_1}\sum_{\bY_1\in I^{v_1}}\notag\\
&\quad\cdot \prod_{j=2}^{m+1}\left(\sum_{p_j=2}^N\sum_{u_j=0}^{p_j-1}
\left(\begin{array}{c} p_j \\ u_j\end{array}\right)
\frah^{u_j}\sum_{\bX_j\in I^{u_j}}\right)
\prod_{k=2}^{n-m}\left(\sum_{q_k=2}^N\sum_{v_k=0}^{q_k-1}
\left(\begin{array}{c} q_k \\ v_k\end{array}\right)
\frah^{v_k}\sum_{\bY_k\in I^{v_k}}\right)\notag\\
&\quad\cdot f_m^n((p_j)_{1\le j\le m+1},(u_j)_{1\le j\le
 m+1},(q_j)_{1\le j\le n-m},(v_j)_{1\le j\le
 n-m})\notag\\
&\quad\qquad ((\bX_1,\bX_2,\cdots,\bX_{m+1}),
(\bY_1,\bY_2,\cdots,\bY_{n-m}))\notag\\
&\quad\cdot
 \psi_{\bX_1}\psi_{\bX_2}\cdots\psi_{\bX_{m+1}}\psi_{\bY_1}\psi_{\bY_2}\cdots\psi_{\bY_{n-m}},\notag
\end{align}
where the function
$$
f_m^n((p_j)_{1\le j\le m+1},(u_j)_{1\le j\le
 m+1},(q_j)_{1\le j\le n-m},(v_j)_{1\le j\le
 n-m}):\prod_{j=1}^{m+1}I^{u_j}\times \prod_{k=1}^{n-m}I^{v_k}\to\C
$$
is defined by
\begin{align}
&f_m^n((p_j)_{1\le j\le m+1},(u_j)_{1\le j\le
 m+1},(q_j)_{1\le j\le n-m},(v_j)_{1\le j\le
 n-m})\label{eq_inside_function_characterization}\\
&\quad ((\bX_1,\bX_2,\cdots,\bX_{m+1}),
(\bY_1,\bY_2,\cdots,\bY_{n-m}))\notag\\
&:=\frah^{p_1+q_1-u_1-v_1}\sum_{\bW_1\in I^{p_1-u_1}}\sum_{\bZ_1\in
 I^{q_1-v_1}}F_{p_1,q_1}((\bW_1,\bX_1),(\bZ_1,\bY_1))\notag\\
&\quad\cdot \prod_{j=2}^{m+1}\left(
\frah^{p_j-u_j}\sum_{\bW_j\in
 I^{p_j-u_j}}F_{p_j}^{s_j}(\bW_j,\bX_j)\right)\notag\\
&\quad\cdot \prod_{k=2}^{n-m}
\left(
\frah^{q_k-v_k}\sum_{\bZ_k\in
 I^{q_k-v_k}}F_{q_k}^{t_k}(\bZ_k,\bY_k)\right)\notag\\
&\quad\cdot
 Tree(\{s_j\}_{j=1}^{m+1},\cC)\prod_{j=1}^{m+1}\psi_{\bW_j}^{s_j}\Bigg|_{\psi^{s_j}=0\atop(\forall
 j\in\{1,2,\cdots,m+1\})}\notag\\
&\quad\cdot Tree(\{t_k\}_{k=1}^{n-m},\cC)\prod_{k=1}^{n-m}\psi_{\bZ_k}^{t_k}\Bigg|_{\psi^{t_k}=0\atop(\forall
 k\in\{1,2,\cdots,n-m\})}\notag\\
&\quad\cdot
 (-1)^{\sum_{j=1}^mu_j\sum_{i=j+1}^{m+1}(p_i-u_i)+\sum_{k=1}^{n-m-1}v_k\sum_{i=k+1}^{n-m}(q_i-v_i)}.\notag
\end{align}
For simplicity, set $\bp:=(p_j)_{1\le j\le m+1}$,
$\bu:=(u_j)_{1\le j\le
 m+1}$, $\bq:=(q_j)_{1\le j\le n-m}$, $\bv:=(v_j)_{1\le j\le
 n-m}$.
Since the kernel of $E^{(n)}(\psi)$ inherits many properties from the
 function $f_m^n(\bp,\bu,\bq,\bv)$, we should study
 $f_m^n(\bp,\bu,\bq,\bv)$ first.

It follows from the property \eqref{eq_time_translation_generic} of
 $F_{p_1,q_1}$, $F^j$ and \eqref{eq_time_translation_tree_formula}
 that $f_m^n(\bp,\bu,\bq,\bv)(\cdot)$ satisfies
 \eqref{eq_time_translation_generic}. Assume that $\bu=\b0$ and define
 the function $g:I^{p_1}\to\C$ by
\begin{align*}
&g(\bX)\\
&:=\prod_{j=2}^{m+1}\left(\frah^{p_j}\sum_{\bW_j\in
 I^{p_j}}F_{p_j}^{s_j}(\bW_j)\right)
 Tree(\{s_j\}_{j=1}^{m+1},\cC)\psi_{\bX}^{s_1}\prod_{j=2}^{m+1}\psi_{\bW_j}^{s_j}
\Bigg|_{\psi^{s_j}=0\atop(\forall
 j\in\{1,2,\cdots,m+1\})}.
\end{align*}
By \eqref{eq_time_translation_tree_formula} and the property
 \eqref{eq_time_translation_generic} of $F^j$ the function $g$ satisfies
 \eqref{eq_time_translation_generic} too. Then, the property
 \eqref{eq_vanishing_property} of $F_{p_1,q_1}$ implies that
\begin{align*}
&f_m^n(\bp,\b0,\bq,\bv)(\bY_1,\bY_2,\cdots,\bY_{n-m})\\
&=\frah^{p_1+q_1-v_1}\sum_{\bW_1\in I^{p_1}}\sum_{\bZ_1\in
 I^{q_1-v_1}}F_{p_1,q_1}(\bW_1,(\bZ_1,\bY_1))g(\bW_1)\notag\\
&\quad\cdot \prod_{k=2}^{n-m}
\left(
\frah^{q_k-v_k}\sum_{\bZ_k\in
 I^{q_k-v_k}}F_{q_k}^{t_k}(\bZ_k,\bY_k)\right)\notag\\
&\quad\cdot Tree(\{t_k\}_{k=1}^{n-m},\cC)\prod_{k=1}^{n-m}\psi_{\bZ_k}^{t_k}\Bigg|_{\psi^{t_k}=0\atop(\forall
 k\in\{1,2,\cdots,n-m\})}
 (-1)^{\sum_{k=1}^{n-m-1}v_k\sum_{i=k+1}^{n-m}(q_i-v_i)}\\
&=0,\quad \left(\forall (\bY_1,\bY_2,\cdots,\bY_{n-m})\in \prod_{k=1}^{n-m}I^{v_k}\right).
\end{align*}
Similarly we can check that
 $f_m^n(\bp,\bu,\bq,\b0)\equiv 0$.

To confirm that $f_m^n(\bp,\bu,\bq,\bv)(\cdot,\cdot)$ satisfies
 \eqref{eq_vanishing_property}, let us take a function
 $h:\prod_{j=1}^{m+1}[0,\beta)_h^{u_j}\to\C$ satisfying
 \eqref{eq_test_function_time_translation}. Here let us temporarily
 extend the notational rule defined in
 \eqref{eq_time_translation_notational_convention} as follows.
For $\bX=(\rho_1\bx_1s_1\xi_1,\cdots,\rho_n\bx_n s_n\xi_n)\in
 (\{1,2\}\times\Z^d\times\frac{1}{h}\Z\times\{1,-1\})^n$,
 $\bt=(t_1,\cdots,t_n)\in (\frac{1}{h}\Z)^n$, set 
$$
\bX+\bt:=(\rho_1\bx_1(s_1+t_1)\xi_1,\cdots,\rho_n\bx_n(s_n+t_n)\xi_n).
$$
Then, we see that for any $\bX_j\in (I^0)^{u_j}$ $(j=1,2,\cdots,m+1)$, 
$\bY_k\in I^{v_k}$ $(k=1,2,\cdots,n-m)$,
\begin{align}
&\prod_{j=1}^{m+1}\Bigg(\sum_{\bs_j\in [0,\beta)_h^{u_j}}\Bigg)h(\bs_1,\bs_2,\cdots,\bs_{m+1})\label{eq_vanishing_property_check}\\
&\quad\cdot f_m^n(\bp,\bu,\bq,\bv)((\bX_1+\bs_1,\bX_2+\bs_2,\cdots,\bX_{m+1}+\bs_{m+1}),(\bY_1,\bY_2,\cdots,\bY_{n-m}))
\notag\\
&=\frah^{p_1+q_1-u_1-v_1}\sum_{\bW_1\in (I^0)^{p_1-u_1}\atop \bZ_1\in
 I^{q_1-v_1}}
\sum_{\bs_1\in [0,\beta)_h^{u_1}\atop \bt_1\in [0,\beta)_h^{p_1-u_1}}
F_{p_1,q_1}((\bW_1+\bt_1,\bX_1+\bs_1),(\bZ_1,\bY_1))\notag\\
&\quad\cdot h'(\bW_1,\bZ_1)(\bt_1,\bs_1),\notag
\end{align}
where 
\begin{align*}
h'(\bW_1,\bZ_1)(\bt_1,\bs_1)
&:=\prod_{j=2}^{m+1}\Bigg(\sum_{\bs_j\in [0,\beta)_h^{u_j}}\Bigg)h(\bs_1,\bs_2,\cdots,\bs_{m+1})\\
&\quad\cdot
\prod_{j=2}^{m+1}\left(
\frah^{p_j-u_j}\sum_{\bW_j\in
 I^{p_j-u_j}}F_{p_j}^{s_j}(\bW_j,\bX_j+\bs_j)\right)\notag\\
&\quad\cdot \prod_{k=2}^{n-m}
\left(
\frah^{q_k-v_k}\sum_{\bZ_k\in
 I^{q_k-v_k}}F_{q_k}^{t_k}(\bZ_k,\bY_k)\right)\notag\\
&\quad\cdot
 Tree(\{s_j\}_{j=1}^{m+1},\cC)\psi_{\bW_1+\bt_1}^{s_1}
\prod_{j=2}^{m+1}\psi_{\bW_j}^{s_j}\Bigg|_{\psi^{s_j}=0\atop(\forall
 j\in\{1,2,\cdots,m+1\})}\notag\\
&\quad\cdot Tree(\{t_k\}_{k=1}^{n-m},\cC)\prod_{k=1}^{n-m}\psi_{\bZ_k}^{t_k}\Bigg|_{\psi^{t_k}=0\atop(\forall
 k\in\{1,2,\cdots,n-m\})}\notag\\
&\quad\cdot
 (-1)^{\sum_{j=1}^mu_j\sum_{i=j+1}^{m+1}(p_i-u_i)+\sum_{k=1}^{n-m-1}v_k\sum_{i=k+1}^{n-m}(q_i-v_i)}.\notag
\end{align*}
The equality \eqref{eq_time_translation_tree_formula}, the
 property \eqref{eq_time_translation_generic} of $F^j$ and the property
 \eqref{eq_test_function_time_translation} of $h$ imply that
\begin{align*}
&h'(\bW_1,\bZ_1)(r_{\beta}(t_1+s),\cdots,r_{\beta}(t_{p_1-u_1}+s),r_{\beta}(s_1+s),\cdots,r_{\beta}(s_{u_1}+s))\\
&=h'(\bW_1,\bZ_1)(t_1,\cdots,t_{p_1-u_1},s_1,\cdots,s_{u_1}),\\
&\left(\forall t_j\in [0,\beta)_h\ (j=1,\cdots,p_1-u_1),\ s_k\in \0betah\
 (k=1,\cdots,u_1),\ s\in\frac{1}{h}\Z\right).
\end{align*}
Thus, the property \eqref{eq_vanishing_property} of $F_{p_1,q_1}$
 ensures 
that the right-hand side of \eqref{eq_vanishing_property_check}
 vanishes. By the same procedure as above we can check that for any
 function $\phi:\prod_{j=1}^{n-m}[0,\beta)_h^{v_j}\to\C$ satisfying
 \eqref{eq_test_function_time_translation}, $\bX_j\in I^{u_j}$
 $(j=1,2,\cdots,m+1)$, $\bY_k\in (I^0)^{v_k}$ $(k=1,2,\cdots,n-m)$,
\begin{align*}
&\prod_{k=1}^{n-m}\left(\sum_{\bt_k\in[0,\beta)_h^{v_k}}\right)\phi(\bt_1,\bt_2,\cdots,\bt_{n-m})\\
&\cdot f_m^n(\bp,\bu,\bq,\bv)((\bX_1,\bX_2,\cdots,\bX_{m+1}),(\bY_1+\bt_1,\bY_2+\bt_2,\cdots,\bY_{n-m}+\bt_{n-m}))=0.
\end{align*}

After these preparations we define the functions $E_{a,b}^{(n)}:I^a\times
 I^b\to\C$ $(a,b\in \{0,2,4,\cdots,N\})$ by
\begin{align}
E_{a,b}^{(n)}(\bX,\bY)
&:=\sum_{p_1,q_1=2}^{N}1_{p_1,q_1\in 2\N}\sum_{u_1=0}^{p_1}
(1_{m=0}+1_{m\neq 0}1_{u_1\le
 p_1-1})\left(\begin{array}{c} p_1 \\ u_1\end{array}\right)
\label{eq_divided_kernel_definition}\\
&\quad\cdot \sum_{v_1=0}^{q_1}
(1_{m=n-1}+1_{m\neq n-1}1_{v_1\le
 q_1-1})\left(\begin{array}{c} q_1 \\ v_1\end{array}\right)\notag\\
&\quad\cdot \prod_{j=2}^{m+1}\left(\sum_{p_j=2}^N\sum_{u_j=0}^{p_j-1}
\left(\begin{array}{c} p_j \\ u_j\end{array}\right)\right)
\prod_{k=2}^{n-m}\left(\sum_{q_k=2}^N\sum_{v_k=0}^{q_k-1}
\left(\begin{array}{c} q_k \\ v_k\end{array}\right)\right)\notag\\
&\quad\cdot f_m^n((p_j)_{1\le j\le m+1},(u_j)_{1\le j\le
 m+1},(q_j)_{1\le j\le n-m},(v_j)_{1\le j\le
 n-m})\notag\\
&\quad\qquad ((\bX_1',\bX_2',\cdots,\bX_{m+1}'),
(\bY_1',\bY_2',\cdots,\bY_{n-m}'))\notag\\
&\quad\cdot 1_{\sum_{j=1}^{m+1}u_j=a}1_{\sum_{k=1}^{n-m}v_k=b}
1_{\sum_{j=1}^{m+1}p_j-2m\ge a}1_{\sum_{k=1}^{n-m}q_k-2(n-m-1)\ge b}\notag\\
&\quad\cdot \frac{1}{a!b!}\sum_{\s\in\S_a\atop \tau\in
 \S_b}\sgn(\s)\sgn(\tau)
1_{(\bX_1',\bX_2',\cdots,\bX_{m+1}')=\bX_{\s}}
1_{(\bY_1',\bY_2',\cdots,\bY_{n-m}')=\bY_{\tau}}.\notag
\end{align}
By definition $E_{a,b}^{(n)}$ is bi-anti-symmetric. Moreover, it follows
 from the above study on the function $f_m^n(\bp,\bu,\bq,\bv)$ that
 $E_{a,b}^{(n)}$ satisfies \eqref{eq_time_translation_generic},
 \eqref{eq_vanishing_property} and that $E_{a,b}^{(n)}\equiv 0$ if $a=0$
 or $b=0$. Thus, by \eqref{eq_divided_part_characterization}
$$
E^{(n)}(\psi)=\sum_{a,b=2}^{N}1_{a,b\in 2\N}\frah^{a+b}\sum_{\bX\in
 I^a\atop \bY\in I^b}E_{a,b}^{(n)}(\bX,\bY)\psi_{\bX}\psi_{\bY}.
$$
  
To establish upper bounds on the integrals of $E_{a,b}^{(n)}$, let us
 study bound properties of $f_m^n(\bp,\bu,\bq,\bv)$. First we consider the
 case $n=1$. In this case it simply follows from
 \eqref{eq_free_integration_bound_general},
 \eqref{eq_determinant_bound_generic} that for any $\bX_1\in I^{u_1}$,
 $\bY_1\in I^{v_1}$
\begin{align*}
&|f_m^n(\bp,\bu,\bq,\bv)(\bX_1,\bY_1)|\\
&\le
 \frah^{p_1+q_1-u_1-v_1}\sum_{\bW_1\in I^{p_1-u_1}\atop \bZ_1\in
 I^{q_1-v_1}}|F_{p_1,q_1}((\bW_1,\bX_1),(\bZ_1,\bY_1))|D^{\frac{1}{2}(p_1+q_1-u_1-v_1)},
\end{align*}
and thus
\begin{align}
&\|f_m^n(\bp,\bu,\bq,\bv)\|_{1,\infty}\le
 D^{\frac{1}{2}(p_1+q_1-u_1-v_1)}\|F_{p_1,q_1}\|_{1,\infty},\label{eq_divided_inside_1_1_infinity}\\
&\|f_m^n(\bp,\bu,\bq,\bv)\|_{1}\le
 D^{\frac{1}{2}(p_1+q_1-u_1-v_1)}\|F_{p_1,q_1}\|_{1}.\label{eq_divided_inside_1_1}
\end{align}

Next let us assume that $n\ge 2$. By \eqref{eq_phi_good_property},
 \eqref{eq_determinant_bound_general},
 \eqref{eq_determinant_bound_generic}, for any $\bX_j\in I^{u_j}$
 $(j=1,2,\cdots,m+1)$, $\bY_j\in I^{v_j}$ $(j=1,2,\cdots,n-m)$,
\begin{align*}
&|f_m^n(\bp,\bu,\bq,\bv)((\bX_1,\bX_2,\cdots,\bX_{m+1}),(\bY_1,\bY_2,\cdots,\bY_{n-m}))|\\
&\le
 2^{n-1}\sum_{S\in\T(\{s_j\}_{j=1}^{m+1})}\sum_{T\in\T(\{t_k\}_{k=1}^{n-m})}\frah^{p_1+q_1-u_1-v_1}\sum_{\bW_1\in I^{p_1-u_1-d_1(S)}\atop \bW_1'\in I^{d_1(S)}}
\sum_{\bZ_1\in I^{q_1-v_1-d_1(T)}\atop \bZ_1'\in I^{d_1(T)}}\\
&\quad\cdot
\left(\begin{array}{c} p_1-u_1 \\ d_1(S)\end{array}\right)
\left(\begin{array}{c} q_1-v_1 \\ d_1(T)\end{array}\right)
|F_{p_1,q_1}((\bW_1,\bW_1',\bX_1),(\bZ_1,\bZ_1',\bY_1))|\\
&\quad\cdot 
\prod_{j=2}^{m+1}\Bigg(
\frah^{p_j-u_j}\sum_{\bW_j\in I^{p_j-u_j-d_{s_j}(S)}\atop \bW_j'\in
 I^{d_{s_j}(S)}}
\left(\begin{array}{c} p_j-u_j \\ d_{s_j}(S)\end{array}\right)
|F_{p_j}^{s_j}(\bW_j,\bW_j',\bX_j)|\Bigg)\\
&\quad\cdot 
\prod_{k=2}^{n-m}\Bigg(
\frah^{q_k-v_k}\sum_{\bZ_k\in I^{q_k-v_k-d_{t_k}(T)}\atop \bZ_k'\in
 I^{d_{t_k}(T)}}
\left(\begin{array}{c} q_k-v_k \\ d_{t_k}(T)\end{array}\right)
|F_{q_k}^{t_k}(\bZ_k,\bZ_k',\bY_k)|\Bigg)\\
&\quad\cdot D^{\frac{1}{2}(\sum_{j=1}^{m+1}p_j-2m-\sum_{j=1}^{m+1}u_j)
+\frac{1}{2}(\sum_{k=1}^{n-m}q_k-2(n-m-1)-\sum_{k=1}^{n-m}v_k)
}\\
&\quad\cdot \left|\prod_{\{p,q\}\in
 S}\D_{\{p,q\}}(\cC)\prod_{j=1}^{m+1}\psi_{\bW_j'}^{s_j}
\right|
\left|\prod_{\{p,q\}\in
 T}\D_{\{p,q\}}(\cC)\prod_{k=1}^{n-m}\psi_{\bZ_k'}^{t_k}
\right|\\
&=2^{n-1}
D^{-n+1-\frac{1}{2}(\sum_{j=1}^{m+1}u_j+\sum_{k=1}^{n-m}v_k)}
\sum_{S\in\T(\{s_j\}_{j=1}^{m+1})}\sum_{T\in\T(\{t_k\}_{k=1}^{n-m})}\\
&\quad\cdot
\frah^{p_1+q_1-u_1-v_1}\sum_{\bW_1\in I^{p_1-u_1-d_1(S)}\atop \bW_1'\in I^{d_1(S)}}
\sum_{\bZ_1\in I^{q_1-v_1-d_1(T)}\atop \bZ_1'\in I^{d_1(T)}}\\
&\quad\cdot
\left(\begin{array}{c} p_1-u_1 \\ d_1(S)\end{array}\right)
\left(\begin{array}{c} q_1-v_1 \\ d_1(T)\end{array}\right)D^{\frac{1}{2}(p_1+q_1)}
|F_{p_1,q_1}((\bW_1,\bW_1',\bX_1),(\bZ_1,\bZ_1',\bY_1))|\\
&\quad\cdot 
\prod_{j=2}^{m+1}\Bigg(
\frah^{p_j-u_j}\sum_{\bW_j\in I^{p_j-u_j-d_{s_j}(S)}\atop \bW_j'\in
 I^{d_{s_j}(S)}}
\left(\begin{array}{c} p_j-u_j \\ d_{s_j}(S)\end{array}\right)D^{\frac{p_j}{2}}
|F_{p_j}^{s_j}(\bW_j,\bW_j',\bX_j)|\Bigg)\\
&\quad\cdot 
\prod_{k=2}^{n-m}\Bigg(
\frah^{q_k-v_k}\sum_{\bZ_k\in I^{q_k-v_k-d_{t_k}(T)}\atop \bZ_k'\in
 I^{d_{t_k}(T)}}
\left(\begin{array}{c} q_k-v_k \\ d_{t_k}(T)\end{array}\right)D^{\frac{q_k}{2}}
|F_{q_k}^{t_k}(\bZ_k,\bZ_k',\bY_k)|\Bigg)\\
&\quad\cdot \left|\prod_{\{p,q\}\in
 S}\D_{\{p,q\}}(\cC)\prod_{j=1}^{m+1}\psi_{\bW_j'}^{s_j}
\right|
\left|\prod_{\{p,q\}\in
 T}\D_{\{p,q\}}(\cC)\prod_{k=1}^{n-m}\psi_{\bZ_k'}^{t_k}
\right|,
\end{align*}
where for consistency we admit that
\begin{align*}
&\left|\prod_{\{p,q\}\in
 S}\D_{\{p,q\}}(\cC)\prod_{j=1}^{m+1}\psi_{\bW_j'}^{s_j}
\right|:=1\text{ if }m=0,\\
&\left|\prod_{\{p,q\}\in
 T}\D_{\{p,q\}}(\cC)\prod_{k=1}^{n-m}\psi_{\bZ_k'}^{t_k}
\right|:=1\text{ if }m=n-1,
\end{align*}
since in these cases $S$ (or $T$) has no line.
For each $S\in\T(\{s_j\}_{j=1}^{m+1})$,
 $T\in\T(\{t_k\}_{k=1}^{n-m})$
we can draw a line between the vertex $s_1(=1)$ of $S$ and the vertex
 $t_1(=1)$ of $T$ to form a tree containing both $S$ and $T$. To
 estimate $\|f_m^n(\bp,\bu,\bq,\bv)\|_1$, we consider the vertex $n$ as
 the root of this large tree. Then, by repeating the recursive
 estimation from younger branches to the root $n$ and using the
 inequality \eqref{eq_number_trees} we deduce that
\begin{align}
&\|f_m^n(\bp,\bu,\bq,\bv)\|_1\label{eq_divided_inside_1}\\
&\le
 2^{n-1}
D^{-n+1-\frac{1}{2}(\sum_{j=1}^{m+1}u_j+\sum_{k=1}^{n-m}v_k)}
\sum_{S\in\T(\{s_j\}_{j=1}^{m+1})}\sum_{T\in\T(\{t_k\}_{k=1}^{n-m})}\notag\\
&\quad\cdot
\left(\begin{array}{c} p_1-u_1 \\ d_1(S)\end{array}\right)
\left(\begin{array}{c} q_1-v_1 \\ d_1(T)\end{array}\right)d_1(S)!d_1(T)!
D^{\frac{1}{2}(p_1+q_1)}\notag\\
&\quad\cdot \Bigg(1_{n\in \{s_j\}_{j=2}^{m+1}}\sup_{X_0\in
 I}\frah^{p_1+q_1}\sum_{\bX\in I^{p_1}\atop \bY\in
 I^{q_1}}|F_{p_1,q_1}(\bX,\bY)||\tilde{\cC}(X_0,X_1)|\notag\\
&\qquad + 1_{n\in \{t_k\}_{k=2}^{n-m}}
\sup_{Y_0\in
 I}\frah^{p_1+q_1}
\sum_{\bX\in I^{p_1}\atop \bY\in
 I^{q_1}}|F_{p_1,q_1}(\bX,\bY)||\tilde{\cC}(Y_0,Y_1)|\Bigg)\notag\\
&\quad\cdot 
\prod_{j=2}^{m+1}\Bigg(
\left(\begin{array}{c} p_j-u_j \\ d_{s_j}(S)\end{array}\right)d_{s_j}(S)!
D^{\frac{p_j}{2}}
\Bigg(
1_{s_j\neq n}\sup_{X_0\in I}\frah^{p_j}\sum_{\bX\in I^{p_j}}
|F_{p_j}^{s_j}(\bX)||\tilde{\cC}(X_0,X_1)|\notag\\
&\qquad\qquad\qquad\qquad\qquad\qquad\qquad\quad+1_{s_j= n}\|F_{p_j}^{s_j}\|_1
\Bigg)\Bigg)\notag\\
&\quad\cdot 
\prod_{k=2}^{n-m}\Bigg(
\left(\begin{array}{c} q_k-v_k \\ d_{t_k}(T)\end{array}\right)d_{t_k}(T)!
D^{\frac{q_k}{2}}
\Bigg(
1_{t_k\neq n}\sup_{X_0\in I}\frah^{q_k}\sum_{\bX\in I^{q_k}}
|F_{q_k}^{t_k}(\bX)||\tilde{\cC}(X_0,X_1)|\notag\\
&\qquad\qquad\qquad\qquad\qquad\qquad\qquad\quad+1_{t_k= n}\|F_{q_k}^{t_k}\|_1
\Bigg)\Bigg)\notag\\
&\le (1_{m+1\ge 2}(m-1)!+1_{m+1=1})(1_{n-m\ge 2}(n-m-2)!+1_{n-m=1})\notag\\
&\quad\cdot 2^{-2\sum_{j=1}^{m+1}u_j-2\sum_{k=1}^{n-m}v_k}
D^{-n+1-\frac{1}{2}(\sum_{j=1}^{m+1}u_j+\sum_{k=1}^{n-m}v_k)}
\|\tilde{\cC}\|_{1,\infty}^{n-1}\notag\\
&\quad\cdot 2^{2p_1+2q_1}D^{\frac{1}{2}(p_1+q_1)}\|F_{p_1,q_1}\|_{1,\infty}\notag\\
&\quad\cdot 
\prod_{j=2}^{m+1}(2^{2p_j}D^{\frac{p_j}{2}}(1_{s_j\neq n}\|F_{p_j}^{s_j}\|_{1,\infty}+
1_{s_j= n}\|F_{p_j}^{s_j}\|_{1}))\notag\\
&\quad\cdot 
\prod_{k=2}^{n-m}(2^{2q_k}D^{\frac{q_k}{2}}(1_{t_k\neq n}\|F_{q_k}^{t_k}\|_{1,\infty}+
1_{t_k= n}\|F_{q_k}^{t_k}\|_{1})).\notag
\end{align}
We should remark that the inequality 
\begin{align*}
&2^{n-1}(1_{m+1\ge 2}(m-1)!2^{-m-1}+1_{m+1=1})(1_{n-m\ge 2}(n-m-2)!2^{-n+m}
+1_{n-m=1})\\
&\le (1_{m+1\ge 2}(m-1)!+1_{m+1=1})(1_{n-m\ge 2}(n-m-2)!+1_{n-m=1})
\end{align*}
was used to derive the second inequality.

Estimation of $\|f_m^n(\bp,\bu,\bq,\bv)\|_{1,\infty}$ can be done
 similarly. In this case first we fix a component of
 $((\bX_1,\bX_2,\cdots,\bX_{m+1}),(\bY_1,\bY_2,\cdots,\bY_{n-m}))$.
 Then, there uniquely exists a vertex of the enlarged tree containing 
both $S$ and $T$ such that the fixed component is a variable of the
 function $F^j$ or $F_{p,q}$ on the vertex. We consider the vertex as the root of the enlarged tree and repeat the same recursive estimation as above. The
 result is that
\begin{align}
&\|f_m^n(\bp,\bu,\bq,\bv)\|_{1,\infty}\label{eq_divided_inside_1_infinity}\\
&\le (1_{m+1\ge 2}(m-1)!+1_{m+1=1})(1_{n-m\ge 2}(n-m-2)!+1_{n-m=1})\notag\\
&\quad\cdot 2^{-2\sum_{j=1}^{m+1}u_j-2\sum_{k=1}^{n-m}v_k}
D^{-n+1-\frac{1}{2}(\sum_{j=1}^{m+1}u_j+\sum_{k=1}^{n-m}v_k)}
\|\tilde{\cC}\|_{1,\infty}^{n-1}\notag\\
&\quad\cdot 2^{2p_1+2q_1}D^{\frac{1}{2}(p_1+q_1)}\|F_{p_1,q_1}\|_{1,\infty}
\prod_{j=2}^{m+1}(2^{2p_j}D^{\frac{p_j}{2}}\|F_{p_j}^{s_j}\|_{1,\infty})
\prod_{k=2}^{n-m}(2^{2q_k}D^{\frac{q_k}{2}}\|F_{q_k}^{t_k}\|_{1,\infty}).\notag
\end{align}

It follows from the definition \eqref{eq_divided_kernel_definition} that
\begin{align}
\|E_{a,b}^{(n)}\|_{norm}
&\le \sum_{p_1,q_1=2}^{N}1_{p_1,q_1\in 2\N}\sum_{u_1=0}^{p_1}
(1_{m=0}+1_{m\neq 0}1_{u_1\le
 p_1-1})\left(\begin{array}{c} p_1 \\ u_1\end{array}\right)\label{eq_divided_kernel_intermediate_bound}\\
&\quad\cdot \sum_{v_1=0}^{q_1}
(1_{m=n-1}+1_{m\neq n-1}1_{v_1\le
 q_1-1})\left(\begin{array}{c} q_1 \\ v_1\end{array}\right)\notag\\
&\quad\cdot \prod_{j=2}^{m+1}\left(\sum_{p_j=2}^N\sum_{u_j=0}^{p_j-1}
\left(\begin{array}{c} p_j \\ u_j\end{array}\right)\right)
\prod_{k=2}^{n-m}\left(\sum_{q_k=2}^N\sum_{v_k=0}^{q_k-1}
\left(\begin{array}{c} q_k \\ v_k\end{array}\right)\right)
 \|f_m^n(\bp,\bu,\bq,\bv)\|_{norm}\notag\\
&\quad\cdot 1_{\sum_{j=1}^{m+1}u_j=a}1_{\sum_{k=1}^{n-m}v_k=b}
1_{\sum_{j=1}^{m+1}p_j-2m\ge a}1_{\sum_{k=1}^{n-m}q_k-2(n-m-1)\ge b},\notag
\end{align}
where $norm=`1,\infty'$ or $norm=1$.
When $n=1$, by substituting \eqref{eq_divided_inside_1_1_infinity} into
 \eqref{eq_divided_kernel_intermediate_bound} with $norm=`1,\infty'$ we obtain
 \eqref{eq_divided_1_1_infinity}. By substituting
 \eqref{eq_divided_inside_1_1} into
 \eqref{eq_divided_kernel_intermediate_bound} with $norm=1$
  we obtain \eqref{eq_divided_1_1}. Assume that
 $n\ge 2$. By inserting \eqref{eq_divided_inside_1_infinity} into
 \eqref{eq_divided_kernel_intermediate_bound} with $norm=`1,\infty'$ we see that
\begin{align*}
\|E_{a,b}^{(n)}\|_{1,\infty}
&\le \sum_{p_1,q_1=2}^{N}1_{p_1,q_1\in 2\N}\sum_{u_1=0}^{p_1}
\left(\begin{array}{c} p_1 \\ u_1\end{array}\right)
\sum_{v_1=0}^{q_1}
\left(\begin{array}{c} q_1 \\ v_1\end{array}\right)\\
&\quad\cdot \prod_{j=2}^{m+1}\left(\sum_{p_j=2}^N\sum_{u_j=0}^{p_j-1}
\left(\begin{array}{c} p_j \\ u_j\end{array}\right)\right)
\prod_{k=2}^{n-m}\left(\sum_{q_k=2}^N\sum_{v_k=0}^{q_k-1}
\left(\begin{array}{c} q_k \\ v_k\end{array}\right)\right)\\
&\quad\cdot (1_{m+1\ge 2}(m-1)!+1_{m+1=1})(1_{n-m\ge
 2}(n-m-2)!+1_{n-m=1})\\
&\quad\cdot 2^{-2a-2b}D^{-n+1-\frac{1}{2}(a+b)}\|\tilde{\cC}\|_{1,\infty}^{n-1}
2^{2p_1+2q_1}D^{\frac{1}{2}(p_1+q_1)}\|F_{p_1,q_1}\|_{1,\infty}\\
&\quad\cdot\prod_{j=2}^{m+1}(2^{2p_j}D^{\frac{p_j}{2}}\|F_{p_j}^{s_j}\|_{1,\infty})
\prod_{k=2}^{n-m}(2^{2q_k}D^{\frac{q_k}{2}}\|F_{q_k}^{t_k}\|_{1,\infty})\\
&\quad\cdot 1_{\sum_{j=1}^{m+1}p_j-2m\ge a}1_{\sum_{k=1}^{n-m}q_k-2(n-m-1)\ge b},
\end{align*}
which gives \eqref{eq_divided_1_infinity}. By combining
 \eqref{eq_divided_inside_1} with \eqref{eq_divided_kernel_intermediate_bound} with $norm=1$ we obtain \eqref{eq_divided_1}.
\end{proof}

\subsection{Generalized
  covariances}\label{subsec_generalized_covariances}

To construct a double-scale integration process in a generalized
setting, here we list the assumptions on a couple of generalized covariances. Let
$c_0\in \R_{\ge 1}$, $D_c\in \R_{>0}$. We assume that covariances $\cC_0$, $\cC_1:I_0^2\to\C$ satisfy the following properties.
\begin{itemize}
\item $\cC_{1}$ satisfies
      \eqref{eq_time_translation_generic_covariance}.
\item 
\begin{align}
&\cC_{0}(\rho\bx s,\eta \by t)=\cC_{0}(\rho\bx 0,\eta \by
 0),\label{eq_time_independence}\ (\forall \rho,\eta\in \{1,2\},\ \bx,\by\in \G,\ s,t\in
 [0,\beta)_h).
\end{align}
\item 
\begin{align}
&|\det(\<\bu_i,\bv_j\>_{\C^m}\cC_l(X_i,Y_j))_{1\le i,j\le n}|\le
 c_0^n,\label{eq_determinant_bound}\\
&(\forall m,n\in\N,\ \bu_i,\bv_i\in\C^m\text{ with
 }\|\bu_i\|_{\C^m},\|\bv_i\|_{\C^m}\le 1,\notag\\
&\quad X_i,Y_i\in I_0\
 (i=1,2,\cdots,n),\ l\in \{0,1\}).\notag
\end{align}
\item 
\begin{align}
&\|\tilde{\cC_1}\|_{1,\infty}\le c_0.\label{eq_decay_bound}\\
&\|\tilde{\cC_1}\|\le c_0.\label{eq_decay_bound_coupled}\\
&\|\tilde{\cC_0}\|_{1,\infty}\le c_0D_c.\label{eq_decay_bound_final}
\end{align}
\end{itemize}
Here $\tilde{\cC}_l(:I^2\to \C)$ is the
anti-symmetric extension of $\cC_l$ defined as
in \eqref{eq_anti_symmetric_extension}. 
In practice $\cC_{1}$ will be replaced by the free covariance with 
many Matsubara frequencies and the covariance $\cC_{0}$
will be the free covariance containing only one Matsubara frequency closest to the parameter $\theta/2$.
The condition \eqref{eq_time_independence} requires $\cC_0$ to be
independent of the time variables, which may be seen as a strong
assumption at this point. If a covariance sums over only one
time-momentum, then by a gauge transform the covariance can be made
independent of the time variables. It will turn out that because of the
time-independence of $\cC_0$, only negligibly small data bounded by the
inverse volume factor remain after the double-scale integration of the
correction term.

\subsection{The first integration without the artificial
  term}\label{subsec_UV_without_artificial}

Our purpose here is to develop a single-scale analysis concerning the
single-scale integration
\begin{align*}
\log\left(
\int e^{-V(u)(\psi+\psi^1)+W(u)(\psi+\psi^1)}d\mu_{\cC_1}(\psi^1)
\right).
\end{align*}
In fact what we will analyze is an analytic continuation of the above
Grassmann polynomial which a priori makes sense only if the coupling
constant is sufficiently small. The analytically continued polynomial will
be related to the Grassmann integral of the correction term by the
identity theorem in Subsection \ref{subsec_double_scale}. In the
next subsection we will add the artificial term $-A(\psi)$ to the input
$-V(\psi)+W(\psi)$. 

To describe properties of the output of the above integration, we 
introduce a couple of sets of
$\bigwedge\cV$-valued functions. For sets $O,O'$ let $\Map(O,O')$
denote the set of maps from $O$ to $O'$. From now we use a
parameter $\alpha\in \R_{\ge 1}$ in many situations.
In this subsection kernels of Grassmann polynomials are parameterized by
$u\in \overline{D(r)}$. To describe uniform convergent properties of
the kernels, let us modify the norm $\|\cdot\|_{1,\infty}$ defined in
Subsection \ref{subsec_preliminaries} as follows. For $f\in
\Map(\overline{D(r)}, \Map(I^m,\C))$ we set 
$$
\|f\|_{1,\infty,r}:=\sup_{u\in\overline{D(r)}}\|f(u)\|_{1,\infty}.
$$
For notational consistency we set
$$
\|f\|_{1,\infty,r}:=\sup_{u\in\overline{D(r)}}|f(u)|
$$
for $f\in \Map(\overline{D(r)},\C)$ as well. 

With these notations, for $r\in\R_{>0}$ we
define the subset $\cQ(r)$ of $\Map(\overline{D(r)},\bigwedge_{even}\cV)$ as follows. $f$ belongs to
$\cQ(r)$ if and only if the following statements hold.
\begin{itemize}
\item $f\in \displaystyle\Map\left(\overline{D(r)},\bigwedge_{even}\cV\right)$.
\item $u\mapsto f(u)(\psi):\overline{D(r)}\to \bigwedge \cV$ is
 continuous in $\overline{D(r)}$ and analytic in $D(r)$.
\item For any $u\in \overline{D(r)}$ the anti-symmetric kernels
 $f(u)_m:I^m\to \C$ $(m=2,4,\cdots,N)$ satisfy
 \eqref{eq_time_translation_generic} and 
\begin{align}
&\frac{h}{N}\alpha^2 \|f_0\|_{1,\infty,r}\le
 L^{-d},\notag\\
&\sum_{m=2}^{N}c_0^{\frac{m}{2}}\alpha^m\|f_m\|_{1,\infty,r}\le
 L^{-d}.\label{eq_scale_volume_norm}
\end{align}
\end{itemize}
In short the set $\cQ(r)$ gathers Grassmann data bounded by
 $L^{-d}$. 

Next we define the subset $\cR(r)$ of $\Map(\overline{D(r)},\bigwedge_{even}\cV)$ as follows. $f$ belongs to
$\cR(r)$ if and only if the following statements hold.
\begin{itemize}
\item $f\in \displaystyle\Map\left(\overline{D(r)},\bigwedge_{even}\cV\right)$.
\item $u\mapsto f(u)(\psi):\overline{D(r)}\to \bigwedge \cV$ is
 continuous in $\overline{D(r)}$ and analytic in $D(r)$.
\item There exist $f_{p,q}\in \Map(\overline{D(r)},\Map(I^p\times
      I^q,\C))$ $(p,q\in \{2,4,\cdots,N\})$ such that for any $u\in
      \overline{D(r)}$, $p,q\in \{2,4,\cdots,N\}$,
      $f_{p,q}(u):I^p\times I^q\to\C$ is bi-anti-symmetric and satisfies
      \eqref{eq_time_translation_generic}, \eqref{eq_vanishing_property} and
\begin{align}
&f(u)(\psi)=\sum_{p,q=2}^{N}1_{p,q\in 2\N}\frah^{p+q}\sum_{\bX\in
 I^p\atop \bY\in
 I^q}f_{p,q}(u)(\bX,\bY)\psi_{\bX}\psi_{\bY},\notag\\
&\sum_{p,q=2}^{N}1_{p,q\in
 2\N}c_0^{\frac{1}{2}(p+q)}\alpha^{p+q}\|f_{p,q}\|_{1,\infty,r}\le 1.\label{eq_scale_norm}
\end{align}
\end{itemize}
In short the set $\cR(r)$ collects Grassmann data whose kernels have the good property
      \eqref{eq_vanishing_property}.

With fixed $r\in \R_{>0}$ let us define $V^{0-1,1},V^{0-2,1},V^{0,1}\in
      \Map(\overline{D(r)},\bigwedge_{even}\cV)$ as follows.
\begin{align*}
&V^{0-1,1}(u)(\psi):=\frah^{2}\sum_{\bX\in
 I^2}V_{2}^{0-1,1}(u)(\bX)\psi_{\bX},\\
&V^{0-2,1}(u)(\psi):=\frah^{4}\sum_{\bX,\bY\in
 I^2}V_{2,2}^{0-2,1}(u)(\bX,\bY)\psi_{\bX}\psi_{\bY},\\
&V^{0,1}(u)(\psi):=V^{0-1,1}(u)(\psi)+V^{0-2,1}(u)(\psi),\quad (u\in \overline{D(r)}),
\end{align*}
where
\begin{align}
&V_{2}^{0-1,1}(u)(\rho_1\bx_1 s_1\xi_1,\rho_2\bx_2
 s_2\xi_2)\label{eq_0_1_1_initial_kernel}\\
&:=-\frac{1}{2}uL^{-d}h1_{(\rho_1,\bx_1,s_1)=(\rho_2,\bx_2,s_2)}1_{\rho_1=1}
(1_{(\xi_1,\xi_2)=(1,-1)}-1_{(\xi_1,\xi_2)=(-1,1)}),\notag\\
&V_{2,2}^{0-2,1}(u)(\rho_1\bx_1 s_1\xi_1,\rho_2\bx_2
 s_2\xi_2,\eta_1\by_1t_1\zeta_1,\eta_2\by_2t_2\zeta_2)\label{eq_definition_initial_data_kernel}\\
&:=-\frac{1}{4}uL^{-d}h^21_{(\bx_1,s_1,\by_1,t_1)=(\bx_2,s_2,\by_2,t_2)}(h1_{s_1=t_1}-\beta^{-1})\notag\\
&\qquad\cdot \sum_{\s,\tau\in \S_2}\sgn(\s)\sgn(\tau)
1_{(\rho_{\s(1)},\rho_{\s(2)},\eta_{\tau(1)},\eta_{\tau(2)})=(1,2,2,1)}\notag\\
&\qquad\qquad\cdot 1_{(\xi_{\s(1)},\xi_{\s(2)},\zeta_{\tau(1)},\zeta_{\tau(2)})=(1,-1,1,-1)}
.\notag
\end{align}
One can check that $V_2^{0-1,1}(u):I^2\to\C$ is anti-symmetric, 
$V_{2,2}^{0-2,1}(u):I^2\times I^2\to \C$ is
bi-anti-symmetric and 
$V^{0,1}(u)(\psi)$ is equal to the initial data
      $-V(u)(\psi)+W(u)(\psi)$. 
Then, we define $V^{0-1-1,0},V^{0-1-2,0},V^{0-2,0}\in
      \Map(\overline{D(r)},\bigwedge_{even}\cV)$ as follows.
For any $n\in \N$, $u\in \overline{D(r)}$,
\begin{align*}
&V^{0-1-1,0,(n)}(u)(\psi)\\
&:=\frac{1}{n!}Tree(\{1,2,\cdots,n\},\cC_1)\prod_{j=1}^n\left(
\sum_{b_j\in \{1,2\}}V^{0-b_j,1}(u)(\psi^j+\psi)\right)\Bigg|_{\psi^{j}=0\atop(\forall
 j\in\{1,2,\cdots,n\})}
1_{\exists j(b_j=1)},\\
&V^{0-1-2,0,(n)}(u)(\psi)\\
&:=\frah^{4}\sum_{\bX,\bY\in
 I^2}V_{2,2}^{0-2,1}(u)(\bX,\bY)
 \frac{1}{n!}Tree(\{1,2,\cdots,n+1\},\cC_{1})\\
&\quad\cdot (\psi^1+\psi)_{\bX}
(\psi^2+\psi)_{\bY}
\prod_{j=3}^{n+1}V^{0-2,1}(u)(\psi^j+\psi)\Bigg|_{\psi^{j}=0\atop(\forall
 j\in\{1,2,\cdots,n+1\})},\\
&V^{0-2,0,(n)}(u)(\psi)\\
&:=\frac{1}{n!}\sum_{m=0}^{n-1}\sum_{(\{s_j\}_{j=1}^{m+1},
 \{t_k\}_{k=1}^{n-m})\in S(n,m)}
\frah^{4}\sum_{\bX,\bY\in
 I^2}V_{2,2}^{0-2,1}(u)(\bX,\bY)\\
&\qquad\cdot Tree(\{s_j\}_{j=1}^{m+1},\cC_{1})(\psi^{s_1}+\psi)_{\bX}\prod_{j=2}^{m+1}V^{0-2,1}(u)(\psi^{s_j}+\psi)
\Bigg|_{\psi^{s_j}=0\atop(\forall
 j\in\{1,2,\cdots,m+1\})}\\
&\qquad\cdot Tree(\{t_k\}_{k=1}^{n-m},\cC_{1})(\psi^{t_1}+\psi)_{\bY}\prod_{k=2}^{n-m}V^{0-2,1}(u)(\psi^{t_k}+\psi)
\Bigg|_{\psi^{t_k}=0\atop(\forall
 k\in\{1,2,\cdots,n-m\})},
\end{align*}
where 
\begin{align*}
S(n,m):=\left\{(\{s_j\}_{j=1}^{m+1},
 \{t_k\}_{k=1}^{n-m})\ \Bigg|\
 \begin{array}{l}1=s_1<s_2<\cdots<s_{m+1}\le n,\\
                 1=t_1<t_2<\cdots<t_{n-m}\le n,\\
                \{s_j\}_{j=2}^{m+1}\cup
		 \{t_k\}_{k=2}^{n-m}=\{2,3,\cdots,n\},\\  
               \{s_j\}_{j=2}^{m+1}\cap \{t_k\}_{k=2}^{n-m}=\emptyset.
\end{array}
\right\}.
\end{align*}
Then, set 
\begin{align*}
&V^{0-1-j,0}(u)(\psi):=\sum_{n=1}^{\infty}V^{0-1-j,0,(n)}(u)(\psi),\
 (j=1,2),\\
&V^{0-1,0}(u)(\psi):=V^{0-1-1,0}(u)(\psi)+V^{0-1-2,0}(u)(\psi),\\
&V^{0-2,0}(u)(\psi):=\sum_{n=1}^{\infty}V^{0-2,0,(n)}(u)(\psi),
\end{align*}
on the assumption that these series converge in $\bigwedge \cV$. The reason
why we use the label $0-1$, $0-2$ as the 1st superscript is that
these Grassmann data are independent of the
artificial parameters $\la_1,\la_2$ and thus are classified as the data
of degree 0 with $\la_1,\la_2$. In the next subsection we will introduce
the data $V^{1-j}$ ($j=1,2,3$) and $V^2$ which are of degree 1 and of
degree at least 2 with the parameters $\la_1,\la_2$ respectively.
The 2nd superscripts $1$, $0$ indicate the scale of integration. The
data being integrated with the covariance $\cC_1$ have the 2nd
superscript 1, while the data to be integrated with the covariance
$\cC_0$ have the 2nd superscript 0. Thus, it can be read that
$V^{0,1}$ is independent of $\la_1$, $\la_2$ and to be integrated with
$\cC_1$, $V^{0-1,0}$ is independent of $\la_1$, $\la_2$ and to be integrated with
$\cC_0$ and so on. 

We should explain the structure of the above definitions. 
The idea of the following transformation
is essentially same as the equalities \cite[\mbox{(3.38)}]{M},
\cite[\mbox{(IV.15)}]{L}. It follows from the general formulas 
\eqref{eq_tree_formula_basic}, 
\eqref{eq_tree_formula_simple} that
\begin{align}
&\frac{1}{n!}\left(\frac{d}{dz}\right)^n\log\left(\int
 e^{zV^{0,1}(u)(\psi^1+\psi)} d\mu_{\cC_{1}}(\psi^1)
\right)\Bigg|_{z=0}\label{eq_L_M_transformation}\\
&=V^{0-1-1,0,(n)}(u)(\psi)\notag\\
&\quad+
\frah^{4}\sum_{\bX,\bY\in
 I^2}V_{2,2}^{0-2,1}(u)(\bX,\bY)\frac{1}{n!}Tree(\{1,2,\cdots,n\},\cC_{1})\notag\\
&\qquad\qquad\cdot (\psi^{1}+\psi)_{\bX}(\psi^{1}+\psi)_{\bY}
\prod_{j=2}^{n}V^{0-2,1}(u)(\psi^{j}+\psi)
\Bigg|_{\psi^{j}=0\atop(\forall
 j\in\{1,2,\cdots,n\})}\notag\\
&=V^{0-1-1,0,(n)}(u)(\psi)\notag\\
&\quad +\frah^{4}\sum_{\bX,\bY\in
 I^2}V_{2,2}^{0-2,1}(u)(\bX,\bY)\frac{1}{n!}\prod_{j=1}^n\left(\frac{\partial}{\partial z_j}\right)\notag\\
&\qquad\cdot\log\left(\int
 e^{z_1(\psi^1+\psi)_{\bX}(\psi^1+\psi)_{\bY}+\sum_{j=2}^nz_j
V^{0-2,1}(u)(\psi^1+\psi)}
d\mu_{\cC_{1}}(\psi^1)
\right)
\Bigg|_{z_{j}=0\atop(\forall
 j\in\{1,2,\cdots,n\})}\notag\\
&=V^{0-1-1,0,(n)}(u)(\psi)\notag\\
&\quad+\frah^{4}\sum_{\bX,\bY\in
 I^2}V_{2,2}^{0-2,1}(u)(\bX,\bY)\notag\\
&\qquad\cdot \frac{1}{n!}\prod_{j=2}^n\left(\frac{\partial}{\partial
 z_j}\right)
\int (\psi^1+\psi)_{\bX}(\psi^1+\psi)_{\bY}
e^{\sum_{j=2}^nz_j
V^{0-2,1}(u)(\psi^1+\psi)}d\mu_{\cC_{1}}(\psi^1)\notag\\
&\qquad \cdot 
\left(\int
 e^{\sum_{j=2}^nz_j
V^{0-2,1}(u)(\psi^1+\psi)}
d\mu_{\cC_{1}}(\psi^1)
\right)^{-1}
\Bigg|_{z_{j}=0\atop(\forall
 j\in\{2,3,\cdots,n\})}\notag\\
&=V^{0-1-1,0,(n)}(u)(\psi)\notag\\
&\quad+ \frah^{4}\sum_{\bX,\bY\in
 I^2}V_{2,2}^{0-2,1}(u)(\bX,\bY)\frac{1}{n!}\prod_{j=0}^n\left(\frac{\partial}{\partial
 z_j}\right)\notag\\
&\qquad\cdot\Bigg(\log\left(\int
 e^{z_0(\psi^1+\psi)_{\bX}+z_1(\psi^1+\psi)_{\bY}+
\sum_{j=2}^nz_j
V^{0-2,1}(u)(\psi^1+\psi)}
d\mu_{\cC_{1}}(\psi^1)
\right)\notag\\
&\quad\qquad +\log\left(\int
 e^{z_0(\psi^1+\psi)_{\bX}+
\sum_{j=2}^nz_j
V^{0-2,1}(u)(\psi^1+\psi)}
d\mu_{\cC_{1}}(\psi^1)
\right)\notag\\
&\qquad\qquad\cdot\log\left(\int
 e^{z_1(\psi^1+\psi)_{\bY}+
\sum_{j=2}^nz_j
V^{0-2,1}(u)(\psi^1+\psi)}
d\mu_{\cC_{1}}(\psi^1)
\right)\Bigg)\Bigg|_{z_{j}=0\atop(\forall
 j\in\{0,1,\cdots,n\})}\notag\\
&= V^{0-1-1,0,(n)}(u)(\psi)+V^{0-1-2,0,(n)}(u)(\psi)+V^{0-2,0,(n)}(u)(\psi).\notag
\end{align}
Remark that for any $f^j(\psi)\in \bigwedge_{even}\cV$
$(j=1,2,\cdots,n)$ the maps
\begin{align*}
&(z_1,z_2,\cdots,z_n)\mapsto \log\left(\int
 e^{\sum_{j=1}^nz_jf^j(\psi^1+\psi)}
d\mu_{\cC_{1}}(\psi^1)\right),\\
&(z_1,z_2,\cdots,z_n)\mapsto \left(\int
 e^{\sum_{j=1}^nz_jf^j(\psi^1+\psi)}
d\mu_{\cC_{1}}(\psi^1)\right)^{-1}
\end{align*}
are analytic in a neighborhood of the origin and thus the above
transformation holds true. See e.g. \cite{FKT} for properties of inverse
and logarithm of even Grassmann polynomials. 

In the rest of this subsection we prove the following lemma.
\begin{lemma}\label{lem_UV_without_artificial}
For any $\alpha \in [2^3,\infty)$, 
$$
V^{0-1,0}\in\cQ(2^{-9}c_0^{-2}\alpha^{-4}),\quad V^{0-2,0}\in\cR(2^{-9}c_0^{-2}\alpha^{-4}).
$$
\end{lemma}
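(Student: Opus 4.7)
My strategy is to decompose the Taylor expansion of the logarithm $\log\int e^{zV^{0,1}(u)(\psi+\psi^1)}d\mu_{\cC_1}(\psi^1)$ around $z=0$ into the three pieces $V^{0-1-1,0,(n)}+V^{0-1-2,0,(n)}+V^{0-2,0,(n)}$ exhibited in the transformation \eqref{eq_L_M_transformation}, to estimate each piece with the appropriate lemma from Subsection \ref{subsec_general_estimation}, and to sum in $n$.

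The preliminary step is to verify the hypotheses of those lemmas for the initial data $V^{0-1,1}$, $V^{0-2,1}$. Anti-symmetry of $V^{0-1,1}_2$, bi-anti-symmetry of $V^{0-2,1}_{2,2}$, and time-translation invariance \eqref{eq_time_translation_generic} are immediate from the explicit formulas \eqref{eq_0_1_1_initial_kernel}--\eqref{eq_definition_initial_data_kernel}. The non-trivial property is the vanishing condition \eqref{eq_vanishing_property} for $V^{0-2,1}_{2,2}$: after imposing the delta constraint $s_1=s_2$, any translation-invariant test function $g$ becomes constant along the diagonal, and
\[
\sum_{s_1\in[0,\beta)_h}(h\,1_{s_1=t_1}-\beta^{-1})\,g(s_1,s_1)=\bigl(h-\beta^{-1}\cdot\beta h\bigr)\,g(0,0)=0,
\]
with the symmetric argument handling the other slot. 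A direct calculation gives $\|V^{0-1,1}_2\|_{1,\infty,r}\le r/(2L^d)$ and $\|V^{0-2,1}_4\|_{1,\infty,r}=O(r)$. Crucially,
\[
[V^{0-2,1}_{2,2},\tilde{\cC}_1]_{1,\infty}\le C\,c_0\,r\,L^{-d}(1+\beta^{-1}),
\]
where the $L^{-d}$ prefactor of $V^{0-2,1}_{2,2}$ \emph{survives} in this coupled norm because the spatial sum over one leg of $V^{0-2,1}_{2,2}$ is weighted against $|\tilde{\cC}_1|$, which by \eqref{eq_decay_bound_coupled} is bounded by $c_0$ rather than producing a free factor $L^d$; the extra $h^3$ from the prefactor $h^2$ and the weight $|h\,1_{s=t}-\beta^{-1}|\le h$ is then exactly cancelled by the $(1/h)^3$ normalisation in the definition of $[\cdot,\cdot]_{1,\infty}$.

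I then invoke the three estimation lemmas. Lemma \ref{lem_tree_bound} bounds $V^{0-1-1,0,(n)}$; the indicator $1_{\exists j(b_j=1)}$ forces at least one input to be $V^{0-1,1}$, which supplies the $L^{-d}$ factor required by $\cQ$ for kernels of degree $\ge 2$. Lemma \ref{lem_tree_double_bound} bounds $V^{0-1-2,0,(n)}$; its estimate factors through $[V^{0-2,1}_{2,2},\tilde{\cC}_1]_{1,\infty}$ (the $L^{-d}$-small quantity obtained above) times $\|V^{0-2,1}_4\|_{1,\infty}^{n-1}$, which yields $L^{-d}$-smallness for the kernels of degree $\ge 2$, while the scalar requirement $(h/N)\alpha^2\|f_0\|_{1,\infty,r}\le L^{-d}$ is automatic since $h/N=(4L^d\beta)^{-1}$ already contains $L^{-d}$. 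Lemma \ref{lem_tree_divided_bound} is tailor-made for $V^{0-2,0,(n)}$: it directly transports both the bi-anti-symmetric structure and the vanishing property of the root $V^{0-2,1}_{2,2}$ to the output, so $V^{0-2,0}$ lands in $\cR$ rather than $\cQ$ and no $L^{-d}$ is needed.

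Finally I sum in $n$. The combinatorial factors $(n-2)!$ or $(n-1)!$ produced by the tree formulas are absorbed by the Taylor denominator $n!$, leaving a geometric series in a quantity of the form $C\,c_0^2\,\alpha^4\,r$. With the choice $r=2^{-9}c_0^{-2}\alpha^{-4}$ and $\alpha\ge 2^3$ this series converges and the aggregate norms are controlled by $L^{-d}$ (respectively $1$), which matches exactly the definitions of $\cQ(r)$ (respectively $\cR(r)$). Analyticity in $D(r)$ and continuity on $\overline{D(r)}$ follow because each $V^{0-j,0,(n)}(u)$ is a polynomial of degree $n$ in $u$ and the tail of the series is uniformly small. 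The main obstacle is the bookkeeping that propagates the $L^{-d}$ smallness through the tree resummation, in particular the coupled-norm estimate described above; the vanishing property itself is used most directly through Lemma \ref{lem_tree_divided_bound} to verify the $\cR$ membership of $V^{0-2,0}$, although it is also what makes the subsequent second-scale integration feasible.
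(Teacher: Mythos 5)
Your proposal is correct and follows essentially the same route as the paper: the decomposition \eqref{eq_L_M_transformation}, the initial bounds \eqref{eq_0_1_1_initial}--\eqref{eq_initial_1_infinity_multiplied} (in particular the coupled-norm estimate $[V_{2,2}^{0-2,1}(u),\tilde{\cC}_1]_{1,\infty}\le c_0 rL^{-d}$, which is exactly the mechanism the paper uses to preserve the factor $L^{-d}$), the application of Lemmas \ref{lem_tree_bound}, \ref{lem_tree_double_bound}, \ref{lem_tree_divided_bound}, and geometric summation in $n$ under $2^9c_0^2\alpha^4 r\le 1$ with $\alpha\ge 2^3$. One small imprecision: the condition $(h/N)\alpha^2\|f_0\|_{1,\infty,r}\le L^{-d}$ is not ``automatic'' from $h/N=(4\beta L^d)^{-1}$ alone, since the bounds on $\|V_0^{0-1-j,0,(n)}\|_{1,\infty,r}$ carry a compensating factor $N/h$; the needed $L^{-d}$ for the constant term again comes from the coupled norm (respectively from the constraint $1_{\exists j(b_j=1)}$), just as for the kernels of degree at least $2$.
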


\begin{remark} 
The reason why we introduce the norm $\|\cdot\|_{1,\infty,r}$ is that we
 want to make
 use of the following fact. If $f^n\in
 \Map(\overline{D(r)},\bigwedge_{even}\cV)$ $(n\in \N)$ satisfy that
 $u\mapsto f^n(u)(\psi)$ is continuous in $\overline{D(r)}$, analytic in
 $D(r)$ $(\forall n\in \N)$ and $\sum_{n=1}^{\infty}\|f_m^n\|_{1,\infty,r}<\infty$ $(\forall
 m\in \{0,2,\cdots,N\})$, then $\sum_{n=1}^{\infty}f^n(u)(\psi)$
 converges for any $u\in \overline{D(r)}$. Moreover, $u\mapsto
 \sum_{n=1}^{\infty}f^n(u)(\psi)$ is continuous in $\overline{D(r)}$ and
 analytic in $D(r)$.
\end{remark}

\begin{proof}[Proof of Lemma \ref{lem_UV_without_artificial}]
We can derive from 
 \eqref{eq_decay_bound_coupled},
\eqref{eq_0_1_1_initial_kernel},
 \eqref{eq_definition_initial_data_kernel} and the uniqueness of
 anti-symmetric kernel that
\begin{align}
&\|V_2^{0-1,1}\|_{1,\infty,r}\le r L^{-d},\label{eq_0_1_1_initial}\\
&\|V_4^{0-2,1}\|_{1,\infty,r}\le \|V_{2,2}^{0-2,1}\|_{1,\infty,r}\le r,\label{eq_initial_1_infinity}\\
&\sup_{u\in \overline{D(r)}}[V_{2,2}^{0-2,1}(u),\tilde{\cC}_1]_{1,\infty}\le
\sup_{s\in [0,\beta)_h}\sup_{Y_0\in
 I}\frac{1}{h}\sum_{(\eta,\by,t,\zeta)\in I}r
 L^{-d}(h1_{s=t}+\beta^{-1})|\tilde{\cC}_1(Y_0,\eta\by t\zeta)|\label{eq_initial_1_infinity_multiplied}\\
&\qquad\qquad\qquad\qquad\qquad\le
 rL^{-d}\|\tilde{\cC}_1\|\le c_0r
 L^{-d}.\notag
\end{align}
In the following we assume that $\alpha\ge 2^3$ and 
\begin{align}
2^9c_0^2\alpha^4r\le 1.\label{eq_assumption_size_radius}
\end{align}

We can use Lemma \ref{lem_tree_bound} to estimate $V^{0-1-1,0,(n)}$. The
 lemma ensures that the anti-symmetric kernel
 $V(u)_m^{0-1-1,0,(n)}(\cdot)$ satisfies
 \eqref{eq_time_translation_generic}.
Moreover, by using \eqref{eq_tree_1_1_infinity},
 \eqref{eq_determinant_bound}, \eqref{eq_0_1_1_initial} we have that
$$
\|V_m^{0-1-1,0,(1)}\|_{1,\infty,r}\le
 \left(\frac{N}{h}\right)^{1_{m=0}}c_0^{1-\frac{m}{2}}rL^{-d}1_{2\ge
 m}.
$$
Thus,
\begin{align}
&\|V_0^{0-1-1,0,(1)}\|_{1,\infty,r}\le
 \frac{N}{h}c_0rL^{-d},\label{eq_0_1_1_0_1}\\
&\sum_{m=2}^Nc_0^{\frac{m}{2}}\alpha^m\|V_m^{0-1-1,0,(1)}\|_{1,\infty,r}\le
 c_0\alpha^2rL^{-d}.\label{eq_0_1_1_2_1}
\end{align}
Also, by \eqref{eq_tree_1_infinity}, \eqref{eq_determinant_bound},
 \eqref{eq_decay_bound}, \eqref{eq_0_1_1_initial},
 \eqref{eq_initial_1_infinity}, for any $n\in \N_{\ge 2}$, $m\in
 \{0,2,\cdots,N\}$,
\begin{align*}
&\|V_m^{0-1-1,0,(n)}\|_{1,\infty,r}\\
&\le 
\left(\frac{N}{h}\right)^{1_{m=0}}2^{-2m}c_0^{-\frac{m}{2}}
\prod_{j=1}^n\left(
\sum_{b_j\in \{1,2\}}\sum_{p_j=2}^42^{3p_j}c_0^{\frac{p_j}{2}}\|V_{p_j}^{0-b_j,1}\|_{1,\infty,r}\right)\\
&\quad\cdot 1_{\sum_{j=1}^np_j-2(n-1)\ge m}1_{\exists j(b_j=1)}\\
&\le \left(\frac{N}{h}\right)^{1_{m=0}}2^{-2m}c_0^{-\frac{m}{2}}
\sum_{l=1}^n\left(\begin{array}{c}n \\ l\end{array}\right)
(2^6c_0\|V_2^{0-1,1}\|_{1,\infty,r})^l
(2^{12}c_0^2\|V_4^{0-2,1}\|_{1,\infty,r})^{n-l}\\
&\quad\cdot 1_{2l+4(n-l)-2(n-1)\ge m}\\
&\le \left(\frac{N}{h}\right)^{1_{m=0}}2^{-2m}c_0^{-\frac{m}{2}}
\sum_{l=1}^n\left(\begin{array}{c}n \\ l\end{array}\right)
(2^6c_0rL^{-d})^l
(2^{12}c_0^2r)^{n-l}
1_{2n-2l+2\ge m}.
\end{align*}
Therefore, by $c_0\ge 1$,
\begin{align}
&\|V_0^{0-1-1,0,(n)}\|_{1,\infty,r}\le\frac{N}{h}(2^{13}c_0^2r)^nL^{-d},\label{eq_0_1_1_0_n}\\
&\sum_{m=2}^Nc_0^{\frac{m}{2}}\alpha^m
\|V_m^{0-1-1,0,(n)}\|_{1,\infty,r}\label{eq_0_1_1_sum_n}\\
&\le \sum_{l=1}^n\left(\begin{array}{c}n \\ l\end{array}\right)
(2^6c_0rL^{-d})^l
(2^{12}c_0^2r)^{n-l}2(2^{-2}\alpha)^{2n-2l+2}\notag\\
&\le 2(2^{-2}\alpha)^2\sum_{l=1}^n\left(\begin{array}{c}n \\ l\end{array}\right)
(2^6c_0rL^{-d})^l
(2^{8}c_0^2\alpha^2r)^{n-l}\notag\\
&\le \alpha^2(2^9c_0^2\alpha^2r)^nL^{-d},\notag
\end{align}
where we used $\alpha \ge 2^3$ so that $2^{-2}\alpha/(2^{-2}\alpha-1)\le
 2$.

Lemma \ref{lem_tree_double_bound} is the tool to estimate
 $V^{0-1-2,0,(n)}$. According to the lemma, the anti-symmetric kernel
 $V(u)_m^{0-1-2,0,(n)}(\cdot)$ satisfies
 \eqref{eq_time_translation_generic}. By substituting
 \eqref{eq_determinant_bound}, \eqref{eq_initial_1_infinity_multiplied} into
 \eqref{eq_double_1_1_infinity} we obtain that
$$
\|V_m^{0-1-2,0,(1)}\|_{1,\infty,r}\le
 2^{8}\left(\frac{N}{h}\right)^{1_{m=0}}c_0^{2-\frac{m}{2}}rL^{-d}1_{2\ge
 m},
$$
or
\begin{align}
&\|V_0^{0-1-2,0,(1)}\|_{1,\infty,r}\le
 2^8\frac{N}{h}c_0^2rL^{-d},\label{eq_0_1_0_1}\\
&\sum_{m=2}^{N}c_0^{\frac{m}{2}}\alpha^m\|V_m^{0-1-2,0,(1)}\|_{1,\infty,r}\le
 2^8c_0^2\alpha^2r L^{-d}.\label{eq_0_1_sum_1}
\end{align}
Also, by \eqref{eq_double_1_infinity}, \eqref{eq_determinant_bound}, 
\eqref{eq_decay_bound},
\eqref{eq_initial_1_infinity} and
 \eqref{eq_initial_1_infinity_multiplied}, for $n\in \N_{\ge 2}$, $m\in
 \{0,2,\cdots,N\}$,
\begin{align*}
\|V_m^{0-1-2,0,(n)}\|_{1,\infty,r}\le 
\left(\frac{N}{h}\right)^{1_{m=0}}2^{-2m}c_0^{-\frac{m}{2}}(2^{12}c_0^2r)^nL^{-d}1_{2n\ge
 m}.
\end{align*}
Thus, 
\begin{align}
&\|V_0^{0-1-2,0,(n)}\|_{1,\infty,r}\le
 \frac{N}{h}(2^{12}c_0^2r)^nL^{-d},\label{eq_0_1_0_n}\\
&\sum_{m=2}^{N}c_0^{\frac{m}{2}}\alpha^m\|V_m^{0-1-2,0,(n)}\|_{1,\infty,r}\le 2
 (2^8c_0^2\alpha^2r)^n L^{-d},\label{eq_0_1_sum_n}
\end{align}
where we used $\alpha \ge 2^3$ so that $2^{-2}\alpha/(2^{-2}\alpha-1)\le
 2$. Then, we see from \eqref{eq_assumption_size_radius}, 
\eqref{eq_0_1_1_0_1},
\eqref{eq_0_1_1_2_1},
\eqref{eq_0_1_1_0_n},
\eqref{eq_0_1_1_sum_n},
\eqref{eq_0_1_0_1}, \eqref{eq_0_1_sum_1}, \eqref{eq_0_1_0_n},
 \eqref{eq_0_1_sum_n} and $\alpha\ge 2^3$ that
\begin{align*}
&\frac{h}{N}\alpha^2\sum_{n=1}^{\infty}\sum_{j=1}^2\|V_0^{0-1-j,0,(n)}\|_{1,\infty,r}\\
&\le \left(2^{-9}\alpha^{-2}+\alpha^2\sum_{n=2}^{\infty}(2^4\alpha^{-4})^n+2^{-1}\alpha^{-2}+\alpha^2\sum_{n=2}^{\infty}(2^3\alpha^{-4})^n
\right)L^{-d}\le L^{-d},\\
&\sum_{m=2}^Nc_0^{\frac{m}{2}}\alpha^m\sum_{n=1}^{\infty}\sum_{j=1}^2\|V_m^{0-1-j,0,(n)}\|_{1,\infty,r}\\
&\le 
\left(2^{-9}\alpha^{-2}+\alpha^2\sum_{n=2}^{\infty}\alpha^{-2n}+2^{-1}\alpha^{-2}+2\sum_{n=2}^{\infty}(2^{-1}\alpha^{-2})^n
\right)L^{-d}\le L^{-d}.
\end{align*}
This implies that 
 $V^{0-1,0}\in \cQ(2^{-9}c_0^{-2}\alpha^{-4})$.

Let us consider $V^{0-2,0,(n)}$. By Lemma \ref{lem_tree_divided_bound},
 for $n\in\N$,
$m\in \{0,1,\cdots,n-1\}$,
 $(S,T)\in S(n,m)$, $a,b\in
 \{2,4,\cdots,N\}$, $u\in\overline{D(r)}$ there exists a function
 $E_{a,b}^{(n,m,S,T)}(u):I^a\times I^b\to \C$ such that
 $E_{a,b}^{(n,m,S,T)}(u)$ is bi-anti-symmetric, satisfies
 \eqref{eq_time_translation_generic}, \eqref{eq_vanishing_property} and 
\begin{align*}
&V^{0-2,0,(n)}(u)(\psi)\\
&=\frac{1}{n!}\sum_{m=0}^{n-1}\sum_{(S,T)\in
 S(n,m)}\sum_{a,b=2}^{N}1_{a,b\in 2\N}\frah^{a+b}\sum_{\bX\in I^a\atop
 \bY\in I^b}E_{a,b}^{(n,m,S,T)}(u)(\bX,\bY)\psi_{\bX}\psi_{\bY}.
\end{align*}
Define the function $V^{0-2,0,(n)}_{a,b}(u):I^a\times I^b\to\C$ by
\begin{align*}
V^{0-2,0,(n)}_{a,b}(u)(\cdot,\cdot):=\frac{1}{n!}\sum_{m=0}^{n-1}\sum_{(S,T)\in
 S(n,m)}E_{a,b}^{(n,m,S,T)}(u)(\cdot,\cdot).
\end{align*}
Then, $V^{0-2,0,(n)}_{a,b}(u)(\cdot,\cdot)$ is bi-anti-symmetric, satisfies
 \eqref{eq_time_translation_generic}, \eqref{eq_vanishing_property} and 
$$
V^{0-2,0,(n)}(u)(\psi)=\sum_{a,b=2}^{N}1_{a,b\in2\N}\frah^{a+b}
\sum_{\bX\in I^a\atop
 \bY\in I^b}V^{0-2,0,(n)}_{a,b}(u)(\bX,\bY)\psi_{\bX}\psi_{\bY}.
$$
It is also clear from the construction that $u\mapsto
 V^{0-2,0,(n)}_{a,b}(u)(\bX,\bY)$ is continuous in $\overline{D(r)}$ and
 analytic in $D(r)$, $(\forall \bX\in I^a,\bY\in I^b)$.
 Let us prove bound properties of $V^{0-2,0,(n)}_{a,b}(u)$. The
 inequalities proved in Lemma \ref{lem_tree_divided_bound} support our
 analysis. Note that
\begin{align}
\sharp S(n,m)=\left(\begin{array}{c} n-1 \\
		    m\end{array}\right).\label{eq_cardinality_index}
\end{align}
By \eqref{eq_divided_1_1_infinity}, \eqref{eq_initial_1_infinity}, 
\begin{align}
\|V_{2,2}^{0-2,0,(1)}\|_{1,\infty,r}\le r.\label{eq_0_2_1}
\end{align}
Combination of \eqref{eq_divided_1_infinity},
 \eqref{eq_determinant_bound}, \eqref{eq_decay_bound},
 \eqref{eq_initial_1_infinity}, \eqref{eq_cardinality_index} yields that
 for $n\in \N_{\ge 2}$, $a,b\in \{2,4,\cdots,N\}$,
\begin{align*}
&\|V_{a,b}^{0-2,0,(n)}\|_{1,\infty,r}\\
&\le \frac{1}{n!}\sum_{m=0}^{n-1}\left(\begin{array}{c} n-1 \\ m
				       \end{array}\right)
(1_{m\neq 0}(m-1)!+1_{m=0})
(1_{m\neq n-1}(n-m-2)!+1_{m=n-1})\\
&\quad\cdot 2^{-2a-2b}c_0^{-\frac{1}{2}(a+b)}(2^{12}c_0^2r)^n1_{2+2m\ge a}1_{2n-2m\ge
 b}.
\end{align*}
Note that
\begin{align}
\sum_{m=0}^{n-1}\left(\begin{array}{c} n-1 \\ m
				       \end{array}\right)
(1_{m\neq 0}(m-1)!+1_{m=0})
(1_{m\neq n-1}(n-m-2)!+1_{m=n-1})\le n!.\label{eq_combinatorial_simplification}
\end{align}
By using \eqref{eq_combinatorial_simplification} and
 $2^{-2}\alpha/(2^{-2}\alpha-1)\le 2$ we can derive that
\begin{align}
\sum_{a,b=2}^{N}1_{a,b\in 2\N}c_0^{\frac{1}{2}(a+b)}\alpha^{a+b}\|V^{0-2,0,(n)}_{a,b}\|_{1,\infty,r}
\le \alpha^2(2^8c_0^2\alpha^2r)^n.
\label{eq_0_2_sum_n}
\end{align}
It follows from \eqref{eq_assumption_size_radius}, \eqref{eq_0_2_1},
 \eqref{eq_0_2_sum_n} and $\alpha\ge 2^3$ that
\begin{align*}
\sum_{a,b=2}^{N}1_{a,b\in 2\N}c_0^{\frac{1}{2}(a+b)}\alpha^{a+b}\sum_{n=1}^{\infty}\|V^{0-2,0,(n)}_{a,b}\|_{1,\infty,r}
\le 2^{-9}+2^{-1}\alpha^{-2}\le 1.
\end{align*}
Thus we conclude that $V^{0-2,0}\in
 \cR(2^{-9}c_0^{-2}\alpha^{-4})$.
\end{proof}

\subsection{The first integration with the artificial
  term}\label{subsec_UV_with_artificial}

In this subsection we perform a single-scale integration where Grassmann
polynomials are dependent on the artificial parameter
$\bla=(\la_1,\la_2)$. To be specific, we are going to analyze an analytic
continuation of the Grassmann polynomial
\begin{align*}
\log\left(\int e^{-V(u)(\psi+\psi^1)+W(u)(\psi+\psi^1)-A(\psi+\psi^1)}d\mu_{\cC_1}(\psi^1)\right).
\end{align*}
For this purpose we need to introduce sets of Grassmann polynomials
parameterized by $(u,\bla)$. Bound properties of these Grassmann
polynomials are measured in a variant of the $L^1$-norm $\|\cdot\|_{1}$, 
while polynomials belonging to $\cQ(r)$, $\cR(r)$ were measured in the norm $\|\cdot\|_{1,\infty,r}$. To prove uniform
bounds with $(u,\bla)$, we modify the norm $\|\cdot\|_1$ defined in
Subsection \ref{subsec_preliminaries} as follows. For $f\in
\Map(\overline{D(r)}\times\overline{D(r')}^2,\Map(I^m,\C))$ let 
$$
\|f\|_{1,r,r'}:=\sup_{u\in\overline{D(r)}\atop \bla\in \overline{D(r')}^2}\|f(u,\bla)\|_1.
$$
Also for $f\in \Map(\overline{D(r)}\times\overline{D(r')}^2,\C)$ we set
$$
\|f\|_{1,r,r'}:=\sup_{u\in\overline{D(r)}\atop \bla\in \overline{D(r')}^2}|f(u,\bla)|
$$ 
for notational consistency. 

For $r,r'\in \R_{>0}$ we define the subset
$\cQ'(r,r')$ of $\Map(\overline{D(r)}\times\C^2,\bigwedge_{even}\cV)$
as follows. $f$ belongs to $\cQ'(r,r')$ if and only if 
\begin{itemize}
\item 
$$
f\in \Map\left(\overline{D(r)}\times\C^2,\bigwedge_{even}\cV\right).
$$
\item
For any $u\in \overline{D(r)}$, $\bla\mapsto f(u,\bla)(\psi):\C^2\to \bigwedge \cV$ is linear.
\item 
For any $\bla\in\C^2$, $u\mapsto f(u,\bla)(\psi):\overline{D(r)}\to
      \bigwedge \cV$ is continuous in $\overline{D(r)}$ and analytic in
      $D(r)$.
\item
For any $(u,\bla)\in \overline{D(r)}\times \C^2$ the anti-symmetric
      kernels $f(u,\bla)_m:I^m\to \C$ $(m=2,4,\cdots,N)$ satisfy
      \eqref{eq_time_translation_generic} and 
\begin{align}
&\alpha^2\|f_0\|_{1,r,r'}\le L^{-d},\notag\\
&\sum_{m=2}^Nc_0^{\frac{m}{2}}\alpha^m\|f_m\|_{1,r,r'}\le L^{-d}.\label{eq_scale_volume_norm_linear}
\end{align}
\end{itemize}
In other words the set $\cQ'(r,r')$ contains Grassmann polynomials
which are linearly dependent on $\bla$ and become negligibly small as
$L\to \infty$.

We also need a set containing Grassmann polynomials with
bi-anti-symmetric kernels linearly depending on $\bla$. 
For $r,r'\in\R_{>0}$ the subset 
$\cR'(r,r')$ of $\Map(\overline{D(r)}\times\C^2,\bigwedge_{even}\cV)$
is defined as follows. 
 $f$ belongs to $\cR'(r,r')$ if and only if 
\begin{itemize}
\item 
$$
f\in \Map\left(\overline{D(r)}\times\C^2,\bigwedge_{even}\cV\right).
$$
\item
For any $u\in \overline{D(r)}$, $\bla\mapsto f(u,\bla)(\psi):\C^2\to \bigwedge \cV$ is linear.
\item 
For any $\bla\in\C^2$, $u\mapsto f(u,\bla)(\psi):\overline{D(r)}\to
      \bigwedge \cV$ is continuous in $\overline{D(r)}$ and analytic in
      $D(r)$.
\item There exist $f_{p,q}\in
      \Map(\overline{D(r)}\times\C^2,\Map(I^p\times I^q,\C))$
      $(p,q=2,4,\cdots,N)$ such that for any $(u,\bla)\in
      \overline{D(r)}\times\C^2$, $p,q\in \{2,4,\cdots,N\}$,
      $f_{p,q}(u,\bla):I^p\times I^q\to\C$ is bi-anti-symmetric,
      satisfies \eqref{eq_time_translation_generic},
      \eqref{eq_vanishing_property} and
\begin{align}
&f(u,\bla)(\psi)=\sum_{p,q=2}^{N}1_{p,q\in 2\N}\frah^{p+q}\sum_{\bX\in
 I^p\atop \bY\in
 I^q}f_{p,q}(u,\bla)(\bX,\bY)\psi_{\bX}\psi_{\bY},\notag\\
&\sum_{p,q=2}^{N}1_{p,q\in 2\N}
c_0^{\frac{1}{2}(p+q)}\alpha^{p+q}\|f_{p,q}\|_{1,r,r'}\le 1.\label{eq_scale_norm_linear}
\end{align}
\end{itemize}

We introduce another set of Grassmann polynomials with linear dependence
      on $\bla$, which is used to contain the offspring of the
      artificial term $A(\psi)$. 
For $r,r'\in\R_{>0}$, 
 $f$ belongs to $\cS(r,r')$ if and only if 
\begin{itemize}
\item 
$$
f\in \Map\left(\overline{D(r)}\times\C^2,\bigwedge_{even}\cV\right).
$$
\item
For any $u\in \overline{D(r)}$, $\bla\mapsto f(u,\bla)(\psi):\C^2\to \bigwedge \cV$ is linear.
\item 
For any $\bla\in\C^2$, $u\mapsto f(u,\bla)(\psi):\overline{D(r)}\to
      \bigwedge \cV$ is continuous in $\overline{D(r)}$ and analytic in
      $D(r)$.
\item For any $(u,\bla)\in \overline{D(r)}\times\C^2$ the anti-symmetric
      kernels $f(u,\bla)_m:I^m\to\C$ $(m=2,4,\cdots,N)$ satisfy
      \eqref{eq_time_translation_generic} and 
\begin{align}
&\alpha^2\|f_0\|_{1,r,r'}\le 1,\notag\\
&\sum_{m=2}^N
c_0^{\frac{m}{2}}\alpha^{m}\|f_{m}\|_{1,r,r'}\le 1.\label{eq_offspring_norm_bound}
\end{align}
\end{itemize}

Finally we introduce a set of Grassmann polynomials whose degree with
      $\bla$ is more than 1.
For $r,r'\in\R_{>0}$, 
 $f$ belongs to $\cW(r,r')$ if and only if 
\begin{itemize}
\item 
$$
f\in \Map\left(\overline{D(r)}\times\overline{D(r')}^2,\bigwedge_{even}\cV\right).
$$
\item $(u,\bla)\mapsto f(u,\bla)(\psi)$ is continuous in
      $\overline{D(r)}\times\overline{D(r')}^2$ and analytic in
      $D(r)\times D(r')^2$.
\item For any $u\in D(r)$, $j\in \{1,2\}$,
$$
f(u,\b0)(\psi)=\frac{\partial}{\partial\la_j}f(u,\b0)(\psi)=0.
$$
\item For any $(u,\bla)\in \overline{D(r)}\times\overline{D(r')}^2$ the anti-symmetric
      kernels $f(u,\bla)_m:I^m\to\C$ $(m=2,4,\cdots,N)$ satisfy
      \eqref{eq_time_translation_generic} and 
\begin{align}
&\alpha^2\|f_0\|_{1,r,r'}\le 1,\notag\\
&\sum_{m=2}^N
c_0^{\frac{m}{2}}\alpha^{m}\|f_{m}\|_{1,r,r'}\le
 1.\label{eq_scale_norm_quadratic}
\end{align}
\end{itemize}

Here let us systematically define the input and the output of the
single-scale integration. We admit the results of
Lemma \ref{lem_UV_without_artificial} claiming that
$$
V^{0-1,0}\in\cQ(2^{-9}c_0^{-2}\alpha^{-4}),\quad V^{0-2,0}\in\cR(2^{-9}c_0^{-2}\alpha^{-4})
$$
and define $V^{0,0}\in \Map\left(\overline{D(2^{-9}c_0^{-2}\alpha^{-4})},\bigwedge_{even}\cV\right)$
by 
$$V^{0,0}:=V^{0-1,0}+V^{0-2,0}.$$
We define $V^{1,1}\in
\Map(\C^2,\bigwedge_{even}\cV)$ by
$$
V^{1,1}(\bla)(\psi):=-A(\psi),
$$
where $A(\psi)$ is the Grassmann polynomial defined in
\eqref{eq_artificial_Grassmann_term}. 
Then, by recalling the formula \eqref{eq_tree_formula_simple} let us
observe the following expansion.
\begin{align}
&\frac{1}{n!}\left(\frac{d}{dz}\right)^n\log\left(\int
 e^{z(V^{0,1}(u)(\psi^1+\psi)+V^{1,1}(\bla)(\psi^1+\psi))}
d\mu_{\cC_{1}}(\psi^1)
\right)\Bigg|_{z=0}\label{eq_decomposition_start_artificial}\\
&=\frac{1}{n!}Tree(\{1,2,\cdots,n\},\cC_{1})\prod_{j=1}^n
\left(\sum_{b=0}^1V^{b,1}(\psi^j+\psi)\right)\Bigg|_{\psi^{j}=0\atop(\forall
 j\in\{1,2,\cdots,n\})}\notag\\
&=\frac{1}{n!}Tree(\{1,2,\cdots,n\},\cC_{1})\prod_{j=1}^nV^{0,1}(\psi^j+\psi)\Bigg|_{\psi^{j}=0\atop(\forall
 j\in\{1,2,\cdots,n\})}\notag\\
&\quad+1_{n=1}Tree(\{1\},\cC_{1})V^{1,1}(\psi^1+\psi)\Big|_{\psi^{1}=0}\notag\\
&\quad +1_{n\ge 2}
 \frac{1}{(n-1)!}Tree(\{1,2,\cdots,n\},\cC_{1})\notag\\
&\qquad\qquad\qquad\cdot V^{1,1}(\psi^1+\psi)
\prod_{j=2}^nV^{0,1}(\psi^j+\psi)\Bigg|_{\psi^{j}=0\atop(\forall
 j\in\{1,2,\cdots,n\})}\notag\\
&\quad +
 \frac{1}{n!}Tree(\{1,2,\cdots,n\},\cC_{1})
\prod_{j=1}^n\left(\sum_{b_j=0}^1V^{b_j,1}(\psi^j+\psi)
\right)\Bigg|_{\psi^{j}=0\atop(\forall
 j\in\{1,2,\cdots,n\})}1_{\sum_{j=1}^nb_j\ge 2}.\notag
\end{align}
We further decompose or rename each term of this expansion from top to
bottom. It follows from \eqref{eq_L_M_transformation}
that if we set for $n\in
\N$ 
\begin{align*}
V^{0,0,(n)}(\psi):=\frac{1}{n!}Tree(\{1,2,\cdots,n\},\cC_{1})\prod_{j=1}^nV^{0,1}(\psi^j+\psi)\Bigg|_{\psi^{j}=0\atop(\forall
 j\in\{1,2,\cdots,n\})},
\end{align*}
then $V^{0,0}(\psi)=\sum_{n=1}^{\infty}V^{0,0,(n)}(\psi)$. Let us set
\begin{align*}
V^{1-3,0}(\psi):=Tree(\{1\},\cC_{1})V^{1,1}(\psi^1+\psi)\Big|_{\psi^{1}=0}.
\end{align*}
For $n\in \N_{\ge 2}$ we set 
\begin{align*}
&V^{1-1-1,0,(n)}(\psi)\\
&:=\frac{1}{(n-1)!}Tree(\{1,2,\cdots,n\},\cC_1)\\
&\quad\cdot \prod_{j=1}^{n-1}\Bigg(\sum_{b_j=1}^2V^{0-b_j,1}(\psi^j+\psi)\Bigg)V^{1,1}(\psi^n+\psi)
\Bigg|_{\psi^{j}=0\atop(\forall
 j\in\{1,2,\cdots,n\})}1_{\exists j(b_j=1)},\\
&V^{1-1-2,0,(n)}(\psi)\\
&:=
 \frac{1}{(n-1)!}Tree(\{1,2,\cdots,n+1\},\cC_{1})\frah^{4}\sum_{\bX,\bY\in I^2} V_{2,2}^{0-2,1}(\bX,\bY)\\
&\qquad\cdot (\psi^1+\psi)_{\bX}(\psi^2+\psi)_{\bY}
\prod_{j=3}^nV^{0-2,1}(\psi^j+\psi)\cdot V^{1,1}(\psi^{n+1}+\psi)\Bigg|_{\psi^{j}=0\atop(\forall
 j\in\{1,2,\cdots,n+1\})},\\
&V^{1-2,0,(n)}(\psi)\\
&:=\frac{1}{(n-1)!}\sum_{m=0}^{n-1}\sum_{(\{s_j\}_{j=1}^{m+1},\{t_k\}_{k=1}^{n-m})\in
 S(n,m)}\frah^{4}\sum_{\bX,\bY\in I^2} V_{2,2}^{0-2,1}(\bX,\bY)\\
&\qquad\cdot Tree(\{s_j\}_{j=1}^{m+1},\cC_{1})(\psi^{s_1}+\psi)_{\bX}\\
&\qquad\cdot\prod_{j=2}^{m+1}(1_{s_j\neq n}V^{0-2,1}(\psi^{s_j}+\psi)+
1_{s_j= n}V^{1,1}(\psi^{s_j}+\psi))\Bigg|_{\psi^{s_j}=0\atop(\forall
 j\in\{1,2,\cdots,m+1\})}\\
&\qquad\cdot Tree(\{t_k\}_{k=1}^{n-m},\cC_{1})(\psi^{t_1}+\psi)_{\bY}\\
&\qquad\cdot\prod_{k=2}^{n-m}(1_{t_k\neq n}V^{0-2,1}(\psi^{t_k}+\psi)+
1_{t_k= n}V^{1,1}(\psi^{t_k}+\psi))\Bigg|_{\psi^{t_k}=0\atop(\forall
 k\in\{1,2,\cdots,n-m\})}.
\end{align*}
By the same argument as in \eqref{eq_L_M_transformation} we can derive
that 
\begin{align*}
&\frac{1}{(n-1)!}Tree(\{1,2,\cdots,n\},\cC_{1}) V^{1,1}(\psi^1+\psi)
\prod_{j=2}^n V^{0,1}(\psi^j+\psi)\Bigg|_{\psi^{j}=0\atop(\forall
 j\in\{1,2,\cdots,n\})}\\
&=\frac{1}{(n-1)!}Tree(\{1,2,\cdots,n\},\cC_{1})\\
&\quad\cdot \prod_{j=1}^{n-1}
 V^{0,1}(\psi^j+\psi)\cdot 
 V^{1,1}(\psi^n+\psi)\Bigg|_{\psi^{j}=0\atop(\forall
 j\in\{1,2,\cdots,n\})}\\
&=V^{1-1-1,0,(n)}(\psi)+V^{1-1-2,0,(n)}(\psi)+V^{1-2,0,(n)}(\psi).
\end{align*}
Finally we set for $n\in \N_{\ge 2}$,
\begin{align*}
&V^{2,0,(n)}(\psi)\\
&:=\frac{1}{n!}Tree(\{1,2,\cdots,n\},\cC_{1})
\prod_{j=1}^n\left(\sum_{b_j=0}^1V^{b_j,1}(\psi^j+\psi)
\right)\Bigg|_{\psi^{j}=0\atop(\forall
 j\in\{1,2,\cdots,n\})}1_{\sum_{j=1}^nb_j\ge 2}.
\end{align*}
Then, the expansion
\eqref{eq_decomposition_start_artificial} can be equivalently written as
follows.
\begin{align*}
&\frac{1}{n!}\left(\frac{d}{dz}\right)^n\log\left(\int
 e^{z(V^{0,1}(u)(\psi^1+\psi)+V^{1,1}(\bla)(\psi^1+\psi))} d\mu_{\cC_{1}}(\psi^1)
\right)\Bigg|_{z=0}\\
&=V^{0,0,(n)}(\psi)+1_{n=1}V^{1-3,0}(\psi)\\
&\quad + 1_{n\ge
 2}(V^{1-1-1,0,(n)}(\psi)+V^{1-1-2,0,(n)}(\psi)+V^{1-2,0,(n)}(\psi)+V^{2,0,(n)}(\psi)).\end{align*}
By assuming their convergence let us set
\begin{align*}
&V^{1-1-j,0}(\psi):=\sum_{n=2}^{\infty}V^{1-1-j,0,(n)}(\psi),\
 (j=1,2),\quad V^{1-1,0}(\psi):=\sum_{j=1}^2V^{1-1-j,0}(\psi),\\
& V^{1-2,0}(\psi):=\sum_{n=2}^{\infty}V^{1-2,0,(n)}(\psi),\quad V^{2,0}(\psi):=\sum_{n=2}^{\infty}V^{2,0,(n)}(\psi).
\end{align*}
Then, it follows that 
\begin{align*}
&\sum_{n=1}^{\infty}\frac{1}{n!}\left(\frac{d}{dz}\right)^n\log\left(\int
 e^{z(V^{0,1}(u)(\psi^1+\psi)+V^{1,1}(\bla)(\psi^1+\psi))} d\mu_{\cC_{1}}(\psi^1)
\right)\Bigg|_{z=0}\\
&=V^{0,0}(\psi)+\sum_{j=1}^3V^{1-j,0}(\psi)+V^{2,0}(\psi).
\end{align*}
Our purpose is to prove that these Grassmann polynomials
are indeed convergent and they have desired invariant and bound properties. Not to
confuse, we should keep in mind that the data $V^{0,j}$ $(j\in \{0,1\})$
are independent
of the artificial parameter $\bla$, the data $V^{1,1}$, $V^{1-j,0}$ $(j\in\{1,2,3\})$ are
linearly dependent on $\bla$ and the data $V^{2,0}$ depends on $\bla$
at least quadratically. The input have the 2nd superscript 1 and the
output have the 2nd superscript 0 in this single-scale integration. 
More detailed properties of these Grassmann data
are summarized in the following lemma. 

\begin{lemma}\label{lem_UV_with_artificial}
For any $\alpha \in [2^3,\infty)$,
\begin{align*}
&V^{1-1,0}\in \cQ'(2^{-9}c_0^{-2}\alpha^{-4},2^{-9}\beta^{-1}c_0^{-2}\alpha^{-4}),\\
&V^{1-2,0}\in \cR'(2^{-9}c_0^{-2}\alpha^{-4},2^{-9}\beta^{-1}c_0^{-2}\alpha^{-4}),\\
&V^{1-3,0}\in\cS(2^{-9}c_0^{-2}\alpha^{-4},2^{-9}\beta^{-1}c_0^{-2}\alpha^{-4}),\\
&V^{2,0}\in \cW(2^{-9}c_0^{-2}\alpha^{-4},2^{-9}\min\{1,\beta\}
\beta^{-1}c_0^{-2}\alpha^{-4}).
\end{align*}
\end{lemma}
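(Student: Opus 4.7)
The plan is to extend the tree-expansion argument of Lemma \ref{lem_UV_without_artificial}, now incorporating the additional input $V^{1,1}(\bla)(\psi)=-A(\psi)=-\lambda_1 A^1(\psi)-\lambda_2 A^2(\psi)$. First I compute the kernel norms of $V^{1,1}$ from \eqref{eq_2_band_Grassmann_A_parts}: because $A^1,A^2$ are obtained by summing over a single time index $s\in [0,\beta)_h$ on a single spatial site, a direct count yields absolute constants $C_m$ such that
\begin{align*}
\|V^{1,1}_m\|_{1,\infty,r,r'}\le C_m r',\qquad \|V^{1,1}_m\|_{1,r,r'}\le C_m r'\beta\quad (m\in\{2,4\}),
\end{align*}
and both kernels are invariant under simultaneous time translation, i.e.\ satisfy \eqref{eq_time_translation_generic}. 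The extra $\beta$ in the $\|\cdot\|_1$-estimate is exactly what forces the $\beta^{-1}$ in the radii appearing in the statement: with $r'=2^{-9}\beta^{-1}c_0^{-2}\alpha^{-4}$ the product $r'\beta$ reproduces the small parameter $r$ that played this role in Lemma \ref{lem_UV_without_artificial}.

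With these bounds in hand I estimate the four output polynomials in parallel with the proof of Lemma \ref{lem_UV_without_artificial}, always labelling the unique $V^{1,1}$-vertex as the root of the tree so that its larger $\|\cdot\|_1$-norm is absorbed there while the other vertices are measured in $\|\cdot\|_{1,\infty}$. For $V^{1-3,0}$ only a single-vertex tree is involved: Lemma \ref{lem_tree_bound} with $n=1$ yields a bound of order $r'\beta$, which fits $\cS(r,r')$ (no $L^{-d}$ is expected because $V^{1,1}$ carries none). For $V^{1-1-1,0,(n)}$ and $V^{1-1-2,0,(n)}$ with $n\ge 2$ I invoke Lemmas \ref{lem_tree_bound} and \ref{lem_tree_double_bound} respectively; the $L^{-d}$-factor required for $\cQ'(r,r')$ is produced either by the compulsory $V^{0-1,1}$-branch (bound \eqref{eq_0_1_1_initial}) or by the bi-anti-symmetric input $V^{0-2,1}_{2,2}$ together with the coupling bound $[V^{0-2,1}_{2,2},\tilde{\cC}_1]_{1,\infty}\le c_0 rL^{-d}$ of \eqref{eq_initial_1_infinity_multiplied}. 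The polynomial $V^{1-2,0,(n)}$ is handled analogously via Lemma \ref{lem_tree_divided_bound}, whose statement already distinguishes the root vertex $n$ (carrying $V^{1,1}$, measured in $\|\cdot\|_1$) from the other branches; bi-anti-symmetry of the output and the vanishing property \eqref{eq_vanishing_property} are inherited from $V^{0-2,1}_{2,2}$ exactly as in the unperturbed proof. Summing over $n$ yields geometric series controlled by $2^9c_0^2\alpha^4\max\{r,r'\beta\}\le 1$.

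The $V^{2,0,(n)}$ piece, which must lie in $\cW$, is the main source of subtlety. Every summand contains at least two $V^{1,1}$-vertices, so after one application of Lemma \ref{lem_tree_bound} one picks up a factor of at least $r'\beta\cdot r'$ coming from the root $V^{1,1}$ (bounded by $\|\cdot\|_1$) and a second branch $V^{1,1}$ (bounded by $\|\cdot\|_{1,\infty}$). The extra factor $\min\{1,\beta\}$ in $r'=2^{-9}\min\{1,\beta\}\beta^{-1}c_0^{-2}\alpha^{-4}$ is tuned precisely so that $r'\beta\cdot r'$ remains bounded uniformly in $\beta$ and the $\cW$-estimate $\alpha^2\|f_0\|_{1,r,r'}\le 1$ closes. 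The vanishing conditions $V^{2,0}(u,\b0)=\partial_{\lambda_j}V^{2,0}(u,\b0)=0$ are automatic because $V^{1,1}$ is linear in $\bla$ and at least two copies appear in every term. Time-translation invariance of all output kernels is inherited via \eqref{eq_time_translation_tree_formula}, while continuity and analyticity in $(u,\bla)$ are immediate because each Grassmann polynomial in the construction is polynomial in these parameters and the series converge in $\bigwedge\cV$ by the norm bounds above. The hard part will be the careful bookkeeping of the $\beta$-factor hiding in $\|V^{1,1}\|_1$, which dictates both the $\beta^{-1}$ in the $\bla$-radii of the linear pieces and the additional $\min\{1,\beta\}$ in the $\cW$-radius.
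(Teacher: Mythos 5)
Your proposal is correct and follows essentially the same route as the paper: the same norm bounds $\|V^{1,1}_m\|_{1,\infty}\lesssim|\bla|$, $\|V^{1,1}_m\|_1\lesssim\beta|\bla|$, the same placement of the single $V^{1,1}$-vertex at the distinguished position where Lemmas \ref{lem_tree_bound}, \ref{lem_tree_double_bound}, \ref{lem_tree_divided_bound} allow the $\|\cdot\|_1$-norm, the same two sources of the $L^{-d}$ factor (the compulsory $V^{0-1,1}$ branch, respectively $[V^{0-2,1}_{2,2},\tilde{\cC}_1]_{1,\infty}\le c_0rL^{-d}$), and the same reasons for the $\beta^{-1}$ in the linear radii and the $\min\{1,\beta\}$ in the $\cW$-radius. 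The only cosmetic difference is that the paper bounds the non-root $V^{1,1}$-branches in $V^{2,0,(n)}$ by $\sup_{\la\in\overline{D(\beta r')}}\|V^{1,1}_m(\la)\|_{1,\infty}\le\beta r'$ and then restricts the final norm to the disk of radius $\min\{1,\beta\}r'$, which is equivalent to your bookkeeping.
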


\begin{remark}
It is clear from the definition that $V^{1-3,0}$ is independent of the
 parameter $u$. The condition on the first
 variable assumed in the set $\cS(r,r')$ is in fact
 unnecessary. However, we define the set in this way in accordance
 with the other sets.
\end{remark}

\begin{proof}[Proof of Lemma \ref{lem_UV_with_artificial}]
During the proof we often hide the sign of dependency on the parameter
 $(u,\bla)$ for conciseness. In the following we always assume that
 $\alpha\ge 2^3$, \eqref{eq_assumption_size_radius} and 
\begin{align}
2^9\beta c_0^2 \alpha^4 r'\le 1.
\label{eq_assumption_size_radius_artificial}
\end{align}
Let us start by estimating $V^{1,1}$ and $V^{1-3,0}$.
 Since
 $V_4^{1,1}(\psi)=-\la_2 A^2(\psi)$,
\begin{align*}
&V_4^{1,1}(\rho_1\bx_1 s_1\xi_1,\rho_2\bx_2 s_2\xi_2,\rho_3\bx_3 s_3\xi_3,\rho_4\bx_4 s_4\xi_4)\\
&=-\frac{\la_2h^3}{4!}1_{s_1=s_2=s_3=s_4}\\
&\qquad\cdot \sum_{\s\in
 \S_4}\sgn(\s)1_{((\rho_{\s(1)},\bx_{\s(1)},\xi_{\s(1)}),
(\rho_{\s(2)},\bx_{\s(2)},\xi_{\s(2)}),
(\rho_{\s(3)},\bx_{\s(3)},\xi_{\s(3)}),
(\rho_{\s(4)},\bx_{\s(4)},\xi_{\s(4)}))\atop =
 ((1,r_L(\hat{\bx}),1),(2,r_L(\hat{\bx}),-1),
(2,r_L(\hat{\by}),1),(1,r_L(\hat{\by}),-1))},\\
&(\forall (\rho_j,\bx_j,s_j,\xi_j)\in I\ (j=1,2,3,4)).
\end{align*}
Thus, 
\begin{align}
\|V_4^{1,1}\|_{1,r,r'}=\|V_4^{1-3,0}\|_{1,r,r'}
\le \beta r'.
\label{eq_offspring_bound_4th}
\end{align}
Also,
\begin{align*}
&V_2^{1,1}(\rho_1\bx_1 s_1\xi_1,\rho_2\bx_2 s_2\xi_2)\\
&=-\frac{\la_1h}{2}1_{s_1=s_2}\sum_{\s\in
 \S_2}\sgn(\s)1_{((\rho_{\s(1)},\bx_{\s(1)},\xi_{\s(1)}),
(\rho_{\s(2)},\bx_{\s(2)},\xi_{\s(2)}))=((1,r_L(\hat{\bx}),1),(2,r_L(\hat{\bx}),-1))},\\
&(\forall (\rho_j,\bx_j,s_j,\xi_j)\in I\ (j=1,2)).
\end{align*}
Thus, 
\begin{align}
\|V_2^{1,1}\|_{1,r,r'}\le \beta r'.
\label{eq_offspring_bound_2nd_initial}
\end{align}
We can derive from the definition that
\begin{align*}
&V_2^{1-3,0}(\psi)\\
&=V_2^{1,1}(\psi)\notag\\
&\quad +\frah^2\sum_{\bX\in I^2}
\Bigg(\left(\begin{array}{c}4 \\ 2 \end{array}\right)
\frah^2\sum_{\bY\in I^2}
V_4^{1,1}(\bY,\bX)Tree(\{1\},\cC_{1})\psi_{\bY}^1\Big|_{\psi^1=0}
\Bigg)\psi_{\bX}.
\end{align*}
By using \eqref{eq_determinant_bound}, \eqref{eq_offspring_bound_4th},
 \eqref{eq_offspring_bound_2nd_initial} and $c_0\ge 1$ we have
\begin{align}
\|V_2^{1-3,0}\|_{1,r,r'}&\le \|V_2^{1,1}\|_{1,r,r'}
+ \left(\begin{array}{c}4 \\ 2 \end{array}\right) c_0\|V_4^{1,1}\|_{1,r,r'}
\le 7\beta c_0r'.
\label{eq_offspring_bound_2nd}
\end{align}
It also follows from \eqref{eq_determinant_bound}, 
 \eqref{eq_offspring_bound_4th}, \eqref{eq_offspring_bound_2nd_initial},
 $c_0\ge 1$ and the definition that
\begin{align}
\|V_0^{1-3,0}\|_{1,r,r'}
\le c_0\|V_2^{1,1}\|_{1,r,r'}+c_0^2\|V_4^{1,1}\|_{1,r,r'}\le 2\beta c_0^2r'.\label{eq_offspring_bound_0th}
\end{align}
The inequalities \eqref{eq_offspring_bound_4th},
 \eqref{eq_offspring_bound_2nd}, \eqref{eq_offspring_bound_0th} result
 in 
\begin{align*}
\alpha^2\|V_0^{1-3,0}\|_{1,r,r'}\le 2 \beta c_0^2\alpha^2r',\quad \sum_{m=2}^Nc_0^{\frac{m}{2}}\alpha^m\|V_m^{1-3,0}\|_{1,r,r'}\le
 2^3\beta c_0^{2}\alpha^4 r'.
\end{align*}
Though we can see from the explicit characterization of the kernels, the
 statement of Lemma \ref{lem_tree_bound} ensures that $V_m^{1-3,0}:I^m\to \C$ $(m=2,4)$ satisfy
 \eqref{eq_time_translation_generic}. It is also clear from the
 definition that $\bla\mapsto V^{1-3,0}(\bla)(\psi)$ is linear. Combined with these basic properties, the
 above inequalities and \eqref{eq_assumption_size_radius_artificial} imply that 
\begin{align}
V^{1-3,0}\in \cS(2^{-9}c_0^{-2}\alpha^{-4},2^{-9}\beta^{-1}c_0^{-2}\alpha^{-4}).
\label{eq_offspring_conclusion}
\end{align}

Let us consider $V^{1-1-1,0,(n)}(\psi)$. Here we use Lemma
 \ref{lem_tree_bound}. The lemma states that the anti-symmetric kernels
 of $V^{1-1-1,0,(n)}(\psi)$ satisfy
 \eqref{eq_time_translation_generic}. By definition, $\bla\mapsto
 V^{1-1-1,0,(n)}(\bla)(\psi)$ is linear. Thus, $\sum_{n=2}^{\infty}V^{1-1-1,0,(n)}$
must satisfy these properties if it is convergent. Let us establish
 bound properties of the kernels. By applying \eqref{eq_tree_1} together
 with \eqref{eq_determinant_bound}, \eqref{eq_decay_bound},
 \eqref{eq_0_1_1_initial}, \eqref{eq_initial_1_infinity},
 \eqref{eq_offspring_bound_4th}, \eqref{eq_offspring_bound_2nd_initial}
 we observe that for any $n\in \N_{\ge 2}$, $m\in \{0,2,\cdots,N\}$, 
\begin{align*}
&\|V_m^{1-1-1,0,(n)}\|_{1,r,r'}\\
&\le 2^{-2m}c_0^{-\frac{m}{2}}\prod_{j=1}^{n-1}\Bigg(
\sum_{b_j=1}^2\sum_{p_j\in\{2,4\}}2^{3p_j}c_0^{\frac{p_j}{2}}\|V_{p_j}^{0-b_j,1}\|_{1,\infty,r}\Bigg)\\
&\quad\cdot 
\sum_{p_n\in\{2,4\}}2^{3p_n}c_0^{\frac{p_n}{2}}\|V_{p_n}^{1,1}\|_{1,r,r'}1_{\sum_{j=1}^np_j-2(n-1)\ge m}1_{\exists j(b_j=1)}\\
&\le
 2^{-2m}c_0^{-\frac{m}{2}}\sum_{l=1}^{n-1}\left(\begin{array}{c}n-1\\
						l\end{array}\right)
(2^6c_0\|V_2^{0-1,1}\|_{1,\infty,r})^l
(2^{12}c_0^2\|V_4^{0-2,1}\|_{1,\infty,r})^{n-1-l}\\
&\quad\cdot \sum_{p_n\in\{2,4\}}2^{3p_n}c_0^{\frac{p_n}{2}}\|V_{p_n}^{1,1}\|_{1,r,r'}1_{2l+4(n-1-l)+p_n-2(n-1)\ge m}\\
&\le 2^{-2m+13}c_0^{-\frac{m}{2}}
\sum_{l=1}^{n-1}\left(\begin{array}{c}n-1\\
						l\end{array}\right)
(2^6c_0 rL^{-d})^l
(2^{12}c_0^2 r)^{n-1-l}
c_0^2\beta r'\\
&\quad\cdot 1_{2(n-1-l)+4\ge m}.
\end{align*}
Then, by \eqref{eq_assumption_size_radius},
 \eqref{eq_assumption_size_radius_artificial} and $\alpha \ge 2^3$,
\begin{align}
&\|V_0^{1-1-1,0,(n)}\|_{1,r,r'}\label{eq_1_1_1_0_n}\\
&\le 2^4\sum_{l=1}^{n-1}\left(\begin{array}{c}n-1\\
						l\end{array}\right)
(2^{-3}\alpha^{-4})^l(2^3\alpha^{-4})^{n-1-l}L^{-d}\alpha^{-4}
\le (2^4\alpha^{-4})^nL^{-d},\notag\\
&\sum_{m=2}^Nc_0^{\frac{m}{2}}\alpha^m\|V_m^{1-1-1,0,(n)}\|_{1,r,r'}\label{eq_1_1_1_sum_n}\\
&\le 2^6\sum_{l=1}^{n-1}\left(\begin{array}{c}n-1\\
						l\end{array}\right)(2^6c_0rL^{-d})^l(2^8c_0^2\alpha^2r)^{n-1-l}c_0^2\alpha^4\beta
 r'\notag\\
&\le \sum_{l=1}^{n-1}\left(\begin{array}{c}n-1\\
						l\end{array}\right)(2^{-3}\alpha^{-4})^l(2^{-1}\alpha^{-2})^{n-1-l}L^{-d}\le
 \alpha^{-2(n-1)}L^{-d}.\notag
\end{align}

Let us study properties of $V^{1-1-2,0}$. By Lemma
 \ref{lem_tree_double_bound} the anti-symmetric kernels of
 $V^{1-1-2,0,(n)}(\psi)$ satisfy
 \eqref{eq_time_translation_generic}. Thus, if $\sum_{n=2}^{\infty}V^{1-1-2,0,(n)}(\psi)$ converges, the
 anti-symmetric kernels of $V^{1-1-2,0}(\psi)$ must satisfy
 \eqref{eq_time_translation_generic} as well. We can see from the definition
 that $\bla\mapsto V^{1-1-2,0,(n)}(\bla)(\psi)$ is linear and thus so must
 be $V^{1-1-2,0}(\psi)$ if it converges. Let us find upper bounds on the norms of the kernels of
 $V^{1-1-2,0,(n)}(\psi)$. By substituting \eqref{eq_determinant_bound},
 \eqref{eq_decay_bound}, \eqref{eq_initial_1_infinity},
 \eqref{eq_initial_1_infinity_multiplied},
 \eqref{eq_offspring_bound_4th}, \eqref{eq_offspring_bound_2nd_initial}
 into \eqref{eq_double_1} we have that for any $m\in \{0,2,\cdots,N\}$,
 $n\in \N_{\ge 2}$,
\begin{align*}
\|V_m^{1-1-2,0,(n)}\|_{1,r,r'}&\le
 2^{-2m}c_0^{-\frac{m}{2}}L^{-d}(2^{12}c_0^2r)^{n-1}\sum_{p\in
 \{2,4\}}2^{3p}c_0^{\frac{p}{2}}\beta r'1_{2n-4+p\ge m}\\
&\le
 2^{-2m+1}c_0^{-\frac{m}{2}}L^{-d}(2^{12}c_0^2r)^{n-1}(2^{12}c_0^2\beta
 r')1_{2n\ge m}.
\end{align*}
Thus, by \eqref{eq_assumption_size_radius},
 \eqref{eq_assumption_size_radius_artificial} and the assumption $\alpha \ge 2^3$,
 \begin{align}
&\|V_0^{1-1-2,0,(n)}\|_{1,r,r'}\le
 2L^{-d}(2^{3}\alpha^{-4})^{n},\label{eq_1_1_2_0_n}\\
&\sum_{m= 2}^Nc_0^{\frac{m}{2}}\alpha^m\|V_m^{1-1-2,0,(n)}\|_{1,r,r'}
\le 2^2(2^8c_0^2\alpha^2r)^{n-1}(2^8c_0^2\alpha^2\beta
  r')L^{-d} \le 2^2(2^{-1}\alpha^{-2})^nL^{-d}.\label{eq_1_1_2_sum_n}
\end{align}
It follows from \eqref{eq_1_1_1_0_n}, \eqref{eq_1_1_1_sum_n}, 
\eqref{eq_1_1_2_0_n}, \eqref{eq_1_1_2_sum_n} and $\alpha\ge 2^3$ that 
\begin{align*}
&\alpha^2\sum_{n=2}^{\infty}\sum_{j=1}^2\|V_0^{1-1-j,0,(n)}\|_{1,r,r'}\le L^{-d},\quad\sum_{m=
 2}^Nc_0^{\frac{m}{2}}\alpha^m\sum_{n=2}^{\infty}\sum_{j=1}^2\|V_m^{1-1-j,0,(n)}\|_{1,r,r'}\le L^{-d}.
\end{align*}
These uniform convergence properties imply the well-definedness of
 $V^{1-1,0}$ and its regularity with $(u,\bla)$.
Therefore, $V^{1-1,0}\in \cQ'(r,r')$.

Next let us consider $V^{1-2,0}$. An application of 
Lemma \ref{lem_tree_divided_bound} ensures that 
 there exist bi-anti-symmetric functions $V_{a,b}^{1-2,0,(n)}:I^a\times
 I^b\to \C$ $(a,b\in \{2,4,\cdots,N\})$ satisfying
 \eqref{eq_time_translation_generic}, \eqref{eq_vanishing_property} such
 that
\begin{align*}
V^{1-2,0,(n)}(\psi)=\sum_{a,b=2}^{N}1_{a,b\in
 2\N}\frah^{a+b}\sum_{\bX\in I^a\atop \bY\in
 I^b}V^{1-2,0,(n)}_{a,b}(\bX,\bY)\psi_{\bX}\psi_{\bY}.
\end{align*}
By definition, $\bla\mapsto V^{1-2,0,(n)}(u,\bla)(\psi)$ is linear for
 any $u\in \overline{D(r)}$. Moreover, by construction, $(u,\bla)\mapsto
 V^{1-2,0,(n)}_{a,b}(u,\bla)(\bX,\bY)$ is continuous in
 $\overline{D(r)}\times \overline{D(r')}^2$ and analytic in 
$D(r)\times D(r')^2$, $(\forall \bX\in I^a,\bY\in I^b)$.
Let us establish bound properties of the
 bi-anti-symmetric kernels. By combining \eqref{eq_determinant_bound},
 \eqref{eq_decay_bound}, \eqref{eq_initial_1_infinity},
 \eqref{eq_cardinality_index}, 
 \eqref{eq_offspring_bound_4th}, \eqref{eq_offspring_bound_2nd_initial}
  with \eqref{eq_divided_1} and using $c_0\ge 1$ we observe that for any $a,b\in
 \{2,4,\cdots,N\}$, $n\in \N_{\ge 2}$, 
\begin{align*}
&\|V_{a,b}^{1-2,0,(n)}\|_{1,r,r'}\\
&\le \frac{1}{(n-1)!}\sum_{m=0}^{n-1}\left(\begin{array}{c} n-1 \\ m
				       \end{array}\right)
(1_{m\neq 0}(m-1)!+1_{m=0})
(1_{m\neq n-1}(n-m-2)!+1_{m=n-1})\\
&\quad\cdot 2^{-2a-2b}c_0^{-\frac{1}{2}(a+b)}(2^{12}c_0^2r)^{n-1}
(2^6c_0\beta r'+ 2^{12}c_0^2\beta r')
1_{2+2m\ge a}1_{2n-2m\ge b}\\
&\le \frac{1}{(n-1)!}\sum_{m=0}^{n-1}\left(\begin{array}{c} n-1 \\ m
				       \end{array}\right)
(1_{m\neq 0}(m-1)!+1_{m=0})
(1_{m\neq n-1}(n-m-2)!+1_{m=n-1})\\
&\quad\cdot 2^{-2a-2b+13}c_0^{-\frac{1}{2}(a+b)}(2^{12}c_0^2r)^{n-1}
c_0^2\beta r'1_{2+2m\ge a}1_{2n-2m\ge b}.
\end{align*}
Thus, by \eqref{eq_assumption_size_radius}, \eqref{eq_combinatorial_simplification},
 \eqref{eq_assumption_size_radius_artificial} and $\alpha \ge 2^{3}$,
\begin{align*}
\sum_{a,b=2}^{N}1_{a,b\in
 2\N}c_0^{\frac{1}{2}(a+b)}\alpha^{a+b}\|V_{a,b}^{1-2,0,(n)}\|_{1,r,r'}\le 
2^3n(2^{-2}\alpha)^2
(2^{-1}\alpha^{-2})^n\le \alpha^{2-2n},
\end{align*}
or
\begin{align*}
\sum_{a,b=2}^{N}1_{a,b\in
 2\N}c_0^{\frac{1}{2}(a+b)}\alpha^{a+b}\sum_{n=2}^{\infty}\|V_{a,b}^{1-2,0,(n)}\|_{1,r,r'}\le
 2\alpha^{-2}\le 1.
\end{align*}
This means that $V^{1-2,0}\in \cR'(r,r')$.

It remains to analyze $V^{2,0}$. By Lemma \ref{lem_tree_bound} the
 anti-symmetric kernels of $V^{2,0,(n)}(\psi)$ $(n\in \N_{\ge 2})$
 satisfy \eqref{eq_time_translation_generic}. The constraint
 $1_{\sum_{j=1}^{n}b_j\ge 2}$ implies that $V^{2,0,(n)}(\psi)$ is of
 degree at least 2 with $\la_1,\la_2$. Thus, 
$$
V^{2,0,(n)}(u,\b0)(\psi)=\frac{\partial}{\partial
 \la_j}V^{2,0,(n)}(u,\b0)(\psi)=0,\quad(\forall u\in
 {D(r)},\ j\in \{1,2\}).
$$
Let us prove uniform bound properties of the anti-symmetric
 kernels. Here we
 need to measure $V^{1,1}$ with the $\|\cdot\|_{1,\infty}$-norm as
 well. We can see from the definition that for $m\in \{2,4\}$
\begin{align}
\sup_{\la\in \overline{D(\beta r')}}\|V^{1,1}_m(\la)\|_{1,\infty}\le \beta r'.
\label{eq_offspring_bound_L_1}
\end{align}
By definition,
\begin{align*}
V^{2,0,(n)}(\psi)=\frac{1}{n!}\sum_{l=2}^n
\left(\begin{array}{c}n \\ l \end{array}
\right)&Tree(\{1,2,\cdots,n\},\cC_1)\prod_{j=1}^lV^{1,1}(\psi^j+\psi)\\  
&\cdot \prod_{k=l+1}^nV^{0,1}(\psi^k+\psi)
\Bigg|_{\psi^{j}=0\atop(\forall
 j\in\{1,2,\cdots,n\})}.
\end{align*}
Then, it follows from \eqref{eq_tree_1}, \eqref{eq_determinant_bound},
 \eqref{eq_decay_bound}, \eqref{eq_0_1_1_initial},
\eqref{eq_initial_1_infinity},
 \eqref{eq_offspring_bound_4th}, \eqref{eq_offspring_bound_2nd_initial},
 \eqref{eq_offspring_bound_L_1} and $c_0\ge 1$ that for any $m\in \{0,2,\cdots,N\}$,
 $n\in \N_{\ge 2}$,
\begin{align*}
&\|V^{2,0,(n)}_m\|_{1,r,\min\{1,\beta\}r'}\\
&\le \frac{(n-2)!}{n!}\sum_{l=2}^n
\left(\begin{array}{c}n \\ l \end{array}
\right)c_0^{-\frac{m}{2}}2^{-2m}\sum_{p_1\in \{2,4\}}2^{3p_1}c_0^{\frac{p_1}{2}}\|V^{1,1}_{p_1}\|_{1,r,r'}\\
&\quad\cdot \prod_{j=2}^{l}\Bigg(\sum_{p_j\in
 \{2,4\}}2^{3p_j}c_0^{\frac{p_j}{2}}\sup_{\la\in \overline{D(\beta
 r')}}\|V^{1,1}_{p_j}(\la)\|_{1,\infty}\Bigg)(2^6c_0\|V_2^{0,1}\|_{1,\infty,r}+
2^{12}c_0^2\|V_4^{0,1}\|_{1,\infty,r})^{n-l}\\
&\quad\cdot 1_{\sum_{j=1}^lp_j+4(n-l)-2(n-1)\ge m}\\
&\le \frac{(n-2)!}{n!}\sum_{l=2}^n
\left(\begin{array}{c}n \\ l \end{array}
\right)c_0^{-\frac{m}{2}}2^{-2m}(2^{13}c_0^2\beta r')^l(2^{13}c_0^2r)^{n-l}1_{2n+2\ge m}.
\end{align*}
Moreover by
 \eqref{eq_assumption_size_radius},
 \eqref{eq_assumption_size_radius_artificial},
\begin{align*}
\|V^{2,0,(n)}_m\|_{1,r,\min\{1,\beta\}r'}&\le
 \frac{(n-2)!}{n!}\sum_{l=2}^n
\left(\begin{array}{c}n \\ l \end{array}
\right)c_0^{-\frac{m}{2}}2^{-2m}(2^{4}\alpha^{-4})^l(2^{4}\alpha^{-4})^{n-l}1_{2n+2\ge m}\\
&\le c_0^{-\frac{m}{2}}2^{-2m}(2^{5}\alpha^{-4})^n1_{2n+2\ge m}.
\end{align*}
Thus by $\alpha \ge 2^3$,
\begin{align*}
&\alpha^2\sum_{n=2}^{\infty}\|V_0^{2,0,(n)}\|_{1,r,\min\{1,\beta\}
 r'}\le 2 \alpha^2
 (2^5\alpha^{-4})^{2}\le 1,\\
&\sum_{m=2}^Nc_0^{\frac{m}{2}}\alpha^m\|V_m^{2,0,(n)}\|_{1,r,\min\{1,\beta\}r'}\le
 2(2^{-2}\alpha)^2(2\alpha^{-2})^{n},\\
&\sum_{m=2}^Nc_0^{\frac{m}{2}}\alpha^m\sum_{n=2}^{\infty}\|V_m^{2,0,(n)}\|_{1,r,\min\{1,\beta\}r'}\le
 2^2(2^{-2}\alpha)^2(2\alpha^{-2})^{2}\le 1.
\end{align*}
This implies that $V^{2,0}\in \cW(r,\min\{1,\beta\}r')$. The proof is complete.
\end{proof}

\subsection{The second integration}\label{subsec_second_integration}

Here we establish bound properties of the output of the single-scale
integration with the covariance $\cC_0$. The input to the integration is
the Grassmann polynomials $V^{0-j,0}(\psi)$ $(j=1,2)$, $V^{1-k,0}(\psi)$
$(k=1,2,3)$, $V^{2,0}(\psi)$ whose properties were studied in Lemma
\ref{lem_UV_without_artificial} and Lemma \ref{lem_UV_with_artificial}. 
In fact the object we are going to analyze is an analytic continuation
of 
\begin{align*}
\log\left(\int e^{\sum_{j=1}^2V^{0-j,0}(\psi)+
 \sum_{k=1}^3V^{1-k,0}(\psi) + V^{2,0}(\psi)}d\mu_{\cC_0}(\psi)\right),
\end{align*}
which is also an analytic continuation of 
\begin{align*}
\log\left(
\int e^{-V(u)(\psi)+W(u)(\psi)-A(\psi)} d\mu_{\cC_0+\cC_1}(\psi)\right).
 \end{align*}

Set 
$$
r:=2^{-9}c_0^{-2}\alpha^{-4},\quad r':=\beta^{-1}r,\quad r'':=\min\{1,\beta\}r'.
$$
We define $V^{end}$, $V^{1-3,end}\in \Map(\overline{D(r)}\times
\overline{D(r'')}^2,\C)$ by
\begin{align*}
&V^{end,(n)}:=\frac{1}{n!}Tree(\{1,2,\cdots,n\},\cC_0)\\
&\qquad\qquad\quad\cdot \prod_{j=1}^n\left(
\sum_{m=1}^2V^{0-m,0}(\psi^j)+\sum_{k=1}^3V^{1-k,0}(\psi^j)+V^{2,0}(\psi^j)
\right)\Bigg|_{\psi^{j}=0\atop(\forall
 j\in\{1,2,\cdots,n\})}.\\
&V^{end}:=\sum_{n=1}^{\infty}V^{end,(n)},\\
&V^{1-3,end}:=Tree(\{1\},\cC_0)V^{1-3,0}(\psi^1)\Big|_{\psi^1=0}
\end{align*}
by assuming its convergence. Our purpose here is to prove the following lemma.

\begin{lemma}\label{lem_UV_final}
Assume that $h\ge 1$. Then, the following statements hold for any
 $\alpha \in [2^3,\infty)$, $L\in \N$ with $L^d\ge 2^2 D_c$.
\begin{itemize}
\item $V^{end}$ is continuous in
      $\overline{D(2^{-9}c_0^{-2}\alpha^{-4})}\times
      \overline{D(2^{-11}L^{-d} h^{-1}\beta^{-1}\min\{1,\beta\}
c_0^{-2}\alpha^{-4})}^2$, analytic in ${D(2^{-9}c_0^{-2}\alpha^{-4})}\times
      {D(2^{-11}L^{-d} h^{-1}\beta^{-1}\min\{1,\beta\}
c_0^{-2}\alpha^{-4})}^2$.
\item 
\begin{align}
\frac{h}{N}|V^{end}(u,\b0)|\le 2^8\alpha^{-2}L^{-d},\quad (\forall u\in
 \overline{D(2^{-9}c_0^{-2}\alpha^{-4})}).\label{eq_UV_final_pressure}
\end{align}
\item 
\begin{align}
&\left|
\frac{\partial}{\partial \la_j}V^{end}(u,\b0)-
\frac{\partial}{\partial \la_j}V^{1-3,end}(u,\b0)
\right|\le 2^{10}\beta c_0^2\alpha^4(1+2D_c)L^{-d},\label{eq_UV_final_error_estimate}\\
&(\forall u\in
 {D(2^{-9}c_0^{-2}\alpha^{-4})},\ j\in \{1,2\}).\notag
\end{align}
\end{itemize}
\end{lemma}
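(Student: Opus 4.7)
The plan is to apply the tree expansion machinery of Subsection \ref{subsec_general_estimation} with the covariance $\cC_0$ to the explicit definitions of $V^{end}$ and $V^{1-3,end}$, using as input the Grassmann polynomials produced by the first integration whose bounds were established in Lemma \ref{lem_UV_without_artificial} and Lemma \ref{lem_UV_with_artificial}. The structural novelty is the interplay between the time-independence of $\cC_0$ (assumption \eqref{eq_time_independence}) and the vanishing property \eqref{eq_vanishing_property} of the bi-anti-symmetric kernels: their combination makes the potentially leading contributions vanish identically and is what produces the inverse-volume factor $L^{-d}$ on the right of \eqref{eq_UV_final_pressure} and \eqref{eq_UV_final_error_estimate}.

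First I would prove a vanishing lemma: in the expansion of $V^{end,(n)}(u,\b0)$ any term in which at least one vertex is assigned $V^{0-2,0}$ contributes $0$, and in the expansion of $\partial_{\la_j}V^{end,(n)}(u,\b0)$ any term in which some vertex is $V^{0-2,0}$ or $V^{1-2,0}$ contributes $0$. Pick such a vertex $i$ and use the bi-anti-symmetric representation $V^{(\cdot)}_{p,q}(\bX_i,\bY_i)$. In the full contraction producing the constant part every $\cC_0$-factor touching a variable of $\bX_i$ is, by \eqref{eq_time_independence}, independent of the times in $\bX_i$, and no other kernel depends on those times (the kernels at other vertices live on disjoint Grassmann labels; they satisfy \eqref{eq_time_translation_generic} but involve other time variables). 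The sum over the times of $\bX_i$ therefore factors out, and \eqref{eq_vanishing_property} applied with the constant test function $g\equiv 1$ forces the result to be $0$.

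With these terms removed, the surviving contributions involve only $V^{0-1,0}$ at ``zero-degree'' vertices (with $\|\cdot\|_{1,\infty,r}$-bounds from the set $\cQ(r)$) and at most one $V^{1-1,0}$ or $V^{1-3,0}$ vertex for the $\la_j$-derivative (with $\|\cdot\|_{1,r,r'}$-bounds from $\cQ'(r,r')$ or $\cS(r,r')$). I would then apply \eqref{eq_tree_1} of Lemma \ref{lem_tree_bound} with $m=0$, $\cC=\cC_0$, $D=c_0$, $\|\tilde\cC\|_{1,\infty}\le c_0 D_c$. The factor $D^{-n+1}\|\tilde\cC\|_{1,\infty}^{n-1}=D_c^{n-1}$ encodes the covariance decay, the factor $L^{-d}$ per vertex comes from \eqref{eq_scale_volume_norm} and \eqref{eq_scale_volume_norm_linear}, and the $(2^3/\alpha)^p$-weight produced when passing from $\sum c_0^{p/2}\alpha^p\|\cdot\|$ to $\sum 2^{3p}c_0^{p/2}\|\cdot\|$ gives the extra $\alpha^{-2}$, exactly as in the proofs of Lemmas \ref{lem_UV_without_artificial} and \ref{lem_UV_with_artificial}. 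Since each additional vertex contributes a factor bounded by a constant multiple of $D_c L^{-d}$, the hypothesis $L^d\ge 2^2 D_c$ gives a per-vertex ratio smaller than $1/4$, ensuring geometric convergence in $n$ and the claimed bound \eqref{eq_UV_final_pressure}. Continuity and analyticity of $V^{end}$ in $(u,\bla)$ follow from the uniform-convergence version of the same estimates.

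For \eqref{eq_UV_final_error_estimate} I would use that $V^{1-k,0}$ $(k=1,2,3)$ are linear in $\bla$, so $\partial_{\la_j}|_{\b0}$ selects exactly one vertex to carry the $\la$-input while all remaining vertices are $V^{0-\cdot,0}$ (the $V^{2,0}$ contribution dies since it is of degree $\ge 2$ in $\bla$). The definition of $V^{1-3,end}$ cancels exactly the $n=1$ term with that single vertex being $V^{1-3,0}$. By the vanishing lemma the $\la$-vertex cannot be $V^{1-2,0}$ and the remaining vertices cannot be $V^{0-2,0}$. The surviving $n=1$ piece comes only from $V^{1-1,0}\in\cQ'(r,r')$ and is bounded directly from \eqref{eq_scale_volume_norm_linear} together with the factor $1/r' = 2^9\beta c_0^2\alpha^4$ that enters once we take a $\la_j$-derivative; this yields the $2^{10}\beta c_0^2\alpha^4$ contribution. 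The $n\ge 2$ contributions produce the $D_c$-dependent piece $2D_c$, again via \eqref{eq_tree_1}. The main obstacle is the first step: one must verify carefully that in \emph{every} contraction pattern generated by $\D_{\{p,q\}}(\cC_0)$ acting on products of Grassmann polynomials, including contractions of $\bX_i$-variables with other $\bX_i$-variables or with $\bY_i$-variables of the same vertex, the dependence on the times of $\bX_i$ is only through $\cC_0$-factors and therefore trivial, so that the vanishing property applies. Once this is settled the remaining estimates follow the same pattern as in Subsections \ref{subsec_UV_without_artificial} and \ref{subsec_UV_with_artificial} with $\cC$ replaced by $\cC_0$ and the factor $L^{-d}$ tracked along.
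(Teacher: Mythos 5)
Your proposal is correct and follows essentially the same route as the paper: eliminate the $V^{0-2,0}$ and $V^{1-2,0}$ contributions using the time-independence \eqref{eq_time_independence} of $\cC_0$ together with the vanishing property \eqref{eq_vanishing_property}, then run the tree bounds of Lemma \ref{lem_tree_bound} with $D=c_0$, $\|\tilde{\cC_0}\|_{1,\infty}\le c_0D_c$ and the per-vertex factor $L^{-d}$ from \eqref{eq_scale_volume_norm}, \eqref{eq_scale_volume_norm_linear}, summing geometrically under $L^d\ge 2^2D_c$. The one simplification the paper makes relative to your plan is that it proves the vanishing at the level of the full Gaussian integral, showing $\int V^{j-2,0}(\psi)f(\psi)\,d\mu_{\cC_0}(\psi)=0$ for an arbitrary even polynomial $f$, which sidesteps the term-by-term self-contraction bookkeeping you flag as the main obstacle; it also obtains the shrunken $\bla$-domain by rescaling $\bla\mapsto\eps\bla$ with $\eps=\frac{1}{3}L^{-d}h^{-1}$ (via Cauchy's formula for the $V^{2,0}$ piece), which is where the radius $2^{-11}L^{-d}h^{-1}\beta^{-1}\min\{1,\beta\}c_0^{-2}\alpha^{-4}$ comes from.
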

\begin{proof} First let us observe that $V^{0-2,0}(\psi)$,
 $V^{1-2,0}(\psi)$ do not contribute to the value of the
 integration. With the aim of proving this, let us take $f(\psi)\in
 \bigwedge \cV$, $p,q\in \{2,4,\cdots,N\}$, $\bX\in (I^0)^p$, $\bY\in
 I^q$. If we define the function $g:[0,\beta)_h^p\to\C$ by 
$$
g(s_1,s_2,\cdots,s_p):=\int \prod_{j=1}^p(\psi_{X_j+s_j})\psi_{\bY}f(\psi)d\mu_{\cC_0}(\psi),
$$
the property \eqref{eq_time_independence} ensures that the function $g$
 satisfies \eqref{eq_test_function_time_translation}. If we expand $\int
 V^{j-2,0}(\psi)f(\psi)d\mu_{\cC_0}(\psi)$ $(j=0,1)$, we see that each
 kernel of $V^{j-2,0}$ is multiplied by a function of the same form as $g$
 and is integrated with
 respect to the time-variables. Thus, the property
 \eqref{eq_vanishing_property} of the bi-anti-symmetric kernels of
 $V^{j-2,0}(\psi)$ $(j=0,1)$ implies that
$$
\int V^{j-2,0}(\psi)f(\psi)d\mu_{\cC_0}(\psi)=0,\quad (j=0,1).
$$
Arbitrariness of $f(\psi)$ implies that for any $z\in \C$
\begin{align*}
&\int
 e^{z(\sum_{k=1}^2V^{0-k,0}(\psi)+\sum_{k=1}^3V^{1-k,0}(\psi)+V^{2,0}(\psi))}
 d\mu_{\cC_0}(\psi)\\
&=
\int
 e^{z(V^{0-1,0}(\psi)+V^{1-1,0}(\psi)+V^{1-3,0}(\psi)+V^{2,0}(\psi))}
 d\mu_{\cC_0}(\psi).
\end{align*}
Therefore, 
\begin{align*}
V^{end,(n)}=\frac{1}{n!}&Tree(\{1,2,\cdots,n\},\cC_0)\\
&\cdot \prod_{j=1}^n(V^{0-1,0}(\psi^j)+
V^{1-1,0}(\psi^j)+V^{1-3,0}(\psi^j)+V^{2,0}(\psi^j))
\Bigg|_{\psi^{j}=0\atop(\forall
 j\in\{1,2,\cdots,n\})}.
\end{align*}
Note that
\begin{align}
&\|V_m^{a,0}(u,\eps \bla)\|_1\le \eps
 \|V_m^{a,0}\|_{1,r,r''},\label{eq_artificial_parameter_scaling_again}\\
&\|V_m^{a,0}(u,\eps \bla)\|_{1,\infty}\le h\eps
 \|V_m^{a,0}\|_{1,r,r''},\label{eq_artificial_parameter_scaling_again_h}\\
&(\forall u\in \overline{D(r)},\ \bla\in \overline{D(r'')}^2,\ \eps\in
 [0,1/2],\ a\in \{1-1,1-3,2\}).\notag
\end{align}
For $a=1-1,1-3$, \eqref{eq_artificial_parameter_scaling_again} and
 \eqref{eq_artificial_parameter_scaling_again_h} are clear. 
For $a=2$ we can use the following equality based on Cauchy's integral
 formula to derive \eqref{eq_artificial_parameter_scaling_again},
 \eqref{eq_artificial_parameter_scaling_again_h}.
\begin{align*}
&V_m^{2,0}(u,\eps\bla)=\sum_{n=2}^{\infty}\frac{1}{2\pi
 i}\oint_{|z|=\delta}dz \frac{V_m^{2,0}(u,z\bla)}{z^{n+1}}\eps^n
=\frac{1}{2\pi i}\oint_{|z|=\delta}dz
 V_m^{2,0}(u,z\bla)\frac{\eps^2}{z^2(z-\eps)},\\
&(\forall u\in \overline{D(r)},\ \bla\in \overline{D(r'')}^2,\ \eps\in
 [0,1/2],\ \delta\in (1/2,1)).
\end{align*}
 In the following we let $\eps=\frac{1}{3}L^{-d}h^{-1}$, $\alpha \ge
 2^3$. The assumption $h\ge 1$ implies that $\eps \in (0,1/2]$.
Take any $u\in \overline{D(r)}$, $\bla \in \overline{D(r'')}^2$. 
By \eqref{eq_tree_1_1}, \eqref{eq_determinant_bound},
 \eqref{eq_scale_volume_norm}, \eqref{eq_scale_volume_norm_linear},
 \eqref{eq_offspring_norm_bound}, \eqref{eq_scale_norm_quadratic},
 \eqref{eq_artificial_parameter_scaling_again} we have that
\begin{align*}
&|V^{end,(1)}(u,\eps\bla)|\\
&\le\frac{N}{h}L^{-d}\alpha^{-2}+3\eps
 \alpha^{-2}+\sum_{m=2}^Nc_0^{\frac{m}{2}}\left(\frac{N}{h}\|V_m^{0-1,0}\|_{1,\infty,r}
+\eps\sum_{a\in \{1-1,1-3,2\}}\|V_m^{a,0}\|_{1,r,r''}\right)\\
&\le
 2\left(\frac{N}{h}L^{-d}+3\eps\right)\alpha^{-2}=2(N+1)h^{-1}L^{-d}\alpha^{-2}.
\end{align*}
Also by \eqref{eq_tree_1}, \eqref{eq_determinant_bound},
 \eqref{eq_decay_bound_final}, \eqref{eq_scale_volume_norm},
 \eqref{eq_scale_volume_norm_linear}, \eqref{eq_offspring_norm_bound},
 \eqref{eq_scale_norm_quadratic},
 \eqref{eq_artificial_parameter_scaling_again}, 
\eqref{eq_artificial_parameter_scaling_again_h} and $\alpha\ge 2^3$,
 for $n\in \N_{\ge 2}$
\begin{align*}
&|V^{end,(n)}(u,\eps\bla)|\\
&\le D_c^{n-1}
\left(\sum_{p=2}^N2^{3p}c_0^{\frac{p}{2}}\left(\frac{N}{h}\|V_p^{0-1,0}\|_{1,\infty,r}
+\eps \sum_{a\in \{1-1,1-3,2\}}\|V_p^{a,0}\|_{1,r,r''}\right)\right)\\
&\quad\cdot \left(\sum_{q=2}^N2^{3q}c_0^{\frac{q}{2}}\left(\|V_q^{0-1,0}\|_{1,\infty,r}+h\eps
\sum_{a\in \{1-1,1-3,2\}}\|V_q^{a,0}\|_{1,r,r''}
\right)
\right)^{n-1}\\
&\le D_c^{n-1}\left(\frac{N}{h}L^{-d}+3\eps\right)(L^{-d}+3h\eps)^{n-1}(2^6\alpha^{-2})^n\\
&= (N+1)h^{-1}L^{-d}(2D_cL^{-d})^{n-1}(2^6\alpha^{-2})^n.
\end{align*}
Thus, if $2D_cL^{-d}\le 1/2$,
\begin{align*}
\sum_{n=1}^{\infty}\sup_{u\in\overline{D(r)}\atop \bla\in
 \overline{D(\eps r'')}^2}|V^{end,(n)}(u,\bla)|
&\le 2(N+1)h^{-1}L^{-d}\alpha^{-2}+2^{12}
 (N+1)h^{-1}L^{-d}\alpha^{-4}\\
&\le 2^7(N+1)h^{-1}L^{-d}\alpha^{-2}\le
 2^{8}Nh^{-1}L^{-d}\alpha^{-2}.
\end{align*}
This estimation implies that $V^{end}$ is continuous in
 $\overline{D(r)}\times \overline{D(\eps r'')}^2$, analytic in
 $D(r)\times D(\eps r'')^2$ and 
\begin{align*}
\frac{h}{N}|V^{end}(u,\b0)|\le 2^8L^{-d}\alpha^{-2},\quad (\forall u\in
 \overline{D(r)}).
\end{align*}
Moreover, observe that for any $u\in D(r)$, $j\in \{1,2\}$,
\begin{align*}
\frac{\partial}{\partial \la_j}V^{end}(u,\b0)
&=\frac{1}{r'}\sum_{n=1}^{\infty}\frac{1}{(n-1)!}Tree(\{1,2,\cdots,n\},\cC_0)\\
&\quad\cdot \sum_{a\in
 \{1-1,1-3\}}V^{a,0}(u,r'\be_j)(\psi^1)
\prod_{k=2}^nV^{0-1,0}(u)(\psi^k)
\Bigg|_{\psi^{k}=0\atop(\forall
 k\in\{1,2,\cdots,n\})}.
\end{align*}
Thus, by \eqref{eq_tree_1_1}, \eqref{eq_tree_1}, \eqref{eq_determinant_bound},
 \eqref{eq_decay_bound_final}, \eqref{eq_scale_volume_norm},
 \eqref{eq_scale_volume_norm_linear}, \eqref{eq_offspring_norm_bound}
 and the assumptions $\alpha\ge 2^3$, $2^2D_cL^{-d}\le 1$,
\begin{align*}
&\left|
\frac{\partial}{\partial \la_j}V^{end}(u,\b0)-
\frac{\partial}{\partial \la_j}V^{1-3,end}(u,\b0)
\right|\\
&\le
 \frac{1}{r'}\left|Tree(\{1\},\cC_0)V^{1-1,0}(u,r'\be_j)(\psi^1)\Big|_{\psi^1=0}\right|\\
&\quad +\frac{1}{r'}\Bigg|
\sum_{n=2}^{\infty}\frac{1}{(n-1)!}Tree(\{1,2,\cdots,n\},\cC_0)\\
&\qquad\qquad \qquad\cdot \sum_{a\in
 \{1-1,1-3\}}V^{a,0}(u,r'\be_j)(\psi^1)
\prod_{k=2}^nV^{0-1,0}(u)(\psi^k)
\Bigg|_{\psi^{k}=0\atop(\forall
 k\in\{1,2,\cdots,n\})}\Bigg|\\
&\le \frac{1}{r'}\sum_{m=0}^Nc_0^{\frac{m}{2}}\|V_m^{1-1,0}\|_{1,r,r'}\\
&\quad +\frac{1}{r'}\sum_{n=2}^{\infty}D_c^{n-1}\sum_{m=2}^N2^{3m}c_0^{\frac{m}{2}}\sum_{a\in
 \{1-1,1-3\}}\|V_m^{a,0}\|_{1,r,r'}
\left(\sum_{p=2}^N2^{3p}c_0^{\frac{p}{2}}\|V_p^{0-1,0}\|_{1,\infty,r}\right)^{n-1}\\
&\le \frac{2}{r'}L^{-d}+\frac{2}{r'}\sum_{n=2}^{\infty}(D_cL^{-d})^{n-1}\le
\frac{2}{r'}(1+2D_c)L^{-d}.
\end{align*}
We can see from above that the claims of the lemma have been proved.
\end{proof}

\begin{remark}
There is no essential necessity to complete the generalized double-scale
 integration by explicitly estimating the combinatorial factors as in 
 Lemma \ref{lem_UV_without_artificial}, Lemma
 \ref{lem_UV_with_artificial}, Lemma \ref{lem_UV_final}. We did so only
 to feature the explicitness of our construction. In fact the following
 statements, which are less explicit but are sufficient to achieve the
 main goal of this paper, can be proved by shorter arguments. There
 exists a positive constant $c$ independent of any parameter such that
 if $h\ge 1$, $\alpha\ge c$, $L^d\ge cD_c$, 
\begin{itemize}
\item 
\begin{align*}
&V^{0-1,0}\in \cQ(c^{-1}c_0^{-2}\alpha^{-4}),\quad V^{0-2,0}\in
 \cR(c^{-1}c_0^{-2}\alpha^{-4}),\\
&V^{1-1,0}\in \cQ'(c^{-1}c_0^{-2}\alpha^{-4},
 c^{-1}\beta^{-1}c_0^{-2}\alpha^{-4}),\\
&V^{1-2,0}\in \cR'(c^{-1}c_0^{-2}\alpha^{-4},
 c^{-1}\beta^{-1}c_0^{-2}\alpha^{-4}),\\
&V^{1-3,0}\in \cS(c^{-1}c_0^{-2}\alpha^{-4},
 c^{-1}\beta^{-1}c_0^{-2}\alpha^{-4}),\\
&V^{2,0}\in \cW(c^{-1}c_0^{-2}\alpha^{-4},
 c^{-1}\beta^{-1}\min\{1,\beta\}c_0^{-2}\alpha^{-4}).
\end{align*}
\item $V^{end}$ is continuous in
      $\overline{D(c^{-1}c_0^{-2}\alpha^{-4})}\times
      \overline{D(c^{-1}L^{-d}h^{-1}\beta^{-1}\min\{1,\beta\}
c_0^{-2}\alpha^{-4})}^2$ and analytic in 
${D(c^{-1}c_0^{-2}\alpha^{-4})}\times
      {D(c^{-1}L^{-d}h^{-1}\beta^{-1}\min\{1,\beta\}
c_0^{-2}\alpha^{-4})}^2$.
\item 
$$
\frac{h}{N}|V^{end}(u,\b0)|\le c\alpha^{-2}L^{-d},\quad (\forall u\in \overline{D(c^{-1}c_0^{-2}\alpha^{-4})}).
$$
\item 
\begin{align*}
&\left|
\frac{\partial}{\partial \la_j}V^{end}(u,\b0)-
\frac{\partial}{\partial \la_j}V^{1-3,end}(u,\b0)
\right|\le c\beta c_0^2\alpha^4(1+D_c)L^{-d},\\
&(\forall u\in {D(c^{-1}c_0^{-2}\alpha^{-4})},\ j\in \{1,2\}).
\end{align*}
\end{itemize}
\end{remark}

\begin{remark} In practice $D_c$ will be the biggest parameter as
 $\theta$ approaches to $2\pi/\beta$. The essential benefit of Lemma
 \ref{lem_UV_final} is that the parameter $D_c$ does not affect the
 domain of analyticity with the extended coupling constant $u$. This is because
 the heavy contribution from $D_c$ was absorbed by the inverse of
 the volume factor.
\end{remark}

\section{Proof of the theorem}\label{sec_proof_theorem}

In this section we will prove Theorem \ref{thm_main_theorem}. In view of
the formulation \eqref{eq_grassmann_formulation_2_band},
\eqref{eq_grassmann_formulation_2_band_correlation} we must know
to what the Grassmann Gaussian integrals converge inside the normal Gaussian
integral as $h\to \infty$, $L\to \infty$. One part of this question will
be answered by realizing the general results of the double-scale
integration prepared in the previous section. To do so, we need to
confirm that the actual covariances satisfy the properties required
in the previous section. It follows from the double-scale integration,
especially from the bound \eqref{eq_UV_final_pressure} that the spatial
mean of logarithm of the Grassmann Gaussian integral converges to zero
in the infinite-volume limit. However, it will turn out necessary to make sure that the Grassmann Gaussian integral
itself, not the spatial mean, converges in the time-continuum,
infinite-volume limit. To prove this, which cannot be deduced from
 the results of the previous section, 
we will study detailed
convergent properties of each term of the perturbative expansion of
logarithm of the Grassmann Gaussian integral. After these preparations
we will move on to the proof of Theorem \ref{thm_main_theorem}.
To shorten formulas, we set 
$$
\Theta:=\left|\frac{\theta}{2}-\frac{\pi}{\beta}\right|
$$
throughout this section.

\subsection{Application of Pedra-Salmhofer's determinant
  bound}\label{subsec_P_S_bound}

Here we derive a uniform bound on the determinant of $C(\phi)$ by
applying Pedra-Salmhofer's determinant bound (\cite{PS}). We especially
use the general theorem \cite[\mbox{Theorem 1.3}]{PS} which is a
generalization of Gram's inequality to covariances with
time-discontinuity typically caused by time-ordering.
We restrict our
attention to what is sufficient to solve the current problem. 
The following proposition, which is a specific version of 
\cite[\mbox{Theorem 1.3}]{PS}, is in fact sufficient.

\begin{proposition}\label{prop_P_S_bound}
Let $C:(\{1,2\}\times \G\times [0,\beta))^2\to \C$. Assume that there
 is a complex Hilbert space $\cH$ and $f_j^{\ge}$, $g_j^{\ge}$, $f_j^{<}$,
 $g_j^{<}\in \Map(\{1,2\}\times\G\times \R,\cH)$ $(j=1,2)$ such that
\begin{align}
&C(\rho\bx s,\eta\by t)=1_{s\ge t}\sum_{j\in\{1,2\}}
\<f_j^{\ge}(\rho\bx s),g_j^{\ge}(\eta\by t)\>_{\cH}+
1_{s< t}\sum_{j\in\{1,2\}}
\<f_j^{<}(\rho\bx s),g_j^{<}(\eta\by t)\>_{\cH},\label{eq_P_S_representation}\\
&(\forall (\rho,\bx,s),(\eta,\by,t)\in \{1,2\}\times \G\times
 [0,\beta)),\notag
\end{align}
where $\<\cdot,\cdot\>_{\cH}$ is the inner product of $\cH$. Moreover,
 assume that there exists $D\in \R_{>0}$ such that 
\begin{align*}
&\|f_j^{\ge}(X)\|_{\cH},\ \|g_j^{\ge}(X)\|_{\cH},\
 \|f_j^{<}(X)\|_{\cH},\ 
 \|g_j^{<}(X)\|_{\cH}\le D,\\
&(\forall X\in \{1,2\}\times \G\times \R,\ j\in \{1,2\})
\end{align*}
and the maps $s\mapsto f_j^{\ge}(\rho\bx s)$,
                  $s\mapsto g_j^{\ge}(\rho\bx s)$,
                   $s\mapsto f_j^{<}(\rho\bx s)$,
                  $s\mapsto g_j^{<}(\rho\bx s)$ $(j=1,2)$ are continuous
 in $\R$ for any $\rho\in \{1,2\}$, $\bx\in \G$. Then,
\begin{align*}
&|\det(\<\bu_i,\bv_j\>_{\C^m}C(X_i,Y_j))_{1\le i,j\le n}|\le
 (4D)^{2n},\\
&(\forall m,n\in \N,\ \bu_i,\bv_i\in\C^m\text{ with
 }\|\bu_i\|_{\C^m},\|\bv_i\|_{\C^m}\le 1,\\
&\quad X_i,Y_i\in
 \{1,2\}\times\G\times [0,\beta)\ (i=1,2,\cdots,n)).
\end{align*}
\end{proposition}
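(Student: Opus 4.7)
The plan is to apply a Gram-type inequality for the determinant. Recall Hadamard's inequality in its Gram form: if $M$ is an $n\times n$ matrix with $M_{ij}=\langle F_i,G_j\rangle_{\cH'}$ in some Hilbert space $\cH'$, then $|\det M|\le \prod_{i=1}^n \|F_i\|_{\cH'}\|G_i\|_{\cH'}$. So the task reduces to realizing each matrix entry $\langle \bu_i,\bv_j\rangle_{\C^m} C(X_i,Y_j)$ as an inner product $\langle F(X_i),G(Y_j)\rangle_{\cH'}$ in a suitable Hilbert space, with $\|F(X)\|_{\cH'},\|G(Y)\|_{\cH'}\le 4D$; the bound $(4D)^{2n}$ then follows immediately.

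I would build $\cH'$ from three ingredients. A tensor factor of $\C^m$ accommodates $\langle \bu_i,\bv_j\rangle_{\C^m}$. The Hilbert space $\cH$ together with the sum over $j\in\{1,2\}$ in \eqref{eq_P_S_representation} is handled by the direct sum $\cH^{\oplus 2}$, pairing the vector $(f_1^\sharp(X),f_2^\sharp(X))$ against $(g_1^\sharp(Y),g_2^\sharp(Y))$ for each $\sharp\in\{\ge,<\}$. Finally, and crucially, an auxiliary Hilbert space $\cK=\cK^\ge\oplus\cK^<$ carries two-parameter families of continuous vector-valued maps $\alpha^\sharp,\beta^\sharp$ satisfying $\langle\alpha_s^\ge,\beta_t^\ge\rangle_{\cK^\ge}=1_{s\ge t}$ and $\langle\alpha_s^<,\beta_t^<\rangle_{\cK^<}=1_{s<t}$, with cross-pairings vanishing automatically by the orthogonal decomposition. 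Setting $\cH':=\C^m\otimes\cK\otimes\cH^{\oplus 2}$ and
$$F(X):=\bigoplus_{\sharp\in\{\ge,<\}}\bu\otimes\alpha_{s_X}^\sharp\otimes(f_1^\sharp(X),f_2^\sharp(X)),\quad G(Y):=\bigoplus_{\sharp\in\{\ge,<\}}\bv\otimes\beta_{s_Y}^\sharp\otimes(g_1^\sharp(Y),g_2^\sharp(Y)),$$
one checks directly that $\langle F(X),G(Y)\rangle_{\cH'}=\langle\bu,\bv\rangle_{\C^m}C(X,Y)$.

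The main obstacle is the third ingredient: producing the families $\alpha^\sharp,\beta^\sharp$ with norms bounded by a universal constant, say $\|\alpha_s^\sharp\|_{\cK^\sharp},\|\beta_t^\sharp\|_{\cK^\sharp}\le 2$, uniformly in $s,t\in[0,\beta)$. This cannot be achieved by translates of a single $L^2$ function, since the cross-correlation of two $L^2$ functions is continuous and cannot be a genuine Heaviside step; a genuinely two-parameter construction in an infinite-dimensional Hilbert space is required, which is exactly the content of the construction in \cite{PS} that I would adapt. The continuity hypotheses on $s\mapsto f_j^\sharp(\rho\bx s),g_j^\sharp(\rho\bx s)$ are what permit the resulting $F,G$ to be defined coherently on the whole continuous interval $[0,\beta)$ rather than only on a discrete skeleton. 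Granted the auxiliary construction, the norm estimate is routine: $\|F(X)\|_{\cH'}^2=\|\bu\|_{\C^m}^2\sum_{\sharp}\|\alpha_{s_X}^\sharp\|^2(\|f_1^\sharp(X)\|_\cH^2+\|f_2^\sharp(X)\|_\cH^2)\le 1\cdot 2\cdot 4\cdot 2D^2=16D^2$, and likewise for $G$, so Hadamard's inequality delivers the desired bound $(4D)^{2n}$.
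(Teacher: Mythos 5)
Your reduction to Gram's inequality is exactly where the argument breaks down: the auxiliary families $\alpha_s^{\ge},\beta_t^{\ge}$ in a Hilbert space $\cK^{\ge}$ with $\langle\alpha_s^{\ge},\beta_t^{\ge}\rangle_{\cK^{\ge}}=1_{s\ge t}$ and $\sup_s\|\alpha_s^{\ge}\|\cdot\sup_t\|\beta_t^{\ge}\|<\infty$ do not exist, in any Hilbert space. For any kernel of the form $K(s,t)=\langle\alpha_s,\beta_t\rangle$, the Schur multiplier $(a_{ij})\mapsto(K(s_i,t_j)a_{ij})$ has operator norm on $B(\ell^2_n)$ at most $\sup\|\alpha\|\cdot\sup\|\beta\|$, uniformly in $n$ and in the choice of points $s_i,t_j$. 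But choosing $s_i=t_i$ increasing turns $(1_{s_i\ge t_j})_{1\le i,j\le n}$ into the lower-triangular all-ones matrix, and the norm of the corresponding Schur multiplier (the main triangle projection) grows logarithmically in $n$. Hence no uniformly bounded Gram representation of the time-ordering indicator exists; this impossibility is precisely the reason the naive Gram bound fails for time-ordered fermionic covariances, i.e.\ the problem \cite{PS} was written to solve. There is no such two-parameter construction in \cite{PS} for you to adapt --- their Theorem 1.3 is proved by an entirely different mechanism.

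What the paper actually does in Appendix \ref{app_P_S_bound} is: (i) split $C$ into the four summands $1_{s\ge t}\langle f_j^{\ge}(\cdot),g_j^{\ge}(\cdot)\rangle_{\cH}$ and $1_{s<t}\langle f_j^{<}(\cdot),g_j^{<}(\cdot)\rangle_{\cH}$, $j=1,2$; (ii) for each single summand, use the continuity hypothesis to perturb the times so that they are all distinct and strictly ordered, and then prove by induction on $n$ that the determinant equals, up to sign, a time-ordered vacuum expectation value of creation and annihilation operators on the fermionic Fock space over the span of the vectors $\bu_k\otimes f_j^{\ge}(\rho_k\bx_k s_k)$ and $\bv_k\otimes g_j^{\ge}(\eta_k\by_k t_k)$; the CAR bound $\|a(f)\|=\|a(f)^*\|=\|f\|$ then gives $D^{2n}$ for each summand's minors; (iii) recombine the four summands with the Cauchy--Binet estimate of Lemma \ref{lem_application_cauchy_binet}, which is where the factor $4$ in $(4D)^{2n}$ originates. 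Your tensor factor $\C^m$ and the direct sum over $j$ are unproblematic; the irreparable gap is the bounded Gram representation of $1_{s\ge t}$, and it has to be replaced by the Fock-space argument.
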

Proposition \ref{prop_P_S_bound} is a direct implication of 
\cite[\mbox{Theorem 1.3}]{PS}. For readers' convenience we provide a
 proof for this proposition in Appendix \ref{app_P_S_bound}. In
 fact we added the continuity condition of $f_j^{\ge}$, $g_j^{\ge}$,
 $f_j^{<}$, $g_j^{<}$ with the time variable, which is not assumed in the
 original \cite[\mbox{Theorem 1.3}]{PS}, to shorten the proof.
By applying this proposition we obtain the following.

\begin{proposition}\label{prop_P_S_bound_application}
\begin{align}
&|\det(\<\bu_i,\bv_j\>_{\C^m}C(\phi)(X_i,Y_j))_{1\le i,j\le n}|\label{eq_P_S_bound_application}\\
&\le \left(
\frac{2^4}{L^d}\sum_{\bk\in\G^*}\left(
1+2\cos\left(\frac{\beta\theta}{2}\right)e^{-\beta \sqrt{e(\bk)^2+|\phi|^2}}+e^{-2\beta \sqrt{e(\bk)^2+|\phi|^2}}
\right)^{-\frac{1}{2}}\right)^n,\notag\\
&(\forall m,n\in \N,\ \bu_i,\bv_i\in\C^m\text{ with
 }\|\bu_i\|_{\C^m},\|\bv_i\|_{\C^m}\le 1,\notag\\
&\quad X_i,Y_i\in
 \{1,2\}\times\G\times [0,\beta)\ (i=1,2,\cdots,n),\ \phi\in \C).\notag
\end{align}
\end{proposition}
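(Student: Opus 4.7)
The plan is to cast $C(\phi)$ in the Gram form \eqref{eq_P_S_representation} required by Proposition \ref{prop_P_S_bound} and then invoke that proposition directly. I would start from the explicit Fourier representation \eqref{eq_covariance_2_band}, apply the spectral decomposition \eqref{eq_dispersion_matrix_diagonalization} of $E(\phi)(\bk)$ via the unitary $U(\phi)(\bk)$ of \eqref{eq_unitary_for_diagonalization}, and rewrite
\begin{align*}
C(\phi)(\rho\bx s,\eta\by t)
&= \frac{1}{L^d}\sum_{\bk\in\G^*}\sum_{\delta\in\{1,-1\}}
e^{i\<\bk,\bx-\by\>}\, U(\phi)(\bk)(\rho,\delta)\,\overline{U(\phi)(\bk)(\eta,\delta)}\\
&\quad\cdot e^{(s-t)\lambda_\delta(\bk)}\bigl(1_{s\ge t}(1+e^{\beta\lambda_\delta(\bk)})^{-1} - 1_{s<t}(1+e^{-\beta\lambda_\delta(\bk)})^{-1}\bigr),
\end{align*}
with $\lambda_\delta(\bk):=i\theta/2+\delta\,e(\phi)(\bk)$.

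Next, I would factor the time-dependent kernel in each region $\{s\ge t\}$ and $\{s<t\}$ as a product of an $s$-dependent factor and a $t$-dependent factor, each uniformly bounded on $[0,\beta)$. Take the Hilbert space $\cH := \ell^2(\G^*\times\{1,-1\})$ with inner product $\<x,y\>_\cH := L^{-d}\sum_{\bk,\delta}\overline{x(\bk,\delta)}\,y(\bk,\delta)$. For the $s \ge t$ piece with $\delta e(\phi)(\bk)\le 0$, I would set
\begin{align*}
f_1^\ge(\rho\bx s)(\bk,\delta) &:= \overline{U(\phi)(\bk)(\rho,\delta)}\,e^{-i\<\bk,\bx\>}\,e^{s\overline{\lambda_\delta(\bk)}}\,\overline{(1+e^{\beta\lambda_\delta(\bk)})^{-1/2}},\\
g_1^\ge(\eta\by t)(\bk,\delta) &:= U(\phi)(\bk)(\eta,\delta)\,e^{i\<\bk,\by\>}\,e^{-t\lambda_\delta(\bk)}\,(1+e^{\beta\lambda_\delta(\bk)})^{-1/2},
\end{align*}
and for $\delta e(\phi)(\bk)>0$ the analogous version obtained by shifting $s\mapsto s-\beta$ in the exponential and absorbing the compensating factor $e^{\beta\lambda_\delta(\bk)}$ in the denominator (which turns $(1+e^{\beta\lambda_\delta})$ into $(1+e^{-\beta\lambda_\delta})$, now of modulus of order one). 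The second index $j\in\{1,2\}$ accommodates this sign split on $\delta e(\phi)(\bk)$; the symmetric treatment of $s<t$ using the other denominator supplies $f_j^<, g_j^<$. These functions are continuous in $s$.

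With this setup, the identity $|1+e^{-\beta(i\theta/2+\sqrt{e(\bk)^2+|\phi|^2})}|^2 = 1+2\cos(\beta\theta/2)e^{-\beta\sqrt{e(\bk)^2+|\phi|^2}}+e^{-2\beta\sqrt{e(\bk)^2+|\phi|^2}}$ together with the unitarity relation $\sum_\delta |U(\phi)(\bk)(\rho,\delta)|^2 = 1$ yields, after case-by-case bounding of the time-exponentials against the denominator,
\begin{align*}
\|f_j^\ge\|_\cH^2,\ \|g_j^\ge\|_\cH^2,\ \|f_j^<\|_\cH^2,\ \|g_j^<\|_\cH^2 \le \frac{1}{L^d}\sum_{\bk\in\G^*}\bigl(1+2\cos(\beta\theta/2)e^{-\beta\sqrt{e(\bk)^2+|\phi|^2}}+e^{-2\beta\sqrt{e(\bk)^2+|\phi|^2}}\bigr)^{-1/2}=:D^2.
\end{align*}
Invoking Proposition \ref{prop_P_S_bound} now gives the bound $(4D)^{2n}=(16D^2)^n$, which is exactly \eqref{eq_P_S_bound_application}.

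The main obstacle is the factorization in the second paragraph: no single splitting works uniformly in the sign of $\Re\lambda_\delta(\bk)=\delta e(\phi)(\bk)$, because for the positive eigenvalue $e^{s\lambda_\delta}$ grows and the denominator $(1+e^{\beta\lambda_\delta})^{1/2}$ must absorb it, while for the negative eigenvalue the growth is in $e^{-t\lambda_\delta}$ and must be absorbed by the corresponding denominator after shifting by $\beta$. This forces the case split indexed by $j\in\{1,2\}$ and must be carried out so that \emph{both} sides of each inner product are bounded by the same $D$. Once the bookkeeping is done, the continuity hypothesis of Proposition \ref{prop_P_S_bound} is immediate from the smoothness of the exponentials, and the constant $16$ emerges directly from the $(4D)^{2n}$ output of that proposition.
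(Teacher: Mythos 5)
There is a genuine gap, and it sits exactly at the point you flag as the ``main obstacle'': the rank-one factorization you propose cannot be made to work for the eigenvalue branch on which $e^{(s-t)\lambda_\delta(\bk)}$ \emph{decays} in $s-t$. Concretely, in your own explicit case ($s\ge t$, $\delta\,e(\phi)(\bk)\le 0$, i.e.\ $\delta=-1$, $\Re\lambda_{-1}=-\sqrt{e(\bk)^2+|\phi|^2}=:-A$) the factor $e^{-t\lambda_{-1}(\bk)}$ in $g_1^{\ge}(\eta\by t)$ has modulus $e^{tA}$, which is as large as $e^{\beta A}$ and is unbounded in $\bk$ and $\phi$; the denominator $|1+e^{\beta\lambda_{-1}}|^{-1/2}$ stays of order one and absorbs nothing, so $\|g_1^{\ge}(\eta\by t)\|_{\cH}\le D$ fails. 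No shift repairs this: writing $e^{-(s-t)A}=e^{-(s-a)A}\cdot e^{-(a-t)A}$ requires $a\le 0$ for the first factor to be bounded on $s\in[0,\beta)$ and $a\ge\beta$ for the second to be bounded on $t\in[0,\beta)$. More strongly, any factorization $K(s,t)=F(s)G(t)$ of $K(s,t)=c\,e^{-(s-t)A}$ on $\{0\le t\le s<\beta\}$ forces $\sup_s|F|\cdot\sup_t|G|\gtrsim e^{\beta A}$, so on the Hilbert space $\ell^2(\G^*\times\{1,-1\})$, which carries no extra variable, the hypotheses of Proposition \ref{prop_P_S_bound} cannot be met. (Your $\beta$-shift trick does work for the \emph{growing} branch, $s\ge t$ with $\delta=+1$, which is why that case looks fine; the symmetric failure occurs for $s<t$ with $\delta=+1$.)

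The missing idea — and the heart of the paper's proof, taken from Pedra--Salmhofer — is to enlarge the Hilbert space to $L^2(\G^*\times\R)$ and use the integral representation
\begin{align*}
e^{-\tau A}=\frac{A}{\pi}\int_{\R}dv\,\frac{e^{i\tau v}}{v^2+A^2},\qquad \tau\ge 0,\ A>0,
\end{align*}
which factorizes the decaying exponential as an $L^2(dv)$ inner product of $\sqrt{A/\pi}\,e^{-isv}/(iv+A)$ and $\sqrt{A/\pi}\,e^{-itv}/(iv+A)$, each of $L^2$-norm exactly $1$ uniformly in the time variable \emph{and} in $A$. The sign indicator $1_{a\Re e_{j,\eps}(\bk)>0}$ in the paper's vectors $f^{j,a}_{\rho\bx s}$, $g^{j,a}_{\rho\bx s}$ then routes each branch to whichever of $s-t$ or $\beta-(s-t)$ (resp.\ $t-s$ or $\beta-(t-s)$) is nonnegative against the correct sign of the real part, and the denominators $|1+e^{-\beta a e_{j,\eps}(\bk)}|$ produce precisely the quantity $(1+2\cos(\beta\theta/2)e^{-\beta\sqrt{e(\bk)^2+|\phi|^2}}+e^{-2\beta\sqrt{e(\bk)^2+|\phi|^2}})^{-1/2}$ in the norm bound. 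A secondary omission: when $e(\phi)(\bk)=0$ the real part of the eigenvalue vanishes, the sign split is ill-defined and the representation degenerates, which is why the paper first perturbs $e_j(\bk)\mapsto e_j(\bk)+\eps$ and passes to the limit $\eps\searrow 0$ at the end. Your identification of the target quantity via $|1+e^{-\beta(i\theta/2+\sqrt{e(\bk)^2+|\phi|^2})}|^2$ and the final invocation of Proposition \ref{prop_P_S_bound} are correct, but the decomposition feeding into it must be rebuilt along these lines.
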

\begin{remark} In the next subsection we will derive a $\phi$-independent
 upper bound on 
$$
\frac{1}{L^d}\sum_{\bk\in\G^*}\left(
1+2\cos\left(\frac{\beta\theta}{2}\right)e^{-\beta \sqrt{e(\bk)^2+|\phi|^2}}+e^{-2\beta \sqrt{e(\bk)^2+|\phi|^2}}\right)^{-\frac{1}{2}}.
$$
See \eqref{eq_bare_covariance_integrand_bound}.
\end{remark}

\begin{remark} We need to find a representation of the form
 \eqref{eq_P_S_representation}. Such a representation was constructed
 for one-band models with a real-valued dispersion relation in
 \cite[\mbox{Subsection 4.1}]{PS}. It is straightforward to modify the
 construction of \cite[\mbox{Subsection 4.1}]{PS} to fit in our 2-band
 model with the complex-valued dispersion relation. We should also
 mention that an extension of the construction of \cite[\mbox{Subsection
 4.1}]{PS} to one-band models with a complex-valued dispersion relation
 was reported in \cite[\mbox{Subsection V.A}]{K10}. Though it is close
 to both \cite[\mbox{Subsection
 4.1}]{PS} and \cite[\mbox{Subsection V.A}]{K10}, we will provide a
 concrete representation of the form \eqref{eq_P_S_representation} for our 2-band
 model for completeness of the paper.
\end{remark}

\begin{proof}[Proof of Proposition \ref{prop_P_S_bound_application}]
Define the functions $e_j:\G^*\to \C$ $(j=1,2)$ by 
$e_j(\bk):=i\frac{\theta}{2}+(-1)^{1_{j=2}}e(\phi)(\bk)$, where $e(\phi)(\cdot)$
 is the function defined in \eqref{eq_full_eigen_value}. Since $\G^*$ is
 the finite set, for any sufficiently small $\eps \in\R_{>0}$, 
$e_j(\bk)+\eps\neq 0$ $(\forall \bk\in\G^*)$. Set
 $e_{j,\eps}(\bk):=e_j(\bk)+\eps$ and 
\begin{align}
C_{\eps}(\rho\bx s,\eta \by t):=\frac{1}{L^d}\sum_{\bk\in
 \G^*}\sum_{j\in \{1,2\}}&e^{i\<\bk,\bx-\by\>}U(\phi)(\bk)(\rho,j)U(\phi)(\bk)^*(j,\eta)\label{eq_covariance_diagonalization_inside}\\
&\cdot e^{(s-t)e_{j,\eps}(\bk)}\left(\frac{1_{s\ge t}}{1+e^{\beta
 e_{j,\eps}(\bk)}}-\frac{1_{s<t}}{1+e^{-\beta
 e_{j,\eps}(\bk)}}\right),\notag
\end{align}
where $U(\phi)(\bk)$ is the $2\times 2$ matrix defined in
 \eqref{eq_unitary_for_diagonalization}.
Let us find a determinant bound of $C_{\eps}$ and send $\eps\searrow 0$
 afterward. We can see from \eqref{eq_covariance_2_band_pre} that 
$\lim_{\eps \searrow 0}C_{\eps}(\bX)=C(\phi)(\bX)$ $(\forall \bX\in
 I_0^2)$. 

Remark that $L^2(\G^*\times \R)$ is the Hilbert space whose inner
 product $\<\cdot,\cdot\>_{L^2(\G^*\times \R)}$ is defined by 
$$
\<f,g\>_{L^2(\G^*\times \R)}:=\frac{1}{L^d}\sum_{\bk\in\G^*}\int_{\R}dv \overline{f(\bk,v)}g(\bk,v).
$$
For $(\rho,\bx,s)\in \{1,2\}\times \G\times \R$, $j\in \{1,2\}$, $a\in
 \{1,-1\}$ we define $f_{\rho\bx s}^{j,a}$, $g_{\rho\bx s}^{j,a}\in
 L^{2}(\G^*\times \R)$ by 
\begin{align*}
f_{\rho\bx s}^{j,a}(\bk,v)
:=&1_{a\Re e_{j,\eps}(\bk)>0}
 \overline{U(\phi)(\bk)(\rho,j)}e^{-i\<\bk,\bx\>-is(a\Im
 e_{j,\eps}(\bk)-v)}\\
&\cdot\frac{1+e^{-\beta a e_{j,\eps}(\bk)}}{|1+e^{-\beta a
 e_{j,\eps}(\bk)}|^{\frac{3}{2}}}\sqrt{\frac{|\Re
 e_{j,\eps}(\bk)|}{\pi}}
 \frac{1}{iv+\Re e_{j,\eps}(\bk)},\\
g_{\rho\bx s}^{j,a}(\bk,v):=&1_{a\Re e_{j,\eps}(\bk)>0}
 \overline{U(\phi)(\bk)(\rho,j)}e^{-i\<\bk,\bx\>-is(a\Im
 e_{j,\eps}(\bk)-v)}\\
&\cdot \frac{1}{|1+e^{-\beta a
 e_{j,\eps}(\bk)}|^{\frac{1}{2}}}\sqrt{\frac{|\Re
 e_{j,\eps}(\bk)|}{\pi}}
\frac{1}{iv+\Re e_{j,\eps}(\bk)}.
\end{align*}
Then, let us define the maps $f_j^{\ge}$, $g_j^{\ge}$, $f_j^<$,
 $g_j^<\in \Map(\{1,2\}\times \G\times \R, L^2(\G^*\times \R))$
 $(j=1,2)$ by
\begin{align*}
&f_j^{\ge}(\rho,\bx,s)=f_j^{<}(\rho,\bx,s):=f_{\rho\bx s}^{j,1}+f_{\rho
 \bx (-s)}^{j,-1},\\
&g_j^{\ge}(\rho,\bx,s):=g_{\rho\bx (\beta+s)}^{j,1}+g_{\rho
 \bx (-s)}^{j,-1},\quad g_j^{<}(\rho,\bx,s):=-g_{\rho\bx s}^{j,1}-g_{\rho
 \bx (\beta-s)}^{j,-1},\\
&(\forall j\in \{1,2\},\ (\rho,\bx,s)\in \{1,2\}\times \G\times
 \R).
\end{align*}
By using the formula
$$
e^{-tA}=\frac{A}{\pi}\int_{\R}dv\frac{e^{itv}}{v^2+A^2},\quad (\forall
 t\in \R_{\ge 0},\ A\in \R_{>0})
$$
and the uniform bound $|U(\phi)(\bk)(\rho,\eta)|\le 1$ $(\forall \bk\in
 \G^*,\ \rho,\eta\in \{1,2\})$ one can check that 
\begin{align}
&C_{\eps}(\rho\bx s,\eta\by t)=1_{s\ge t}\sum_{j\in\{1,2\}}
\<f_j^{\ge}(\rho\bx s),g_j^{\ge}(\eta\by t)\>_{L^2(\G^*\times \R)}\label{eq_perturbed_covariance_Gram_representation}\\
&\qquad\qquad\qquad\quad
+
1_{s< t}\sum_{j\in\{1,2\}}
\<f_j^{<}(\rho\bx s),g_j^{<}(\eta\by t)\>_{L^2(\G^*\times \R)},\notag\\
&(\forall (\rho,\bx,s),(\eta,\by,t)\in \{1,2\}\times \G\times
 [0,\beta)),\notag\\
&\|f_j^{\ge}(X)\|_{L^2(\G^*\times \R)},\
 \|g_j^{\ge}(X)\|_{L^2(\G^*\times \R)},\ \|f_j^{<}(X)\|_{L^2(\G^*\times
 \R)},\ \|g_j^{<}(X)\|_{L^2(\G^*\times
 \R)}\label{eq_each_vector_norm_bound}\\
&\le \left(
\frac{1}{L^d}\sum_{\bk\in\G^*}\left(
1+2\cos\left(\frac{\beta\theta}{2}\right)e^{-\beta |(-1)^{1_{j=2}}e(\phi)(\bk)+\eps|}+e^{-2\beta |(-1)^{1_{j=2}}e(\phi)(\bk)+\eps|}
\right)^{-\frac{1}{2}}\right)^{\frac{1}{2}},\notag\\
&(\forall X\in \{1,2\}\times \G\times
 \R).\notag
\end{align}
It is clear that $f_j^{\ge},g_j^{\ge},f_j^{<},g_j^{<}$ $(j=1,2)$ are continuous
 with respect to the time variable as the maps from $\R$ to $L^2(\G^*\times \R)$.
Here we can apply Proposition \ref{prop_P_S_bound} to the perturbed
 matrix $C_{\eps}$. Then, by sending $\eps\searrow 0$ we obtain the
 claimed bound.
\end{proof}
  
\subsection{Completion of the double-scale
  integration}\label{subsec_double_scale}

The analysis of the previous section was constructed on the basic
assumptions on the
two generalized covariances. We have to demonstrate that the actual full covariance
can be decomposed into a sum of 2 covariances and each of them satisfies
the required bound properties. Our plan is to reformulate the full
covariance into a sum over the Matsubara frequency and let $\cC_0$ be one
portion with only one Matsubara frequency closest to $\theta/2$ and let
$\cC_1$ be the one with the rest of the Matsubara frequencies. Concerning
the determinant bound, Gram's inequality applies to $\cC_0$, while it
does not to $\cC_1$. However, since the Pedra-Salmhofer's type
determinant bound obtained in the previous subsection applies to
$\cC_0+\cC_1$, we can derive the determinant bound on $\cC_1$ by
decomposing $\cC_1$ as $(\cC_0+\cC_1)-\cC_0$.
 In order to derive the
$L^1$-type norm bounds, we introduce a family of scale-dependent UV
cut-off and estimate the norm of scale-dependent covariances with the
Matsubara UV cut-off. This is a normal technique used in multi-scale analysis
over the Matsubara frequency. Since the $L^1$-type norm bound of the
covariance with UV cut-off is summable with the scale
index, we can obtain an upper bound on the norm of $\cC_1$. 

The momentum variable dual to the time variable is the Matsubara
frequency $\frac{\pi}{\beta}(2\Z+1)$. Since we discretized $[0,\beta)$ by the
step size $\frac{1}{h}$, we automatically have a cut-off in the infinite set
$\frac{\pi}{\beta}(2\Z+1)$. Set
$$
\cM_h:=\left\{\o\in \frac{\pi}{\beta}(2\Z+1)\ \big|\ |\o|<\pi h
\right\}.
$$
To begin with, let us reformulate the restriction of $C(\phi)(\cdot)$ into a sum over $\cM_h$.

\begin{lemma}\label{lem_covariance_matsubara_sum}
\begin{align}
&C(\phi)(\rho\bx s,\eta \by t)\label{eq_covariance_matsubara_sum}\\
&=\frac{1}{\beta L^d}\sum_{\bk\in\G^*}\sum_{\o\in
 \cM_h}e^{i\<\bk,\bx-\by\>+i\o(s-t)}
h^{-1}(I_2-e^{-\frac{i}{h}(\o-\frac{\theta}{2})I_2+\frac{1}{h}E(\phi)(\bk)})^{-1}(\rho,\eta),\notag\\
&(\forall (\rho,\bx,s),(\eta,\by,t)\in \{1,2\}\times \G\times
 [0,\beta)_h).\notag
\end{align}
\end{lemma}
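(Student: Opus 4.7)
My plan is to reduce the $2\times 2$ matrix identity \eqref{eq_covariance_matsubara_sum} to a scalar Matsubara-sum identity by diagonalizing $i\frac{\theta}{2}I_2+E(\phi)(\bk)$, and then to establish the scalar identity by computing discrete Fourier coefficients on the antiperiodic time lattice $[0,\beta)_h$. Concretely, invoking the unitary $U(\phi)(\bk)$ of \eqref{eq_unitary_for_diagonalization} and the diagonalization \eqref{eq_dispersion_matrix_diagonalization}, one has $i\frac{\theta}{2}I_2+E(\phi)(\bk)=U(\phi)(\bk)\,\mathrm{diag}(e_1(\bk),e_2(\bk))\,U(\phi)(\bk)^*$ with $e_j(\bk):=i\frac{\theta}{2}+(-1)^{1_{j=2}}e(\phi)(\bk)$. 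Since $e^{(s-t)(i\frac{\theta}{2}I_2+E(\phi)(\bk))}$ and $(I_2-e^{-\frac{i}{h}(\o-\frac{\theta}{2})I_2+\frac{1}{h}E(\phi)(\bk)})^{-1}$ are analytic functions of $i\frac{\theta}{2}I_2+E(\phi)(\bk)$, conjugating by $U(\phi)(\bk)$ splits the matrix identity into two scalar identities, one for each eigenvalue $A=e_j(\bk)$.

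The scalar identity to prove is
\begin{align*}
e^{(s-t)A}\left(\frac{1_{s\ge t}}{1+e^{\beta A}}-\frac{1_{s<t}}{1+e^{-\beta A}}\right)
=\frac{1}{\beta h}\sum_{\o\in\cM_h}\frac{e^{i\o(s-t)}}{1-e^{-i\o/h+A/h}},\qquad s,t\in[0,\beta)_h.
\end{align*}
I would prove this by Fourier inversion on the discrete antiperiodic torus. Define $f:[0,\beta)_h\to\C$ by $f(u):=e^{uA}/(1+e^{\beta A})$ and extend $f$ $\beta$-antiperiodically to $\frac{1}{h}\Z/2\beta\Z$; a direct check using $e^{\beta A}/(1+e^{\beta A})=1/(1+e^{-\beta A})$ shows that the LHS equals $f(s-t)$ for both cases $s\ge t$ and $s<t$. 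Since $\{e^{i\o u}\}_{\o\in\cM_h}$ forms an orthogonal basis of $\beta$-antiperiodic functions on the discrete torus (its cardinality $\beta h$ matches the dimension of the function space), Fourier inversion gives $f(s-t)=\frac{1}{\beta}\sum_{\o\in\cM_h}e^{i\o(s-t)}\hat f(\o)$, and a geometric sum together with $e^{-i\o\beta}=-1$ for $\o\in\frac{\pi}{\beta}(2\Z+1)$ yields
\begin{align*}
\hat f(\o)=\frac{1}{h(1+e^{\beta A})}\sum_{k=0}^{\beta h-1}e^{(A-i\o)k/h}
=\frac{-(1+e^{\beta A})}{h(1+e^{\beta A})(e^{(A-i\o)/h}-1)}
=\frac{1}{h(1-e^{-i\o/h+A/h})},
\end{align*}
which is exactly what the scalar identity asserts.

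The only non-routine point is verifying that $1-e^{-i\o/h+A/h}\neq 0$ for all $\o\in\cM_h$, which is needed both for the Matsubara sum to be well-defined and for the geometric-sum step to be legitimate. For $A=e_j(\bk)$ with $\theta\in[0,2\pi/\beta)$, vanishing of the denominator would force $\beta\theta/2$ to lie in $\pi(2\Z+1)+2\pi h\beta\Z$, which is disjoint from $[0,\pi)$; hence the denominator is nonzero. No further obstacles are anticipated, and the remaining verifications are routine.
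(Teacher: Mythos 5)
Your proposal is correct and follows essentially the same route as the paper: diagonalize $i\frac{\theta}{2}I_2+E(\phi)(\bk)$ via \eqref{eq_dispersion_matrix_diagonalization} and reduce \eqref{eq_covariance_matsubara_sum} to the scalar Matsubara-sum identity for each eigenvalue $A=i\frac{\theta}{2}\pm e(\phi)(\bk)$. The only difference is that the paper simply cites \cite[Appendix C]{K9} for that scalar identity, whereas you prove it by discrete antiperiodic Fourier inversion and a geometric sum (and your non-vanishing check of the denominators correctly recovers the paper's hypothesis $A\notin i\frac{\pi}{\beta}(2\Z+1)$), so your argument is self-contained where the paper's is not.
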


\begin{proof}
One can derive \eqref{eq_covariance_matsubara_sum} by using
 \eqref{eq_dispersion_matrix_diagonalization}, \eqref{eq_covariance_2_band_pre} and the equality
\begin{align*}
&e^{sA}\left(\frac{1_{s\ge 0}}{1+e^{\beta A}}- \frac{1_{s< 0}}{1+e^{-\beta A}}
\right)=\frac{1}{\beta}\sum_{\o\in \cM_h}\frac{e^{i\o
 s}}{h(1-e^{-i\frac{\o}{h}+\frac{A}{h}})},\\
&\left(\forall
 s\in\left\{-\beta,-\beta+\frac{1}{h},\cdots,\beta-\frac{1}{h}\right\},\ 
A\in
 \C\big\backslash i\frac{\pi}{\beta}(2\Z+1)\right).
\end{align*}
See \cite[\mbox{Appendix C}]{K9} for the proof of the above formula.
\end{proof}

Let us take a function $\chi\in C^{\infty}(\R)$ satisfying that 
\begin{align*}
&\chi(x)=1,\quad (\forall x\in (-\infty,1]),\\
&\chi(x)=0,\quad (\forall x\in [2,\infty)),\\
&\chi(x)\in (0,1),\quad (\forall x\in (1,2)), \\
&\frac{d}{dx}\chi(x)\le 0,\quad (\forall x\in\R).
\end{align*}
We do not need more detailed information on the function $\chi$. 
See e.g. \\
\cite[\mbox{Problem II.6. Solution}]{FKT} for an explicit
construction of cut-off functions of this type. Let us take the
parameter $M$ from $[2\pi,\infty)$. With the aim of dealing with small
as well as large $\beta$ at the same time, we set the smallest scale of
cut-off to be $\beta$-dependent, which is the idea implemented in
\cite[\mbox{Section 3}]{K14}. We define the function $\chi^M:\R\to \R$
by 
$$
\chi^M(x):=\chi\left(\frac{x-M}{M^2-M}+1\right).
$$
Note that 
\begin{align*}
&\chi^M(x)=1,\quad (\forall x\in (-\infty,M]),\\
&\chi^M(x)=0,\quad (\forall x\in [M^2,\infty)),\\
&\chi^M(x)\in (0,1),\quad (\forall x\in (M,M^2)), \\
&\frac{d}{dx}\chi^M(x)\le 0,\quad (\forall x\in\R).
\end{align*} 
For $h\in \frac{2}{\beta}\N$, set 
$$
N_h:=\left\lfloor \frac{\log(2h)}{\log M}\right\rfloor,\quad
N_{\beta}:=\max\left\{\left\lfloor \frac{\log(1/\beta)}{\log M}\right\rfloor+1,1
\right\},
$$ 
where $\lfloor x\rfloor$ denotes the largest integer which does not
exceed $x$ for $x\in\R$.
We want $N_h$ to be larger than $N_{\beta}$. One can find a sufficient
condition as follows. 
\begin{lemma}\label{lem_basic_h_largeness}
If $h\ge \frac{1}{2}\max\{1,\beta^{-1}\}M^2$, $N_h\ge N_{\beta}+1$.
\end{lemma}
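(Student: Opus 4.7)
The plan is to split the argument according to which branch of the maximum in the definition of $N_\beta$ is active, which in turn is controlled by whether $\beta\ge 1$ or $\beta<1$. In each case I would directly take logarithms of the hypothesis $h\ge \frac{1}{2}\max\{1,\beta^{-1}\}M^2$, divide by $\log M>0$, and apply the floor function using the identity $\lfloor n+x\rfloor = n+\lfloor x\rfloor$ for $n\in\Z$.

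First, if $\beta\ge 1$ then $\log(1/\beta)/\log M\le 0$, so $\lfloor \log(1/\beta)/\log M\rfloor+1\le 1$, which forces $N_\beta=1$. In this case $\max\{1,\beta^{-1}\}=1$, so the hypothesis reads $2h\ge M^2$, and monotonicity of $\log$ yields $\log(2h)/\log M\ge 2$, hence $N_h\ge 2=N_\beta+1$.

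Second, if $\beta<1$ then $\log(1/\beta)/\log M>0$, so $\lfloor \log(1/\beta)/\log M\rfloor+1\ge 1$, hence $N_\beta=\lfloor \log(1/\beta)/\log M\rfloor+1$. Now $\max\{1,\beta^{-1}\}=\beta^{-1}$, so the hypothesis becomes $2h\beta\ge M^2$, and taking logarithms gives
\begin{equation*}
\frac{\log(2h)}{\log M} \ge 2 + \frac{\log(1/\beta)}{\log M}.
\end{equation*}
Applying $\lfloor\cdot\rfloor$ and using $\lfloor 2+x\rfloor = 2+\lfloor x\rfloor$,
\begin{equation*}
N_h \ge 2 + \left\lfloor \frac{\log(1/\beta)}{\log M}\right\rfloor = N_\beta+1,
\end{equation*}
as desired.

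There is no real obstacle here; the only subtlety is the case split ensuring that the floor identity is applied only to a nonnegative quantity in the second case, and that the active branch of $N_\beta$ matches the active branch of $\max\{1,\beta^{-1}\}$, both of which switch precisely at $\beta=1$. Since both cases are covered and include the boundary $\beta=1$ (falling into the first case above), the lemma follows.
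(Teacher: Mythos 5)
Your proof is correct; the paper states this lemma without proof, and your case split on $\beta\ge 1$ versus $\beta<1$ together with taking logarithms and using monotonicity of the floor and the identity $\lfloor n+x\rfloor=n+\lfloor x\rfloor$ is exactly the routine verification the paper leaves implicit. The one subtlety worth having checked — that the active branch of $N_\beta$ and of $\max\{1,\beta^{-1}\}$ align at $\beta=1$ — you have handled correctly.
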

\noindent Since we will need the condition $h\ge 4d$ later, let us assume from now
that 
\begin{align}
h\ge \max\left\{\frac{1}{2}\max\{1,\beta^{-1}\}M^2,4d\right\}.\label{eq_basic_h_largeness}
\end{align}
It follows that 
\begin{align}
&1_{\beta\ge 1}M+1_{\beta<1}\beta^{-1}\le M^{N_{\beta}}\le
 \max\{1,\beta^{-1}\}M.\label{eq_M_beta_relation}\\
&M^l\le 2h,\quad (\forall l\in \{N_{\beta},
 N_{\beta}+1,\cdots,N_h\}).\label{eq_M_h_relation}
\end{align}
Then, let us define the cut-off functions $\chi_l\in C^{\infty}(\R)$
$(l=N_{\beta},N_{\beta}+1,\cdots,N_h)$ by 
\begin{align*}
&\chi_{N_{\beta}}(\o):=\chi^M(M^{-N_{\beta}}h|1-e^{i\frac{\o}{h}}|),\\
&\chi_l(\o):=\chi^M(M^{-l}h|1-e^{i\frac{\o}{h}}|)-\chi^M(M^{-(l-1)}h|1-e^{i\frac{\o}{h}}|),\\
&(\forall l\in \{N_{\beta}+1,N_{\beta}+2,\cdots,N_h\}).
\end{align*}
It follows from the inequalities $h|1-e^{i\frac{\o}{h}}|\le 2h\le
M^{N_h+1}$ that $\chi^M(M^{-N_h}h|1-e^{i\frac{\o}{h}}|)=1$ $(\forall
\o\in\R)$. Thus,
\begin{align}
\sum_{l=N_{\beta}}^{N_h}\chi_l(\o)=1,\quad (\forall
 \o\in\R).\label{eq_partition_unity}
\end{align}
The values of the cut-off functions are summarized as follows.
\begin{align}\label{eq_support_property_cut_off}
&\chi_{N_{\beta}}(\o)=\left\{\begin{array}{ll}1 & \text{if
		       }h|1-e^{i\frac{\o}{h}}|\le M^{N_{\beta}+1},\\
                   \in (0,1) & \text{if
		    }M^{N_{\beta}+1}<h|1-e^{i\frac{\o}{h}}|<M^{N_{\beta}+2},\\
0 & \text{if  }h|1-e^{i\frac{\o}{h}}|\ge M^{N_{\beta}+2},\end{array}
\right.\\
&\chi_{l}(\o)=\left\{\begin{array}{ll}0 & \text{if
		       }h|1-e^{i\frac{\o}{h}}|\le M^{l},\\
                   \in (0,1] & \text{if
		    }M^{l}<h|1-e^{i\frac{\o}{h}}|<M^{l+2},\\
0 & \text{if  }h|1-e^{i\frac{\o}{h}}|\ge M^{l+2},\end{array}
\right.\notag\\
&(\forall \o\in\R,\ l\in \{N_{\beta}+1,N_{\beta}+2,\cdots,N_h\}).\notag
\end{align}
We show a couple of necessary properties in the following lemma. To be
correct, we should remark that the lemma holds for any $\beta\in
\R_{>0}$.
\begin{lemma}\label{lem_cut_off_support}
\begin{enumerate}[(i)]
\item\label{item_cut_off_support}
There exists a positive constant $c$ independent of any parameter such
     that
\begin{align*}
&\frac{1}{\beta}\sum_{\o\in \cM_h}1_{\chi_l(\o+x)\neq 0}\le c
 M^{l+2},\quad (\forall l\in \{N_{\beta},N_{\beta}+1,\cdots,N_h\},\ x\in \R).
\end{align*}
\item\label{item_cut_off_trivial_remark}
If $\o\in [-\pi h,\pi h]\cap \supp \chi_l(\cdot)$ for some $l\in
     \{N_{\beta}+1,N_{\beta}+2,\cdots,N_h\}$, then
     $|\o-\frac{\theta}{2}|\ge \frac{1}{2}|\o|$, $(\forall \theta\in
     [0,\frac{2\pi}{\beta}))$.
\end{enumerate}
\end{lemma}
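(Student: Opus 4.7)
Both statements follow from elementary analysis of the support of the cut-off functions in the Matsubara variable, combined with the identity
$$h|1-e^{iy/h}|=2h\left|\sin\left(\frac{y}{2h}\right)\right|,$$
the bounds $\tfrac{2}{\pi}|t|\le|\sin t|\le|t|$ for $t\in[-\pi/2,\pi/2]$, the scale inequalities \eqref{eq_M_beta_relation}, \eqref{eq_M_h_relation}, and the standing assumption $M\ge 2\pi$. The plan is first to convert the cut-off support conditions in \eqref{eq_support_property_cut_off} into size constraints on $|\omega+x|$ modulo $2\pi h$, and then exploit the fact that the Matsubara set $\cM_h$ has spacing $2\pi/\beta$.

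For (i), I would observe that $y\mapsto h|1-e^{iy/h}|$ is $2\pi h$-periodic and, restricted to the fundamental period $(-\pi h,\pi h]$, equals $2h|\sin(y/(2h))|$, which is monotone increasing in $|y|$. The support property \eqref{eq_support_property_cut_off} forces $h|1-e^{i(\omega+x)/h}|<M^{l+2}$ whenever $\chi_l(\omega+x)\neq 0$, and the lower bound $\sin t\ge\tfrac{2}{\pi}t$ on $[0,\pi/2]$ then confines $\omega+x\pmod{2\pi h}$ to a symmetric set of total length at most $\pi M^{l+2}$. Within $(-\pi h,\pi h]$ the translate by $-x$ may split into at most two subintervals, but its combined length is unchanged. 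Counting Matsubara frequencies in an interval of length $L$ by $\lfloor L\beta/(2\pi)\rfloor+1$ yields the bound $M^{l+2}/2+2\beta^{-1}$ for $\beta^{-1}\sum 1_{\chi_l(\omega+x)\neq 0}$. The additive term $2\beta^{-1}$ is absorbed into $cM^{l+2}$ by splitting into the cases $\beta\ge 1$ and $\beta<1$ and using \eqref{eq_M_beta_relation} to bound $M^{N_\beta}$ from below; this yields a universal constant $c$.

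For (ii), the support property \eqref{eq_support_property_cut_off} with $l\ge N_\beta+1$ gives $h|1-e^{i\omega/h}|>M^l$, and restricting to $\omega\in[-\pi h,\pi h]$ the trivial inequality $|\sin t|\le|t|$ immediately yields $|\omega|>M^l$. A short case analysis using \eqref{eq_M_beta_relation} together with $M\ge 2\pi$ shows that $M^{N_\beta+1}\ge 2\pi/\beta$ in both regimes $\beta\ge 1$ and $\beta<1$, whence $|\omega|>2\pi/\beta>\theta$. This implies $|\theta/2|\le|\omega|/2$ and the reverse triangle inequality $|\omega-\theta/2|\ge|\omega|-|\theta/2|\ge|\omega|/2$ finishes the proof. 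No substantive obstacle arises: the only care required is the bookkeeping to verify the scale inequalities uniformly in $\beta$ across the two regimes, and the routine handling of the possible wrap-around of the support of $\chi_l(\cdot+x)$ across the boundary of the fundamental period $(-\pi h,\pi h]$.
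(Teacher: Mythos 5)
Your proposal is correct and follows essentially the same route as the paper: part (i) via $2\pi h$-periodicity of $\chi_l$, the lower bound $h|1-e^{iy/h}|\ge \frac{2}{\pi}|y|$ on the fundamental period to localize the support to an interval of length $O(M^{l+2})$, and counting Matsubara frequencies of spacing $2\pi/\beta$; part (ii) via $h|1-e^{i\o/h}|\le|\o|$, the support bound $M^{l}\ge M^{N_\beta+1}\ge 2\pi/\beta>\theta$ from \eqref{eq_M_beta_relation} and $M\ge 2\pi$, and the reverse triangle inequality. No discrepancies with the paper's argument.
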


\begin{proof}
\eqref{item_cut_off_support}: By the periodicity that $\chi_l(x+2\pi
 h)=\chi_l(x)$ $(\forall x\in\R)$, \eqref{eq_M_beta_relation} and \eqref{eq_support_property_cut_off},
\begin{align*}
\frac{1}{\beta}\sum_{\o\in\cM_h}1_{\chi_l(\o+x)\neq 0}&\le
\sup_{r\in
 [0,\frac{2\pi}{\beta})}\Bigg(\frac{1}{\beta}\sum_{m=-\frac{\beta
 h}{2}}^{\frac{\beta h}{2}-1}1_{\chi_l(\frac{2\pi}{\beta}m+r)\neq 0}\Bigg) \\
&\le \sup_{r\in
 [0,\frac{2\pi}{\beta})}\Bigg(\frac{1}{\beta}\sum_{m=-\frac{\beta
 h}{2}}^{\frac{\beta h}{2}-1}1_{|\frac{2\pi}{\beta}m+r|\le c
 M^{l+2}}\Bigg)\le c M^{l+2}.
\end{align*}

\eqref{item_cut_off_trivial_remark}: It follows from the assumption $M\ge 2\pi$
 and \eqref{eq_M_beta_relation}, \eqref{eq_support_property_cut_off}
 that $|\o|\ge h|1-e^{i\frac{\o}{h}}|\ge M^{N_{\beta}+1}\ge \frac{2\pi}{\beta}$,
 which implies the result.
\end{proof}

Here we introduce the covariances with scale-dependent UV cut-off. In
the following we fix $\phi\in\C$ unless otherwise stated. For
$(\rho,\bx,s),(\eta,\by,t)\in I_0$, set
\begin{align*}
&C_l(\rho\bx s,\eta \by t)\\
&:=\frac{1}{\beta L^d}\sum_{\bk\in\G^*}\sum_{\o\in
 \cM_h}e^{i\<\bk,\bx-\by\>+i(s-t)(\o-\frac{\pi}{\beta})}\\
&\qquad\qquad\qquad\cdot \chi_{l+N_{\beta}-1}(\o)h^{-1}(I_2-e^{-\frac{i}{h}(\o-\frac{\theta}{2})I_2+\frac{1}{h}E(\phi)(\bk)})^{-1}(\rho,\eta),\notag\\
&(l\in \{2,3,\cdots,N_h-N_{\beta}+1\}).\\
&C_1(\rho\bx s,\eta \by t)\\
&:=\frac{1}{\beta L^d}\sum_{\bk\in\G^*}\sum_{\o\in
 \cM_h\backslash
 \{\frac{\pi}{\beta}\}}e^{i\<\bk,\bx-\by\>+i(s-t)(\o-\frac{\pi}{\beta})}\\
&\qquad\qquad\qquad\cdot \chi_{N_{\beta}}(\o)h^{-1}(I_2-e^{-\frac{i}{h}(\o-\frac{\theta}{2})I_2+\frac{1}{h}E(\phi)(\bk)})^{-1}(\rho,\eta),\notag\\
&C_0(\rho\bx s,\eta \by t)\\
&:=\frac{1}{\beta
 L^d}\sum_{\bk\in\G^*}e^{i\<\bk,\bx-\by\>}h^{-1}(I_2-e^{-\frac{i}{h}(\frac{\pi}{\beta}-\frac{\theta}{2})I_2+\frac{1}{h}E(\phi)(\bk)})^{-1}(\rho,\eta).
\end{align*}
We can deduce from \eqref{eq_covariance_matsubara_sum}, \eqref{eq_partition_unity},
\eqref{eq_support_property_cut_off} and the
inequality $h|1-e^{i\frac{\pi}{\beta h}}|\le
\frac{\pi}{\beta}\le M^{N_{\beta}+1}$ that
\begin{align}
\sum_{l=0}^{N_h-N_{\beta}+1}C_l(\rho\bx s,\eta \by
 t)=e^{-i\frac{\pi}{\beta}(s-t)}C(\phi)(\rho\bx s, \eta \by t),\quad
 (\forall (\rho,\bx,s),(\eta,\by,t)\in
 I_0).\label{eq_covariance_decomposition}
\end{align}
We want to consider $\sum_{l=1}^{N_h-N_{\beta}+1}C_l$, $C_0$ as $\cC_1$,
$\cC_0$ introduced in Subsection \ref{subsec_generalized_covariances}
respectively. For this purpose we are going to study properties of $C_l$
$(l=0,1,\cdots,N_h-N_{\beta}+1)$. 

Let us make an inequality which will
be used in the estimation of $C_l$. Recall the function
$g_d:(0,\infty)\to \R$ defined in \eqref{eq_magnitude_function}.

\begin{lemma}\label{lem_covariance_crucial_inequality}
Let $K\in \R_{>0}$. There exists a positive constant $c(d)$ depending
 only on $d$ such that for any $L\in \N$ satisfying $L\ge
 K^{-3}g_d(K)^{-1}$, 
$$
\frac{1}{L^d}\sum_{\bk\in \G^*}\frac{1}{\sqrt{K^2+e(\bk)^2}}\le c(d)g_d(K).
$$
\end{lemma}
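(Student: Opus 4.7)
The plan is to compare the Riemann sum with the corresponding continuum integral and then bound the integral via the coarea formula. Set $f(\bk) := (K^2 + e(\bk)^2)^{-1/2}$, which is smooth and $(2\pi\Z)^d$-periodic. For $K\ge 1$ the bound is immediate from $f(\bk) \le 1/K \le C g_d(K)$, so we may assume $K<1$ below.

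First I would control the Riemann-sum error. Applying the mean value inequality on each cube of side $2\pi/L$ centered at a lattice point $\bk\in\G^*$ and using periodicity gives
\begin{equation*}
\left|\frac{1}{L^d}\sum_{\bk\in\G^*} f(\bk) - \frac{1}{(2\pi)^d}\int_{[0,2\pi]^d} f(\by)\, d\by\right|
\le \frac{C}{L}\int_{[0,2\pi]^d} |\nabla f(\by)|\, d\by.
\end{equation*}
Since $|\nabla f| \le |e|\,|\nabla e|(K^2+e^2)^{-3/2}$ and $|\nabla e|$ is bounded by $2d$, the coarea formula yields $\int |\nabla f|\, d\by \le C\int_{\R} |t|(K^2+t^2)^{-3/2}\, dt \le C/K$. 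Hence the Riemann-sum error is at most $C/(LK)$, and under the hypothesis $L \ge K^{-3} g_d(K)^{-1}$ this is at most $CK^2 g_d(K) \le C g_d(K)$.

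Second I would bound the continuum integral. The coarea formula gives
\begin{equation*}
\frac{1}{(2\pi)^d}\int_{[0,2\pi]^d} f(\bk)\, d\bk
= \frac{1}{(2\pi)^d}\int_{-(2d+|\mu|)}^{2d+|\mu|}\frac{\sigma(t)}{\sqrt{K^2+t^2}}\, dt,
\qquad
\sigma(t) := \int_{\{e(\bk)=t\}}\frac{d\cH^{d-1}}{|\nabla e(\bk)|}.
\end{equation*}
For $d=1$ the level set $\{e(k)=0\}$ is the pair of Fermi points $k_F$ with $|e'(k_F)| = \sqrt{4-\mu^2}$, so $\sigma$ is a weighted sum of Dirac masses and the remaining $t$-integral evaluates to a bounded multiple of $(4-\mu^2)^{-1/2}\log(K^{-1}+1) = Cg_1(K)$. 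For $d\ge 2$ the hypothesis $\mu\in(-2d,2d)$ keeps the Fermi surface $\{e=0\}$ a smooth non-degenerate codimension-one manifold, $\sigma$ is locally bounded there and globally integrable, and $\int f\, d\bk \le C_d \log(K^{-1}+1) \le C_d g_d(K)$, the last inequality following from $\log K^{-1} \le K^{-1}$.

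The main obstacle is the integrability of $\sigma(t)$ for $d\ge 2$: the gradient $|\nabla e|$ vanishes at the vertex points $\bk\in\{0,\pi\}^d$, which are the critical points of $e$, so near the corresponding critical values $\sigma$ may develop logarithmic (or in higher dimensions, weaker) singularities. A local Morse-lemma analysis at each critical point nevertheless shows these singularities are integrable, and yields the required upper bound $\int f \le C_d \log(K^{-1}+1)$. Combining the two steps gives the claim.
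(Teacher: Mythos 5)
Your first step (the Riemann-sum comparison) and your $d=1$ argument are essentially the paper's; the paper uses the cruder error bound $c(d)K^{-3}L^{-1}$, which the hypothesis $L\ge K^{-3}g_d(K)^{-1}$ is tailored to, and for $d=1$ it likewise reduces to $\int_0^{2\pi}dk\,(K^2+(2\cos k-|\mu|)^2)^{-1/2}\le c\,(\arccos(|\mu|/2))^{-1}\log(K^{-1}+1)$.

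The gap is in your second step for $d\ge 2$. You claim $\int_{[0,2\pi]^d}d\bk\,(K^2+e(\bk)^2)^{-1/2}\le C_d\log(K^{-1}+1)$ with $C_d$ depending only on $d$, and this is false for $d=2$. The critical values of $2(\cos k_1+\cos k_2)$ are $\pm 4$ (extrema) and $0$ (saddles), so the density of states $\sigma(t)$ has a \emph{logarithmic} van Hove singularity at $t=-\mu$. When $\mu=0$ (which is allowed, since $\mu\in(-2d,2d)$) this singularity sits exactly at the Fermi level, and $\int_{-c}^{c}\sigma(t)(K^2+t^2)^{-1/2}dt\gtrsim\int_K^c\log(1/t)\,t^{-1}dt\sim(\log K^{-1})^2$, not $\log(K^{-1}+1)$; for $\mu\neq 0$ your constant necessarily degenerates as $\mu\to 0$, whereas the lemma's $c(d)$ must be $\mu$-uniform (for $d\ge2$ all $\mu$-dependence has been removed from $g_d$). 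Integrability of $\sigma$ alone does not convert into the $\log$ bound: you need $\sigma$ bounded near $t=0$, which Morse theory does give for $d\ge3$ but not for $d=2$. This is precisely why the target $g_d(K)$ for $d\ge2$ is the weaker $(\log(K^{-1}+1))^{d/(d+1)}K^{-1/(d+1)}$ rather than $\log(K^{-1}+1)$. The paper's proof avoids any analysis at the critical points: it removes an $\eps$-neighborhood $I_\eps^d$ of $\{0,\pi,2\pi\}^d$, uses $\inf_{[0,2\pi]^d\setminus I_\eps^d}\|\nabla e\|\ge c\eps$ together with the coarea formula and the uniform bound $\sup_r\cH^{d-1}(\{\sum_j\cos k_j=r\})\le c(d)$ to get $c(d)\eps^{-1}\log(K^{-1}+1)$ on the complement, bounds the excised region crudely by $c(d)\eps^dK^{-1}$, and optimizes $\eps=(d^{-1}K\log(K^{-1}+1))^{1/(d+1)}$. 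Your conclusion would still follow from the correct estimate because $(\log K^{-1})^2\le Cg_2(K)$, but the route you propose does not establish it; you would have to either adopt the paper's $\eps$-excision or carry out a genuine (and $\mu$-uniform) treatment of the $d=2$ saddle contribution, which cannot yield the bound you assert.
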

\begin{remark}
The claimed inequality crucially affects the possible magnitude of the
 coupling constant in our double-scale integration process. In terms of the
 order with $K$ as $K\searrow 0$, the claimed upper bound is better than
 the crude upper bound $K^{-1}$, which is out of use for our purpose of
 proving SSB and ODLRO. However, it is unlikely to be optimal especially
 in the case $d\ge 2$. More delicate analysis specifying $d$ and $\mu$
 can improve the result. In this paper we prefer to obtain an order
 with which the coupling constant can satisfy both the condition for the
 convergence of the Grassmann integration and the condition for the
 solvability of the gap equation under the minimum assumption on $d$ and
 $\mu$, rather than to obtain the optimal order with some complication.
\end{remark}
\begin{proof}[Proof of Lemma \ref{lem_covariance_crucial_inequality}]
Note that for any $\bk'\in \{0,\frac{2\pi}{L},\frac{2\pi}{L}\cdot
 2,\cdots,2\pi -\frac{2\pi}{L}\}^{d-1}$,
\begin{align*}
&\left|
\frac{1}{2\pi}\int_0^{2\pi}dk
 \frac{1}{\sqrt{K^2+e(k,\bk')^2}}-\frac{1}{L}\sum_{l=0}^{L-1}\frac{1}{\sqrt{K^2+e(\frac{2\pi}{L}l,\bk')^2}}\right|\le c(d)K^{-3}L^{-1},
\end{align*}
where we used the assumption $|\mu|\le 2d$ to suppress the dependency of
 the error on $\mu$. By repeating this
 estimation for each coordinate we obtain that
\begin{align}
\left|
\frac{1}{(2\pi)^d}\int_{[0,2\pi]^d}d\bk\frac{1}{\sqrt{K^2+e(\bk)^2}}-\frac{1}{L^d}\sum_{\bk\in\G^*}\frac{1}{\sqrt{K^2+e(\bk)^2}}\right|\le
 c(d)K^{-3}L^{-1}.\label{eq_difference_discrete_continuous}
\end{align}
Take any $\eps\in (0,\frac{\pi}{2})$ and set $I_{\eps}:=[0,\eps]\cup
 [\pi-\eps,\pi+\eps]\cup [2\pi-\eps,2\pi]$. 
Note that $\inf_{\bk\in [0,2\pi]^d\backslash
 I_{\eps}^d}\|\nabla e(\bk)\|_{\R^d}\ge c\eps$. 
It follows from the coarea
 formula and this inequality that 
\begin{align}
&\int_{[0,2\pi]^d}d\bk\frac{1}{\sqrt{K^2+e(\bk)^2}}\label{eq_crucial_continuous_part}\\
&\le c \eps^{-1}\int_{[0,2\pi]^d\backslash I_{\eps}^d}d\bk\frac{\|\nabla
 e(\bk)\|_{\R^d}}{\sqrt{K^2+e(\bk)^2}}+c(d)\eps^dK^{-1}\notag\\
&\le c \eps^{-1}\int_{-2d-\mu}^{2d-\mu}d\eta \frac{\cH^{d-1}(\{\bk\in
 [0,2\pi]^d\ |\ e(\bk)=\eta\})}{\sqrt{K^2+\eta^2}}+c(d)\eps^d K^{-1}\notag\\
&\le c \eps^{-1}\sup_{r\in \R}\cH^{d-1}\left(\left\{\bk\in [0,2\pi]^d\ \Big|\
 \sum_{j=1}^d\cos k_j=r\right\}\right)\int_{-4d}^{4d}d\eta
 \frac{1}{\sqrt{K^2+\eta^2}} +c(d)\eps^d K^{-1}\notag\\
&\le c(d)(\eps^{-1}\log(K^{-1}+1)+\eps^dK^{-1}),\notag
\end{align}
where $\cH^{d-1}$ denotes the $d-1$ dimensional Hausdorff measure.
One can check that the function $x\mapsto
 x^{-1}\log(K^{-1}+1)+x^dK^{-1}$ $:(0,\infty)\to\R$ attains its minimum
 at 
$$x_0=(d^{-1}\log(K^{-1}+1)\cdot K)^{\frac{1}{d+1}}$$ 
and the minimum
 value is
 $$(d^{\frac{1}{d+1}}+d^{-\frac{d}{d+1}})(\log(K^{-1}+1))^{\frac{d}{d+1}}K^{-\frac{1}{d+1}}.$$ 
Since $\log(K^{-1}+1)\le K^{-1}$, $x_0\in (0,\frac{\pi}{2})$. By taking $\eps$
 to be $x_0$ we have from \eqref{eq_crucial_continuous_part} that
\begin{align}
&\int_{[0,2\pi]^d}d\bk\frac{1}{\sqrt{K^2+e(\bk)^2}}\le c(d)(\log(K^{-1}+1))^{\frac{d}{d+1}}K^{-\frac{1}{d+1}}.\label{eq_crucial_continuous_general_dimension}
\end{align}

Let us improve the upper bound in the case $d=1$.
$$
\int_{[0,2\pi]}dk\frac{1}{\sqrt{K^2+e(k)^2}}\le
 c\int_0^{\frac{\pi}{2}}dk\frac{1}{\sqrt{K^2+(2\cos k - |\mu|)^2}}.
$$
Let $\arccos:(-1,1)\to (0,\pi)$ be the inverse function of
 $\cos|_{(0,\pi)}$. Note that for $k\in
 [0,\frac{\pi}{2}]$, 
\begin{align*}
\left|\cos
 k-\frac{|\mu|}{2}\right|&=\left|\int_{\arccos(|\mu|/2)}^kdp\sin p\right|
\ge \frac{2}{\pi}\left|\int_{\arccos(|\mu|/2)}^kdp
 p\right|\\
&\ge \frac{1}{\pi} \arccos\left(\frac{|\mu|}{2}\right)\left|k-\arccos\left(\frac{|\mu|}{2}\right)\right|.
\end{align*}
By substituting this inequality we have 
\begin{align}\label{eq_crucial_continuous_1_dimension}
\int_{[0,2\pi]}dk\frac{1}{\sqrt{K^2+e(k)^2}}\le\frac{c}{\arccos(\frac{|\mu|}{2})}\int_0^{\frac{\pi}{2}}dk\frac{1}{\sqrt{K^2+k^2}}\le
 \frac{c}{\arccos(\frac{|\mu|}{2})}\log(K^{-1}+1).
\end{align}
The claim follows from \eqref{eq_difference_discrete_continuous},
 \eqref{eq_crucial_continuous_general_dimension},
 \eqref{eq_crucial_continuous_1_dimension}.
\end{proof}

In the following, unless otherwise stated, we assume that 
\begin{align}
L\ge
 \Theta^{-3}g_d(\Theta)^{-1}\label{eq_large_L_condition}
\end{align}
so that
Lemma \ref{lem_covariance_crucial_inequality} holds for
$K=\Theta$. In the next lemma we collect
bound properties of $C_l$. We should emphasize that the constant
$c(d,M,\chi)$ appearing in the lemma is independent of $\phi$.

\begin{lemma}\label{lem_real_covariances}
There exists a positive constant $c(d,M,\chi)$ depending only on
 $d,M,\chi$ such that the following statements hold.
\begin{enumerate}[(i)]
\item\label{item_real_covariances_determinant_bound}
\begin{align*}
&|\det(\<\bu_i,\bv_j\>_{\C^m}C_0(X_i,Y_j))_{1\le i,j\le n}|,\quad
\left|\det\left(\<\bu_i,\bv_j\>_{\C^m}\sum_{l=1}^{N_h-N_{\beta}+1}C_l(X_i,Y_j)
\right)_{1\le i,j\le n}\right|\\
&\le
 (c(d,M,\chi)(1+\beta^{-1}g_d(\Theta)))^n,\notag\\
&(\forall m,n\in \N,\ \bu_i,\bv_i\in\C^m\text{ with
 }\|\bu_i\|_{\C^m},\|\bv_i\|_{\C^m}\le 1,\ X_i,Y_i\in I_0\
 (i=1,2,\cdots,n)).
\end{align*}
\item\label{item_real_covariances_decay_bound}
\begin{align*}
&\|\tilde{C}_l\|_{1,\infty}\le c(d,M,\chi)\min\{1,\beta\}M^{-l},\quad
 (\forall l\in \{2,3,\cdots,N_h-N_{\beta}+1\}),\\
&\|\tilde{C}_1\|_{1,\infty}\le c(d,M,\chi)\beta(1+\beta)^{d+1},\\
&\|\tilde{C}_0\|_{1,\infty}\le c(d,M,\chi)\Theta^{-1}(1+\Theta^{-1})^d.
\end{align*}
\item\label{item_covariances_semi_decay_bound}
\begin{align*}
\left\|\sum_{l=1}^{N_h-N_{\beta}+1}\tilde{C}_l\right\|_{1,\infty}'\le
 c(d,M,\chi)(\beta^{-1}g_d(\Theta)+(1+\beta)^{d+1}).
\end{align*}
\end{enumerate}
Here $\tilde{C}_l:I^2\to \C$ is the anti-symmetric extension of $C_l$
 defined as in \eqref{eq_anti_symmetric_extension}.
\end{lemma}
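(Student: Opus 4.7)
The plan is to handle the three claims through a Fourier analysis of the scale-dependent kernels $C_l$, using the identity $\sum_{l=1}^{N_h-N_\beta+1} C_l = e^{-i\pi(s-t)/\beta}C(\phi) - C_0$ from \eqref{eq_covariance_decomposition} to shuttle between $C_0$, $C(\phi)$, and the multi-scale pieces $C_l$. For the determinant bounds \eqref{item_real_covariances_determinant_bound}, I would first equip $C_0$ with a direct Gram representation $C_0(X,Y)=\langle v_X,w_Y\rangle$ in $L^2(\G^*\times\{1,2\})$ (possible since $C_0$ is time-independent and $E(\phi)(\bk)$ is diagonalisable by \eqref{eq_dispersion_matrix_diagonalization}), derive the pointwise estimate $|h^{-1}(I_2-e^{-\tfrac{i}{h}(\pi/\beta-\theta/2)I_2+\tfrac{1}{h}E(\phi)(\bk)})^{-1}|\leq c(\Theta^2+e(\bk)^2)^{-1/2}$ from the elementary inequality $|1-e^w|\geq c\min(1,|w|)$ together with \eqref{eq_basic_h_largeness}, and invoke Lemma \ref{lem_covariance_crucial_inequality} to bound the vector norms by $c\beta^{-1/2}g_d(\Theta)^{1/2}$. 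For the second determinant I would use $\beta\theta/2=\pi-\beta\Theta$ to rewrite
\begin{equation*}
1+2\cos(\beta\theta/2)e^{-\beta E}+e^{-2\beta E}=(1-e^{-\beta E})^2+4\sin^2(\beta\Theta/2)e^{-\beta E}
\end{equation*}
with $E=\sqrt{e(\bk)^2+|\phi|^2}$, and---separating the cases $\beta E\leq 1$ and $\beta E>1$ and using $\sin(\beta\Theta/2)\geq \beta\Theta/\pi$ (valid since $\beta\Theta\in(0,\pi]$)---bound the right-hand side below by $c\min(1,\beta^2(e(\bk)^2+\Theta^2))$. Combined with Lemma \ref{lem_covariance_crucial_inequality}, this controls the Pedra--Salmhofer prefactor of Proposition \ref{prop_P_S_bound_application} by $c(1+\beta^{-1}g_d(\Theta))$. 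Enlarging the Hilbert space of the time-ordered Gram representation \eqref{eq_perturbed_covariance_Gram_representation} to a direct sum that subtracts the Gram representation of $C_0$ (and absorbs the phase $e^{-i\pi(s-t)/\beta}$ into the vectors) yields a representation of $\sum_l C_l$ of the form required by Proposition \ref{prop_P_S_bound} with vector norms bounded by $c(1+\beta^{-1}g_d(\Theta))^{1/2}$, and the claimed determinant bound follows.

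The decay bounds \eqref{item_real_covariances_decay_bound} follow from the standard Matsubara multi-scale argument as in \cite[Section 3]{K14}. For $l\in\{2,\dots,N_h-N_\beta+1\}$, on $\supp\chi_{l+N_\beta-1}$ one has $h|1-e^{i\o/h}|\sim M^{l+N_\beta-1}$ and, by Lemma \ref{lem_cut_off_support}\eqref{item_cut_off_trivial_remark}, $|\o-\theta/2|\geq|\o|/2\gtrsim M^{l+N_\beta-1}$, so the matrix entry of $h^{-1}(I_2-e^{\cdots})^{-1}$ is of size $M^{-(l+N_\beta-1)}$, while the Matsubara support has cardinality $O(\beta M^{l+N_\beta+1})$ by Lemma \ref{lem_cut_off_support}\eqref{item_cut_off_support}. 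Repeated discrete differentiation (summation by parts) in both $\o$ and $\bk$ on the compactly supported Fourier symbol produces enough $t$- and $\bx$-decay to sum over these variables against the scale factor $M^{-(l+N_\beta-1)}$, and \eqref{eq_M_beta_relation} converts the remaining $N_\beta$-factor into $\min(1,\beta)$. The $l=0$ case replaces the frequency scale by $\Theta$ and the spatial decay rate by $(1+\Theta^{-1})^d$ (reflecting the fact that $M(\bk)$ is singular of magnitude $\Theta^{-1}$ on a non-degenerate Fermi surface of thickness $\Theta$), while $l=1$ uses the lower bound $|\o-\theta/2|\gtrsim 1/\beta$ for $\o\in\cM_h\setminus\{\pi/\beta\}$ together with the $(1+\beta)^{d+1}$ spatial decay. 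Part \eqref{item_covariances_semi_decay_bound} follows by applying the same discrete-derivative estimate to $e^{-i\pi(s-t)/\beta}C(\phi)-C_0$ after reducing, via the time-translation invariance \eqref{eq_time_translation_generic_covariance}, to a spatial sum at fixed time; the $\beta^{-1}g_d(\Theta)$ contribution arises from the Pedra--Salmhofer estimate for $C(\phi)$, while the $(1+\beta)^{d+1}$ part is the analogue of the $l=1$ spatial estimate carried out uniformly in time.

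The hardest step will be the uniform control in $\phi$ of the discrete-derivative estimates in \eqref{item_real_covariances_decay_bound}: the $\bk$-derivatives of the matrix inverse $h^{-1}(I_2-e^{\cdots})^{-1}$ must be handled, via the Leibniz rule, as products of matrix inverses and derivatives of $E(\phi)(\bk)$, with the inverse matrix still singular on the shrinking set $|e(\phi)(\bk)|\sim M^{-(l+N_\beta-1)}$. The denominator estimates therefore require a refined bound of the form $(\Theta^2+e(\phi)(\bk)^2+M^{2(l+N_\beta-1)})^{-1/2}$ combined with the same Fermi-surface density-of-states argument that enters Lemma \ref{lem_covariance_crucial_inequality}. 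It is also essential to isolate the single mode $\o=\pi/\beta$ into its own scale $C_0$: without this separation, the factor $\Theta^{-1}(1+\Theta^{-1})^d$ would contaminate a scale $C_l$ with $l\geq 1$ and the geometric summability $\sum_l M^{-l}$ needed for Part \eqref{item_covariances_semi_decay_bound} would be destroyed.
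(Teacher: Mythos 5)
Your proposal follows essentially the same route as the paper's proof: Gram's inequality for $C_0$ in a momentum-space Hilbert space with Lemma \ref{lem_covariance_crucial_inequality} controlling the vector norms, the Pedra--Salmhofer bound of Proposition \ref{prop_P_S_bound_application} together with the lower bound $(1-e^{-\beta E})^2+4\sin^2(\beta\Theta/2)e^{-\beta E}\ge c\min\{1,\beta^2(\Theta^2+e(\bk)^2)\}$ for the full covariance, discrete integration by parts in $\bk$ and $\o$ on the cut-off symbols for \eqref{item_real_covariances_decay_bound}, and the split into a diagonal term (bounded by the $n=1$ determinant bound) plus a spatially decaying off-diagonal sum for \eqref{item_covariances_semi_decay_bound}. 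The only deviations are cosmetic: the paper combines the two determinant bounds for $\sum_{l\ge 1}C_l=e^{-i\frac{\pi}{\beta}(s-t)}C(\phi)-C_0$ via a Cauchy--Binet lemma (Lemma \ref{lem_application_cauchy_binet}) rather than by merging the Gram vectors of $C_0$ into the Pedra--Salmhofer representation (both work), and in part \eqref{item_real_covariances_decay_bound} no Fermi-surface density-of-states input is actually needed, since on $\supp \chi_{l+N_\beta-1}$ the bound $|\o-\theta/2|\gtrsim M^{l+N_\beta}$ already controls the denominator and all its $\bk$-derivatives uniformly in $\bk$ and $\phi$.
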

\begin{proof} In the following `$c$' denotes a generic positive constant
 and `$c(a_1,a_2,\cdots,a_n)$' denotes a positive constant depending
 only on parameters $a_1,a_2,\cdots,a_n$. First of all let us make some
 inequalities concerning the integrand inside the covariances. 
 For $x\in [-\pi h,\pi h]$, $\delta\in \{1,-1\}$,
\begin{align*}
&|h(1-e^{-i\frac{x}{h}+\frac{\delta}{h}\sqrt{e(\bk)^2+|\phi|^2}})|^2\\
&\ge
 h^2(1-e^{-\frac{1}{h}\sqrt{e(\bk)^2+|\phi|^2}})^2+4h^{2}e^{-\frac{1}{h}\sqrt{e(\bk)^2+|\phi|^2}}\sin^2\left(\frac{x}{2h}\right)\notag\\
&\ge 1_{\sqrt{e(\bk)^2+|\phi|^2}>h}h^2(1-e^{-1})^2+
1_{\sqrt{e(\bk)^2+|\phi|^2}\le h}\left(
e^{-2}e(\bk)^2+
4h^{2}e^{-1}\sin^2\left(\frac{x}{2h}\right)
\right)\notag\\
&\ge c(e(\bk)^2+x^2),\notag
\end{align*}
where we used that $h\ge 4d\ge \sup_{\bk\in \R^d}|e(\bk)|$. Since the
 eigen values of $E(\phi)(\bk)$ are $\pm \sqrt{e(\bk)^2+|\phi|^2}$, this
 implies that
\begin{align}
&\|h^{-1}(I_2-e^{-i\frac{x}{h}I_2+\frac{1}{h}E(\phi)(\bk)})^{-1}\|_{2\times
 2}\le \frac{c}{\sqrt{x^2+e(\bk)^2}},\ (\forall x\in [-\pi h,\pi
 h]\backslash\{0\},\ \bk\in \R^d),\label{eq_integrand_core_inequality}
\end{align}
where $\|\cdot\|_{2\times 2}$ is the operator norm for $2\times 2$
 matrices. Similarly we can prove that 
\begin{align}
&\|h^{-1}(I_2-e^{-i\frac{x}{h}I_2+\frac{1}{h}E(\phi)(\bk)})^{-1}
e^{\frac{1}{h}E(\phi)(\bk)}\|_{2\times
 2}\le \frac{c}{|x|},\quad (\forall x\in [-\pi h,\pi
 h]\backslash\{0\},\ \bk\in \R^d).\label{eq_integrand_core_inequality_multiplied}
\end{align}

\eqref{item_real_covariances_determinant_bound}: We derive the claimed
 determinant bound on $C_0$ by means of Gram's inequality. Set
 $\cH:=L^2(\{1,2\}\times \G^*\times \cM_h)$, which is a Hilbert space
 endowed with the inner product $\<\cdot,\cdot\>_{\cH}$ defined by
$$
\<f,g\>_{\cH}:=\frac{1}{\beta L^d}\sum_{K\in \{1,2\}\times \G^*\times \cM_h}\overline{f(K)}g(K).$$
Let us define vectors $f_X,g_X\in \cH$ $(X\in I_0)$ by
\begin{align*}
&f_{\rho\bx
 s}(\eta,\bk,\o):=e^{-i\<\bk,\bx\>-is(\o-\frac{\pi}{\beta})}1_{\o=\frac{\pi}{\beta}}\delta_{\rho,\eta}\left(\left(\o-\frac{\theta}{2}\right)^2+e(\bk)^2\right)^{-\frac{1}{4}},\\
&g_{\rho\bx
 s}(\eta,\bk,\o):=e^{-i\<\bk,\bx\>-is(\o-\frac{\pi}{\beta})}1_{\o=\frac{\pi}{\beta}}\left(\left(\o-\frac{\theta}{2}\right)^2+e(\bk)^2\right)^{\frac{1}{4}}\\
&\qquad\qquad\qquad\quad\cdot h^{-1}(I_2-e^{-\frac{i}{h}(\o-\frac{\theta}{2})I_2+\frac{1}{h}E(\phi)(\bk)})^{-1}(\eta,\rho).
\end{align*}
We can deduce from Lemma \ref{lem_covariance_crucial_inequality} and
 \eqref{eq_integrand_core_inequality} that 
$$
\|f_X\|_{\cH},\ \|g_X\|_{\cH}\le
 c(d)(\beta^{-1}g_d(\Theta))^{\frac{1}{2}},\quad
 (\forall X\in I_0).
$$
Since $C_0(X,Y)=\<f_X,g_Y\>_{\cH}$ $(\forall X,Y\in I_0)$, Gram's
 inequality in the Hilbert space $\C^m\times \cH$ ensures that 
\begin{align}
&|\det(\<\bu_i,\bv_j\>_{\C^m}C_0(X_i,Y_j))_{1\le i,j\le n}|\label{eq_0th_covariance_determinant_bound_pre}\le \prod_{j=1}^n\|f_{X_j}\|_{\cH}\|g_{Y_j}\|_{\cH}\le 
 (c(d)\beta^{-1}g_d(\Theta))^n,\\
&(\forall m,n\in \N,\ \bu_i,\bv_i\in\C^m\text{ with
 }\|\bu_i\|_{\C^m},\|\bv_i\|_{\C^m}\le 1,\ X_i,Y_i\in I_0\
 (i=1,2,\cdots,n)),\notag
\end{align}
which implies that $C_0$ satisfies the claimed determinant bound.

To derive the determinant bound on $\sum_{l=1}^{N_h-N_{\beta}+1}C_l$ we
 use Proposition \ref{prop_P_S_bound_application}. Note
 that by Lemma \ref{lem_covariance_crucial_inequality}
\begin{align}
&\frac{1}{L^d}\sum_{\bk\in \G^*}\left(
1+2\cos\left(\frac{\beta\theta}{2}\right)e^{-\beta \sqrt{e(\bk)^2+|\phi|^2}}+ e^{-2\beta \sqrt{e(\bk)^2+|\phi|^2}}
\right)^{-\frac{1}{2}}\label{eq_bare_covariance_integrand_bound}\\
&\le \frac{c}{L^d}\sum_{\bk\in \G^*}\left(1_{\beta \sqrt{e(\bk)^2+|\phi|^2}>1}+
1_{\beta \sqrt{e(\bk)^2+|\phi|^2}\le
 1}\beta^{-1}(\Theta^2+e(\bk)^2)^{-\frac{1}{2}}\right)\notag\\
&\le c(d)(1+\beta^{-1}g_d(\Theta)).\notag
\end{align}
Thus, by \eqref{eq_P_S_bound_application} and
 \eqref{eq_covariance_decomposition}
\begin{align}
&\left|\det\left(\<\bu_i,\bv_j\>_{\C^m}
\sum_{l=0}^{N_h-N_{\beta}+1}C_l(X_i,Y_j)\right)_{1\le i,j\le n}\right|\le
 (c(d)(1+\beta^{-1}g_d(\Theta)))^n,\label{eq_full_covariance_determinant_bound_pre}\\
&(\forall m,n\in \N,\ \bu_i,\bv_i\in\C^m\text{ with
 }\|\bu_i\|_{\C^m},\|\bv_i\|_{\C^m}\le 1,\ X_i,Y_i\in I_0\
 (i=1,2,\cdots,n)).\notag
\end{align}
Here we can apply Lemma \ref{lem_application_cauchy_binet} with
 \eqref{eq_0th_covariance_determinant_bound_pre},
\eqref{eq_full_covariance_determinant_bound_pre} to derive the claimed
 determinant bound on $\sum_{l=1}^{N_h-N_{\beta}+1}C_l$.

\eqref{item_real_covariances_decay_bound}:
We can deduce from  \eqref{eq_support_property_cut_off}, Lemma
 \ref{lem_cut_off_support} \eqref{item_cut_off_support},
 \eqref{item_cut_off_trivial_remark} and
 \eqref{eq_integrand_core_inequality} that 
\begin{align}
&|C_l(\bX)|\le cM^2,\quad |C_1(\bX)|\le cM^{N_{\beta}+2}\beta,\quad
 |C_0(\bX)|\le c\beta^{-1}\Theta^{-1},\label{eq_covariances_uniform_bound}\\
&(\forall l\in \{2,3,\cdots,N_h-N_{\beta}+1\},\ \bX\in I_0^2).\notag
\end{align}
Let $n\in \N$, $j\in \{1,2,\cdots,d\}$. Note that by periodicity,
\begin{align*}
&\left(\frac{L}{2\pi}\left(e^{-i\frac{2\pi}{L}\<\bx-\by,\be_j\>}-1
\right)\right)^nC_l(\cdot\bx s,\cdot \by t)\\
&=\frac{1}{\beta L^d}\sum_{\bk\in \G^*}\sum_{\o\in
 \cM_h}e^{i\<\bx-\by,\bk\>+i(s-t)(\o-\frac{\pi}{\beta})}
\left(\chi_{l+N_{\beta}-1}(\o)(1_{l\ge 2}+1_{l=1}1_{\o\neq
 \frac{\pi}{\beta}})+1_{l=0}1_{\o=\frac{\pi}{\beta}}\right)\\
&\quad\cdot
 \prod_{m=1}^n\left(\frac{L}{2\pi}\int_0^{\frac{2\pi}{L}}dp_m\right)\left(\frac{\partial}{\partial q_j}\right)^nh^{-1}(I_2-e^{-\frac{i}{h}(\o-\frac{\theta}{2})I_2+\frac{1}{h}E(\phi)(\bq)})^{-1}\Big|_{\bq=\bk+\sum_{m=1}^np_m\be_j}.
\end{align*}

We need to find a $\phi$-independent upper bound on 
$$
\left\|\left(\frac{\partial}{\partial
 k_j}\right)^nh^{-1}(I_2-e^{-\frac{i}{h}(\o-\frac{\theta}{2})I_2+\frac{1}{h}E(\phi)(\bk)})^{-1}\right\|_{2\times 2}
$$
for $\o\in\cM_h$, $\bk\in\R^d$.
First let us consider the case that $\sqrt{e(\bk)^2+|\phi|^2}\le
 h$. Observe that
\begin{align*}
&\left\|\left(\frac{\partial}{\partial
 k_j}\right)^nh^{-1}(I_2-e^{-\frac{i}{h}(\o-\frac{\theta}{2})I_2+\frac{1}{h}E(\phi)(\bk)})^{-1}\right\|_{2\times
 2}\\
&\le
 c(n)h^{-1}\sum_{m=1}^n\prod_{u=1}^m\left(\sum_{l_u=1}^n\right)1_{\sum_{u=1}^ml_u=n}\\
&\quad\cdot \prod_{u=1}^m\left\|
(I_2-e^{-\frac{i}{h}(\o-\frac{\theta}{2})I_2+\frac{1}{h}E(\phi)(\bk)})^{-1}
\left(\frac{\partial}{\partial k_j}\right)^{l_u}e^{\frac{1}{h}E(\phi)(\bk)}
\right\|_{2\times 2}\\
&\quad\cdot \|(I_2-e^{-\frac{i}{h}(\o-\frac{\theta}{2})I_2+\frac{1}{h}E(\phi)(\bk)})^{-1}\|_{2\times 2}.
\end{align*}
See e.g. the formula \cite[\mbox{(C.1)}]{K16} for derivatives of inverse
 of a matrix-valued function. Since
$$
\left\|\left(\frac{\partial}{\partial
 k_j}\right)^ne^{\frac{1}{h}E(\phi)(\bk)}\right\|_{2\times 2}\le c(n)h^{-1}
$$
in this case, we deduce from \eqref{eq_integrand_core_inequality} and
 the above inequality that
\begin{align}
&\left\|\left(\frac{\partial}{\partial
 k_j}\right)^nh^{-1}(I_2-e^{-\frac{i}{h}(\o-\frac{\theta}{2})I_2+\frac{1}{h}E(\phi)(\bk)})^{-1}\right\|_{2\times
 2}
\le c(n) \left(
\left|\o-\frac{\theta}{2}\right|^{-2}+\left|\o-\frac{\theta}{2}\right|^{-n-1}
\right).\label{eq_integrand_derivative_normal_case}
\end{align}
Next let us consider the case that $\sqrt{e(\bk)^2+|\phi|^2}> h$.
It is convenient to use the equality
\begin{align}
&(I_2-e^{-\frac{i}{h}(\o-\frac{\theta}{2})I_2+\frac{1}{h}E(\phi)(\bk)})^{-1}\label{eq_inverse_matrix_formula}\\
&=\prod_{\delta\in
 \{1,-1\}}(1-e^{-\frac{i}{h}(\o-\frac{\theta}{2})+\frac{\delta}{h}\sqrt{e(\bk)^2+|\phi|^2}})^{-1}\notag\\
&\quad\cdot \left(\begin{array}{cc} 1-e^{-\frac{i}{h}(\o-\frac{\theta}{2})}e^{\frac{1}{h}E(\phi)(\bk)}(2,2) & e^{-\frac{i}{h}(\o-\frac{\theta}{2})}e^{\frac{1}{h}E(\phi)(\bk)}(1,2) \\
e^{-\frac{i}{h}(\o-\frac{\theta}{2})}e^{\frac{1}{h}E(\phi)(\bk)}(2,1) &
 1-e^{-\frac{i}{h}(\o-\frac{\theta}{2})}e^{\frac{1}{h}E(\phi)(\bk)}(1,1)
												       \end{array}\right).\notag
\end{align}
Let us estimate derivatives of each component. Remark that
\begin{align}
|(1-e^{-\frac{i}{h}(\o-\frac{\theta}{2})-\frac{1}{h}\sqrt{e(\bk)^2+|\phi|^2}})^{-1}|
=
|(1-e^{-\frac{i}{h}(\o-\frac{\theta}{2})+\frac{1}{h}\sqrt{e(\bk)^2+|\phi|^2}})^{-1}e^{\frac{1}{h}\sqrt{e(\bk)^2+|\phi|^2}}|\le
 c.\label{eq_eigen_value_inverse}
\end{align}
Moreover, for any $m\in \N\cup \{0\}$, $\delta\in \{1,-1\}$,
\begin{align}
&\left|\left(\frac{\partial}{\partial
 k_j}\right)^me^{\frac{\delta}{h}\sqrt{e(\bk)^2+|\phi|^2}}\right|\le
 c(m)(1_{m=0}+1_{m\ge
 1}h^{-2})e^{\frac{\delta}{h}\sqrt{e(\bk)^2+|\phi|^2}},\label{eq_eigen_exponential}\\
&\left\|\left(\frac{\partial}{\partial
 k_j}\right)^me^{\frac{1}{h}E(\phi)(\bk)}\right\|_{2\times 2}\le
 c(m)(1_{m=0}+1_{m\ge
 1}h^{-1})e^{\frac{1}{h}\sqrt{e(\bk)^2+|\phi|^2}}.\label{eq_hopping_exponential_derivative}
\end{align}
To derive \eqref{eq_hopping_exponential_derivative}, the following
 formula can be repeatedly used.
$$
\frac{\partial}{\partial
 k_j}e^{\frac{1}{h}E(\phi)(\bk)}=\frac{1}{h}\int_0^1ds
 e^{\frac{1-s}{h}E(\phi)(\bk)}
\frac{\partial}{\partial
 k_j}E(\phi)(\bk) e^{\frac{s}{h}E(\phi)(\bk)}.
$$
By \eqref{eq_eigen_value_inverse} and \eqref{eq_eigen_exponential}
\begin{align}
&\left|\left(\frac{\partial}{\partial
 k_j}\right)^m (1-e^{-\frac{i}{h}(\o-\frac{\theta}{2})-\frac{1}{h}\sqrt{e(\bk)^2+|\phi|^2}})^{-1}
\right|\le c(m)(1_{m=0}+1_{m\ge
 1}h^{-2}),\label{eq_negative_eigen_inverse}\\
&\left|\left(\frac{\partial}{\partial
 k_j}\right)^m (1-e^{-\frac{i}{h}(\o-\frac{\theta}{2})+\frac{1}{h}\sqrt{e(\bk)^2+|\phi|^2}})^{-1}
\right|\label{eq_positive_eigen_inverse}\\
&\le c(m)(1_{m=0}+1_{m\ge
 1}h^{-2})|1-e^{-\frac{i}{h}(\o-\frac{\theta}{2})+\frac{1}{h}\sqrt{e(\bk)^2+|\phi|^2}}|^{-1}.\notag
\end{align}
Thus, we have that
\begin{align}
&\left|\left(\frac{\partial}{\partial
 k_j}\right)^n\prod_{\delta\in \{1,-1\}}
 (1-e^{-\frac{i}{h}(\o-\frac{\theta}{2})+\frac{\delta}{h}\sqrt{e(\bk)^2+|\phi|^2}})^{-1}\right|\le
 c(n)h^{-2}.\label{eq_eigen_value_inverse_product}
\end{align}
Also, by \eqref{eq_eigen_value_inverse},
 \eqref{eq_hopping_exponential_derivative},
 \eqref{eq_positive_eigen_inverse}
\begin{align*}
&\left\|\left(\frac{\partial}{\partial
 k_j}\right)^m
 (1-e^{-\frac{i}{h}(\o-\frac{\theta}{2})+\frac{1}{h}\sqrt{e(\bk)^2+|\phi|^2}})^{-1}
e^{\frac{1}{h}E(\phi)(\bk)}\right\|_{2\times 2}\le
 c(m)(1_{m=0}+1_{m\ge 1}h^{-1}),
\end{align*}
which combined with \eqref{eq_negative_eigen_inverse} implies that
\begin{align}
&\left\|\left(\frac{\partial}{\partial
 k_j}\right)^n\prod_{\delta\in \{1,-1\}}
 (1-e^{-\frac{i}{h}(\o-\frac{\theta}{2})+\frac{\delta}{h}\sqrt{e(\bk)^2+|\phi|^2}})^{-1}e^{\frac{1}{h}E(\phi)(\bk)}\right\|_{2\times 2}\le
 c(n)h^{-1}.\label{eq_eigen_inverse_product_hopping_exponential}
\end{align}
We can see from \eqref{eq_inverse_matrix_formula},
 \eqref{eq_eigen_value_inverse_product},
 \eqref{eq_eigen_inverse_product_hopping_exponential} and $\pi h\ge |\o-\theta/2|$
 that
\begin{align}
&\left\|\left(\frac{\partial}{\partial
 k_j}\right)^nh^{-1}(I_2-e^{-\frac{i}{h}(\o-\frac{\theta}{2})I_2+\frac{1}{h}E(\phi)(\bk)})^{-1}\right\|_{2\times
 2}\label{eq_integrand_derivative_abnormal_case}\\
&\le c(n)h^{-2}\le c(n) \left(
\left|\o-\frac{\theta}{2}\right|^{-2}+\left|\o-\frac{\theta}{2}\right|^{-n-1}
\right).\notag
\end{align}

By using \eqref{eq_support_property_cut_off}, Lemma \ref{lem_cut_off_support}
 \eqref{item_cut_off_support}, \eqref{item_cut_off_trivial_remark},
\eqref{eq_integrand_derivative_normal_case} and
 \eqref{eq_integrand_derivative_abnormal_case} we can derive that
\begin{align}
&\left\|\left(\frac{L}{2\pi}(e^{-i\frac{2\pi}{L}\<\bx-\by,\be_j\>}-1)\right)^nC_l(\cdot\bx
 s,\cdot \by t)\right\|_{2\times
 2}\label{eq_covariances_spatial_power}\\
&\le \frac{c(n)}{\beta}\sum_{\o\in \cM_h}\left(
\chi_{l+N_{\beta}-1}(\o)(1_{l\ge 2}+1_{l=1}1_{\o\neq
 \frac{\pi}{\beta}})+1_{l=0}1_{\o=\frac{\pi}{\beta}}
\right)\notag\\
&\qquad\qquad\qquad\cdot\left(
\left|\o-\frac{\theta}{2}\right|^{-2}+\left|\o-\frac{\theta}{2}\right|^{-n-1}
\right)\notag\\
&\le c(n)\left(1_{l\ge
 2}M^{-l-N_{\beta}+3}+1_{l=1}M^{N_{\beta}+2}(\beta^{2}+\beta^{n+1})
+ 1_{l=0}\beta^{-1}(\Theta^{-2}+\Theta^{-n-1})\right).\notag
\end{align}
 The inequality
 \eqref{eq_covariances_spatial_power} coupled with
 \eqref{eq_covariances_uniform_bound} yields that
\begin{align}
&|C_1(\rho\bx s,\eta\by t)|\le
 \frac{c(d,M)M^{N_{\beta}}\beta}{1+(1+\beta)^{-d-1}\sum_{j=1}^d(\frac{L}{2\pi}|e^{i\frac{2\pi}{L}\<\bx-\by,\be_j\>}-1|)^{d+1}},\label{eq_terminal_covariances_spatial_decay}\\
&|C_0(\rho\bx s,\eta\by t)|\le
 \frac{c(d)\beta^{-1}\Theta^{-1}}{1+(1+\Theta^{-1})^{-d-1}\sum_{j=1}^d(\frac{L}{2\pi}|e^{i\frac{2\pi}{L}\<\bx-\by,\be_j\>}-1|)^{d+1}},\notag\\
&(\forall (\rho,\bx,s),(\eta,\by,t)\in I_0).\notag
\end{align}
Thus, by using \eqref{eq_M_beta_relation},
\begin{align*}
&\|\tilde{C}_1\|_{1,\infty}\le c(d,M)M^{N_{\beta}}\beta^2(1+\beta)^d\le
 c(d,M)\beta (1+\beta)^{d+1},\\
&\|\tilde{C}_0\|_{1,\infty}\le
 c(d)\Theta^{-1}(
1+\Theta^{-1})^d.
\end{align*}

Let $l\in \{2,3,\cdots,N_h-N_{\beta}+1\}$. Let us estimate decay of the covariances with the time
 variable. By periodicity,
\begin{align*}
&\left(\frac{\beta}{2\pi}\left(e^{-i\frac{2\pi}{\beta}(s-t)}-1
\right)\right)^nC_l(\cdot\bx s,\cdot \by t)\\
&=\frac{1}{\beta L^d}\sum_{\bk\in \G^*}\sum_{\o\in
 \cM_h}e^{i\<\bx-\by,\bk\>+i(s-t)(\o-\frac{\pi}{\beta})}
\prod_{m=1}^n\left(\frac{\beta}{2\pi}\int_0^{\frac{2\pi}{\beta}}dr_m\frac{\partial}{\partial
 r_m}\right)\\
&\qquad\qquad\qquad\cdot\chi_{l+N_{\beta}-1}\left(\o+\sum_{m=1}^nr_m\right)
h^{-1}(I_2-e^{-\frac{i}{h}(\o-\frac{\theta}{2}+\sum_{m=1}^nr_m)I_2+\frac{1}{h}E(\phi)(\bk)})^{-1}.
\end{align*}
Thus, 
\begin{align*}
&\left\|\left(\frac{\beta}{2\pi}(e^{-i\frac{2\pi}{\beta}(s-t)}-1)
\right)^nC_l(\cdot\bx s,\cdot \by t)\right\|_{2\times 2}\\
&\le \sup_{x\in\R}\left(\frac{1}{\beta}\sum_{\o\in
 \cM_h}1_{\chi_{l+N_{\beta}-1}(\o+x)\neq 0}\right)\\
&\quad\cdot 
\sup_{\o\in [-\pi h,\pi h]\atop \bk\in\R^d}\left\|
\left(\frac{\partial}{\partial \o}\right)^n
\chi_{l+N_{\beta}-1}(\o)
h^{-1}(I_2-e^{-\frac{i}{h}(\o-\frac{\theta}{2})I_2+\frac{1}{h}E(\phi)(\bk)})^{-1}
\right\|_{2\times 2}.
\end{align*}
Note that 
\begin{align}
&\left\|\left(\frac{\partial}{\partial
 \o}\right)^nh^{-1}(I_2-e^{-\frac{i}{h}(\o-\frac{\theta}{2})I_2+\frac{1}{h}E(\phi)(\bk)})^{-1}\right\|_{2\times
 2}\label{eq_integrand_matsubara_derivative}\\
&\le
 c(n)h^{-1}\sum_{m=1}^n\prod_{u=1}^m\left(\sum_{l_u=1}^n\right)1_{\sum_{u=1}^ml_u=n}\notag\\
&\quad\cdot \prod_{u=1}^m\left\|
(I_2-e^{-\frac{i}{h}(\o-\frac{\theta}{2})I_2+\frac{1}{h}E(\phi)(\bk)})^{-1}
\left(\frac{\partial}{\partial \o}\right)^{l_u}e^{-\frac{i}{h}(\o-\frac{\theta}{2})I_2+\frac{1}{h}E(\phi)(\bk)}
\right\|_{2\times 2}\notag\\
&\quad\cdot
 \|(I_2-e^{-\frac{i}{h}(\o-\frac{\theta}{2})I_2+\frac{1}{h}E(\phi)(\bk)})^{-1}\|_{2\times
 2}\notag\\
&\le c(n)h^{-1-n}\sum_{m=1}^n\|
(I_2-e^{-\frac{i}{h}(\o-\frac{\theta}{2})I_2+\frac{1}{h}E(\phi)(\bk)})^{-1}
e^{\frac{1}{h}E(\phi)(\bk)}\|_{2\times 2}^m\notag\\
&\quad\cdot\|
(I_2-e^{-\frac{i}{h}(\o-\frac{\theta}{2})I_2+\frac{1}{h}E(\phi)(\bk)})^{-1}
\|_{2\times 2}.\notag
\end{align}
Then by using \eqref{eq_M_h_relation},
 \eqref{eq_support_property_cut_off}, Lemma \ref{lem_cut_off_support}
 \eqref{item_cut_off_support}, \eqref{item_cut_off_trivial_remark},
 \eqref{eq_integrand_core_inequality},
 \eqref{eq_integrand_core_inequality_multiplied}, \eqref{eq_integrand_matsubara_derivative} and the fact that 
\begin{align*}
\left|\left(\frac{d}{d \o}\right)^n\chi_{l+N_{\beta}-1}(\o)\right|\le
 c(n,M,\chi)M^{-(l+N_{\beta})n},\quad (\forall \o\in\R,\ n\in \N),
\end{align*}
we have that
\begin{align}
&\left\|
\left(\frac{\beta}{2\pi}(e^{-i\frac{2\pi}{\beta}(s-t)}-1)\right)^nC_l(\cdot
 \bx s,\cdot \by t)\right\|_{2\times 2}\label{eq_covariance_time_decay}\\
&\le c(n,M)M^{l+N_{\beta}}\notag\\
&\quad\cdot \Bigg(\sup_{\o\in [-\pi h,\pi h]\atop \bk\in\R^d}\left|
\left(\frac{d}{d \o}\right)^n
\chi_{l+N_{\beta}-1}(\o)\right|\|h^{-1}(I_2-e^{-\frac{i}{h}(\o-\frac{\theta}{2})I_2+\frac{1}{h}E(\phi)(\bk)})^{-1}\|_{2\times
 2}\notag\\
&\qquad +\sum_{m=0}^{n-1}\sup_{\o\in [-\pi h,\pi h]\atop
 \bk\in\R^d}\left|
\left(\frac{d}{d \o}\right)^m
\chi_{l+N_{\beta}-1}(\o)\right|\notag\\
&\qquad\qquad\cdot 
\left\|\left(\frac{\partial}{\partial \o}\right)^{n-m}h^{-1}(I_2-e^{-\frac{i}{h}(\o-\frac{\theta}{2})I_2+\frac{1}{h}E(\phi)(\bk)})^{-1}\right\|_{2\times
 2}\Bigg)\notag\\
&\le c(n,M,\chi)M^{l+N_{\beta}}\notag\\
&\quad\cdot \left(
M^{-(l+N_{\beta})n-(l+N_{\beta})}
+ \sum_{m=0}^{n-1}M^{-(l+N_{\beta})m}h^{-1-(n-m)}\sum_{u=1}^{n-m}h^{u+1}M^{-(l+N_{\beta})(u+1)}
\right)\notag\\
& \le c(n,M,\chi)M^{-(l+N_{\beta})n}.\notag
\end{align}
By combining \eqref{eq_covariances_uniform_bound},
 \eqref{eq_covariances_spatial_power} with
 \eqref{eq_covariance_time_decay} we reach the inequality 
\begin{align}
&|C_l(\rho\bx s,\eta \by t)|\label{eq_covariance_full_decay}\\
&\le
 \frac{c(n,M,\chi)}{1+M^{n(l+N_{\beta})}|\frac{\beta}{2\pi}(e^{i\frac{2\pi}{\beta}(s-t)}-1)|^n+M^{l+N_{\beta}}\sum_{j=1}^d|\frac{L}{2\pi}(e^{i\frac{2\pi}{L}\<\bx-\by,\be_j\>}-1)|^n},\notag\\
&(\forall (\rho,\bx,s),(\eta,\by,t)\in I_0,\ n\in \N).\notag
\end{align}
From \eqref{eq_M_beta_relation}, \eqref{eq_M_h_relation} and 
 \eqref{eq_covariance_full_decay} we can deduce that
\begin{align*}
&\|\tilde{C}_l\|_{1,\infty}\le c(d,M,\chi)M^{-l-N_{\beta}}\le
 c(d,M,\chi)\min\{1,\beta\}M^{-l}.
\end{align*}

\eqref{item_covariances_semi_decay_bound}:
By using the result of \eqref{item_real_covariances_determinant_bound}
 for $n=1$, \eqref{eq_terminal_covariances_spatial_decay},
 \eqref{eq_covariance_full_decay} and \eqref{eq_M_beta_relation} we have
that for any $\rho,\eta\in \{1,2\}$, $s,t\in [0,\beta)_h$, $\by\in \G$, 
\begin{align*}
&\sum_{\bx\in \G}\left|\sum_{l=1}^{N_h-N_{\beta}+1}C_l(\rho\bx s,\eta\by
 t)\right|=\sum_{\bx\in \G}\left|\sum_{l=1}^{N_h-N_{\beta}+1}C_l(\rho\by s,\eta\bx
 t)\right|\\
&=\left|\sum_{l=1}^{N_h-N_{\beta}+1}C_l(\rho\b0 s,\eta\b0
 t)\right|+\sum_{\bx\in \G\atop \bx\neq \b0}\left|\sum_{l=1}^{N_h-N_{\beta}+1}C_l(\rho\bx s,\eta\b0
 t)\right|\\
&\le
 c(d,M,\chi)(1+\beta^{-1}g_d(\Theta)) +
\sum_{\bx\in \G\atop \bx\neq \b0}
 \frac{c(d,M)M^{N_{\beta}}\beta}{1+(1+\beta)^{-d-1}\sum_{j=1}^d|\frac{L}{2\pi}(e^{i\frac{2\pi}{L}\<\bx,\be_j\>}-1)|^{d+1}}\\
&\quad +\sum_{l=2}^{N_h-N_{\beta}+1}\sum_{\bx\in \G\atop \bx\neq \b0}
 \frac{c(d,M,\chi)M^{-l-N_{\beta}}}{\sum_{j=1}^d|\frac{L}{2\pi}(e^{i\frac{2\pi}{L}\<\bx,\be_j\>}-1)|^{d+1}}\\
&\le
 c(d,M,\chi)(1+\beta^{-1}g_d(\Theta)+M^{N_{\beta}}\beta(1+\beta)^d)\\
&\le
 c(d,M,\chi)(\beta^{-1}g_d(\Theta)+(1+\beta)^{d+1}),
\end{align*}
which implies the result.
\end{proof}

By summarizing the results of Lemma \ref{lem_real_covariances} we can
reach the following conclusion.

\begin{corollary}\label{cor_real_covariances}
Set $\cC_1:=\sum_{l=1}^{N_h-N_{\beta}+1}C_l$, $\cC_0:=C_0$. There exists
 a constant $c(d,M,\chi)\in\R_{\ge 1}$ depending only on $d$, $M$,
 $\chi$ such that the following statements hold with $c_0$, $D_c$ defined by 
$$c_0:=c(d,M,\chi)(1+\beta^{d+2}+
\beta^{-1}g_d(\Theta)),\quad D_c:=\Theta^{-1}(1+\Theta^{-1})^d.$$
\begin{itemize}
\item $\cC_1$ satisfies \eqref{eq_time_translation_generic_covariance},
      \eqref{eq_determinant_bound}, 
      \eqref{eq_decay_bound}, \eqref{eq_decay_bound_coupled} with $c_0$.
\item $\cC_0$ satisfies \eqref{eq_time_independence},
      \eqref{eq_determinant_bound}, \eqref{eq_decay_bound_final} with
      $c_0$ and $D_c$. 
\end{itemize}
\end{corollary}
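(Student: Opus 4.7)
The plan is to verify each required property of $\cC_1$ and $\cC_0$ by direct inspection of their definitions and then by assembling the quantitative estimates already proved in Lemma \ref{lem_real_covariances}. The content is almost entirely organizational; the only genuinely new step is to choose $c_0$ large enough to dominate every upper bound simultaneously.

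First I would dispose of the invariance conditions. For $\cC_0$, the definition of $C_0$ involves no factor depending on $s$ or $t$, so \eqref{eq_time_independence} holds by inspection. For $\cC_1$, each $C_l$ with $l\ge 1$ carries the factor $e^{i(s-t)(\omega-\pi/\beta)}$ with $\omega\in\cM_h\subset\frac{\pi}{\beta}(2\Z+1)$; since $\omega-\pi/\beta\in\frac{2\pi}{\beta}\Z$, the exponent is invariant under $s\mapsto r_\beta(s+u)$ for any $u\in\frac{1}{h}\Z$, which gives \eqref{eq_time_translation_generic_covariance} for $\cC_1$.

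Next, the determinant bound \eqref{eq_determinant_bound} for both covariances follows directly from Lemma \ref{lem_real_covariances} \eqref{item_real_covariances_determinant_bound}, provided $c_0\ge c(d,M,\chi)(1+\beta^{-1}g_d(\Theta))$. For the $\|\cdot\|_{1,\infty}$ bound on $\cC_1$, I would use the triangle inequality and Lemma \ref{lem_real_covariances} \eqref{item_real_covariances_decay_bound}:
\begin{align*}
\|\tilde{\cC}_1\|_{1,\infty}&\le \|\tilde{C}_1\|_{1,\infty}+\sum_{l=2}^{N_h-N_\beta+1}\|\tilde{C}_l\|_{1,\infty}\\
&\le c(d,M,\chi)\beta(1+\beta)^{d+1}+c(d,M,\chi)\min\{1,\beta\}\sum_{l=2}^{\infty}M^{-l},
\end{align*}
which is bounded by $c(d,M,\chi)(1+\beta^{d+2})$ after the elementary estimate $\beta(1+\beta)^{d+1}\le c(d)(1+\beta^{d+2})$ and absorbing the geometric series into the constant. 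The bound \eqref{eq_decay_bound_final} for $\cC_0$ is immediate from the third inequality in Lemma \ref{lem_real_covariances} \eqref{item_real_covariances_decay_bound} once one writes $\Theta^{-1}(1+\Theta^{-1})^d=D_c$; this requires only $c_0\ge c(d,M,\chi)$.

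The coupled norm bound \eqref{eq_decay_bound_coupled} is the slightly subtler piece. Writing $\|\tilde{\cC}_1\|=\|\tilde{\cC}_1\|_{1,\infty}'+\beta^{-1}\|\tilde{\cC}_1\|_{1,\infty}$, I would estimate the first summand by Lemma \ref{lem_real_covariances} \eqref{item_covariances_semi_decay_bound}, which is exactly why that statement was included, and the second summand by the bound just derived. Combining gives
\begin{equation*}
\|\tilde{\cC}_1\|\le c(d,M,\chi)\bigl(\beta^{-1}g_d(\Theta)+(1+\beta)^{d+1}+\beta^{-1}+\beta^{d+1}\bigr),
\end{equation*}
which is absorbed into the proposed $c_0=c(d,M,\chi)(1+\beta^{d+2}+\beta^{-1}g_d(\Theta))$ upon enlarging the constant. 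Taking $c_0$ to be the maximum of the constants arising from the four inequalities yields the desired uniform choice, and since every bound on the right already has this form, no further work is needed. The only point requiring mild care is confirming that the single choice of $c_0$ simultaneously majorizes the determinant bound, the two norm bounds for $\cC_1$, and the constant multiplying $D_c$ in the bound for $\cC_0$; this is the ``main obstacle'' but is purely a matter of bookkeeping.
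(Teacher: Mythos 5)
Your overall route is the same as the paper's: both invariance properties by inspection, the determinant bound and \eqref{eq_decay_bound_final} directly from Lemma \ref{lem_real_covariances}, and the two norm bounds on $\tilde{\cC}_1$ by the triangle inequality. However, there is a concrete quantitative error in your derivation of \eqref{eq_decay_bound_coupled}, precisely at the ``bookkeeping'' step you flag at the end. You estimate the second summand of $\|\tilde{\cC}_1\|=\|\tilde{\cC}_1\|_{1,\infty}'+\beta^{-1}\|\tilde{\cC}_1\|_{1,\infty}$ by first collapsing $\|\tilde{\cC}_1\|_{1,\infty}\le c(1+\beta^{d+2})$ and then multiplying by $\beta^{-1}$, which produces the term $\beta^{-1}$ in your final bound. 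That term is \emph{not} absorbed by $c_0=c(d,M,\chi)(1+\beta^{d+2}+\beta^{-1}g_d(\Theta))$: the required inequality $\beta^{-1}\le c(1+\beta^{d+2}+\beta^{-1}g_d(\Theta))$ is equivalent to $1\le c(\beta+\beta^{d+3}+g_d(\Theta))$, and since $\Theta\le\pi/\beta$ can be large when $\beta$ is small (e.g.\ $\theta=0$) while $g_d(x)\to 0$ as $x\to\infty$, the right-hand side tends to $0$ as $\beta\searrow 0$ in that regime. The constant $c(d,M,\chi)$ is not allowed to depend on $\beta$ or $\Theta$, so the absorption fails.

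The fix is to keep the small factors before multiplying by $\beta^{-1}$: use the individual bounds of Lemma \ref{lem_real_covariances} \eqref{item_real_covariances_decay_bound}, namely $\|\tilde{C}_1\|_{1,\infty}\le c\,\beta(1+\beta)^{d+1}$ and $\|\tilde{C}_l\|_{1,\infty}\le c\min\{1,\beta\}M^{-l}$, so that
\begin{align*}
\beta^{-1}\sum_{l=1}^{N_h-N_\beta+1}\|\tilde{C}_l\|_{1,\infty}
\le c(1+\beta)^{d+1}+c\sum_{l\ge 2}M^{-l},
\end{align*}
with no surviving $\beta^{-1}$. This is what the paper does, and with that modification your argument goes through verbatim; everything else in your proposal (the invariances, the determinant bounds, the $\|\cdot\|_{1,\infty}$ bound on $\tilde{\cC}_1$, and the $\|\cdot\|_{1,\infty}$ bound on $\tilde{\cC}_0$ against $c_0D_c$) is correct and matches the paper.
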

\begin{proof} The claims directly follow from Lemma
 \ref{lem_real_covariances}. Since $\cM_h-\frac{\pi}{\beta}\subset
 \frac{2\pi}{\beta}\Z$, $\cC_1$ satisfies
 \eqref{eq_time_translation_generic_covariance}. Let us only show how to
 derive the claimed upper bounds on $\|\tilde{\cC}_1\|_{1,\infty}$,
 $\|\tilde{\cC}_1\|$ from the results of Lemma
 \ref{lem_real_covariances}. 
\begin{align*}
\|\tilde{\cC}_1\|_{1,\infty}&\le
 \|\tilde{C}_1\|_{1,\infty}+\sum_{l=2}^{N_h-N_{\beta}+1}\|\tilde{C}_l\|_{1,\infty}\\
&\le
 c(d,M,\chi)\beta(1+\beta)^{d+1}+c(d,M,\chi)\min\{1,\beta\}\sum_{l=2}^{N_h-N_{\beta}+1}M^{-l}\le c_0,\\
\|\tilde{\cC}_1\|&\le\left\|\sum_{l=1}^{N_h-N_{\beta}+1}\tilde{C}_l\right\|_{1,\infty}'+\beta^{-1}
\sum_{l=1}^{N_h-N_{\beta}+1}\|\tilde{C}_l\|_{1,\infty}\\
&\le
 c(d,M,\chi)(\beta^{-1}g_d(\Theta)+(1+\beta)^{d+1})+c(d,M,\chi)(1+\beta)^{d+1}\\
&\quad +c(d,M,\chi)\sum_{l=2}^{N_h-N_{\beta}+1}M^{-l}\\
&\le c_0.
\end{align*}
\end{proof}

\begin{remark}
Since we assumed the simple conditions on $\cC_0$, $\cC_1$ in Subsection
 \ref{subsec_generalized_covariances}, it became necessary to
 overestimate the covariances' bounds in terms of the order with
 $\beta$, $\Theta^{-1}$
in the derivation of Corollary \ref{cor_real_covariances}
 from Lemma \ref{lem_real_covariances}. In this paper we choose to
 simplify the Grassmann integration process at the expense of the order
 with these parameters. The most important information in the statements
 of Corollary \ref{cor_real_covariances} is the dependency of $c_0$ on
 $\Theta$, which ultimately decides whether SSB and ODLRO can be proven.
\end{remark}

Since we have verified the necessary properties of the real covariances,
we can apply the general results in the previous section to our model problem.

\begin{lemma}\label{lem_real_UV_integration}
There exists a positive constant $c(d)$ depending only on $d$ such that
 the following statements hold with any $a\in \R_{\ge 1}$, $h\in
 \frac{2}{\beta}\N$ satisfying $h\ge c(d)\max\{1,\beta^{-1}\}$, $L\in
 \N$ satisfying 
\begin{align}
L^d\ge c(d)\max\{
\Theta^{-3d}g_d(\Theta)^{-d}, 
\Theta^{-1}(1+
\Theta^{-1})^d\}\label{eq_L_largeness_final_condition}
\end{align}
and $r,r'\in \R_{>0}$ defined by
\begin{align*}
&r:=c(d)^{-1}(1+\beta^{d+2}+
\beta^{-1}g_d(\Theta))^{-2}a^{-4},\\
&r':=c(d)^{-1} L^{-d}h^{-1}\beta^{-1}\min\{1,\beta\}
(1+\beta^{d+2}+
\beta^{-1}g_d(\Theta))^{-2}a^{-4}.
\end{align*}
There exists a function $V^{end}:\C\times
     \overline{D(2r)}\times\overline{D(2r')}^2\to\C$ such that for any
     $\phi\in \C$, $(u,\bla)\mapsto V^{end}(\phi,u,\bla)$ is continuous
     in $\overline{D(2r)}\times\overline{D(2r')}^2$ and analytic in
     $D(2r)\times D(2r')^2$. Moreover, for any
 $(\phi,u,\bla)\in\C\times\overline{D(r)}\times\overline{D(r')}^2$, $j=1,2$,
\begin{align}
&\frac{h}{N}|V^{end}(\phi,u,\b0)|\le a^{-2}L^{-d},\label{eq_real_pressure_decay}\\
&\left|
\frac{\partial}{\partial \la_j}V^{end}(\phi,u,\b0)+\int
 A^j(\psi)d\mu_{C(\phi)}(\psi)\right|\label{eq_real_correction_estimate}\\
&\le c(d)\beta 
(1+\beta^{d+2}+
\beta^{-1}g_d(\Theta))^{2}(1+\Theta^{-1}
(1+\Theta^{-1})^d)a^{4}L^{-d},\notag\\
&e^{V^{end}(\phi,u,\bla)}=\int
 e^{-V(u)(\psi)+W(u)(\psi)-A(\psi)}d\mu_{C(\phi)}(\psi).\label{eq_model_analytic_continuation}
\end{align}
\end{lemma}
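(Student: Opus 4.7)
The plan is to apply the general double-scale machinery of Section \ref{sec_Grassmann_integration} with $\cC_1 := \sum_{l=1}^{N_h-N_{\beta}+1}C_l$ and $\cC_0 := C_0$. Corollary \ref{cor_real_covariances} supplies exactly the required hypotheses \eqref{eq_time_translation_generic_covariance}--\eqref{eq_decay_bound_final} with the constants $c_0 = c(d,M,\chi)(1+\beta^{d+2}+\beta^{-1}g_d(\Theta))$ and $D_c = \Theta^{-1}(1+\Theta^{-1})^d$, provided $M$ and $\chi$ are fixed once and for all so that $c(d,M,\chi)$ can be absorbed into a purely $d$-dependent constant. The Grassmann inputs match the abstract framework: by direct computation from \eqref{eq_0_1_1_initial_kernel}--\eqref{eq_definition_initial_data_kernel}, the Grassmann polynomial $-V(u)(\psi)+W(u)(\psi)$ decomposes as $V^{0-1,1}(u)+V^{0-2,1}(u)$, where the former captures the diagonal quadratic term of $V(u)$ and the latter packages $-V+W$'s quartic content so that the combination $h\cdot 1_{s=t}-\beta^{-1}$ built into $V^{0-2,1}_{2,2}$ furnishes the vanishing property \eqref{eq_vanishing_property}. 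Meanwhile $-A(\psi)$ plays the role of $V^{1,1}(\bla)$.

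Next, set $\alpha := C_0(d)\,a$ with $C_0(d)\ge 2^4$ chosen large enough that $\alpha\ge 2^3$, $2^8\alpha^{-2}\le a^{-2}$, and
\[
2^{-9}c_0^{-2}\alpha^{-4}\ge 2r,\qquad 2^{-11}L^{-d}h^{-1}\beta^{-1}\min\{1,\beta\}c_0^{-2}\alpha^{-4}\ge 2r',
\]
which can be arranged by tuning the $c(d)$ in the statement. The $h$- and $L$-hypotheses subsume $h\ge 1$ and Lemma \ref{lem_basic_h_largeness}, as well as both $L\ge \Theta^{-3}g_d(\Theta)^{-1}$ (needed for Lemma \ref{lem_covariance_crucial_inequality}, hence Corollary \ref{cor_real_covariances}) and $L^d\ge 2^2 D_c$ (needed for Lemma \ref{lem_UV_final}). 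Applying Lemma \ref{lem_UV_final} then produces a function $V^{end}(\phi,u,\bla)$ with the stated continuity/analyticity, and the pressure bound \eqref{eq_real_pressure_decay} follows from \eqref{eq_UV_final_pressure} via $2^8\alpha^{-2}\le a^{-2}$.

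For the error estimate \eqref{eq_real_correction_estimate}, I unpack the definition of $V^{1-3,end}$: since $V^{1,1}(\bla)=-A$ and $V^{1-3,0}(\psi)=\int V^{1,1}(\bla)(\psi+\psi^1)\,d\mu_{\cC_1}(\psi^1)$, one iteration of Gaussian convolution gives $V^{1-3,end}(\bla)=-\int A(\psi)\,d\mu_{\cC_0+\cC_1}(\psi)$. By \eqref{eq_covariance_decomposition}, $\cC_0+\cC_1$ differs from $C(\phi)$ only by the phase $e^{-i\pi(s-t)/\beta}$; but every term of $A^j(\psi)$ pairs a $\bar\psi$ and a $\psi$ at a common time $s$, so this phase equals $1$ on every contribution to $\int A^j d\mu_{\cC_0+\cC_1}$ and therefore $\partial_{\la_j}V^{1-3,end}(\phi,u,\b0)=-\int A^j(\psi)\,d\mu_{C(\phi)}(\psi)$. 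Substituting $c_0$ and $D_c$ into \eqref{eq_UV_final_error_estimate} and absorbing $\alpha^4 = C_0(d)^4 a^4$ into $c(d)$ yields \eqref{eq_real_correction_estimate}.

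It remains to establish \eqref{eq_model_analytic_continuation} by analytic continuation. On a small enough sub-disk the perturbative series defining $V^{end}$ is, by iterated use of the tree formula \eqref{eq_tree_formula_simple} (Subsections \ref{subsec_UV_without_artificial}--\ref{subsec_second_integration}), exactly the Taylor expansion at $(0,\b0)$ of $\log\int e^{-V(u)+W(u)-A(\psi)}\,d\mu_{\cC_0+\cC_1}(\psi)$, which by the same gauge-invariance argument (the polynomial $-V+W-A$ has every creation/annihilation pair at a coincident time, hence is invariant under $\psi_{\rho\bx s\xi}\mapsto e^{i\xi\pi s/\beta}\psi_{\rho\bx s\xi}$) equals $\log\int e^{-V(u)+W(u)-A(\psi)}\,d\mu_{C(\phi)}(\psi)$. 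Both $e^{V^{end}(\phi,u,\bla)}$ and the right side of \eqref{eq_model_analytic_continuation}—a polynomial in $(u,\bla)$ and therefore entire—are analytic on $D(2r)\times D(2r')^2$, and they coincide on a neighborhood of the origin; the identity theorem forces agreement throughout. The main obstacle is carrying out the gauge-equivalence step cleanly both for the correction identification in Paragraph 3 and for \eqref{eq_model_analytic_continuation}, and ensuring that the various constants $c(d,M,\chi)$, $C_0(d)$, and $\alpha$ can be threaded consistently through every estimate into a single $d$-dependent constant $c(d)$.
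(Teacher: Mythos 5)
Your proposal is correct and follows essentially the same route as the paper: feed Corollary \ref{cor_real_covariances} into Lemma \ref{lem_UV_final} with $\alpha$ a fixed multiple of $a$, identify $V^{1-3,end}=-\int A\,d\mu_{C(\phi)}$ via the division formula and the gauge transform $\psi_{\rho\bx s\xi}\mapsto e^{-i\xi\pi s/\beta}\psi_{\rho\bx s\xi}$, and obtain \eqref{eq_model_analytic_continuation} by the identity theorem from a small $(u,\bla)$-disk. The only point you compress is the justification that the iterated tree series actually sums to the logarithm of the Gaussian integral on that small disk (the paper secures this by checking positivity of the real part of the partial integrals and invoking \cite[Lemma C.2]{K14}), but the underlying idea is the same.
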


\begin{proof} 
By the division formula of Grassmann Gaussian integral (see
 e.g. \\
\cite[\mbox{Proposition I.21}]{FKT}),
\begin{align*}
-\int A(\psi)d\mu_{\cC_1+\cC_0}(\psi)&=\int\int
 V^{1,1}(\psi+\psi^1)d\mu_{\cC_1}(\psi^1)d\mu_{\cC_0}(\psi)\\
&=\int
 V^{1-3,0}(\psi)d\mu_{\cC_0}(\psi)=V^{1-3,end}.
\end{align*}
As in Corollary \ref{cor_real_covariances} we set
 $\cC_1=\sum_{l=1}^{N_h-N_{\beta}+1}C_l$, $\cC_0=C_0$. Then, by
 \eqref{eq_covariance_decomposition} and the fact that $A(\psi)$ is
 invariant under the transform $\psi_{\rho\bx s\xi}\mapsto
 e^{-i\xi\frac{\pi}{\beta}s}\psi_{\rho\bx s\xi}$ $((\rho,\bx,s,\xi)\in
 I)$,
$$\int A(\psi)d\mu_{\cC_1+\cC_0}(\psi)= \int
 A(\psi)d\mu_{C(\phi)}(\psi).$$ 
Thus, $V^{1-3,end}=-\int
 A(\psi)d\mu_{C(\phi)}(\psi)$.
By substituting this equality, $c_0$, $D_c$ defined in Corollary
 \ref{cor_real_covariances} and $\alpha =2^4 a$ we see that Lemma
 \ref{lem_UV_final} implies the existence of the function $V^{end}$
 satisfying the claimed regularity and \eqref{eq_real_pressure_decay}, \eqref{eq_real_correction_estimate}.

The equality
 \eqref{eq_model_analytic_continuation} can be derived by a basic
 argument close to \\
\cite[\mbox{Proposition 6.4 (3)}]{K16}. However, we
 provide the proof for completeness. We fix $\phi\in\C$ and hide the
 dependency on $\phi$ in the following. 
Since the expansion of
 $e^{-V(u)(\psi)+W(u)(\psi)-A(\psi)}$ terminates at a finite order, there
 exists a $(\beta,L,h)$-dependent positive constant $r(\beta,L,h)$ such
 that 
\begin{align}
\Re \int e^{-V(u)(\psi)+W(u)(\psi)-A(\psi)}d\mu_{\cC_0+\cC_1}(\psi)>0,\quad
 (\forall (u,\bla)\in
 \overline{D(r(\beta,L,h))}^3).\label{eq_real_part_positive_full}
\end{align}
Let us set 
\begin{align*}
&V^{1}(\psi):=\sum_{j=0}^1V^{j,1}(\psi),\quad V^{0}(\psi):=\sum_{j=1}^2V^{0-j,0}(\psi)+\sum_{j=1}^3V^{1-j,0}(\psi)+V^{2,0}(\psi).
\end{align*}
By the results of Lemma \ref{lem_UV_without_artificial}, Lemma
 \ref{lem_UV_with_artificial} and \eqref{eq_0_1_1_initial},
\eqref{eq_initial_1_infinity},
 \eqref{eq_offspring_bound_4th}, \eqref{eq_offspring_bound_2nd_initial}
there exists a $(\beta,L,h)$-dependent
 positive constant $c(\beta,L,h)$ such that 
\begin{align*}
&\sum_{m=0}^N\|V_m^l(u,\bla)\|_1\le c(\beta,L,h)\alpha^{-2},\\
&(\forall (u,\bla)\in \overline{D(c(\beta,L,h)^{-1}\alpha^{-4})}^3,\
 \alpha \in [2^3,\infty),\ l\in \{0,1\}).
\end{align*}
This implies that we can choose a $(\beta,L,h)$-dependent positive
 constant $c(\beta,L,h)'$ such that if $\alpha\ge c(\beta,L,h)'$,
\begin{align}
&\Re \int e^{z V^l(u,\bla)(\psi)}d\mu_{\cC_l}(\psi)>0,\label{eq_real_part_positive_one_step}\\
&(\forall (u,\bla)\in
 \overline{D(c(\beta,L,h)^{-1}\alpha^{-4})}^3,\ z\in \overline{D(2)},\
 l\in \{0,1\}).\notag
\end{align}
Let us take $\alpha_0\in [c(\beta,L,h)',\infty)$ satisfying that
 $c(\beta,L,h)^{-1}\alpha_0^{-4}\le r(\beta,L,h)$ and fix $(u,\bla)\in
 \overline{D(c(\beta,L,h)^{-1}\alpha_0^{-4})}^3$ so that 
both \eqref{eq_real_part_positive_full} and \eqref{eq_real_part_positive_one_step}
hold for this $(u,\bla)$. 
By \eqref{eq_real_part_positive_one_step}, for
 $l\in \{0,1\}$, 
$$
z\mapsto \log\left(\int e^{z V^l(u,\bla)(\psi+\psi^1)}d\mu_{\cC_l}(\psi^1)\right)
$$
is analytic in $D(2)$ and thus
\begin{align}
&\log\left(\int e^{V^1(u,\bla)(\psi+\psi^1)}d\mu_{\cC_1}(\psi^1)\right)\label{eq_one_step_analytic_continuation}\\
&=\sum_{n=1}^{\infty}\frac{1}{n!}\left(\frac{d}{dz}\right)^n\log\left(\int
 e^{z
 V^1(u,\bla)(\psi+\psi^1)}d\mu_{\cC_1}(\psi^1)\right)\Bigg|_{z=0}=V^0(u,\bla)(\psi).\notag
\end{align}
Similarly,
\begin{align}
\log\left(\int e^{V^0(u,\bla)(\psi)}d\mu_{\cC_0}(\psi)\right)
=V^{end}(u,\bla).\label{eq_next_analytic_continuation}
\end{align}
By \eqref{eq_real_part_positive_one_step} for $l=1$, $z=1$ and
 \eqref{eq_one_step_analytic_continuation} we can apply the
 basic lemma \cite[\mbox{Lemma C.2}]{K14} to ensure that
$$
e^{V^0(u,\bla)(\psi)}=\int e^{V^1(u,\bla)(\psi+\psi^1)}d\mu_{\cC_1}(\psi^1).
$$
By substituting this equality into \eqref{eq_next_analytic_continuation} and
 using the division formula again,
\begin{align}\label{eq_end_log_relation}
V^{end}(u,\bla)=\log\left(\int e^{V^1(u,\bla)(\psi)}d\mu_{\cC_0+\cC_1}(\psi)\right).
\end{align}
Moreover, by \eqref{eq_real_part_positive_full},
$$
e^{V^{end}(u,\bla)}=\int e^{V^1(u,\bla)(\psi)}d\mu_{\cC_0+\cC_1}(\psi).
$$
By the identity theorem, this equality must hold for any
 $(u,\bla)\in \overline{D(r)}\times \overline{D(r')}^2$. By the gauge
 transform $\psi_{\rho\bx s\xi}\mapsto
 e^{-i\xi\frac{\pi}{\beta}s}\psi_{\rho\bx s\xi}$ $((\rho,\bx,s,\xi)\in
 I)$ the right-hand side of the above equality becomes that of
 \eqref{eq_model_analytic_continuation}.
\end{proof}

Lemma \ref{lem_real_UV_integration} can be reduced to the following
explicit statements.
\begin{proposition}\label{prop_model_UV_application}
There exists a positive constant $c(d)\in\R_{\ge 1}$ depending only on
 $d$ such that the following statements hold with any $h\in
 \frac{2}{\beta}\N$ satisfying $h\ge c(d)\max\{1,\beta^{-1}\}$, $L\in
 \N$ satisfying \eqref{eq_L_largeness_final_condition} and $r\in\R_{>0}$
 defined by
$$
r:=c(d)^{-1}(1+\beta^{d+3}+(1+\beta^{-1})g_d(\Theta))^{-2}.
$$
\begin{align}
&\left|
\int e^{-V(u)(\psi)+W(u)(\psi)}d\mu_{C(\phi)}(\psi)-1\right|\le
 \frac{1}{2},\quad (\forall (\phi,u)\in\C\times
 \overline{D(r)}).\label{eq_model_good_smallness}\\
&\left|\frac{\int
 e^{-V(u)(\psi)+W(u)(\psi)}A^j(\psi)d\mu_{C(\phi)}(\psi)}{\int
 e^{-V(u)(\psi)+W(u)(\psi)}d\mu_{C(\phi)}(\psi)}-\int A^j(\psi)d\mu_{C(\phi)}(\psi)
\right|\label{eq_model_error_estimate}\\
&\le c(d)\beta 
(1+\beta^{d+3}+
(1+\beta^{-1})g_d(\Theta))^{2}(1+\Theta^{-1}
(1+\Theta^{-1})^d)L^{-d},\notag\\
&(\forall (\phi,u)\in\C\times
 \overline{D(r)},\ j\in \{1,2\}).\notag
\end{align}
\end{proposition}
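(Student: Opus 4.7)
\medskip

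\noindent\textbf{Proof proposal.} The plan is to derive Proposition \ref{prop_model_UV_application} directly from Lemma \ref{lem_real_UV_integration} by evaluating and differentiating the analytic continuation identity \eqref{eq_model_analytic_continuation} at $\bla=\b0$, after making a judicious choice of the free parameter $a\in\R_{\ge 1}$ appearing in the lemma. Concretely, I will fix $a := c_1\max\{1,\beta^{1/2}\}$ with a constant $c_1=c_1(d)\in\R_{\ge 1}$ to be chosen large, so that the Grassmann integral is close to $1$ while the lemma's radius is still at least as large as the radius $r$ claimed in the proposition.

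To establish \eqref{eq_model_good_smallness}, I would set $\bla=\b0$ in \eqref{eq_model_analytic_continuation}, giving
\[
 \int e^{-V(u)(\psi)+W(u)(\psi)}d\mu_{C(\phi)}(\psi)=e^{V^{end}(\phi,u,\b0)}.
\]
Since $N/h=4\beta L^d$, the bound \eqref{eq_real_pressure_decay} gives $|V^{end}(\phi,u,\b0)|\le 4\beta a^{-2}\le 4c_1^{-2}$, which is below $\log(3/2)$ for $c_1$ large enough depending only on $d$; the elementary estimate $|e^z-1|\le 2|z|$ for $|z|\le \log 2$ then yields \eqref{eq_model_good_smallness}. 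To match the radii, I would compare $r_{lemma}=c(d)^{-1}(1+\beta^{d+2}+\beta^{-1}g_d(\Theta))^{-2}a^{-4}$ with the target $r$ of the proposition. After substituting $a^4\asymp \beta^2\vee 1$, this reduces to an elementary inequality of the form
\[
 \max\{1,\beta\}\bigl(1+\beta^{d+2}+\beta^{-1}g_d(\Theta)\bigr)\le C(d)\bigl(1+\beta^{d+3}+(1+\beta^{-1})g_d(\Theta)\bigr),
\]
which I would verify by distinguishing the cases $\beta\le 1$ and $\beta>1$; the constant $C(d)$ is then absorbed into the $c(d)$ of the proposition.

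For \eqref{eq_model_error_estimate}, I would exploit the analyticity of both sides of \eqref{eq_model_analytic_continuation} in $\bla\in D(2r')^2$ and differentiate in $\la_j$ at $\bla=\b0$. Since $A(\psi)=\la_1 A^1(\psi)+\la_2 A^2(\psi)$, this yields
\[
 \frac{\partial V^{end}}{\partial \la_j}(\phi,u,\b0)\cdot \int e^{-V(u)+W(u)}d\mu_{C(\phi)}=-\int e^{-V(u)+W(u)}A^j(\psi)\,d\mu_{C(\phi)}.
\]
Dividing by the Grassmann integral---which is nonzero by part (i)---and rearranging, the quantity in \eqref{eq_model_error_estimate} becomes
\[
 -\frac{\partial V^{end}}{\partial \la_j}(\phi,u,\b0)-\int A^j(\psi)\,d\mu_{C(\phi)}(\psi),
\]
whose absolute value is bounded by \eqref{eq_real_correction_estimate}. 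Substituting $a^4\asymp \beta^2\vee 1$ into the right-hand side of \eqref{eq_real_correction_estimate} and invoking the same $\beta\le 1$ versus $\beta>1$ comparison as above converts the factor $(1+\beta^{d+2}+\beta^{-1}g_d(\Theta))^2 a^4$ into $(1+\beta^{d+3}+(1+\beta^{-1})g_d(\Theta))^2$ modulo a dimensional constant, finishing the proof.

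The only nontrivial obstacle is the calibration of $a$ against the $\beta$-dependence: the lemma's radius shrinks like $a^{-4}$, while the pressure bound $|V^{end}(\phi,u,\b0)|\le 4\beta a^{-2}$ forces $a\gtrsim \beta^{1/2}$ to keep $e^{V^{end}}$ uniformly close to $1$, which is precisely the ``extra'' factor of $\beta$ that shifts the polynomial exponent in the radius from $\beta^{d+2}$ to $\beta^{d+3}$ and promotes $\beta^{-1}g_d(\Theta)$ to $(1+\beta^{-1})g_d(\Theta)$. Once this balancing is carried out, the remainder of the argument is routine evaluation and one Grassmann-algebraic differentiation; no additional combinatorial estimate beyond what is already contained in Lemma \ref{lem_real_UV_integration} is needed.
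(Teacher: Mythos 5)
Your proposal is correct and follows essentially the same route as the paper: the paper's proof also specializes Lemma \ref{lem_real_UV_integration} with $a$ of order $(1+\beta)^{1/2}$ (explicitly $a=2(\log(3/2))^{-1/2}(1+\beta)^{1/2}$) so that $|V^{end}(\phi,u,\b0)|\le 4\beta a^{-2}\le\log(3/2)$, absorbs the resulting $a^{-4}\asymp(1+\beta)^{-2}$ into the radius and error bound to produce the exponents $\beta^{d+3}$ and $(1+\beta^{-1})g_d(\Theta)$, and obtains \eqref{eq_model_error_estimate} by differentiating \eqref{eq_model_analytic_continuation} in $\la_j$ at $\bla=\b0$ exactly as you describe. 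Your calibration of $a$ and the case split $\beta\le 1$ versus $\beta>1$ match the paper's computation up to immaterial constants.
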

\begin{proof}
Take $a$ to be 
$$
2\left(\log\left(\frac{3}{2}\right)\right)^{-\frac{1}{2}}(1+\beta)^{\frac{1}{2}},
$$
which is larger than 1.
Then, the $r$ in Lemma \ref{lem_real_UV_integration} and the right-hand
 side of \eqref{eq_real_correction_estimate} are rescaled to be the $r$ in
 this proposition and the right-hand side of
 \eqref{eq_model_error_estimate} respectively. Moreover,
 \eqref{eq_real_pressure_decay} implies that
\begin{align*}
|V^{end}(\phi,u,\b0)|\le \log\left(\frac{3}{2}\right),\quad (\forall
 (\phi,u)\in\C\times \overline{D(r)}).
\end{align*}
By combining this inequality with \eqref{eq_model_analytic_continuation}
 we obtain
 \eqref{eq_model_good_smallness}. Also, by 
differentiating \eqref{eq_model_analytic_continuation} the left-hand
 side of \eqref{eq_real_correction_estimate} is proved to be equal to that of
 \eqref{eq_model_error_estimate}.
\end{proof}

\subsection{Existence of the infinite-volume limit of the correction
  term}\label{subsec_L_limit}

According to \eqref{eq_real_pressure_decay},
\eqref{eq_end_log_relation}, \eqref{eq_model_good_smallness},
we know that
\begin{align*}
\lim_{L\to \infty\atop L\in \N}\limsup_{h\to\infty\atop h\in
 \frac{2}{\beta}\N}\sup_{(\phi,u)\in \C\times
 \overline{D(r)}}\frac{h}{N}\left|
\log\left(\int e^{-V(u)(\psi)+W(u)(\psi)}d\mu_{C(\phi)}(\psi)\right)
\right|=0.
\end{align*}
However, this property does not imply a uniform convergence of
$$
\int e^{-V(u)(\psi)+W(u)(\psi)}d\mu_{C(\phi)}(\psi)
$$
with $(\phi,u)$ as $h\to \infty$, $L\to \infty$. As we need such a
stronger convergence property to complete the proof of the main theorem,
let us prove it beforehand. Let us start by confirming spatial decay properties
of $C(\phi)$.

\begin{lemma}\label{lem_full_covariance_decay}
There exists a positive constant $c(d,\beta,\theta)$ depending only on
 $d,\beta,\theta$ such that 
\begin{align}
&|C(\phi)(\rho\bx s,\eta \by t)|\le
 \frac{c(d,\beta,\theta)}{1+\sum_{j=1}^d|\frac{L}{2\pi}(e^{i\frac{2\pi}{L}\<\bx-\by,\be_j\>}-1)|^{d+1}},\label{eq_bare_covariance_decay}\\
&(\forall (\rho,\bx,s),(\eta,\by,t)\in \{1,2\}\times \Z^d\times
 [0,\beta),\ \phi\in \C).\notag\\
&|C(\phi)(\rho\bx s,\eta \by t)|\le
 \frac{c(d,\beta,\theta)}{1+(\frac{2}{\pi})^{d+1}
\sum_{j=1}^d|\<\bx-\by,\be_j\>|^{d+1}},\label{eq_bare_covariance_decay_L_independent}\\
&(\forall (\rho,\bx,s),(\eta,\by,t)\in \{1,2\}\times \Z^d\times
 [0,\beta)\text{ with }\bx-\by\in [-L/2,L/2]^d,\ \phi\in \C).\notag
\end{align}
\end{lemma}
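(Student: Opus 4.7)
The plan is to combine a uniform-in-$\phi$ bound on momentum derivatives of the integrand with the discrete summation-by-parts identity already used in the proof of Lemma \ref{lem_real_covariances}, and then to pass from \eqref{eq_bare_covariance_decay} to \eqref{eq_bare_covariance_decay_L_independent} by an elementary trigonometric inequality.

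Starting from the explicit representation \eqref{eq_covariance_2_band} I would write
$$
C(\phi)(\rho\bx s,\eta\by t) = \frac{1}{L^d}\sum_{\bk\in\G^*} e^{i\<\bk,\bx-\by\>}\, F(\phi,\bk,s,t)(\rho,\eta),
$$
where $F(\phi,\bk,s,t)\in\Mat(2,\C)$ is
$$
e^{(s-t)(i\frac{\theta}{2}I_2+E(\phi)(\bk))}\Bigl(1_{s\ge t}(I_2+e^{\beta(i\frac{\theta}{2}I_2+E(\phi)(\bk))})^{-1} - 1_{s<t}(I_2+e^{-\beta(i\frac{\theta}{2}I_2+E(\phi)(\bk))})^{-1}\Bigr).
$$
The heart of the argument is to establish, for every $n\in\{0,1,\cdots,d+1\}$ and $j\in\{1,\cdots,d\}$, the uniform bound
$$
\sup_{\phi\in\C,\ \bk\in\R^d,\ s,t\in[0,\beta)}\left\|\left(\frac{\partial}{\partial k_j}\right)^{\!n} F(\phi,\bk,s,t)\right\|_{2\times 2} \le c(d,\beta,\theta,n).
$$
Given such a bound, the periodicity of the summand over $\G^*$ together with the identity used to obtain \eqref{eq_covariances_spatial_power} gives
$$
\left|\left(\tfrac{L}{2\pi}(e^{-i\frac{2\pi}{L}\<\bx-\by,\be_j\>}-1)\right)^{\!n} C(\phi)(\rho\bx s,\eta\by t)\right|\le c(d,\beta,\theta,n),
$$
and summing these estimates over $n\in\{0,1,\cdots,d+1\}$ and $j\in\{1,\cdots,d\}$ delivers \eqref{eq_bare_covariance_decay}.

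To establish the uniform derivative bound I would exploit the algebraic identity $E(\phi)(\bk)^2=(e(\bk)^2+|\phi|^2)I_2$. This allows the decomposition
$$
F(\phi,\bk,s,t) = P_1(Y)\,I_2 + P_2(Y)\,E(\phi)(\bk),\qquad Y:=e(\bk)^2+|\phi|^2,
$$
where $P_1,P_2$ are scalar functions of $Y\ge 0$ (also depending on $\beta,\theta,s-t$) built out of $\cosh(\alpha\sqrt Y)$ and $\sinh(\alpha\sqrt Y)/\sqrt Y$ with $\alpha\in\{|s-t|,\beta-|s-t|\}$, divided by $1+2e^{i\beta\theta/2}\cosh(\beta\sqrt Y)+e^{i\beta\theta}$. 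Both $P_1, P_2$ are even in $\sqrt Y$, hence smooth in $Y\in[0,\infty)$. The scalar denominator equals $\det(I_2+e^{\beta(i\frac{\theta}{2}I_2+E(\phi)(\bk))})$, which is bounded away from zero on $Y\ge 0$ by a $(\beta,\theta)$-dependent positive constant for $\theta\in[0,2\pi/\beta)$ (the same computation ensuring $\Tr e^{-\beta H_0(\phi)}\ne 0$ in Lemma \ref{lem_partition_function_2_band}). Since $\alpha\le\beta$, the exponential growth of the numerator in $\sqrt Y$ is dominated by the denominator, so $P_1, P_2$ and all their $Y$-derivatives are uniformly bounded on $[0,\infty)$ by $(\beta,\theta)$-dependent constants. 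Together with the $\phi$-independent bounds $|\partial_{k_j}^m (e(\bk)^2)|\le c(d,m)$ and $\|\partial_{k_j}^m E(\phi)(\bk)\|_{2\times 2}\le c(d,m)$, the chain rule then yields the desired estimate on $\partial_{k_j}^n F$.

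Finally, \eqref{eq_bare_covariance_decay_L_independent} follows from \eqref{eq_bare_covariance_decay}: the inequality $|e^{iy}-1|=2|\sin(y/2)|\ge\frac{2}{\pi}|y|$ holds for $|y|\le\pi$, and for $\bx-\by\in[-L/2,L/2]^d$ we have $\frac{2\pi}{L}|\<\bx-\by,\be_j\>|\le\pi$, whence $\big|\frac{L}{2\pi}(e^{i\frac{2\pi}{L}\<\bx-\by,\be_j\>}-1)\big|\ge\frac{2}{\pi}|\<\bx-\by,\be_j\>|$.

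The main obstacle is verifying the uniform-in-$\phi$ derivative bound on $F$: naive differentiation of $P_i(Y)$ produces terms like $\sinh(\beta\sqrt Y)/Y^{k+1/2}$ that look singular at $Y=0$ and grow exponentially as $Y\to\infty$. Both pathologies dissolve at once upon using the evenness of the numerators in $\sqrt Y$ to reduce to smooth functions of $Y$, together with the exponential suppression provided by the denominator for large $Y$. This is a quantitative check rather than a conceptual difficulty, but it is where the explicit $(\beta,\theta)$-dependence of the final constant is generated.
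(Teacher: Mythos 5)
Your proposal is correct, but it takes a genuinely different route from the paper's. The paper proves \eqref{eq_bare_covariance_decay} first only for discrete times $s,t\in[0,\beta)_h$, by combining the uniform bound of Lemma \ref{lem_real_covariances} \eqref{item_real_covariances_determinant_bound} (for $n=1$) with the Matsubara-frequency decomposition \eqref{eq_covariance_decomposition} and the already-established scale-by-scale estimates \eqref{eq_covariances_spatial_power} summed over $l$; it then passes to arbitrary $s,t\in[0,\beta)$ with $s\neq t$ by continuity of $(s,t)\mapsto C(\phi)(\rho\bx s,\eta\by t)$ off the diagonal, and handles $s=t$ by time-translation invariance, with \eqref{eq_bare_covariance_decay_L_independent} deduced from \eqref{eq_bare_covariance_decay} exactly as you do. You instead work directly with the continuous-time spectral formula \eqref{eq_covariance_2_band} and exploit $E(\phi)(\bk)^2=(e(\bk)^2+|\phi|^2)I_2$ to reduce the integrand to two scalar functions of $Y=e(\bk)^2+|\phi|^2$ whose momentum derivatives you control uniformly in $\phi$. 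Your route is more self-contained (no $h$-discretization, no cut-off functions, no limiting argument in $(s,t)$) at the cost of redoing from scratch the derivative estimates that the paper recycles from \eqref{eq_integrand_derivative_normal_case} and \eqref{eq_integrand_derivative_abnormal_case}; the identification of the exponents $\alpha\in\{|s-t|,\beta-|s-t|\}$ after the hyperbolic cancellation, and the lower bound $|\cos(\beta\theta/2)+\cosh(\beta\sqrt Y)|\ge 1+\cos(\beta\theta/2)>0$ for $\theta\in[0,2\pi/\beta)$, are exactly the right ingredients.

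One detail to tighten: the bound $\|\partial_{k_j}^{m}E(\phi)(\bk)\|_{2\times 2}\le c(d,m)$ as you state it fails for $m=0$, since $\|E(\phi)(\bk)\|_{2\times 2}=\sqrt{Y}$ is unbounded in $\phi$. Consequently, in the Leibniz expansion of $\partial_{k_j}^{n}\bigl(P_2(Y)E(\phi)(\bk)\bigr)$ the term in which every derivative falls on $P_2$ is not controlled by boundedness of $P_2$ and its $Y$-derivatives alone; what is needed is the stronger statement that $\sqrt{Y}\,|\partial_Y^{l}P_2(Y)|$ is uniformly bounded. This does hold --- differentiating the power series of $\sinh(\alpha\sqrt{Y})/\sqrt{Y}$ term by term gives $|\partial_Y^{l}(\sinh(\alpha\sqrt{Y})/\sqrt{Y})|\le \alpha^{2l}\sinh(\alpha\sqrt{Y})/\sqrt{Y}$, so multiplication by $\sqrt{Y}$ yields at most $\beta^{2l}\sinh(\beta\sqrt{Y})$, which is dominated by the denominator $2(\cos(\beta\theta/2)+\cosh(\beta\sqrt{Y}))\ge c(\beta,\theta)\cosh(\beta\sqrt{Y})$ --- but it should be said explicitly, since this is precisely the point where uniformity in $\phi$ could otherwise be lost.
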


\begin{proof}
It follows from Lemma \ref{lem_real_covariances}
 \eqref{item_real_covariances_determinant_bound} for $n=1$, \eqref{eq_covariance_decomposition} and
 \eqref{eq_covariances_spatial_power} that the inequality
 \eqref{eq_bare_covariance_decay} holds for any
 $(\rho,\bx,s),(\eta,\by,t)\in \{1,2\}\times \Z^d\times [0,\beta)_h$,
 $\phi\in\C$. Since $(s,t)\mapsto C(\phi)(\rho\bx s,\eta \by t)$ is
 continuous in $\{(s,t)\in [0,\beta)^2\ |\ s\neq t\}$,
 \eqref{eq_bare_covariance_decay} can be claimed for any $s,t\in
 [0,\beta)$ with $s\neq t$ by approximating $s,t$ by converging
 sequences. The inequality
 \eqref{eq_bare_covariance_decay} also holds in the case $s=t$, since
 $C(\phi)(\rho\bx s,\eta \by s)=C(\phi)(\rho\bx 0,\eta \by 0)$. The
 inequality \eqref{eq_bare_covariance_decay_L_independent} follows from 
\eqref{eq_bare_covariance_decay}.
\end{proof}

By definition, $\phi\mapsto C(\phi)(\bX)$ is continuous in $\C$ for any
$\bX\in (\{1,2\}\times \Z^d\times [0,\beta))^2$ and thus
$$
(\phi,u)\mapsto \int e^{-V(u)(\psi)+W(u)(\psi)}d\mu_{C(\phi)}(\psi)
$$
is continuous in $\C^2$. In the rest of this subsection we prove the
following proposition, which requires deeper analysis of the tree expansion
than that performed in Subsection \ref{subsec_general_estimation}.

\begin{proposition}\label{prop_L_limit}
Let $r$ be the radius set in Proposition
 \ref{prop_model_UV_application}. For any non-empty compact set $Q$ of
 $\C$, 
\begin{align*}
&\lim_{h\to \infty\atop h\in \frac{2}{\beta}\N}\int
 e^{-V(u)(\psi)+W(u)(\psi)}d\mu_{C(\phi)}(\psi),\quad
\lim_{L\to \infty\atop L\in \N}\lim_{h\to \infty\atop h\in \frac{2}{\beta}\N}\int
 e^{-V(u)(\psi)+W(u)(\psi)}d\mu_{C(\phi)}(\psi)
\end{align*}
converge in $C(Q\times \overline{D(r/2)},\C)$ as sequences of function
 with the variable $(\phi,u)$.
\end{proposition}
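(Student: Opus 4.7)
The plan is to exploit the identity $\int e^{-V(u)(\psi)+W(u)(\psi)}d\mu_{C(\phi)}(\psi)=e^{V^{end}(\phi,u,\b0)}$ from \eqref{eq_model_analytic_continuation} together with the tree decomposition $V^{end}=\sum_{n=1}^{\infty}V^{end,(n)}$ constructed in Subsection \ref{subsec_second_integration}. The estimates inside the proof of Lemma \ref{lem_UV_final} give bounds of the form $|V^{end,(n)}(\phi,u,\b0)|\le c(\beta)(2D_c L^{-d})^{n-1}(2^6\alpha^{-2})^n$ that are uniform in $h$ (since $N/h=4\beta L^d$ is $h$-independent) and summable in $n$ once $L$ and $\alpha$ are chosen large enough. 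Continuity of the exponential therefore reduces the proposition to proving that each fixed-order term $V^{end,(n)}$ converges uniformly on $Q\times\overline{D(r/2)}$, first as $h\to\infty$ and then as $L\to\infty$.

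For the $h\to\infty$ step with $L$ fixed, I would unfold $V^{end,(n)}$ as a finite sum over trees of $h^{-1}$-weighted sums of products of the explicit kernels $V_2^{0-1,1}$, $V_{2,2}^{0-2,1}$ and entries of the covariance. Using the gauge transform that led to \eqref{eq_covariance_decomposition}, one may replace $\cC_0+\cC_1$ by the single covariance $C(\phi)$, which by the explicit formula \eqref{eq_covariance_2_band} is jointly continuous in $(\phi,s,t)$ off the diagonal $s=t$. As $h\to\infty$ the $h^{-1}$-weighted sums over $[0,\beta)_h$ become Riemann sums for Lebesgue integrals on $[0,\beta)$, and the delta factors $h\,1_{s_i=s_j}$ inside $V_{2,2}^{0-2,1}$ collapse pairs of times into single continuum integrals. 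Pointwise convergence combined with uniform bounds supplied by Proposition \ref{prop_P_S_bound_application} (whose constants depend continuously on $|\phi|$ and are thus uniform on compact $\phi$-subsets) yields uniform convergence on $Q\times\overline{D(r/2)}$ via dominated convergence.

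For the subsequent $L\to\infty$ limit, each $h$-limit of $V^{end,(n)}$ is a finite sum over trees of continuum time-integrals coupled to spatial sums over $\G^{n-1}$, with one spatial vertex fixed by the translation invariance \eqref{eq_time_translation_generic} established in Lemma \ref{lem_tree_bound}. The spatial decay estimate \eqref{eq_bare_covariance_decay_L_independent} of Lemma \ref{lem_full_covariance_decay}, whose constants are uniform in $\phi\in Q$, makes the spatial sums Cauchy in $L$ with absolutely convergent limits on $\Z^{d(n-1)}$, and provides the $\phi$-uniform summable majorant needed for uniform convergence in $(\phi,u)\in Q\times\overline{D(r/2)}$.

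The main obstacle will be the singular kernel $V_{2,2}^{0-2,1}$, whose factor $h\,1_{s_1=t_1}-\beta^{-1}$ must be handled as two separate contributions inside a tree contraction: the $h\,1_{s_1=t_1}$ piece collapses two time variables into one continuum integral, while the $-\beta^{-1}$ piece produces an independent continuum integral. One must verify that each piece individually admits a Riemann-sum limit that is uniformly continuous in $(\phi,u)$, since their cancellation inside the tree is not termwise visible. This requires careful bookkeeping through the tree-formula derivatives $\D_{\{p,q\}}(\cC)$ but no new analytic ideas beyond those already developed in Subsection \ref{subsec_general_estimation}; in particular, the equicontinuity of the time-discretized covariance in $(\phi,s,t)$ on compact subsets with $s\neq t$ follows from the $h$-uniform derivative bounds implicit in the proof of Lemma \ref{lem_real_covariances}.
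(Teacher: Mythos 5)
Your overall strategy---prove termwise convergence of a series expansion, supply an $n$-summable bound uniform in $(h,L,\phi,u)$, apply dominated convergence, and exponentiate---is the same as the paper's, and your treatment of the $h\to\infty$ limit (Riemann sums plus the determinant bound of Proposition \ref{prop_P_S_bound_application}) matches the paper's argument. However, there are two genuine problems.

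First, an organizational one: the decomposition $V^{end}=\sum_n V^{end,(n)}$ does not reduce the problem to finitely many explicit terms, because each $V^{end,(n)}$ contains the full output $V^{0-1,0}=\sum_m V^{0-1-1,0,(m)}+\cdots$ of the first-scale integration, which is itself an infinite series built from the two covariances $\cC_0$ and $\cC_1$ separately; you cannot "replace $\cC_0+\cC_1$ by $C(\phi)$" inside a fixed outer order $n$ without resumming. The paper instead expands $\log\bigl(\int e^{u\hat V}d\mu_{\cG(\phi)}\bigr)=\sum_n\alpha_{n,L,h}(\phi)u^n$ in powers of $u$ with the single covariance $\cG(\phi)$, so that each coefficient is a finite, explicit tree sum, and controls the tail by Cauchy's estimate on $\overline{D(r)}$ using \eqref{eq_model_good_smallness}.

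Second, and more seriously, the $L\to\infty$ step is where the real work lies, and your proposal asserts rather than proves it. At order $n$ the quartic vertex carries two independent spatial sums but only one factor $L^{-d}$, and a tree on $n+1$ vertices has only $n$ lines; so the tree propagators, even with the decay \eqref{eq_bare_covariance_decay_L_independent}, do \emph{not} provide decay in all of the $2n-1$ spatial summation variables. Some variables are attached to the rest of the diagram only through the interpolated Gaussian $e^{-\sum M_{a,b}\D_{\{a,b\}}}$, whose contribution is bounded by a determinant that destroys spatial decay. The paper's proof of Proposition \ref{prop_L_limit} is devoted precisely to this: for each tree it extracts, from the operator $R^n$, explicit additional propagators attaching to the otherwise orphaned fields, uses the vanishing of $\sum_t\nu(s,t)$ to kill the dangerous self-contractions, and shows that every term is either $O(L^{-d})$ (the bounds \eqref{eq_4_field_terminal_bound}, \eqref{eq_2_field_subtraction_bound}) or reorganizes into a chain $b_{n,L,h}(T)$ that is absolutely summable over $(\Z^d)^n$. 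Without this extraction and case analysis on the tree topology, the fixed-order terms are merely bounded in $L$, and their convergence as $L\to\infty$ does not follow. Your "main obstacle" paragraph addresses only the $h\to\infty$ Riemann-sum issue, not this volume-counting problem, so the proposal as written has a gap at its central step.
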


\begin{proof} Recalling the definition \eqref{eq_0_1_1_initial_kernel}, 
 \eqref{eq_definition_initial_data_kernel}, let us define 
the anti-symmetric function $V_2:I^2\to \C$ and 
the bi-anti-symmetric function $V_{2,2}:I^2\times I^2\to\C$ by
$V_2(\bX):=V_2^{0-1,1}(1)(\bX)$, 
$V_{2,2}(\bX,\bY):=V_{2,2}^{0-2,1}(1)(\bX,\bY)$ and set 
\begin{align*}
&\hat{V}_2(\psi):=\frah^2\sum_{\bX\in
 I^2}V_{2}(\bX)\psi_{\bX},\quad  
\hat{V}_4(\psi):=\frah^4\sum_{\bX,\bY\in
 I^2}V_{2,2}(\bX,\bY)\psi_{\bX}\psi_{\bY},\\
&\hat{V}(\psi):=\hat{V}_2(\psi)+\hat{V}_4(\psi).
\end{align*}
It follows that $u\hat{V}(\psi)=-V(u)(\psi)+W(u)(\psi)$. For $\phi\in\C$ we
 define the function $\cG(\phi):(\{1,2\}\times \Z^d\times
 [0,\beta))^2\to \C$ by 
$\cG(\phi)(\rho\bx s,\eta \by t):=
 e^{-i\frac{\pi}{\beta}(s-t)}C(\phi)(\rho\bx s,\eta \by t)$.
By the gauge transform $\psi_{\rho\bx s\xi}\to e^{i\xi
 \frac{\pi}{\beta}s}\psi_{\rho\bx s\xi}$ $((\rho,\bx,s,\xi)\in I)$, 
$$
\int e^{u\hat{V}(\psi)}d\mu_{\cG(\phi)}(\psi)= \int e^{u\hat{V}(\psi)}d\mu_{C(\phi)}(\psi).
$$
Though we often drop the sign of $\phi$-dependency from $\cG(\phi)$ for
 simplicity in the following, the dependency on $\phi$ should be reminded especially
 when we establish uniform bounds with $\phi$. By
 \eqref{eq_model_good_smallness},
$$
(\phi,u)\mapsto \log\left(\int e^{u\hat{V}(\psi)}d\mu_{\cG(\phi)}(\psi)
\right)
$$
is continuous in $\C\times \overline{D(r)}$ and 
$$
u\mapsto \log\left(\int e^{u\hat{V}(\psi)}d\mu_{\cG(\phi)}(\psi)
\right)
$$
is analytic in $D(r)$ for any $\phi\in\C$.
Take a non-empty compact set $Q$ of $\C$. For $n\in \N$, $\phi\in Q$,
 set 
\begin{align*}
\alpha_{n,L,h}(\phi):=\frac{1}{n!}\left(\frac{d}{d u}\right)^n\log\left(
\int e^{u\hat{V}(\psi)}d\mu_{\cG}(\psi)\right)\Bigg|_{u=0}.
\end{align*}
Let us prove that $\alpha_{n,L,h}$ converges in $C(Q,\C)$ as a function of
 $\phi$ in the limit $h\to
 \infty$, $L\to \infty$. By the transformation close to
 \eqref{eq_L_M_transformation} we have
\begin{align*}
\alpha_{n,L,h}
&=\frac{1}{n!}Tree(\{1,2,\cdots,n\},\cG)\prod_{j=1}^n\hat{V}_2(\psi^j)
\Bigg|_{\psi^j=0\atop(\forall j\in\{1,2,\cdots,n\})}\\
&\quad + \sum_{l=1}^n\left(\begin{array}{c} n \\ l\end{array}\right)
\frac{1}{n!}Tree(\{1,2,\cdots,n\},\cG)\prod_{j=1}^l\hat{V}_4(\psi^j)
\prod_{k=l+1}^n\hat{V}_2(\psi^k)\Bigg|_{\psi^j=0\atop(\forall j\in\{1,2,\cdots,n\})}\\
&=\frac{1}{n!}Tree(\{1,2,\cdots,n\},\cG)\prod_{j=1}^n\hat{V}_2(\psi^j)
\Bigg|_{\psi^j=0\atop(\forall j\in\{1,2,\cdots,n\})}\\
&\quad + \sum_{l=1}^n\left(\begin{array}{c} n \\ l\end{array}\right)
\frah^4\sum_{\bX,\bY\in
 I^2}V_{2,2}(\bX,\bY)\\
&\qquad\cdot
 \frac{1}{n!}Tree(\{1,2,\cdots,n+1\},\cG)\psi_{\bX}^1\psi_{\bY}^2\prod_{j=3}^{l+1}\hat{V}_4(\psi^j)
\prod_{k=l+2}^{n+1}\hat{V}_2(\psi^k)
\Bigg|_{\psi^j=0\atop(\forall j\in\{1,2,\cdots,n+1\})}\\
&\qquad
 +\sum_{l=1}^n\left(\begin{array}{c} n \\ l\end{array}\right) \frac{1}{n!}\sum_{m=0}^{n-1}\sum_{(\{s_j\}_{j=1}^{m+1},\{t_k\}_{k=1}^{n-m})\in S(n,m)}\frah^4
\sum_{\bX,\bY\in
 I^2}V_{2,2}(\bX,\bY)\\
&\qquad\quad\cdot
 Tree(\{s_j\}_{j=1}^{m+1},\cG)\psi_{\bX}^{s_1}\prod_{j=2}^{m+1}
(1_{s_j\le l}\hat{V}_4(\psi^{s_j})+ 1_{s_j> l}\hat{V}_2(\psi^{s_j}))
\Bigg|_{\psi^{s_j}=0\atop(\forall j\in\{1,2,\cdots,m+1\})}\\
&\qquad\quad\cdot
 Tree(\{t_k\}_{k=1}^{n-m},\cG)\psi_{\bY}^{t_1}\prod_{k=2}^{n-m}
(1_{t_k\le l}\hat{V}_4(\psi^{t_k})+ 1_{t_k> l}\hat{V}_2(\psi^{t_k}))
\Bigg|_{\psi^{t_k}=0\atop(\forall k\in\{1,2,\cdots,n-m\})}.
\end{align*}
Note that the translation invariances
 \eqref{eq_time_translation_generic_covariance},
 \eqref{eq_time_translation_generic} are satisfied by $\cG$ and the
 kernels of $\hat{V}_2,\hat{V}_4$. This implies that for any $b_j\in
 \{2,4\}$ $(j=2,3,\cdots,m+1)$,
\begin{align*}
&\sum_{\bX\in
 I^2}V_{2,2}(\bX,\bY)
 Tree(\{s_j\}_{j=1}^{m+1},\cG)\psi_{\bX}^{s_1}\prod_{j=2}^{m+1}\hat{V}_{b_j}(\psi^{s_j})
\Bigg|_{\psi^{s_j}=0\atop(\forall j\in\{1,2,\cdots,m+1\})}\\
&=\sum_{\bX\in (I^0)^2\atop s\in
 [0,\beta)_h}V_{2,2}(\bX+s,\bY)Tree(\{s_j\}_{j=1}^{m+1},\cG)\psi_{\bX}^{s_1}
\prod_{j=2}^{m+1}\hat{V}_{b_j}(\psi^{s_j})
\Bigg|_{\psi^{s_j}=0\atop(\forall j\in\{1,2,\cdots,m+1\})}=0.
\end{align*}
Thus, 
$$ \alpha_{n,L,h}=\alpha'_{n,L,h}+a_{n,L,h},$$ 
where
\begin{align*}
&\alpha'_{n,L,h}\\
&:= \frac{1}{n!}Tree(\{1,2,\cdots,n\},\cG)\prod_{j=1}^n\hat{V}_2(\psi^j)
\Bigg|_{\psi^j=0\atop(\forall j\in\{1,2,\cdots,n\})}\\
&\quad + 1_{n\ge 2}\sum_{l=1}^{n-1}\left(\begin{array}{c} n \\ l\end{array}\right)
\frah^4\sum_{\bX,\bY\in
 I^2}V_{2,2}(\bX,\bY)\\
&\qquad\cdot
 \frac{1}{n!}Tree(\{1,2,\cdots,n+1\},\cG)\psi_{\bX}^1\psi_{\bY}^2\prod_{j=3}^{l+1}\hat{V}_4(\psi^j)
\prod_{k=l+2}^{n+1}\hat{V}_2(\psi^k)
\Bigg|_{\psi^j=0\atop(\forall j\in\{1,2,\cdots,n+1\})},\\
&a_{n,L,h}\\
&:=\frah^4\sum_{\bX,\bY\in
 I^2}V_{2,2}(\bX,\bY)\\
&\qquad\cdot
 \frac{1}{n!}Tree(\{1,2,\cdots,n+1\},\cG)\psi_{\bX}^1\psi_{\bY}^2\prod_{j=3}^{n+1}\hat{V}_4(\psi^j)\Bigg|_{\psi^j=0\atop(\forall j\in\{1,2,\cdots,n+1\})}.
\end{align*}
Let us show that $a_{n,L,h}$ converges in $C(Q,\C)$ as $h\to \infty$. Let us set
\begin{align*}
&\nu(s,t):=\frac{1}{\beta}-h1_{s=t},\quad (s,t\in [0,\beta)_h),\\
&V_{\bx s 1}(\psi):=\psi_{1\bx s 1}\psi_{2 \bx s -1},\quad 
V_{\bx s -1}(\psi):=\psi_{2\bx s 1}\psi_{1 \bx s -1},\quad (\bx\in\G,\
 s\in [0,\beta)_h).
\end{align*}
Then, by the periodicity and the translation
 invariance with the spatial variable we observe that
\begin{align}
&a_{n,L,h}\label{eq_expression_for_h_limit}\\
&=\frah^2\sum_{\bx\in\G\atop s,t\in [0,\beta)_h}\nu(s,t)\notag\\
&\qquad\cdot
 \frac{1}{n!}Tree(\{1,2,\cdots,n+1\},\cG)V_{\b0 s 1}(\psi^1)V_{\bx t -1}(\psi^2)
\prod_{j=3}^{n+1}\hat{V}_4(\psi^j)\Bigg|_{\psi^j=0\atop(\forall
 j\in\{1,2,\cdots,n+1\})}\notag\\
&=L^{-d(n-1)}\sum_{a_1\in \{1,2\}}\sum_{\bx_2\in
 \G}\prod_{j=3}^{n+1}\left(\sum_{\bx_j,\by_j\in
 \G}\sum_{a_j\in \{1,2\}}\right)\notag\\
&\quad\cdot \prod_{j=1\atop j\neq 2}^{n+1}\left(
1_{a_j=2}\frac{1}{\beta h^2}\sum_{\bs_j\in
 [0,\beta)_h^{a_j}}
- 1_{a_j=1}\frac{1}{h}\sum_{\bs_j\in
 [0,\beta)_h^{a_j}}\right)\notag\\
&\quad\cdot
 f(\phi)((a_1,(a_j)_{j=3}^{n+1}),(\bs_1,(\bs_j)_{j=3}^{n+1}),(\bx_2,(\bx_j,\by_j)_{j=3}^{n+1})),\notag
\end{align}
where we set
\begin{align*}
&f(\phi)((a_1,(a_j)_{j=3}^{n+1}),(\bs_1,(\bs_j)_{j=3}^{n+1}),(\bx_2,(\bx_j,\by_j)_{j=3}^{n+1}))\\&:=\frac{1}{n!}Tree(\{1,2,\cdots,n+1\},\cG(\phi))\\
&\quad\cdot (1_{a_1=1}V_{\b0
 s_1 1}(\psi^1)V_{\bx_2 s_1 -1}(\psi^2)+ 1_{a_1=2}V_{\b0
 s_{1,1} 1}(\psi^1)V_{\bx_2 s_{1,2} -1}(\psi^2))\\
&\quad\cdot\prod_{j=3}^{n+1}(1_{a_j=1}V_{\bx_j
 s_j 1}(\psi^j)V_{\by_j s_j -1}(\psi^j)+ 1_{a_j=2}V_{\bx_j
 s_{j,1} 1}(\psi^j)V_{\by_j s_{j,2} -1}(\psi^j))
\Bigg|_{\psi^j=0\atop(\forall
 j\in\{1,2,\cdots,n+1\})}.
\end{align*}
By recalling the definition of $Tree(\{1,2,\cdots,n+1\},\cG(\phi))$ we
 see that $f$ consists of a finite sum of products of $\cG(\phi)$ and thus
 the domain of the function $\bs\mapsto f(\phi)(\ba,\bs,\bX)$ is
 naturally extended to be $[0,\beta)^{a_1}\times
 \prod_{j=3}^{n+1}[0,\beta)^{a_j}$. For simplicity, set
 $[0,\beta)^{\ba}:=[0,\beta)^{a_1}\times
 \prod_{j=3}^{n+1}[0,\beta)^{a_j}$, $|\ba|:=a_1+\sum_{j=3}^{n+1}a_j$
for
 $\ba=(a_1,(a_j)_{j=3}^{n+1})$. We can see from the definition that 
for any $s_0,t_0\in [0,\beta)$ with $s_0\neq t_0$, $\rho,\eta\in \{1,2\}$,
 $\bx,\by\in\Z^d$,
\begin{align}
\lim_{(s,t)\to (s_0,t_0)}\sup_{\phi\in Q}|\cG(\phi)(\rho\bx s,\eta \by
 t)- \cG(\phi)(\rho\bx s_0,\eta \by
 t_0)|=0.\label{eq_convergence_first_principle}
\end{align}
Define the subset $S$ of $[0,\beta)^{\ba}$ by
\begin{align*}
S:=\{\bs\in [0,\beta)^{\ba}\ |\
 \bs=(s_1,s_2,\cdots,s_{|\ba|}),\ (i,j\in \{1,\cdots,|\ba|\})\land
i\neq j\to s_i\neq s_j\}.
\end{align*}
Note that the Lebesgue measure of $[0,\beta)^{\ba}\backslash S$ is zero.
It follows from the properties \eqref{eq_phi_good_property},
 \eqref{eq_exponent_matrix_bound}, \eqref{eq_bare_covariance_decay}, 
\eqref{eq_convergence_first_principle} that
\begin{align}
&\lim_{\bs\to \bs_0\atop \bs\in [0,\beta)^{\ba}}\sup_{\phi\in
 Q}|f(\phi)(\ba,\bs,\bX)-f(\phi)(\ba,\bs_0,\bX)|=0,\quad (\forall
 \bs_0\in S),\label{eq_convergence_from_first_principle}\\
&\sup_{\bs\in [0,\beta)^{\ba}}\sup_{\phi\in
 Q}|f(\phi)(\ba,\bs,\bX)|<\infty.\label{eq_uniform_bound_from_first_principle}
\end{align}
We can consider $f(\cdot)(\ba,\cdot,\bX)$ as an element of
 $L^1([0,\beta)^{\ba},C(Q,\C))$. For any $s\in [0,\beta)$ there uniquely exists $s'\in [0,\beta)_h$ such
 that $s\in [s',s'+\frac{1}{h})$. Let us define the map $p_h:[0,\beta)\to
 [0,\beta)_h$ by $p_h(s):=s'$. Then, define the map $P_h:[0,\beta)^n\to
 [0,\beta)_h^n$ by
 $P_h(s_1,\cdots,s_n):=(p_h(s_1),\cdots,p_h(s_n))$. It
 follows from \eqref{eq_convergence_from_first_principle} that 
\begin{align}
\lim_{h\to\infty\atop h\in \frac{2}{\beta}\N}\sup_{\phi\in
 Q}|f(\phi)(\ba,P_h(\bs),\bX)- f(\phi)(\ba,\bs,\bX)|=0,\quad (\forall
 \bs\in S).\label{eq_convergence_from_first_principle_application} 
\end{align}
By \eqref{eq_uniform_bound_from_first_principle},
 \eqref{eq_convergence_from_first_principle_application} we can apply
 the dominated convergence theorem in $L^1([0,\beta)^{\ba},$ $C(Q,\C))$ to
 ensure that
\begin{align*}
\lim_{h\to\infty\atop h\in\frac{2}{\beta}\N}\int_{[0,\beta)^{\ba}}d\bs
 f(\cdot)(\ba,P_h(\bs),\bX)=
 \int_{[0,\beta)^{\ba}}d\bs
 f(\cdot)(\ba,\bs,\bX)
\text{ in }C(Q,\C).
\end{align*}
By using this convergence property in \eqref{eq_expression_for_h_limit}
 we can reach the conclusion that $a_{n,L,h}(\cdot)$ converges in $C(Q,\C)$
 as $h\to \infty$. In the same way as above we can prove that
 $\alpha'_{n,L,h}(\cdot)$ converges in $C(Q,\C)$ as $h\to \infty$.   

Next we will prove the convergence property as $L\to \infty$. 
Let us prove the convergence of $a_{n,L,h}$ first. The proof for the
 convergence of $\alpha'_{n,L,h}$ is much simpler because of the presence of
 the term $\hat{V}_2$. We will see it after completing the proof for $a_{n,L,h}$.
For this
 purpose we need to disclose the operator
 $Tree(\{1,2,\cdots,n+1\},\cG)$. For $p,q\in \{1,2,\cdots,n+1\}$ set
$$
B_{\{p,q\}}:=\sum_{\bX\in I^2}\tilde{\cG}(\bX)\frac{\partial}{\partial \psi_{X_1}^p}\frac{\partial}{\partial \psi_{X_2}^q},
$$
where $\tilde{\cG}$ is the anti-symmetric extension of $\cG$ defined as
 in \eqref{eq_anti_symmetric_extension}. Note that by anti-symmetry
 $B_{\{p,q\}}=B_{\{q,p\}}$. 
We can rewrite $a_{n,L,h}$ as
 follows.
\begin{align*}
a_{n,L,h}=\frac{2^n}{(n!)^2}\sum_{T\in \T(\{1,2,\cdots,n+1\})}a_{n,L,h}(T),
\end{align*}
where
\begin{align}
&a_{n,L,h}(T)\label{eq_expansion_fixed_tree}\\
&:=L^{-d(n-1)}\int_{[0,1]^n}d\bs\sum_{\s\in \S_{n+1}(T)}\varphi(T,\s,\bs)
\left(\sum_{p,q=1}^{n+1}M(T,\s,\bs)_{p,q}B_{\{p,q\}}\right)^n\prod_{\{p,q\}\in
 T}B_{\{p,q\}}\notag\\
&\quad\cdot 
\frah^2\sum_{\bx\in\G\atop s,t\in [0,\beta)_h}\nu(s,t)V_{\b0 s
 1}(\psi^1)V_{\bx t -1}(\psi^2)\notag\\
&\quad\cdot\prod_{j=3}^{n+1}\Bigg(
\frah^2\sum_{\bx,\by\in\G\atop s,t\in [0,\beta)_h}\nu(s,t)V_{\bx s
 1}(\psi^j)V_{\by t -1}(\psi^j)
\Bigg)\Bigg|_{\psi^j=0\atop (\forall j\in \{1,2,\cdots,n+1\})}.\notag
\end{align}
We have to study case by case depending on the tree's configuration.
Let us begin with the simplest case $n=1$. Set 
$$
\G_L:=\left\{-\left\lfloor\frac{L}{2}\right\rfloor,
 -\left\lfloor\frac{L}{2}\right\rfloor+1,\cdots, -\left\lfloor\frac{L}{2}\right\rfloor+L-1\right\}^d.
$$
Since $\T(\{1,2\})=\{1,2\}$ and $M(T,\s,\bs)$ is symmetric,
\begin{align*}
a_{1,L,h}(T)
&=\int_{[0,1]}ds\sum_{\s\in \S_2(T)}\varphi(T,\s,s)
2M(T,\s,s)_{1,2}B_{\{1,2\}}^2\\
&\quad\cdot \frah^2\sum_{\bx\in\G\atop u,t\in [0,\beta)_h}\nu(u,t)V_{\b0 u
 1}(\psi^1)V_{\bx t -1}(\psi^2)\Bigg|_{\psi^1=\psi^2=0}\\
&=\int_{[0,1]}ds\sum_{\s\in \S_2(T)}\varphi(T,\s,s)
4M(T,\s,s)_{1,2}\sum_{\bx\in\Z^d}1_{\bx\in \G_L}\\
&\quad\cdot \Bigg(\frac{1}{\beta}\int_0^{\beta}du\int_0^{\beta}dt\tilde{\cG}(2\b0
 p_h(u)(-1),2\bx p_h(t) 1)
\tilde{\cG}(1\b0
 p_h(u) 1,1\bx p_h(t) (-1))\\
&\qquad -\beta \tilde{\cG}(2\b0 0(-1),2\bx 01)\tilde{\cG}(1\b0 01,1\bx 0(-1))
\Bigg).
\end{align*}
Thus,
\begin{align*}
&\lim_{h\to \infty\atop h\in \frac{2}{\beta}\N}a_{1,L,h}(T)\\
&=\int_{[0,1]}ds\sum_{\s\in \S_2(T)}\varphi(T,\s,s)
4M(T,\s,s)_{1,2}\sum_{\bx\in\Z^d}1_{\bx\in
 \G_L}\frac{1}{\beta}\int_0^{\beta}du\int_0^{\beta}dt\\
&\quad\cdot \left(\tilde{\cG}(2\b0
 u (-1),2\bx t 1)
\tilde{\cG}(1\b0
 u 1,1\bx t (-1))- \tilde{\cG}(2\b0 0(-1),2\bx 01)\tilde{\cG}(1\b0 01,1\bx 0(-1))\right).
\end{align*}
We can deduce from the definition that
for any $\bX\in (\{1,2\}\times\Z^d\times[0,\beta)\times \{1,-1\})^2$,
 $\lim_{L\to \infty}\tilde{\cG}(\bX)$ converges in
 $C(Q,\C)$. Moreover by \eqref{eq_bare_covariance_decay_L_independent},
\begin{align*}
&\sup_{\phi\in Q}1_{\bx\in \G_L}|\tilde{\cG}(2\b0
 u (-1),2\bx t 1)
\tilde{\cG}(1\b0
 u 1,1\bx t (-1))- \tilde{\cG}(2\b0 0(-1),2\bx 01)\tilde{\cG}(1\b0 01,1\bx 0(-1))|\\
&\le \frac{c\cdot
 c(d,\beta,\theta)^2}{(1+(\frac{2}{\pi})^{d+1}\sum_{j=1}^d|\<\bx,\be_j\>|^{d+1})^2}.
\end{align*}
Therefore, the dominated convergence theorem in $L^1(\Z^d\times
 [0,\beta)^2,C(Q,\C))$ guarantees that 
$\lim_{L\to\infty, L\in\N}\lim_{h\to\infty, h\in
 \frac{2}{\beta}\N}a_{1,L,h}$ converges in $C(Q,\C)$. 

Let us consider the case $n\in \N_{\ge 2}$.
To make clear the structure, let us add the superscript $1,-1$ to the
 Grassmann variables and rewrite the formula
 \eqref{eq_expansion_fixed_tree} as follows.
\begin{align*}
a_{n,L,h}(T)
&=\frah^2\sum_{s_1,t_2\in[0,\beta)_h}\nu(s_1,t_2)\prod_{j=3}^{n+1}
\left(\frah^2\sum_{s_j,t_j\in[0,\beta)_h}\nu(s_j,t_j)\right)
L^{-d(n-1)}E R^n\\
&\quad\cdot \prod_{\{p,q\}\in
 T}\left(\sum_{f,g\in\{1,-1\}}B_{(p,f),(q,g)}\right)
\sum_{\bx\in\G}V_{\b0 s_1
 1}(\psi^{1,1})V_{\bx t_2 -1}(\psi^{2,-1})\\
&\quad\cdot \prod_{j=3}^{n+1}\Bigg(
\sum_{\bx,\by\in\G}V_{\bx s_j 1}(\psi^{j,1})V_{\by t_j -1}(\psi^{j,-1})
\Bigg)\Bigg|_{\psi^{j,\delta}=0\atop (\forall j\in
 \{1,2,\cdots,n+1\},\delta\in \{1,-1\})},
\end{align*}
where 
\begin{align*}
&E:=\int_{[0,1]^n}d\bs\sum_{\s\in \S_{n+1}(T)}\varphi(T,\s,\bs),\\
&B_{(p,f),(q,g)}:=\sum_{\bX\in
 I^2}\tilde{\cG}(\bX)\frac{\partial}{\partial
 \psi_{X_1}^{p,f}}\frac{\partial}{\partial \psi_{X_2}^{q,g}},\\
&R:=\sum_{a,b\in
 \{1,-1\}}\sum_{p,q=1}^{n+1}M(T,\s,\bs)_{p,q}B_{(p,a),(q,b)}.
\end{align*}
Since the integration and the summation with $\bs$, $\s$ are irrelevant
 in the following argument, we do not indicate the dependency of $R$ on
 these variables. For $T\in \T(\{1,2,\cdots,n+1\})$ we consider the
 vertex 1 as the root of $T$. For $j\in \{1,2,\cdots,n+1\}$ let
 $\dis_T(1,j)$ denote the length of the shortest path between 1 and $j$ in
 $T$. Let us consider the case that 
\begin{align}
\exists v\in \{3,4,\cdots,n+1\}(\{j,v\}\in T\to \dis_T(j,1)+1=\dis_T(v,1)).\label{eq_vertex_logical}
\end{align}
In this case $v$ is the terminal of a branch of $T$ and thus
there uniquely exists $v'\in
 \{1,2,\cdots,n+1\}\backslash\{v\}$ such that $\{v',v\}\in T$. 
The operator $B_{(v,a),(v',b)}$ erases one Grassmann variable from 
$V_{\bx s_v 1}(\psi^{v,1})V_{\by t_v -1}(\psi^{v,-1})$. The
 remaining 2 variables with the superscript $`v,-a'$ must be erased by $R^n$.
Thus, there is at least one operator, at most two operators among $n$
 of $R$ such that they are to act on the Grassmann
 variables with the superscript $`v,-a'$. By
 decomposing these operators we have that
\begin{align*}
&a_{n,L,h}(T)\\
&=\frah^2\sum_{s_1,t_2\in[0,\beta)_h}\nu(s_1,t_2)\prod_{j=3}^{n+1}
\left(\frah^2\sum_{s_j,t_j\in[0,\beta)_h}\nu(s_j,t_j)\right)
L^{-d(n-1)}E\\
&\quad\cdot \prod_{\{p,q\}\in
 T\backslash\{\{v,v'\}\}}\left(\sum_{f,g\in\{1,-1\}}B_{(p,f),(q,g)}\right)
\sum_{a,b\in\{1,-1\}}B_{(v,a),(v',b)}\\
&\quad\cdot \Bigg(n R^{n-1}B_{(v,-a),(v,-a)}\\
&\qquad\quad+\left(\begin{array}{c} n \\ 2 \end{array}\right) R^{n-2}\Bigg(
2\sum_{c\in\{1,-1\}}\sum_{p=1}^{n+1}1_{(p,c)\neq (v,-a)}
M(T,\s,\bs)_{p,v}B_{(p,c),(v,-a)}\Bigg)^2\Bigg)\\
&\quad\cdot \sum_{\bx\in\G}V_{\b0 s_1
 1}(\psi^{1,1})V_{\bx t_2 -1}(\psi^{2,-1})\\
&\quad\cdot \prod_{j=3}^{n+1}\Bigg(
\sum_{\bx,\by\in\G}V_{\bx s_j 1}(\psi^{j,1})V_{\by t_j -1}(\psi^{j,-1})
\Bigg)\Bigg|_{\psi^{j,\delta}=0\atop (\forall j\in
 \{1,2,\cdots,n+1\},\delta\in \{1,-1\})}\\
&=n\frah^2\sum_{s_1,t_2\in[0,\beta)_h}\nu(s_1,t_2)\prod_{j=3}^{n+1}
\left(\frah^2\sum_{s_j,t_j\in[0,\beta)_h}\nu(s_j,t_j)\right)
L^{-d(n-1)}ER^{n-1}\\
&\quad\cdot \prod_{\{p,q\}\in
 T\backslash\{\{v,v'\}\}}\left(\sum_{f,g\in\{1,-1\}}B_{(p,f),(q,g)}\right)
\sum_{a,b\in\{1,-1\}}B_{(v,a),(v',b)}\\
&\quad\cdot\sum_{c\in\{1,-1\}}\sum_{p=1}^{n+1}
M(T,\s,\bs)_{p,v}B_{(p,c),(v,-a)}\\
&\quad\cdot \sum_{\bx\in\G}V_{\b0 s_1
 1}(\psi^{1,1})V_{\bx t_2 -1}(\psi^{2,-1})\\
&\quad\cdot \prod_{j=3}^{n+1}\Bigg(
\sum_{\bx,\by\in\G}V_{\bx s_j 1}(\psi^{j,1})V_{\by t_j -1}(\psi^{j,-1})
\Bigg)\Bigg|_{\psi^{j,\delta}=0\atop (\forall j\in
 \{1,2,\cdots,n+1\},\delta\in \{1,-1\})},
\end{align*}
where we used that
\begin{align*}
M(T,\s,\bs)_{p,q}=M(T,\s,\bs)_{q,p},\quad M(T,\s,\bs)_{p,p}=1,\quad
 B_{(p,f),(q,g)}=B_{(q,g),(p,f)}.
\end{align*}
Remark that
\begin{align}
 &\sum_{s\in [0,\beta)_h}\nu(s,t)B_{(v,1),(v,1)}V_{\bx s 1}(\psi^{v,1})=
\sum_{t\in [0,\beta)_h}\nu(s,t)B_{(v,-1),(v,-1)}V_{\by t
 -1}(\psi^{v,-1})=0.\label{eq_vanishing_case_inside}
\end{align}
Thus, the operator $B_{(v,-a),(v,-a)}$ in the above expansion
  can be eliminated. As the result, 
\begin{align*}
&a_{n,L,h}(T)\\
&=n
\frah^2\sum_{s_1,t_2\in[0,\beta)_h}\nu(s_1,t_2)\prod_{j=3}^{n+1}
\left(\frah^2\sum_{s_j,t_j\in[0,\beta)_h}\nu(s_j,t_j)\right)
L^{-d(n-1)}E R^{n-1}\\
&\quad\cdot \sum_{a\in\{1,-1\}}\Bigg(\Bigg(
\sum_{c\in \{1,-1\}}\sum_{p=1\atop p\neq v}^{n+1}
 M(T,\s,\bs)_{p,v}B_{(p,c),(v,-a)}+B_{(v,1),(v,-1)}\Bigg)\\
&\qquad\cdot \prod_{\{p,q\}\in
 T\backslash\{\{v,v'\}\}}\Bigg(\sum_{f,g\in\{1,-1\}}B_{(p,f),(q,g)}\Bigg)
\sum_{b\in\{1,-1\}}B_{(v,a),(v',b)}\Bigg)\\
&\quad\cdot \sum_{\bx\in\G}V_{\b0 s_1
 1}(\psi^{1,1})V_{\bx t_2 -1}(\psi^{2,-1})\\
&\quad\cdot \prod_{j=3}^{n+1}\Bigg(
\sum_{\bx,\by\in\G}V_{\bx s_j 1}(\psi^{j,1})V_{\by t_j -1}(\psi^{j,-1})
\Bigg)\Bigg|_{\psi^{j,\delta}=0\atop (\forall j\in
 \{1,2,\cdots,n+1\},\delta\in \{1,-1\})}.
\end{align*}
Set for $a\in \{1,-1\}$ 
\begin{align*}
&R(a):=\sum_{c\in\{1,-1\}}\sum_{p=1\atop p\neq
 v}^{n+1}M(T,\s,\bs)_{p,v}B_{(p,c),(v,-a)}+
B_{(v,1),(v,-1)},\\
&B(a):=\prod_{\{p,q\}\in
 T\backslash\{\{v,v'\}\}}\Bigg(\sum_{f,g\in\{1,-1\}}B_{(p,f),(q,g)}\Bigg)
\sum_{b\in\{1,-1\}}B_{(v,a),(v',b)}
\end{align*}
to simplify the following explanation. We carry out a recursive
 estimation along the tree lines from younger branches to the root
 1. Here we need to estimate along the straight line whose
 terminal is the vertex $v$ first of all. We apply $B(a)$ and then $R(a)$ to
 the given Grassmann polynomial. The rest of the Grassmann variables are
 erased by $R^{n-1}$. The application of $R(a)$ yields another
 $\tilde{\cG}(\cdot)$ which together with $\tilde{\cG}(\cdot)$ created by
 $B(a)$ are integrated with respect to the variables at the
 vertex $v$. The application of $B(a)$ combinatorially yields at most 
$$
\prod_{j=1}^2\left(\begin{array}{c}2 \\ d_j(T)\end{array}\right)
 d_j(T)!\cdot \prod_{k=3}^{n+1}\left(\begin{array}{c}4 \\ d_k(T)\end{array}\right)d_k(T)!
$$
factors, which is bounded by $c^n$ with a generic positive
 constant $c$. Recall that $d_j(T)$ is the degree of the vertex $j$ in
$T$. After applying $B(a)$ and $R(a)$ we have Grassmann monomials of degree
 $2(n-1)$.
 Applying $R^{n-1}$ to each of the remaining monomials combinatorially gives at most
 $(2(n-1))!$ factors.  By performing the recursive estimation as
 described above and
 using \eqref{eq_phi_good_property},
 \eqref{eq_exponent_matrix_bound}, \eqref{eq_bare_covariance_decay} we
 observe that
\begin{align}
&|a_{n,L,h}(T)|\label{eq_4_field_terminal_bound}\\
&\le
 L^{-d(n-1)}(2(n-1))!c^nc(d,\beta,\theta)^{n-1}\notag\\
&\quad\cdot \frac{1}{h}\sum_{s\in[0,\beta)_h}\sup_{X\in I,\eta\in \{1,2\}\atop \zeta\in \{1,-1\}}\Bigg(
\frac{1}{h}\sum_{\by\in\G\atop t\in [0,\beta)_h}|\tilde{\cG}(X,\eta\by t\zeta)||\nu(s,t)|\Bigg)\notag\\
&\quad\cdot \Bigg(\sup_{X\in I,\eta\in \{1,2\}\atop \zeta\in \{1,-1\}}
\Bigg(\frac{1}{h^2}\sum_{\bx,\by\in\G\atop s,t\in
 [0,\beta)_h}|\tilde{\cG}(X,\eta\by
 t\zeta)||\nu(s,t)|\Bigg)\Bigg)^{n-2}\notag\\
&\quad\cdot 
\Bigg(\sup_{X,Z\in I,\rho,\eta\in \{1,2\}\atop \xi,\zeta\in \{1,-1\}}
\Bigg(\frac{1}{h^2}\sum_{\bx,\by\in\G\atop s,t\in
 [0,\beta)_h}|\tilde{\cG}(X,\rho\bx s\xi)||\tilde{\cG}(Z,\eta\by
 t\zeta)||\nu(s,t)|\Bigg)\notag\\
&\qquad\quad+
\sup_{X\in I,\rho,\rho',\eta\in \{1,2\}\atop \xi,\xi',\zeta\in \{1,-1\}}
\Bigg(\frac{1}{h^2}\sum_{\bx,\by\in\G\atop s,t\in
 [0,\beta)_h}|\tilde{\cG}(X,\rho\bx s\xi)||\tilde{\cG}(\rho'\bx s\xi',\eta\by
 t\zeta)||\nu(s,t)|\Bigg)\Bigg)\notag\\
&\le 
 L^{-d(n-1)}(2(n-1))!c^nc(d,\beta,\theta)^{2n}\beta \sum_{\bx\in
 \G}\frac{1}{1+\sum_{j=1}^d|\frac{L}{2\pi}(e^{i\frac{2\pi}{L}\<\bx,\be_j\>}-1)|^{d+1}}\notag\\
&\quad\cdot \left(
\beta
 L^d\sum_{\bx\in\G}\frac{1}{1+\sum_{j=1}^d|\frac{L}{2\pi}(e^{i\frac{2\pi}{L}\<\bx,\be_j\>}-1)|^{d+1}}\right)^{n-2}\notag\\
&\quad \cdot 
\beta\left(\sum_{\bx\in
 \G}\frac{1}{1+\sum_{j=1}^d|\frac{L}{2\pi}(e^{i\frac{2\pi}{L}\<\bx,\be_j\>}-1)|^{d+1}}
\right)^2\notag\\
&\le c^n(2n)!c(d,\beta,\theta)^{2n}\beta^nL^{-d}.\notag
\end{align}

Next let us consider the case that \eqref{eq_vertex_logical} does not
 hold. In this case the tree $T$ is one straight line whose terminal is
 the vertex 2. By changing the numbers if necessary we may assume that
$$
T=\{\{1,n+1\},\{n+1,n\},\cdots,\{4,3\},\{3,2\}\}.
$$
The term $a_{n,L,h}(T)$ can be further decomposed as follows. 
$$
a_{n,L,h}(T)=\sum_{a,b,c,d\in \{1,-1\}}a_{n,L,h}^{(a,b,c,d)}(T),
$$
where
\begin{align*}
a_{n,L,h}^{(a,b,c,d)}(T)
&:=\frah^2\sum_{s_1,t_2\in[0,\beta)_h}\nu(s_1,t_2)\prod_{j=3}^{n+1}\left(\frah^{2}\sum_{s_j,t_j\in
 [0,\beta)_h}\nu(s_j,t_j)\right)
L^{-d(n-1)}ER^n\\
&\quad\cdot 
\prod_{\{p,q\}\in T\backslash \{\{2,3\},\{3,4\}\}}\left(
\sum_{f,g\in \{1,-1\}}B_{(p,f),(q,g)}\right)
B_{(2,a),(3,b)}B_{(3,c),(4,d)}\\
&\quad\cdot \sum_{\bx\in\G}V_{\b0 s_1 1}(\psi^{1,1})V_{\bx t_2 -1}(\psi^{2,-1})\\
&\quad\cdot \prod_{j=3}^{n+1}\left(
\sum_{\bx,\by\in\G}V_{\bx s_j 1}(\psi^{j,1})V_{\by t_j -1}(\psi^{j,-1})
\right)\Bigg|_{\psi^{j,\delta}=0\atop (\forall j\in \{1,2,\cdots,n+1\},
 \delta\in \{1,-1\})}.
\end{align*}

Let us consider $a_{n,L,h}^{(a,1,1,d)}(T)$. 
In this case one Grassmann variable with the superscript $`2,-1'$ and two
 Grassmann variables with the superscript $`3,-1'$ are untouched by the
 derivatives along the tree lines and thus must be erased by the
 operator $R^n$. Set
\begin{align*}
&R':=2^2M(T,\s,\bs)_{3,2}B_{(3,-1),(2,-1)}+
2\sum_{\delta\in
 \{1,-1\}}\sum_{p=1}^{n+1}1_{(p,\delta)\neq
 (3,-1)}M(T,\s,\bs)_{p,2}B_{(p,\delta),(2,-1)},\\
&R'':=\sum_{\delta\in
 \{1,-1\}}\sum_{p=1}^{n+1} 1_{(p,\delta)\neq (2,-1),(3,-1)}
M(T,\s,\bs)_{p,3} B_{(p,\delta),(3,-1)},\\
&B^{(a,1,1,d)}:=
\prod_{\{p,q\}\in T\backslash \{\{2,3\},\{3,4\}\}}\left(
\sum_{f,g\in \{1,-1\}}B_{(p,f),(q,g)}\right)
B_{(2,a),(3,1)}B_{(3,1),(4,d)}\\
&\qquad\qquad\quad\cdot \sum_{\bx\in\G}V_{\b0 s_1 1}(\psi^{1,1})V_{\bx t_2 -1}(\psi^{2,-1})\prod_{j=3}^{n+1}\left(
\sum_{\bx,\by\in\G}V_{\bx s_j 1}(\psi^{j,1})V_{\by t_j -1}(\psi^{j,-1})
\right).
\end{align*}
Let us observe that
\begin{align*}
&a_{n,L,h}^{(a,1,1,d)}(T)\\
&=\frah^2\sum_{s_1,t_2\in[0,\beta)_h}\nu(s_1,t_2)\prod_{j=3}^{n+1}\left(\frah^{2}\sum_{s_j,t_j\in
 [0,\beta)_h}\nu(s_j,t_j)\right)L^{-d(n-1)}E\\
&\quad\cdot n R^{n-1}\left(
2\sum_{\delta\in
 \{1,-1\}}\sum_{p=1}^{n+1}M(T,\s,\bs)_{p,2}B_{(p,\delta),(2,-1)}\right)
B^{(a,1,1,d)}\Bigg|_{\psi^{j,\delta}=0\atop (\forall j\in \{1,2,\cdots,n+1\},
 \delta\in \{1,-1\})}\\
&=n\frah^2\sum_{s_1,t_2\in[0,\beta)_h}\nu(s_1,t_2)\prod_{j=3}^{n+1}\left(\frah^{2}\sum_{s_j,t_j\in
 [0,\beta)_h}\nu(s_j,t_j)\right) L^{-d(n-1)}E\\
&\quad\cdot\Bigg((n-1) R^{n-2}\Bigg(
2\sum_{\eta\in
 \{1,-1\}}\sum_{q=1}^{n+1}1_{(q,\eta)\neq (3,-1)}
M(T,\s,\bs)_{q,3}B_{(q,\eta),(3,-1)}\Bigg)\\
&\qquad\qquad\cdot 2M(T,\s,\bs)_{3,2}B_{(3,-1),(2,-1)}\\
&\qquad+\Bigg( (n-1)R^{n-2}B_{(3,-1),(3,-1)}\\
&\qquad\quad+1_{n\ge 3}\left(\begin{array}{c}n-1 \\ 2 \end{array}\right) R^{n-3}
\Bigg(2\sum_{\eta\in \{1,-1\}}\sum_{q=1}^{n+1}1_{(q,\eta)\neq (3,-1)}
M(T,\s,\bs)_{q,3}B_{(q,\eta),(3,-1)}\Bigg)^2\Bigg)\\
&\qquad\qquad\cdot \Bigg(
2\sum_{\delta\in \{1,-1\}}\sum_{p=1}^{n+1}1_{(p,\delta)\neq (3,-1)}
M(T,\s,\bs)_{p,2}B_{(p,\delta),(2,-1)}\Bigg)\Bigg)\\
&\quad\cdot B^{(a,1,1,d)}\Bigg|_{\psi^{j,\delta}=0\atop (\forall j\in \{1,2,\cdots,n+1\},
 \delta\in \{1,-1\})}\\
&=n\frah^2\sum_{s_1,t_2\in[0,\beta)_h}\nu(s_1,t_2)\prod_{j=3}^{n+1}\left(\frah^{2}\sum_{s_j,t_j\in
 [0,\beta)_h}\nu(s_j,t_j)\right)L^{-d(n-1)}E\\
&\quad\cdot \Bigg((n-1) R^{n-2}\Bigg(
2\sum_{\eta\in
 \{1,-1\}}\sum_{q=1}^{n+1}1_{(q,\eta)\neq (3,-1)}
M(T,\s,\bs)_{q,3}B_{(q,\eta),(3,-1)}\Bigg)\\
&\qquad\qquad\cdot 2M(T,\s,\bs)_{3,2}B_{(3,-1),(2,-1)}\\
&\qquad+(n-1)R^{n-2}\Bigg(
\sum_{\eta\in
 \{1,-1\}}\sum_{q=1}^{n+1} M(T,\s,\bs)_{q,3}B_{(q,\eta),(3,-1)}\Bigg)\\
&\qquad\qquad\cdot 
\Bigg(2
\sum_{\delta\in
 \{1,-1\}}\sum_{p=1}^{n+1}
1_{(p,\delta)\neq (3,-1)} M(T,\s,\bs)_{p,2}B_{(q,\delta),(2,-1)}\Bigg)\Bigg)\\
&\quad\cdot B^{(a,1,1,d)}\Bigg|_{\psi^{j,\delta}=0\atop (\forall j\in \{1,2,\cdots,n+1\},
 \delta\in \{1,-1\})}\\
&=n(n-1)\frah^2\sum_{s_1,t_2\in[0,\beta)_h}\nu(s_1,t_2)\prod_{j=3}^{n+1}\left(\frah^{2}\sum_{s_j,t_j\in
 [0,\beta)_h}\nu(s_j,t_j)\right)L^{-d(n-1)}E\\
&\quad\cdot  R^{n-2}R' R''
 B^{(a,1,1,d)}\Bigg|_{\psi^{j,\delta}=0\atop (\forall j\in \{1,2,\cdots,n+1\},
 \delta\in \{1,-1\})}.
\end{align*}
In the derivation of the last equality we used the fact that since
 \eqref{eq_vanishing_case_inside} with $v=3$ holds, the term with
 $B_{(3,-1),(3,-1)}$ does not contribute to the result.
It is important that there is no link between the vertex 2 and the
 vertex 3 in the operator $R''$.
By using \eqref{eq_phi_good_property},
 \eqref{eq_exponent_matrix_bound}, \eqref{eq_bare_covariance_decay} we
 perform the recursive estimation from the terminal 2 to the root 1.
 The important point is that the extra $\tilde{\cG}(\cdot)$ produced by
 $R''$ is added to the integration on the vertex 3. We uniformly bound
 all the $\tilde{\cG}(\cdot)$s produced by
 $R^{n-2}R'$, not integrating them.
As the result,
\begin{align}
|a_{n,L,h}^{(a,1,1,d)}(T)|\label{eq_truncation_bound_1_1}
&\le L^{-d(n-1)}(2(n-1))!c^n c(d,\beta,\theta)^{n-1}\\
&\quad\cdot \frac{1}{h}\sum_{s\in[0,\beta)_h}\sup_{X\in I,\eta\in \{1,2\}\atop \zeta\in \{1,-1\}}\Bigg(
\frac{1}{h}\sum_{\by\in\G\atop t\in [0,\beta)_h}|\tilde{\cG}(X,\eta\by t\zeta)||v(s,t)|\Bigg)\notag\\
&\quad\cdot \Bigg(\sup_{X\in I,\eta\in \{1,2\}\atop \zeta\in \{1,-1\}}
\Bigg(\frac{1}{h^2}\sum_{\bx,\by\in\G\atop s,t\in
 [0,\beta)_h}|\tilde{\cG}(X,\eta\by
 t\zeta)||v(s,t)|\Bigg)\Bigg)^{n-2}\notag\\
&\quad\cdot 
\Bigg(\sup_{X,Z\in I,\rho,\eta\in \{1,2\}\atop \xi,\zeta\in \{1,-1\}}
\Bigg(\frac{1}{h^2}\sum_{\bx,\by\in\G\atop s,t\in
 [0,\beta)_h}|\tilde{\cG}(X,\rho\bx s\xi)||\tilde{\cG}(Z,\eta\by
 t\zeta)||v(s,t)|\Bigg)\Bigg)\notag\\
&\le (2n)!c^n  c(d,\beta,\theta)^{2n} \beta^n
 L^{-d}\left(\sum_{\bx\in
 \G}\frac{1}{1+\sum_{j=1}^d|\frac{L}{2\pi}(e^{i\frac{2\pi}{L}\<\bx,\be_j\>}-1)|^{d+1}}\right)^{n+1}\notag\\
&\le (2n)!c^n  c(d,\beta,\theta)^{2n} \beta^n
 L^{-d}.\notag
\end{align}
The same argument as above shows that
$$
|a_{n,L,h}^{(a,-1,-1,d)}(T)|\le (2n)!c^n  c(d,\beta,\theta)^{2n} \beta^n
 L^{-d}.
$$

Next let us consider $a_{n,L,h}^{(a,1,-1,d)}(T)$. In this case one
 Grassmann variable, which originally belongs to $V_{\bx s
 1}(\psi^{3,1})$, remains after applying the operators along the tree
 lines. This Grassmann variable must be erased by $R^n$. Thus, inside
 $a_{n,L,h}^{(a,1,-1,d)}(T)$, the operator $R^n$ can be decomposed as follows.
\begin{align*}
&2n R^{n-1}M(T,\s,\bs)_{2,3}B_{(2,a),(3,1)} + 2n R^{n-1}B_{(3,-1),(3,1)}\\
&+ 2n R^{n-1}\sum_{p=4}^{n+1}\sum_{\delta \in
\{1,-1\}}M(T,\s,\bs)_{p,3}B_{(p,\delta),(3,1)}.
\end{align*}
Let $\tilde{a}_{n,L,h}^{(a,1,-1,d)}(T)$ denote the following.
\begin{align*}
&\frah^2\sum_{s_1,t_2\in[0,\beta)_h}\nu(s_1,t_2)\prod_{j=3}^{n+1}\left(\frah^{2}\sum_{s_j,t_j\in [0,\beta)_h}\nu(s_j,t_j)\right)
L^{-d(n-1)}E\\
&\quad\cdot 2n R^{n-1}M(T,\s,\bs)_{2,3}B_{(2,a),(3,1)}\\
&\quad\cdot 
\prod_{\{p,q\}\in T\backslash \{\{2,3\},\{3,4\}\}}\left(
\sum_{f,g\in \{1,-1\}}B_{(p,f),(q,g)}\right)
B_{(2,a),(3,1)}B_{(3,-1),(4,d)}\\
&\quad\cdot \sum_{\bx\in\G}V_{\b0 s_1 1}(\psi^{1,1})V_{\bx t_2 -1}(\psi^{2,-1})\\
&\quad\cdot \prod_{j=3}^{n+1}\left(
\sum_{\bx,\by\in\G}V_{\bx s_j 1}(\psi^{j,1})V_{\by t_j -1}(\psi^{j,-1})
\right)\Bigg|_{\psi^{j,\delta}=0\atop (\forall j\in \{1,2,\cdots,n+1\},
 \delta\in \{1,-1\})}.
\end{align*}
In fact $\tilde{a}_{n,L,h}^{(a,1,-1,d)}(T)$ is derived by
 replacing $R^n$ by $2n
 R^{n-1}M(T,\s,\bs)_{2,3}B_{(2,a),(3,1)}$ inside
 $a_{n,L,h}^{(a,1,-1,d)}(T)$. Since the application of
 $B_{(3,-1),(3,1)}$, $B_{(p,\delta),(3,1)}$ $(p\in \{4,5,\cdots,n+1\},\
 \delta\in \{1,-1\})$ gives an additional free propagator at the vertex
 3, the same calculation as that leading to \eqref{eq_truncation_bound_1_1}
 yields that 
\begin{align*}
|a_{n,L,h}^{(a,1,-1,d)}(T)-\tilde{a}_{n,L,h}^{(a,1,-1,d)}(T)|\le (2n)!c^n
 c(d,\beta,\theta)^{2n}\beta^n L^{-d}.
\end{align*}
The term $a_{n,L,h}^{(a,-1,1,d)}(T)$ can be analyzed in the same way as
 above. The result is that
\begin{align*}
|a_{n,L,h}^{(a,-1,1,d)}(T)-\tilde{a}_{n,L,h}^{(a,-1,1,d)}(T)|\le (2n)!c^n
 c(d,\beta,\theta)^{2n}\beta^n L^{-d},
\end{align*}
where
\begin{align*}
\tilde{a}_{n,L,h}^{(a,-1,1,d)}(T)
&:=\frah^2\sum_{s_1,t_2\in[0,\beta)_h}\nu(s_1,t_2)\prod_{j=3}^{n+1}\left(\frah^{2}\sum_{s_j,t_j\in [0,\beta)_h}\nu(s_j,t_j)\right)
L^{-d(n-1)}E\\
&\quad\cdot 2n R^{n-1}M(T,\s,\bs)_{2,3}B_{(2,a),(3,-1)}\\
&\quad\cdot 
\prod_{\{p,q\}\in T\backslash \{\{2,3\},\{3,4\}\}}\left(
\sum_{f,g\in \{1,-1\}}B_{(p,f),(q,g)}\right)
B_{(2,a),(3,-1)}B_{(3,1),(4,d)}\\
&\quad\cdot \sum_{\bx\in\G}V_{\b0 s_1 1}(\psi^{1,1})V_{\bx t_2 -1}(\psi^{2,-1})\\
&\quad\cdot \prod_{j=3}^{n+1}\left(
\sum_{\bx,\by\in\G}V_{\bx s_j 1}(\psi^{j,1})V_{\by t_j -1}(\psi^{j,-1})
\right)\Bigg|_{\psi^{j,\delta}=0\atop (\forall j\in \{1,2,\cdots,n+1\},
 \delta\in \{1,-1\})}.
\end{align*}

By combining these results we conclude that
\begin{align*}
|a_{n,L,h}(T)-\tilde{a}_{n,L,h}(T)|\le (2n)!c^n
 c(d,\beta,\theta)^{2n}\beta^n L^{-d},
\end{align*}
where
$$
\tilde{a}_{n,L,h}(T):=\sum_{a,d\in\{1,-1\}}(\tilde{a}_{n,L,h}^{(a,1,-1,d)}(T)+\tilde{a}_{n,L,h}^{(a,-1,1,d)}(T)).
$$
We can reform $\tilde{a}_{n,L,h}(T)$ as follows. 
\begin{align*}
\tilde{a}_{n,L,h}(T)
&=\frah^2\sum_{s_1,t_2\in[0,\beta)_h}\nu(s_1,t_2)\prod_{j=3}^{n+1}\left(\frah^{2}\sum_{s_j,t_j\in
 [0,\beta)_h}\nu(s_j,t_j)\right)\\
&\quad\cdot \sum_{b\in\{1,-1\}}\sum_{\bx\in
 \G}B_{(2,-1),(3,b)}B_{(2,-1),(3,b)}\\
&\qquad\cdot V_{\bx t_2
 -1}(\psi^{2,-1})(1_{b=1}V_{\b0 s_3 1}(\psi^{3,1})+ 1_{b=-1}V_{\b0 t_3
 -1}(\psi^{3,-1}))\\
&\quad\cdot 2n L^{-d(n-2)}E M(T,\s,\bs)_{2,3}R^{n-1}
\prod_{\{p,q\}\in T\backslash \{\{2,3\}\}}\left(
\sum_{f,g\in \{1,-1\}}B_{(p,f),(q,g)}\right)\\
&\quad\cdot \sum_{\bz\in\G}V_{\b0 s_1 1}(\psi^{1,1})
(1_{b=1}V_{\bz t_3 -1}(\psi^{3,-1})+ 1_{b=-1}V_{\bz s_3 1}(\psi^{3,1}))\\
&\quad\cdot \prod_{j=4}^{n+1}\left(
\sum_{\bx,\by\in\G}V_{\bx s_j 1}(\psi^{j,1})V_{\by t_j -1}(\psi^{j,-1})
\right)\Bigg|_{\psi^{j,\delta}=0\atop (\forall j\in \{1,2,\cdots,n+1\},
 \delta\in \{1,-1\})}.
\end{align*}
Let us observe that there is a recursive structure here. 
We can repeat the same procedure as above on the tree
 $T\backslash\{\{2,3\}\}$. The result is that
\begin{align*}
\left|a_{n,L,h}(T)-\tilde{\tilde{a}}_{n,L,h}(T)\right|\le (2n)!c^n
 c(d,\beta,\theta)^{2n}\beta^n L^{-d},
\end{align*}
where
\begin{align*}
&\tilde{\tilde{a}}_{n,L,h}(T)\\
&:=\frah^2\sum_{s_1,t_2\in[0,\beta)_h}\nu(s_1,t_2)\prod_{j=3}^{n+1}\left(\frah^{2}\sum_{s_j,t_j\in
 [0,\beta)_h}\nu(s_j,t_j)\right)\\
&\quad\cdot \sum_{b\in\{1,-1\}}\sum_{\bx\in
 \G}B_{(2,-1),(3,b)}B_{(2,-1),(3,b)}\\
&\qquad\cdot V_{\bx t_2
 -1}(\psi^{2,-1})(1_{b=1}V_{\b0 s_3 1}(\psi^{3,1})+ 1_{b=-1}V_{\b0 t_3
 -1}(\psi^{3,-1}))\\
&\quad\cdot \sum_{c\in\{1,-1\}}\sum_{\bz\in
 \G}B_{(3,-b),(4,c)}B_{(3,-b),(4,c)}\\
&\qquad\cdot (1_{b=1}V_{\bz t_3 -1}(\psi^{3,-1})+ 1_{b=-1}V_{\bz s_3
 1}(\psi^{3,1}))
(1_{c=1}V_{\b0 s_4 1}(\psi^{4,1})+ 1_{c=-1}V_{\b0 t_4
 -1}(\psi^{4,-1}))\\
&\quad\cdot
2^2n(n-1) L^{-d(n-3)}E M(T,\s,\bs)_{2,3}M(T,\s,\bs)_{3,4}
R^{n-2}\\
&\quad\cdot \prod_{\{p,q\}\in T\backslash \{\{2,3\},\{3,4\}\}}\left(
\sum_{f,g\in \{1,-1\}}B_{(p,f),(q,g)}\right)\\
&\quad\cdot \sum_{\bw\in\G}V_{\b0 s_1 1}(\psi^{1,1})
(1_{c=1}V_{\bw t_4 -1}(\psi^{4,-1})+ 1_{c=-1}V_{\bw s_4 1}(\psi^{4,1}))\\
&\quad\cdot \prod_{j=5}^{n+1}\left(
\sum_{\bx,\by\in\G}V_{\bx s_j 1}(\psi^{j,1})V_{\by t_j -1}(\psi^{j,-1})
\right)\Bigg|_{\psi^{j,\delta}=0\atop (\forall j\in \{1,2,\cdots,n+1\},
 \delta\in \{1,-1\})}.
\end{align*}
By repeating this procedure we eventually have
\begin{align}
\left|a_{n,L,h}(T)-b_{n,L,h}(T)\right|\le (2n)!c^n
 c(d,\beta,\theta)^{2n}\beta^n L^{-d},\label{eq_2_field_subtraction_bound}
\end{align}
where
\begin{align*}
&b_{n,L,h}(T):=\prod_{j=2}^{n+1}\Bigg(\sum_{\bx_j\in
 \G}\Bigg)g_{L,h}(\phi)(\bx_2,\bx_3,\cdots,\bx_{n+1}),\\
&g_{L,h}(\phi)(\bx_2,\bx_3,\cdots,\bx_{n+1})\\
&:=\frah^2\sum_{s_1,s_2^{-1}\in[0,\beta)_h}\nu(s_1,s_2^{-1})\prod_{j=3}^{n+1}\Bigg(\frah^{2}\sum_{s_j^1,s_j^{-1}\in
 [0,\beta)_h}\nu(s_j^1,s_j^{-1})\Bigg)\\
&\quad\cdot \sum_{b_3\in\{1,-1\}}B_{(2,-1),(3,b_3)}^2 V_{\bx_2 s_2^{-1}
 -1}(\psi^{2,-1}) V_{\b0 s_3^{b_3}
 b_3}(\psi^{3,b_3})\\
&\quad\cdot \sum_{b_4\in\{1,-1\}}B_{(3,-b_3),(4,b_4)}^2 V_{\bx_3 s_3^{-b_3}
 -b_3}(\psi^{3,-b_3}) V_{\b0 s_4^{b_4}
 b_4}(\psi^{4,b_4})\\
&\quad\vdots\\
&\quad \cdot B_{(n+1,-b_{n+1}),(1,1)}^2 V_{\bx_{n+1} s_{n+1}^{-b_{n+1}}
 -b_{n+1}}(\psi^{n+1,-b_{n+1}}) V_{\b0 s_1 1}(\psi^{1,1})\\
&\quad\cdot
2^n n! E \prod_{\{p,q\}\in T}M(T,\s,\bs)_{p,q}.
\end{align*}
Since $g_{L,h}(\phi)(\bX)$ is a finite sum of products of $\tilde{\cG}$, 
we can naturally define $g_{L,h}$ as a map from $(\Z^d)^n$ to $C(Q,\C)$.
By the same argument as the proof of the convergence $\lim_{h\to
 \infty,h\in \frac{2}{\beta}\N}a_{n,L,h}$ in $C(Q,\C)$ we can prove that 
 $\lim_{h\to
 \infty,h\in \frac{2}{\beta}\N}g_{L,h}(\cdot)(\bX)$ converges in
 $C(Q,\C)$ for any $\bX\in (\Z^d)^n$ and so does $\lim_{h\to
 \infty,h\in \frac{2}{\beta}\N}b_{n,L,h}(T)$. 
We can also deduce from the definition of $\tilde{\cG}$ and $g_{L,h}$
 that $\lim_{L\to
 \infty,L\in \N}\lim_{h\to
 \infty,h\in \frac{2}{\beta}\N}g_{L,h}(\cdot)(\bX)$ converges in
 $C(Q,\C)$ for any $\bX\in (\Z^d)^n$. It follows from
 \eqref{eq_phi_good_property}, \eqref{eq_exponent_matrix_bound},
 \eqref{eq_bare_covariance_decay_L_independent} that
\begin{align*}
&\sup_{\phi\in Q}|g_{L,h}(\phi)(\bx_2,\bx_3,\cdots,\bx_{n+1})|1_{\bx_j\in
 \G_L\ (j=2,3,\cdots,n+1)}\\
&\le n! c^n
 c(d,\beta,\theta)^{2n}\beta^n\prod_{l=2}^{n+1}\frac{1}{1+(\frac{2}{\pi})^{d+1}\sum_{j=1}^d|\<\bx_l,\be_j\>|^{d+1}},\\
&(\forall \bx_j\in \Z^d\ (j=2,3,\cdots,n+1)).
\end{align*}
The right-hand side of the above inequality is summable over
 $(\Z^d)^n$. Since
\begin{align*}
b_{n,L,h}(T)=\prod_{j=2}^{n+1}\Bigg(\sum_{\bx_j\in\Z^d}\Bigg)
1_{\bx_j\in\G_L
 (j=2,3,\cdots,n+1)}g_{L,h}(\phi)(\bx_2,\bx_3,\cdots,\bx_{n+1}),
\end{align*}
we can apply the dominated convergence theorem in
 $L^1((\Z^d)^n,C(Q,\C))$ to conclude that
$\lim_{L\to\infty,L\in \N}\lim_{h\to \infty, h\in
 \frac{2}{\beta}\N}b_{n,L,h}(T)$ converges in $C(Q,\C)$. Observe that
\begin{align*}
a_{n,L,h}
&=\frac{2^n}{(n!)^2}\sum_{T\in
 \T(\{1,2,\cdots,n+1\})}1_{\eqref{eq_vertex_logical}}
 a_{n,L,h}(T)\\
&\quad + \frac{2^n}{(n!)^2}\sum_{T\in
 \T(\{1,2,\cdots,n+1\})}1_{\lnot\eqref{eq_vertex_logical}}(a_{n,L,h}(T)-b_{n,L,h}(T))\\
&\quad + \frac{2^n}{(n!)^2}\sum_{T\in
 \T(\{1,2,\cdots,n+1\})}1_{\lnot\eqref{eq_vertex_logical}}b_{n,L,h}(T).
\end{align*}
By \eqref{eq_4_field_terminal_bound},
 \eqref{eq_2_field_subtraction_bound} and the fact that $\sharp
 \T(\{1,2,\cdots,n+1\})\le c^n n!$,
\begin{align*}
&\sup_{\phi\in Q}\left|
a_{n,L,h}(\phi)- \frac{2^n}{(n!)^2}\sum_{T\in
 \T(\{1,2,\cdots,n+1\})}1_{\lnot\eqref{eq_vertex_logical}}b_{n,L,h}(T)(\phi)\right|\\
&\le \frac{(2n)!}{n!}c^n
 c(d,\beta,\theta)^{2n}\beta^n L^{-d}.
\end{align*}
Since we have checked that
$\lim_{h\to \infty, h\in \frac{2}{\beta}\N}a_{n,L,h}$, 
$\lim_{h\to \infty, h\in \frac{2}{\beta}\N}b_{n,L,h}(T)$,\\
$\lim_{L\to \infty, L\in \N}\lim_{h\to \infty, h\in
 \frac{2}{\beta}\N}b_{n,L,h}(T)$
 converge in $C(Q,\C)$, we can deduce from this inequality that  
$\lim_{L\to \infty, L\in \N}\lim_{h\to \infty, h\in
 \frac{2}{\beta}\N}a_{n,L,h}$
 converges in $C(Q,\C)$.

Let us confirm the convergence of $\alpha'_{n,L,h}$. By definition,
$$\alpha'_{1,L,h}=-\beta \cG(\phi)(1\b0 0,1\b0 0),$$ 
which converges in
 $C(Q,\C)$ as $h\to \infty$, $L\to \infty$. Assume that $n\ge 2$. Let us
 estimate $|\alpha'_{n,L,h}|$ by using the general lemmas Lemma
 \ref{lem_tree_bound}, Lemma \ref{lem_tree_double_bound}, which is a
 simpler way than decomposing the operator $Tree(\{1,\cdots,n\},\cG)$ as
 above. By \eqref{eq_covariance_decomposition},
 $\cG(\phi)(\bX)=\sum_{l=0}^{N_h-N_{\beta}+1}C_l(\bX)$, ($\forall \bX\in
 I_0^2$). Thus, by
 \eqref{eq_full_covariance_determinant_bound_pre}
\begin{align*}
&|\det(\<\bu_i,\bv_j\>_{\C^m}\cG(\phi)(X_i,Y_j))_{1\le i,j\le n}|\le
 (c(d)(1+\beta^{-1}g_d(\Theta)))^n\le c'(d,\beta,\theta)^n,\\
&(\forall m,n\in \N,\bu_i,\bv_i\in\C^m\text{ with
 }\|\bu_i\|_{\C^m},\|\bv_i\|_{\C^m}\le 1,X_i,Y_i\in I_0\
 (i=1,2,\cdots,n)),
\end{align*}
where $c'(d,\beta,\theta)(\in\R_{\ge 1})$ is a positive constant
 depending only on $d,\beta,\theta$. Moreover, by Lemma
 \ref{lem_full_covariance_decay}
$$
\|\tilde{\cG}(\phi)\|_{1,\infty},\|\tilde{\cG}(\phi)\|\le \sum_{\bx\in\G}\frac{c'(d,\beta,\theta)}{1+\sum_{j=1}^d|\frac{L}{2\pi}(e^{i\frac{2\pi}{L}\<\bx,\be_j\>}-1)|^{d+1}}.
$$
Also, by definition
\begin{align*}
&\|V_2\|_{1,\infty}\le L^{-d},\quad \|V_{2,2}\|_{1,\infty}\le 1,\\ 
&[V_{2,2},\tilde{\cG}(\phi)]_{1,\infty}\le
 L^{-d}\|\tilde{\cG}(\phi)\|\le L^{-d}
 \sum_{\bx\in\G}\frac{c'(d,\beta,\theta)}{1+\sum_{j=1}^d|\frac{L}{2\pi}(e^{i\frac{2\pi}{L}\<\bx,\be_j\>}-1)|^{d+1}}.
\end{align*}
With these inequalities and the fact that the
 $\|\cdot\|_{1,\infty}$-norm of the anti-symmetric kernel of $\hat{V}_4$
 is bounded by $\|V_{2,2}\|_{1,\infty}$ we can apply
 \eqref{eq_tree_1_infinity}, \eqref{eq_double_1_infinity} to derive that
\begin{align*}
&|\alpha'_{n,L,h}|\\
&\le
 \frac{N}{h}\left(\sum_{\bx\in\G}\frac{1}{1+\sum_{j=1}^d|\frac{L}{2\pi}(e^{i\frac{2\pi}{L}\<\bx,\be_j\>}-1)|^{d+1}}\right)^{n-1}(2^6c'(d,\beta,\theta)L^{-d})^n\\
&\quad +\sum_{l=1}^{n-1}\left(\begin{array}{c} n \\ l\end{array}\right)
\frac{N}{h}2^{12} c'(d,\beta,\theta)^2
 \left(\sum_{\bx\in\G}\frac{1}{1+\sum_{j=1}^d|\frac{L}{2\pi}(e^{i\frac{2\pi}{L}\<\bx,\be_j\>}-1)|^{d+1}}\right)^{n}\\
&\qquad\cdot L^{-d}
(2^{12}c'(d,\beta,\theta)^2)^{l-1}(2^{6}c'(d,\beta,\theta)L^{-d})^{n-l}\\
&\le \frac{N}{h}L^{-2d}c^nc'(d,\beta,\theta)^{2n}\sum_{a=0}^1
\left(\sum_{\bx\in\G}\frac{1}{1+\sum_{j=1}^d|\frac{L}{2\pi}(e^{i\frac{2\pi}{L}\<\bx,\be_j\>}-1)|^{d+1}}\right)^{n-a},
\end{align*}
which implies that $\lim_{L\to \infty, L\in \N}\lim_{h\to \infty, h\in
 \frac{2}{\beta}\N}\alpha'_{n,L,h}=0$ in $C(Q,\C)$. Thus, we have seen
 that $\alpha_{n,L,h}$ converges in $C(Q,\C)$ as $h\to \infty(h\in
 \frac{2}{\beta}\N)$, $L\to \infty(L\in \N)$ for any $n\in \N$. 

Let us complete the proof of the proposition. The inequality
 \eqref{eq_model_good_smallness} implies that 
$$
u\mapsto \log\left(\int e^{-V(u)(\psi)+W(u)(\psi)}d\mu_{C(\phi)}(\psi)\right)
$$
is analytic in $D(r)$ for any $\phi\in\C$ and 
$$
\sup_{(\phi,u)\in\C\times\overline{D(r)}}\left|\log\left(\int
 e^{-V(u)(\psi)+W(u)(\psi)}d\mu_{C(\phi)}(\psi)\right)\right|\le 1.
$$ 
Thus,
\begin{align*}
&\log\left(\int
 e^{-V(u)(\psi)+W(u)(\psi)}d\mu_{C(\phi)}(\psi)\right)=\sum_{n=1}^{\infty}\alpha_{n,L,h}(\phi)u^n,\quad
 (\forall (\phi,u)\in Q\times \overline{D(r/2)}),\\
&\sup_{(\phi,u)\in Q\times
 \overline{D(\frac{r}{2})}}|\alpha_{n,L,h}(\phi)u^n|\\
&\le \sup_{\phi\in Q}\left|
\frac{1}{2\pi
 i}\oint_{|z|=(1+\eps)\frac{r}{2}}dz\frac{1}{z^{n+1}}\log\left(\int
 e^{-V(z)(\psi)+W(z)(\psi)}d\mu_{C(\phi)}(\psi)\right)\right|\left(\frac{r}{2}\right)^n\\
&\le
 \frac{1}{(1+\eps)^n},\quad (\forall \eps \in (0,1)).
\end{align*}
Therefore we can use the dominated convergence theorem in
 $l^1(\N,C(Q\times \overline{D(r/2)}))$ to ensure that 
\begin{align*}
&\lim_{h\to \infty\atop h\in
 \frac{2}{\beta}\N}
\log\left(\int
 e^{-V(u)(\psi)+W(u)(\psi)}d\mu_{C(\phi)}(\psi)\right),\\
&\lim_{L\to \infty\atop L\in \N}
\lim_{h\to \infty\atop h\in
 \frac{2}{\beta}\N}
\log\left(\int
 e^{-V(u)(\psi)+W(u)(\psi)}d\mu_{C(\phi)}(\psi)\right)
\end{align*}
converge in $C(Q\times  \overline{D(r/2)})$.
Since 
\begin{align*}
&\int e^{-V(u)(\psi)+W(u)(\psi)}d\mu_{C(\phi)}(\psi)=e^{\log\left(\int
 e^{-V(u)(\psi)+W(u)(\psi)}d\mu_{C(\phi)}(\psi)\right)},\\
&(\forall
 (\phi,u)\in Q\times \overline{D(r/2)}),
\end{align*}
the claims of the proposition follow.
\end{proof}

\subsection{Completion of the proof of the main
  theorem}\label{subsec_proof_theorem}

In this subsection we will complete the proof of Theorem
\ref{thm_main_theorem}. The main necessary tools have already been prepared. 
It remains to study the solvability of the gap equation which is
different from the conventional BCS gap equation due to the presence of
the imaginary magnetic field. Let us start by showing an inequality
which will be used to give a sufficient condition for the solvability of our gap equation.

\begin{lemma}\label{lem_estimation_free_fermi_surface}
Set $K:=\frac{1}{2}(2d-|\mu|)$. Then, 
\begin{align*}
\inf_{\eta\in [-K,K]}\cH^{d-1}(\{\bk\in [0,2\pi]^d\ |\
 e(\bk) =\eta\})\ge 1_{d=1}+1_{d\ge
 2}\left(\frac{2d-|\mu|}{10(d-1)d}\right)^{d-1}.
\end{align*}
\end{lemma}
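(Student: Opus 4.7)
The plan is to rewrite the level set as $\{\bk \in [0,2\pi]^d : f(\bk) = c\}$ with $f(\bk) := 2\sum_{j=1}^d \cos k_j$ and $c := (-1)^{hop}(\eta + \mu)$, and to observe that $|\eta| \le K = \tfrac{1}{2}(2d - |\mu|)$ forces $|c| \le K + |\mu| = 2d - K$, so that $c$ stays separated from the global extrema $\pm 2d$ of $f$ by at least $K$. The case $d = 1$ is then immediate: $|c/2| < 1$, so $\cos k_1 = c/2$ has two solutions in $[0,2\pi]$, yielding $\cH^0 \ge 1$.

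For $d \ge 2$ I would exploit the coordinate projection $P : [0,2\pi]^d \to [0,2\pi]^{d-1}$, $P(\bk) := (k_2,\dots,k_d)$. Setting $g(\bk') := 2\sum_{j=2}^d \cos k_j$, the equation $f(\bk) = c$ reads $2\cos k_1 = c - g(\bk')$, so the image of the level set under $P$ is exactly $S_c := \{\bk' \in [0,2\pi]^{d-1} : |c - g(\bk')| \le 2\}$. Since $P$ is $1$-Lipschitz, the standard behavior of Hausdorff measure under Lipschitz maps yields
$$
\cH^{d-1}(\{e = \eta\}) \;\ge\; \cH^{d-1}(S_c) \;=\; \mathrm{Leb}_{d-1}(S_c).
$$
By the measure-preserving coordinatewise involution $k_j \mapsto \pi - k_j \pmod{2\pi}$, which flips the sign of $f$ (hence of $c$), I may assume $c \ge 0$; then $c - 2 \le 2(d-1) - K$, so $S_c \supset \{\bk' : g(\bk') \ge 2(d-1) - K\}$.

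The last step is to exhibit a small box inside this sub-level set. Using the elementary inequality $\cos k \ge 1 - k^2/2$, every $\bk' \in [-a,a]^{d-1}$ with $a := \sqrt{K/(d-1)}$ satisfies $g(\bk') \ge 2(d-1) - K$; since $K \le d$ forces $a < \pi$, lifting this box to $[0,2\pi]^{d-1}$ via its two components $[0,a]$ and $[2\pi - a, 2\pi]$ in each coordinate gives $\mathrm{Leb}_{d-1}(S_c) \ge (2a)^{d-1} = (2\sqrt{K/(d-1)})^{d-1}$. The proof concludes with the arithmetic comparison $(2\sqrt{K/(d-1)})^{d-1} \ge (K/(10(d-1)d))^{d-1}$, equivalent to $400\,d^2(d-1) \ge K$, which is immediate from $K \le d$ and $d \ge 2$. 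I do not expect any real obstacle: the projection inequality is the only nontrivial ingredient, and the generous constant $10(d-1)d$ in the statement easily absorbs the mismatch between the natural $K^{(d-1)/2}$ scaling of the cube and the weaker target $K^{d-1}$.
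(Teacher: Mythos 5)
Your overall strategy --- project the level set onto the last $d-1$ coordinates, use the fact that a $1$-Lipschitz map does not increase $\cH^{d-1}$, and then exhibit a box of controlled volume inside the projected set --- is a legitimate alternative to the paper's argument (which instead represents a piece of the level set as a graph $k_d=\arccos(\cdots)$ over a well-chosen box and bounds its surface measure from below by the measure of the base). The $d=1$ case and the closing arithmetic are fine. However, there is a genuine gap in the key containment for $d\ge 2$. You correctly identify $P(\{e=\eta\})=S_c=\{\bk'\,:\,|c-g(\bk')|\le 2\}=\{\bk'\,:\,c-2\le g(\bk')\le c+2\}$, but you then verify only the lower constraint: from $c-2\le 2(d-1)-K$ you conclude $S_c\supset\{g\ge 2(d-1)-K\}$, silently discarding the requirement $g(\bk')\le c+2$. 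Your box $[-a,a]^{d-1}$ is centered at $\bk'=\b0$, where $g$ attains its maximum $2(d-1)$, so it lies in $S_c$ only when $2(d-1)\le c+2$, i.e.\ $c\ge 2d-4$. For $d\ge 3$ and small $|c|$ this fails: take $d=3$, $\mu=0$, $\eta=0$, hence $c=0$; then $g(\b0)=4$ and $|c-g(\b0)|=4>2$, so no $k_1$ solves $2\cos k_1=c-g(\b0)$, and neither $\b0$ nor any nearby point belongs to $S_0$. Such values of $c$ do occur, since $c=0$ corresponds to $\eta=-\mu\in[-K,K]$ whenever $|\mu|\le 2d/3$.

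The missing idea is that the box must be positioned adaptively in $c$ so that $g$ stays within distance $2$ of $c$ on it: roughly $c/2$ of the coordinates should sit near $0$ (where $\cos\approx 1$) and the remaining ones near $\pi/2$ (where $\cos\approx 0$). This is precisely the case analysis the paper performs (splitting according to which interval $[l,l+\tfrac12)$ or $[l+\tfrac12,l+1)$ contains $\tfrac12|\eta+\mu|$), and it is the genuinely nontrivial content of the lemma. With that modification your projection argument would go through; as written it establishes the claim only for $d\le 2$, where $2(d-1)\le c+2$ holds automatically once you have reduced to $c\ge 0$.
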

\begin{proof}
Since $|\mu+\eta|<2d$, 
$$
\{\bk\in [0,2\pi]^d\ |\
 e(\bk) =\eta \}\neq \emptyset,\quad (\forall \eta\in [-K,K]).
$$
This implies the lower bound for $d=1$. Let us assume that $d\ge 2$.
Note that
\begin{align*}
&\cH^{d-1}(\{\bk\in [0,2\pi]^d\ |\ e(\bk) =\eta \})\\
&\ge \cH^{d-1}\left(\left\{\bk\in \left[0,\frac{\pi}{2}\right]^{d-1}\times[0,\pi]\ \Big|\
 \sum_{j=1}^d\cos k_j=\frac{1}{2}|\eta +\mu|\right\}\right).
\end{align*}
In the following we assume that $\bk\in
 [0,\frac{\pi}{2}]^{d-1}\times[0,\pi]$. Set
$$\eps :=\frac{\min\{1,2d-|\mu|\}}{5(d-1)}.$$

Assume that $\frac{1}{2}|\eta+\mu|\in [l,l+\frac{1}{2})$ for some $l\in
 \{0,1,\cdots,d-1\}$. If 
\begin{align}
k_j\in [0,\eps]\quad (\forall j\in
 \{1,2,\cdots,l\}),\quad k_j\in
 \left[\frac{\pi}{2}-\eps,\frac{\pi}{2}\right]\quad (\forall j\in
 \{l+1,\cdots,d-1\}),\label{eq_momentum_restriction_1}
\end{align}
then
$$
\sum_{j=1}^{d-1}\cos k_j\in [l(1-\eps),l+(d-1-l)\eps]\subset
 \left[l-\frac{1}{5}, l+\frac{1}{5}\right].
$$ 
Thus,
$$
\frac{1}{2}|\eta+\mu|-\sum_{j=1}^{d-1}\cos k_j\in \left[-\frac{1}{5},\frac{7}{10}\right].
$$
Recall that we defined $\arccos$ as a map from $(-1,1)$ to $(0,\pi)$ in
 the proof of Lemma \ref{lem_covariance_crucial_inequality}. 
Then, we see that if \eqref{eq_momentum_restriction_1}
 holds and $k_d\in [0,\pi]$, the equality $\sum_{j=1}^{d}\cos
 k_j=\frac{1}{2}|\eta+\mu|$ is equivalent to 
$$
k_d=\arccos\left(\frac{1}{2}|\eta+\mu|-\sum_{j=1}^{d-1}\cos k_j\right).
$$
Thus, 
\begin{align*}
&\cH^{d-1}\left(\left\{\bk\in \left[0,\frac{\pi}{2}\right]^{d-1}\times[0,\pi]\ \Big|\
 \sum_{j=1}^d\cos k_j=\frac{1}{2}|\eta +\mu|\right\}\right)\\
&\ge \prod_{i=1}^l\left(\int_0^{\eps}dk_i\right)
\prod_{j=l+1}^{d-1}\left(\int_{\frac{\pi}{2}-\eps}^{\frac{\pi}{2}}dk_j\right)
\Bigg(1+\sum_{j=1}^{d-1}\frac{\sin^2k_j}{1-(\frac{1}{2}|\eta+\mu|-\sum_{m=1}^{d-1}\cos
 k_m)^2}\Bigg)^{\frac{1}{2}}\\
&\ge \eps^{d-1}.
\end{align*}

Assume that $\frac{1}{2}|\eta+\mu|\in [l+\frac{1}{2},l+1)$ for some
 $l\in \{0,1,\cdots,d-2\}$. If 
\begin{align}
k_j\in [0,\eps]\quad (\forall j\in
 \{1,2,\cdots,l+1\}),\quad k_j\in
 \left[\frac{\pi}{2}-\eps,\frac{\pi}{2}\right]\quad (\forall j\in
 \{l+2,\cdots,d-1\}),\label{eq_momentum_restriction_2}
\end{align}
then
$$
\sum_{j=1}^{d-1}\cos k_j\in [(l+1)(1-\eps),l+1+(d-l-2)\eps]\subset
 \left[l+\frac{4}{5}, l+\frac{6}{5}\right].
$$ 
Thus,
$$
\frac{1}{2}|\eta+\mu|-\sum_{j=1}^{d-1}\cos k_j\in \left[-\frac{7}{10},\frac{1}{5}\right].
$$
Therefore, if \eqref{eq_momentum_restriction_2} and $k_d\in
 [0,\pi]$ hold, the equality $\sum_{j=1}^{d}\cos
 k_j=\frac{1}{2}|\eta+\mu|$ is equivalently written as 
$$
k_d=\arccos\left(\frac{1}{2}|\eta+\mu|-\sum_{j=1}^{d-1}\cos k_j\right).
$$
Thus, by the same calculation as above we have that
\begin{align}
\cH^{d-1}\left(\left\{\bk\in \left[0,\frac{\pi}{2}\right]^{d-1}\times[0,\pi]\ \Big|\
 \sum_{j=1}^d\cos k_j=\frac{1}{2}|\eta +\mu|\right\}\right)\ge\eps^{d-1}.\label{eq_target_fermi_estimate}
\end{align}

Assume that  $\frac{1}{2}|\eta+\mu|\in [d-\frac{1}{2},d)$. 
If $k_j\in [0,\eps]$ $(\forall j\in
 \{1,2,\cdots,d-1\})$, 
then
$$
\sum_{j=1}^{d-1}\cos k_j\in [(d-1)(1-\eps),d-1]\subset
 \left[d-1-\frac{1}{5}(2d-|\mu|), d-1\right].
$$
By assumption, $\frac{1}{2}|\eta+\mu|\le
 \frac{1}{2}|\mu|+\frac{1}{4}(2d-|\mu|)$. 
Thus,
$$
\frac{1}{2}|\eta+\mu|-\sum_{j=1}^{d-1}\cos k_j\in
 \left[\frac{1}{2},1-\frac{1}{20}(2d-|\mu|)\right]\subset
 \left[\frac{1}{2},1\right).$$
Therefore, the same argument as above yields the estimate
\eqref{eq_target_fermi_estimate}. Since $\min\{1,2d-|\mu|\}\ge
 (2d-|\mu|)/(2d)$, the claimed inequality has been derived.
\end{proof}

Using Lemma \ref{lem_estimation_free_fermi_surface},
let us give a sufficient condition for the solvability and the
non-solvability of the gap
equation \eqref{eq_gap_equation}.

\begin{lemma}\label{lem_sufficient_condition_gap_equation}
The following statements hold true.
\begin{enumerate}[(i)]
\item\label{item_sufficient_solvability}
There exists a positive constant $c(d)$ depending only on $d$ such that
if 
\begin{align*}
|U|>c(d)(2d-|\mu|)^{1-d}\beta\Theta\left(1_{\Theta\le \frac{1}{2}(2d-|\mu|)}
+
1_{\Theta> \frac{1}{2}(2d-|\mu|)}(2d-|\mu|)^{-1}\Theta\right),
\end{align*}
\begin{align*}
-\frac{2}{|U|}+\frac{1}{(2\pi)^d}\int_{[0,2\pi]^d}d\bk\frac{\sinh(\beta
 |e(\bk)|)}{(\cos(\beta\theta/2)+\cosh(\beta e(\bk)))|e(\bk)|}>0.
\end{align*}
\item\label{item_sufficient_nonsolvability}
Assume that $\theta\in [0,\frac{\pi}{\beta}]$. If $|U|<2\beta^{-1}$, 
\begin{align*}
-\frac{2}{|U|}+\frac{1}{(2\pi)^d}\int_{[0,2\pi]^d}d\bk\frac{\sinh(\beta
 |e(\bk)|)}{(\cos(\beta\theta/2)+\cosh(\beta e(\bk)))|e(\bk)|}<0.
\end{align*}
\end{enumerate}
\end{lemma}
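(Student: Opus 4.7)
The plan is to dispatch (ii) by a uniform pointwise upper bound on the integrand, and to handle (i) via the coarea formula, reducing to a one-dimensional integral that I estimate separately in the regimes $\Theta \le K$ and $\Theta > K$, with $K := (2d-|\mu|)/2$.

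For (ii), the assumption $\theta \in [0,\pi/\beta]$ forces $\beta\theta/2 \in [0,\pi/2]$, whence $\cos(\beta\theta/2) \ge 0$ and the denominator in the integrand is at least $\cosh(\beta e(\bk))$. Combined with the elementary bound $\sinh(\beta|e|) \le \beta|e|\cosh(\beta|e|)$, obtained by integrating $\cosh$ from $0$ to $\beta|e|$, this bounds the integrand uniformly by $\beta$. Consequently the integral is at most $\beta$, and $|U| < 2/\beta$ immediately gives the desired strict inequality.

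For (i), I first rewrite $\cos(\beta\theta/2) = -\cos(\beta\Theta)$ (valid because $\theta/2 = \pi/\beta - \Theta$ under $\theta < 2\pi/\beta$). Combining this with the identity $\cosh(x) - \cos(y) = 2\sinh^2(x/2) + 2\sin^2(y/2)$ and the double-angle formula $\sinh(\beta|e|) = 2\sinh(\beta|e|/2)\cosh(\beta|e|/2)$, followed by $\cosh \ge 1$ and $\sinh(x) \ge x$ on $\R_{\ge 0}$, produces the pointwise lower bound
\[
\frac{\beta/2}{\sinh^2(\beta e(\bk)/2) + \sin^2(\beta\Theta/2)}
\]
for the original integrand. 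Restricting $\bk$ to the region $\{|e(\bk)| \le K\}$ and using $|\nabla e(\bk)| \le 2\sqrt{d}$, the coarea formula combined with Lemma \ref{lem_estimation_free_fermi_surface}, which supplies the lower bound $\cH^{d-1}(\{e = \eta\}) \ge c(d)(2d-|\mu|)^{d-1}$ uniformly in $\eta \in [-K,K]$, reduces the lower bound on the integral to a dimensional constant times $(2d-|\mu|)^{d-1}$ times
\[
J := \int_{-K}^{K}\frac{\beta\,d\eta}{\sinh^2(\beta\eta/2) + \sin^2(\beta\Theta/2)}.
\]

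To lower bound $J$, I substitute $u = \beta\eta/2$, use $\sin(\beta\Theta/2) \le \beta\Theta/2$ (valid since $\beta\Theta/2 \le \pi/2$ under $\Theta \le \pi/\beta$), and restrict integration to $|u| \le \min(1, \beta K/2)$ where $\sinh^2(u) \le 2u^2$. This produces an arctangent lower bound; combining with $\arctan(x) \ge \min(\pi/4, x/2)$ for $x \ge 0$, a short case analysis (in $\Theta \le K$ versus $\Theta > K$, and within the latter in $\beta K \le 2$ versus $\beta K > 2$) yields $J \ge c/(\beta\Theta)$ when $\Theta \le K$ and $J \ge cK/(\beta\Theta^2)$ when $\Theta > K$. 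Inverting the sufficient condition $|U| > 2/I$ and noting $K^{-1} = 2/(2d-|\mu|)$ then recovers exactly the two indicator terms in the lemma's hypothesis. The only real technical point is the choice of truncation $\min(1, \beta K/2)$, which must simultaneously guarantee validity of the linearizations of $\sinh$ and $\sin$ and place the arctangent argument in the appropriate range under the case split; everything else is elementary.
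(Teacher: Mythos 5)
Your proof is correct and follows essentially the same route as the paper's: part (ii) uses the same uniform bound $\sinh(\beta|x|)/(\cosh(\beta x)|x|)\le\beta$ on the integrand, and part (i) uses the coarea formula with Lemma \ref{lem_estimation_free_fermi_surface}, the same half-angle identity $\cos(\beta\theta/2)+\cosh(\beta\eta)=2\sinh^2(\beta\eta/2)+2\sin^2(\beta\Theta/2)$, and the same arctangent computation. The only differences are bookkeeping (the paper truncates the $\eta$-integral at $\min\{\Theta,K\}$ rather than your $\min\{2/\beta,K\}$, and linearizes $\sinh$ in the numerator rather than keeping $\sinh^2$ in the denominator), which are immaterial.
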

\begin{proof} Set $K:=\frac{1}{2}(2d-|\mu|)$. 

\eqref{item_sufficient_solvability}: Let us define the function $f:\R\to
 \R$ by 
$$
f(x):=\frac{\sinh(\beta|x|)}{(\cos(\beta\theta/2)+\cosh(\beta x))|x|}.
$$ 
We can see from the definition that $f\in C^{\infty}(\R)$. Moreover, by
 the coarea formula and Lemma \ref{lem_estimation_free_fermi_surface}, 
\begin{align*}
\int_{[0,2\pi]^d}d\bk f(e(\bk))&\ge
 \frac{1}{2\sqrt{d}}\int_{[0,2\pi]^d}d\bk f(e(\bk))\|\nabla e(\bk)\|_{\R^d}\\
&=\frac{1}{2\sqrt{d}}\int_{-\infty}^{\infty}d\eta
 f(\eta)\cH^{d-1}(\{\bk\in [0,2\pi]^d\ |\ e(\bk)=\eta\})\\
&\ge \frac{1}{2\sqrt{d}}\left(1_{d=1}+1_{d\ge 2}\left(
\frac{2d-|\mu|}{10(d-1)d}\right)^{d-1}\right)\int_{-K}^Kd\eta f(\eta).
\end{align*}
Note that 
\begin{align*}
\int_{-K}^Kd\eta f(\eta)&\ge \int_{-K}^Kd\eta \frac{\beta}{\cosh(\beta
 \eta)-1+2\sin^2(\beta \Theta/2)}\\
&\ge c\beta^{-1}\int_0^{\min\{\Theta,K\}}d\eta \frac{1}{\eta^2+\Theta^2}
=c\beta^{-1}\Theta^{-1}\arctan(\min\{1,K\Theta^{-1}\})\\
&\ge c\beta^{-1}\Theta^{-1}(1_{\Theta \le K}+1_{\Theta >
 K}K\Theta^{-1}).
\end{align*}
By combining this inequality with the above inequality we obtain 
$$
\int_{[0,2\pi]^d}d\bk f(e(\bk))\ge c(d)(2d-|\mu|)^{d-1}\beta^{-1}
\Theta^{-1}(1_{\Theta \le K}+1_{\Theta >
 K}K\Theta^{-1}),
$$
which implies the claim \eqref{item_sufficient_solvability}.

\eqref{item_sufficient_nonsolvability}: By the assumption on $\theta$, 
$f(x)\le \tanh(\beta |x|)/|x|\le \beta$ for any $x\in \R$. The claim
 follows from this inequality.
\end{proof}

Before giving the proof of the main theorem, let us confirm a few more
simple facts.

\begin{lemma}\label{lem_monotoniticity}
Let $\eps\in (-1,1]$. The function 
$$x\mapsto \frac{\sinh(x)}{x(\eps +\cosh(x))}:[0,\infty)\to \R$$
is strictly monotone decreasing and converges to $0$ as $x\to \infty$.
\end{lemma}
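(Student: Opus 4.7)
The plan is to set $f(x) := \sinh(x)/(x(\eps+\cosh(x)))$ and establish the two claims independently. For the limit as $x\to\infty$, I would just divide numerator and denominator by $e^x/2$: since $\sinh(x)/(e^x/2)\to 1$, $\cosh(x)/(e^x/2)\to 1$, while the extra factor of $x$ in the denominator forces $f(x)\sim 1/x\to 0$. Note also that $f$ extends continuously to $x=0$ with value $1/(1+\eps)>0$ since $\eps>-1$, so the statement makes sense on all of $[0,\infty)$.

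For strict monotonicity on $(0,\infty)$, I would compute $f'(x)$ and reduce the sign analysis to showing that the numerator
\begin{equation*}
N(x):=(\eps+\cosh x)(x\cosh x-\sinh x)-x\sinh^2 x
\end{equation*}
is strictly negative for $x>0$. Using $\cosh^2 x-\sinh^2 x=1$, I would simplify this to the cleaner form
\begin{equation*}
N(x)=\eps\bigl(x\cosh x-\sinh x\bigr)+x-\tfrac{1}{2}\sinh(2x).
\end{equation*}

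At this point two elementary facts do the work. First, the Taylor expansion of $\sinh(2x)$ immediately gives $x-\tfrac{1}{2}\sinh(2x)<0$ for $x>0$; second, the expansion of $x\cosh x-\sinh x=\sum_{n\ge 1}\tfrac{2n}{(2n+1)!}x^{2n+1}$ shows this quantity is nonnegative, and strictly positive for $x>0$. If $\eps\le 0$, the first term of $N(x)$ is $\le 0$ and the second is strictly negative, so $N(x)<0$ trivially.

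The only case requiring a bit of care is $\eps\in(0,1]$. Here I would use $\eps\le 1$ together with the nonnegativity of $x\cosh x-\sinh x$ to bound $\eps(x\cosh x-\sinh x)\le x\cosh x-\sinh x$, which reduces the desired inequality $N(x)<0$ to
\begin{equation*}
x\cosh x-\sinh x<\tfrac{1}{2}\sinh(2x)-x=\sinh x\cosh x-x.
\end{equation*}
Rearranging gives $x(1+\cosh x)<\sinh x\,(1+\cosh x)$, and since $1+\cosh x>0$ this reduces to the well-known inequality $x<\sinh x$ for $x>0$. The main (and really only) obstacle is this case distinction on the sign of $\eps$; once the algebraic identity for $N(x)$ is in hand, everything else is an application of the elementary inequality $x<\sinh x$.
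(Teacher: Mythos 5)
Your proof is correct, but it takes a different route from the paper's. The paper factorizes
\begin{equation*}
\frac{\sinh(x)}{x(\eps+\cosh(x))}=\frac{\tanh(x/2)}{x}\cdot\frac{1+\cosh(x)}{\eps+\cosh(x)}
\end{equation*}
and observes that each factor is positive and nonincreasing (the first strictly decreasing, since its derivative has numerator proportional to $x-\sinh x<0$; the second has derivative $(\eps-1)\sinh x/(\eps+\cosh x)^2\le 0$, which is $\le 0$ precisely because $\eps\le 1$, and is identically $1$ when $\eps=1$). This handles all $\eps\in(-1,1]$ uniformly with no case distinction. You instead differentiate the quotient directly and analyze the sign of the numerator $N(x)$; your algebraic simplification of $N(x)$ is correct, the case $\eps\le 0$ is immediate, and your reduction of the case $\eps\in(0,1]$ to $x(1+\cosh x)<\sinh x\,(1+\cosh x)$, hence to $x<\sinh x$, is valid (the bound $\eps(x\cosh x-\sinh x)\le x\cosh x-\sinh x$ uses both $\eps\le 1$ and the nonnegativity of $x\cosh x-\sinh x$, which you justify by the series expansion). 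Both arguments ultimately rest on the single inequality $x<\sinh x$; the paper's factorization buys a shorter, case-free argument and makes transparent exactly where the hypothesis $\eps\le 1$ enters, while yours is more mechanical but requires the split on the sign of $\eps$. Your treatment of the limit at infinity and of the endpoint $x=0$ (continuous extension to the value $1/(1+\eps)$) is fine.
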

\begin{proof}
Observe that
$$
\frac{\sinh(x)}{x(\eps+\cosh(x))}=\frac{\tanh(x/2)}{x}\cdot
 \frac{1+\cosh(x)}{\eps +\cosh(x)}.
$$
One can check that the derivative of $\tanh(x/2)/x$, $(1+\cosh(x))/(\eps
 +\cosh(x))$ $(\eps\in (-1,1))$ are negative in $(0,\infty)$, which
 implies the strict  monotone decreasing property of the function.
The convergence property is clear.
\end{proof}

\begin{lemma}\label{lem_explicit_calculation_covariance}
For any $\bx,\by\in \G$, $\phi\in\C$, $\rho,\eta \in \{1,2\}$, $\rho\neq
 \eta$, 
\begin{align*}
&C(\phi)(\rho\bx 0,\rho \by 0)\\
&=\frac{1}{L^d}\sum_{\bk\in\G^*}e^{i\<\bx-\by,\bk\>}\Bigg(
\frac{e^{-i\frac{\beta\theta}{2}}+
 \cosh(\beta \sqrt{e(\bk)^2+|\phi|^2})}
{2(\cos(\beta\theta/2)
+\cosh(\beta\sqrt{e(\bk)^2+|\phi|^2}))}\\
&\qquad\qquad\qquad +\frac{(-1)^{\rho}
\sinh(\beta
 \sqrt{e(\bk)^2+|\phi|^2})e(\bk)}{2\sqrt{e(\bk)^2+|\phi|^2}(\cos(\beta\theta/2)
+\cosh(\beta\sqrt{e(\bk)^2+|\phi|^2}))}\Bigg),\\
&C(\phi)(\rho\bx 0,\eta \by 0)\\
&=\frac{1}{L^d}\sum_{\bk\in\G^*}e^{i\<\bx-\by,\bk\>}\frac{
-(1_{(\rho,\eta)=(1,2)}\overline{\phi}+
 1_{(\rho,\eta)=(2,1)}\phi)
\sinh(\beta  \sqrt{e(\bk)^2+|\phi|^2})}{
2\sqrt{e(\bk)^2+|\phi|^2}(\cos(\beta\theta/2)+
\cosh(\beta
 \sqrt{e(\bk)^2+|\phi|^2}))}.
\end{align*}
\end{lemma}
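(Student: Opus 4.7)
The plan is to start from the formula already established in Lemma \ref{lem_partition_function_2_band} \eqref{item_covariance_2_band}, specialized to $s=t=0$, namely
\begin{align*}
C(\phi)(\rho\bx 0,\eta\by 0)
=\frac{1}{L^d}\sum_{\bk\in\G^*}e^{i\<\bk,\bx-\by\>}
\left(I_2+e^{\beta(i\frac{\theta}{2}I_2+E(\phi)(\bk))}\right)^{-1}(\rho,\eta),
\end{align*}
and to compute the $2\times 2$ inverse on the right-hand side in closed form using a Cayley-Hamilton-type identity satisfied by the matrix $E(\phi)(\bk)$ defined in \eqref{eq_dispersion_matrix}.

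First, setting $\lambda:=\sqrt{e(\bk)^2+|\phi|^2}$, I would observe directly from \eqref{eq_dispersion_matrix} that $E(\phi)(\bk)^2=\lambda^2 I_2$, and hence for every real $s$,
\begin{align*}
e^{sE(\phi)(\bk)}=\cosh(s\lambda)I_2+\frac{\sinh(s\lambda)}{\lambda}E(\phi)(\bk),
\end{align*}
with the factor $\sinh(s\lambda)/\lambda$ interpreted by continuity at $\lambda=0$. Pulling the scalar $e^{i\beta\theta/2}$ out and adding $I_2$, this gives the decomposition
\begin{align*}
I_2+e^{\beta(i\frac{\theta}{2}I_2+E(\phi)(\bk))}=AI_2+BE(\phi)(\bk),\quad
A:=1+e^{i\beta\theta/2}\cosh(\beta\lambda),\ B:=e^{i\beta\theta/2}\frac{\sinh(\beta\lambda)}{\lambda}.
\end{align*}

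Next, since $E(\phi)(\bk)^2=\lambda^2 I_2$, the matrix $AI_2+BE(\phi)(\bk)$ is inverted by multiplication with $AI_2-BE(\phi)(\bk)$, yielding the scalar
\begin{align*}
A^2-B^2\lambda^2
=(1+e^{i\beta\theta/2}e^{\beta\lambda})(1+e^{i\beta\theta/2}e^{-\beta\lambda})
=2e^{i\beta\theta/2}\bigl(\cos(\beta\theta/2)+\cosh(\beta\lambda)\bigr).
\end{align*}
Dividing through gives
\begin{align*}
\left(I_2+e^{\beta(i\frac{\theta}{2}I_2+E(\phi)(\bk))}\right)^{-1}
=\frac{(e^{-i\beta\theta/2}+\cosh(\beta\lambda))I_2-\frac{\sinh(\beta\lambda)}{\lambda}E(\phi)(\bk)}{2(\cos(\beta\theta/2)+\cosh(\beta\lambda))}.
\end{align*}

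Finally I would read off the four matrix entries using the explicit form $E(\phi)(\bk)=\bigl(\begin{smallmatrix}e(\bk)&\overline{\phi}\\ \phi&-e(\bk)\end{smallmatrix}\bigr)$: the $(\rho,\rho)$ entry of $-E(\phi)(\bk)$ equals $(-1)^{\rho}e(\bk)$, giving the diagonal formula claimed; the $(1,2)$ and $(2,1)$ entries of $-E(\phi)(\bk)$ equal $-\overline{\phi}$ and $-\phi$ respectively, giving the off-diagonal formula. Substituting back into the momentum sum completes the derivation. There is no serious obstacle: the whole argument is a short algebraic manipulation whose only mild subtlety is the $\lambda=0$ case, which is handled by continuity of $\sinh(\beta\lambda)/\lambda$ and causes no issue in the resulting formulas.
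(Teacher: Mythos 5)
Your proposal is correct, and every step checks out: $E(\phi)(\bk)^2=(e(\bk)^2+|\phi|^2)I_2$ holds by direct computation, the resulting closed form for $e^{\beta E(\phi)(\bk)}$ is right, the scalar $A^2-B^2\lambda^2=2e^{i\beta\theta/2}(\cos(\beta\theta/2)+\cosh(\beta\lambda))$ is nonzero for $\theta\in[0,2\pi/\beta)$, and reading off the entries of $AI_2-BE(\phi)(\bk)$ reproduces both claimed formulas exactly. The route is, however, not the one the paper takes: the paper's (one-sentence) proof invokes the explicit unitary $U(\phi)(\bk)$ of \eqref{eq_unitary_for_diagonalization}, which diagonalizes $E(\phi)(\bk)$ as in \eqref{eq_dispersion_matrix_diagonalization}, and then evaluates $(I_2+e^{\beta(i\frac{\theta}{2}I_2+E(\phi)(\bk))})^{-1}$ by conjugating the diagonal expression back, i.e.\ it works with the eigenvectors $\bX(\phi)(\bk)$, $\bY(\phi)(\bk)$. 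Your argument replaces the spectral decomposition by the Cayley--Hamilton identity $E(\phi)(\bk)^2=\lambda^2I_2$ together with the adjugate trick for inverting $AI_2+BE(\phi)(\bk)$; this buys you a computation that never touches eigenvectors, and in particular sidesteps the case split $1_{\phi=0}$ versus $1_{\phi\neq 0}$ that is hard-wired into the definition of $U(\phi)(\bk)$ (your only degenerate point is $\lambda=0$, i.e.\ $e(\bk)=0$ and $\phi=0$, where $\sinh(\beta\lambda)/\lambda$ is extended by continuity and the offending terms vanish anyway). The paper's approach has the advantage of reusing machinery already set up for Lemma \ref{lem_partition_function_2_band}, but the two derivations are equally rigorous and of comparable length.
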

\begin{proof}
By using the unitary matrix $U(\phi)(\bk)$ defined in
 \eqref{eq_unitary_for_diagonalization}, which diagonalizes
 $E(\phi)(\bk)$ as shown in
 \eqref{eq_dispersion_matrix_diagonalization}, we can derive the claimed
 equalities.
\end{proof}

We are ready to prove Theorem \ref{thm_main_theorem}.
\begin{proof}[Proof of Theorem \ref{thm_main_theorem}]
The claims \eqref{item_confirmation_condition_superconductivity},
 \eqref{item_imply_no_superconductivity} have been proved right after
 the statement of Theorem \ref{thm_main_theorem}. Let us prove the
 claims \eqref{item_partition_positivity}, \eqref{item_superconducting_phase},
 \eqref{item_no_superconductivity}. With the constant $c(d)(\in \R_{\ge
 1})$ introduced in Proposition \ref{prop_model_UV_application}, set
 $c_2(d):=(2c(d))^{-1}$ and assume that
$$
|U|<c_2(d)(1+\beta^{d+3}+(1+\beta^{-1})g_d(\Theta))^{-2}
$$
throughout the proof. By assuming so the coupling constant $U$ is inside a disk on which
 all the results of Proposition \ref{prop_model_UV_application} and Proposition
 \ref{prop_L_limit} hold.
The inequality \eqref{eq_model_good_smallness} implies that 
\begin{align*}
&\Re \lim_{h\to \infty\atop h\in \frac{2}{\beta}\N}\int
 e^{-V(\psi)+W(\psi)}d\mu_{C(\phi)}(\psi)\ge \frac{1}{2},\\
&(\forall \phi\in\C,\ L\in \N\text{ satisfying
 }\eqref{eq_L_largeness_final_condition}).
\end{align*}
Then, it follows from Lemma \ref{lem_free_partition_function} and
 \eqref{eq_grassmann_formulation_2_band} that
\begin{align*}
&\Re \Tr e^{-\beta(\sH+i\theta\sS_z+\sF)}>0,\quad (\forall  L\in \N\text{ satisfying
 }\eqref{eq_L_largeness_final_condition},\ \g\in [0,1]).
\end{align*}
Then, by taking into account Lemma \ref{lem_real_value_confirmation} we
 observe that the claim \eqref{item_partition_positivity} holds.

By considering Lemma \ref{lem_monotoniticity} we see that the following
 statements hold.
If
\begin{align}
-\frac{2}{|U|}+\frac{1}{(2\pi)^d}\int_{[0,2\pi]^d}d\bk\frac{\sinh(\beta
 |e(\bk)|)}{(\cos(\beta\theta/2)+\cosh(\beta e(\bk)))|e(\bk)|}>0,\label{eq_gap_equation_positive}
\end{align}
there uniquely exists $\D\in (0,\infty)$ such that 
\begin{align}
-\frac{2}{|U|}+\frac{1}{(2\pi)^d}\int_{[0,2\pi]^d}d\bk\frac{\sinh(\beta
\sqrt{e(\bk)^2+\D^2})}{(\cos(\beta\theta/2)+\cosh(\beta \sqrt{e(\bk)^2+\D^2}))\sqrt{e(\bk)^2+\D^2}}=0.\label{eq_gap_equation_delta}
\end{align}
If
\begin{align}
-\frac{2}{|U|}+\frac{1}{(2\pi)^d}\int_{[0,2\pi]^d}d\bk\frac{\sinh(\beta
 |e(\bk)|)}{(\cos(\beta\theta/2)+\cosh(\beta e(\bk)))|e(\bk)|}<0,\label{eq_gap_equation_negative}
\end{align}
there is no positive solution to the equation \eqref{eq_gap_equation_delta}. In this case we set
 $\D:=0$. During the proof we assume
 that either \eqref{eq_gap_equation_positive} or
 \eqref{eq_gap_equation_negative} occurs and $\D(\in \R_{\ge 0})$ is
 defined as above.

Let us prove the claims concerning SSB. We assume that $\g\in (0,1]$
 unless otherwise stated. Then, there uniquely exists $a(\g)\in
 (\D,\infty)$ such that
\begin{align}
&a(\g)\Bigg(-\frac{2}{|U|}\label{eq_exact_gap_equation_continuous}\\
&+\frac{1}{(2\pi)^d}\int_{[0,2\pi]^d}d\bk\frac{\sinh(\beta
 \sqrt{e(\bk)^2+a(\g)^2})}{(\cos(\beta\theta/2)+\cosh(\beta\sqrt{e(\bk)^2+a(\g)^2}))\sqrt{e(\bk)^2+a(\g)^2}}\Bigg)=-\frac{2\g}{|U|}\notag
\end{align}
and $\lim_{\g\searrow 0}a(\g)=\D$. Let us set $\ba:=(a(\g),0)$. Here we
 define the function $F:\R^2\to \R$ by 
\begin{align*}
F(\bx)
&:=-\frac{1}{|U|}((x_1-\g)^2+x_2^2)\\
&\qquad +\frac{1}{\beta
 (2\pi)^{d}}\int_{[0,2\pi]^d}d\bk\log\left(\cos\left(\frac{\beta\theta}{2}\right) +\cosh\left(\beta\sqrt{e(\bk)^2+\|\bx\|_{\R^2}^2}\right)\right)\\
&\qquad -\frac{1}{\beta
 (2\pi)^{d}}\int_{[0,2\pi]^d}d\bk\log\left(\cos\left(\frac{\beta\theta}{2}\right)+\cosh(\beta
 e(\bk))\right).
\end{align*}
For $r\in \R_{>0}$, $\bb\in\R^2$ we set $B_r(\bb):=\{\bx\in\R^2\ |\
 \|\bx-\bb\|_{\R^2}<r\}$. Some remarks concerning the function $F$ are
 in order.
\begin{itemize}
\item $F\in C^{\infty}(\R^2)$.
\item $F$ takes its global maximum at and only at $\bx=\ba$. 
\item 
\begin{align}
&\frac{\partial^2 F}{\partial x_1^2}(\ba)\le -\frac{2\g}{|U|a(\g)},\quad 
\frac{\partial^2 F}{\partial x_1\partial x_2}(\ba)=0,\quad 
\frac{\partial^2 F}{\partial x_2^2}(\ba)= -\frac{2\g}{|U|a(\g)}.\label{eq_L_independent_potential_derivatives}
\end{align}
\item For any $r\in \R_{>0}$,
$$
-\infty <\sup_{\bx\in\R^2\backslash \overline{B_r(\ba)}}(F(\bx)-F(\ba))<0.
$$
\end{itemize}
Since these are the properties of the explicitly defined function, we omit
 the proof. It is also necessary to deal with the discrete analogue $F_L$ of
 $F$. Set for $\bx\in \R^2$
\begin{align*}
F_L(\bx)
&:=-\frac{1}{|U|}((x_1-\g)^2+x_2^2)\\
&\qquad +\frac{1}{\beta L^d}\sum_{\bk\in\G^*}\log\left(\cos\left(\frac{\beta\theta}{2}\right) +\cosh\left(\beta\sqrt{e(\bk)^2+\|\bx\|_{\R^2}^2}\right)\right)\\
&\qquad -\frac{1}{\beta L^d}\sum_{\bk\in\G^*}\log\left(\cos\left(\frac{\beta\theta}{2}\right)+\cosh(\beta
 e(\bk))\right).
\end{align*}
For sufficiently large $L$ we can assume that 
\begin{align}
-\frac{2}{|U|}+\frac{1}{L^d}\sum_{\bk\in \G^*}
\frac{\sinh(\beta
 |e(\bk)|)}{(\cos(\beta\theta/2)+\cosh(\beta e(\bk)))|e(\bk)|}\neq 0.\label{eq_L_dependent_gap_equation}
\end{align}
Since the situation is parallel to that of $F(x)$, it follows that
\begin{itemize}
\item $F_L\in C^{\infty}(\R^2)$.
\item $F_L$ takes its global maximum at and only at
      $\bx=\ba_L=(a_L(\g),0)$,
where $a_L(\g)\in (0,\infty)$ and 
\begin{align*}
&a_L(\g)\Bigg(-\frac{2}{|U|}\\ 
&+\frac{1}{L^d}\sum_{\bk\in\G^*}\frac{\sinh(\beta
 \sqrt{e(\bk)^2+a_L(\g)^2})}{(\cos(\beta\theta/2)+\cosh(\beta\sqrt{e(\bk)^2+a_L(\g)^2}))\sqrt{e(\bk)^2+a_L(\g)^2}}\Bigg)=-\frac{2\g}{|U|}.\notag
\end{align*} 
\end{itemize}
Moreover, we observe that
\begin{itemize}
\item There exists a positive constant $c(\beta,d,\theta,|U|)$ depending
      only on $\beta,d,\theta,|U|$ such that
\begin{align}
&F_L(\bx)\le
 -\frac{\|\bx\|_{\R^2}^2}{|U|}+\left(\frac{2}{|U|}+1\right)\|\bx\|_{\R^2}+
c(\beta,d,\theta,|U|),\quad (\forall \bx\in \R^2,\
 L\in\N).\label{eq_uniform_upper_bound_effective_potential}
\end{align}
\item For any compact set $Q$ of $\R^2$ and $i,j\in \N\cup\{0\}$ with
      $i+j\le 2$,
\begin{align}
\lim_{L\to \infty\atop L\in\N}\sup_{\bx\in Q}\left|
\frac{\partial^{i+j}}{\partial x_1^i\partial x_2^j}F_L(\bx)-
\frac{\partial^{i+j}}{\partial x_1^i\partial x_2^j}F(\bx)
\right|=0.\label{eq_uniform_convergence_effective_potential}
\end{align}
\end{itemize}
By making use of the properties
 \eqref{eq_uniform_upper_bound_effective_potential},
 \eqref{eq_uniform_convergence_effective_potential} we can prove that
\begin{align}
\lim_{L\to\infty\atop L\in
 \N}\ba_L=\ba.\label{eq_maximum_point_convergence}
\end{align}
Let $H(F)(\bx)$, $H(F_L)(\bx)$ denote the Hessian of $F$, $F_L$
 respectively. The property
 \eqref{eq_L_independent_potential_derivatives} implies that 
$$
H(F)(\ba)\le -\frac{2\g}{|U|a(\g)}.
$$
By applying \eqref{eq_uniform_upper_bound_effective_potential}, 
\eqref{eq_uniform_convergence_effective_potential}, 
\eqref{eq_maximum_point_convergence} we can establish necessary
 basic properties as follows. There exist $\delta \in\R_{>0}$ and
 $L_0\in\N$ such that the following statements hold true for any $L\in
 \N$ with $L\ge L_0$.
\begin{itemize}
\item For any $\bx\in \overline{B_{\delta}(\ba_L)}$, 
\begin{align}
&F_L(\bx)=F_L(\ba_L)+\int_0^1dt (1-t)\<\bx-\ba_L,H(F_L)(t(\bx-\ba_L)+\ba_L)(\bx-\ba_L)\>,\label{eq_L_dependent_taylor_expansion}\\
&H(F_L)(t(\bx-\ba_L)+\ba_L)\le
 \frac{1}{2}H(F)(\ba)<0,\quad (\forall t\in [0,1]).\label{eq_hessian_uniform_bound}
\end{align}
\item For any $\bx\in \R^2\backslash \overline{B_{\delta}(\ba_L)}$,
\begin{align}
F_L(\bx)-F_L(\ba_L)\le
 \frac{1}{2}\sup_{\bx\in\R^2\backslash\overline{B_{\delta/2}(\ba)}}(F(\bx)-F(\ba))<0.\label{eq_difference_from_maximum}
\end{align}
\end{itemize}

At this point we go back to the Grassmann integral formulations. By
 Proposition \ref{prop_model_UV_application} and \eqref{eq_full_covariance_determinant_bound_pre},
\begin{align} 
&\sup_{L\in \N\atop \text{satisfying
 }\eqref{eq_L_largeness_final_condition}}
\sup_{h\in\frac{2}{\beta}\N\atop h\ge
 c(d)\max\{1,\beta^{-1}\}}\sup_{\phi\in\C}\Bigg(
\left|\int e^{-V(\psi)+W(\psi)}d\mu_{C(\phi)}(\psi)
\right|\label{eq_grassmann_formulations_all_bound}\\
&\qquad+\sum_{j\in\{1,2\}}\left|\int e^{-V(\psi)+W(\psi)}A^j(\psi) d\mu_{C(\phi)}(\psi)
\right|
+\sum_{j\in\{1,2\}}\left|\int A^j(\psi) d\mu_{C(\phi)}(\psi)
\right|\Bigg)<\infty.\notag
\end{align}
Since the functions inside the modulus above are continuous with $\phi$
 over $\C$, the following transformation is justified.
\begin{align}
&\int_{\R^2}d\phi_1d\phi_2 e^{-\frac{\beta L^d}{|U|}|\phi-\g|^2}
\frac{\prod_{\bk\in\G^*}(\cos(\beta
 \theta/2)+\cosh(\beta\sqrt{e(\bk)^2+|\phi|^2}))}{\prod_{\bk\in\G^*}(\cos(\beta \theta/2)+\cosh(\beta e(\bk)))}\label{eq_2_point_formulation_decomposition}\\
&\quad\cdot \int e^{-V(\psi)+W(\psi)}A^1(\psi)d\mu_{C(\phi)}(\psi)\notag\\
&=\int_{\R^2}d\phi_1d\phi_2 e^{\beta L^dF_L(\bphi)}
\int e^{-V(\psi)+W(\psi)}d\mu_{C(\phi)}(\psi)
\int A^1(\psi)d\mu_{C(\phi)}(\psi)\notag\\
&\quad + 
\int_{\R^2}d\phi_1d\phi_2 e^{\beta L^dF_L(\bphi)}\Bigg(
\int e^{-V(\psi)+W(\psi)}A^1(\psi)d\mu_{C(\phi)}(\psi)\notag\\
&\qquad\qquad -\int e^{-V(\psi)+W(\psi)}d\mu_{C(\phi)}(\psi)
\int A^1(\psi)d\mu_{C(\phi)}(\psi)\Bigg)\notag\\
&=e^{\beta L^dF_L(\ba_L)}L^{-d}
\int_{\R^2}d\phi_1d\phi_2 1_{\|\bphi\|_{\R^2}\le L^{\frac{d}{2}}\delta}
e^{\beta\int_0^1dt(1-t)\<\bphi,H(F_L)(tL^{-\frac{d}{2}}\bphi+\ba_L)\bphi\>}\notag\\
&\qquad\cdot
\int e^{-V(\psi)+W(\psi)}d\mu_{C(L^{-\frac{d}{2}}\phi+a_L(\g))}(\psi)
\int A^1(\psi)d\mu_{C(L^{-\frac{d}{2}}\phi+a_L(\g))}(\psi)\notag\\
&\quad + 
e^{\beta L^dF_L(\ba_L)}
\int_{\R^2}d\phi_1d\phi_2 1_{\bphi\notin \overline{B_{\delta}(\ba_L)}}
e^{\beta L^d(F_L(\bphi)-F_L(\ba_L))}\notag\\
&\qquad\cdot 
\int e^{-V(\psi)+W(\psi)}d\mu_{C(\phi)}(\psi)
\int A^1(\psi)d\mu_{C(\phi)}(\psi)\notag\\
&\quad + 
e^{\beta L^dF_L(\ba_L)}L^{-d}
\int_{\R^2}d\phi_1d\phi_2 1_{\|\bphi\|_{\R^2}\le L^{\frac{d}{2}}\delta}
e^{\beta\int_0^1dt(1-t)\<\bphi,H(F_L)(tL^{-\frac{d}{2}}\bphi+\ba_L)\bphi\>}\notag\\
&\qquad\cdot \Bigg(
\int e^{-V(\psi)+W(\psi)}A^1(\psi)
d\mu_{C(L^{-\frac{d}{2}}\phi+a_L(\g))}(\psi)\notag\\
&\qquad\qquad -
\int e^{-V(\psi)+W(\psi)}d\mu_{C(L^{-\frac{d}{2}}\phi+a_L(\g))}(\psi)
\int A^1(\psi)
d\mu_{C(L^{-\frac{d}{2}}\phi+a_L(\g))}(\psi)\Bigg)\notag\\
&\quad + e^{\beta L^dF_L(\ba_L)}\int_{\R^2}d\phi_1d\phi_2
1_{\bphi\notin  \overline{B_{\delta}(\ba_L)}}
e^{\beta L^d(F_L(\bphi)-F_L(\ba_L))}\notag\\
&\qquad\cdot \Bigg(
\int e^{-V(\psi)+W(\psi)}A^1(\psi)d\mu_{C(\phi)}(\psi)\notag\\
&\qquad\qquad -\int e^{-V(\psi)+W(\psi)}d\mu_{C(\phi)}(\psi)
\int A^1(\psi)d\mu_{C(\phi)}(\psi)\Bigg),\notag
\end{align}
where $\bphi=(\phi_1,\phi_2)$, $\phi=\phi_1+i\phi_2$ and 
$\delta(\in\R_{>0})$ is the parameter appearing in
\eqref{eq_L_dependent_taylor_expansion},
 \eqref{eq_hessian_uniform_bound}, \eqref{eq_difference_from_maximum}.
It follows from Lemma \ref{lem_grassmann_formulation_2_band}
 \eqref{item_integrand_time_continuum_limit}, 
\eqref{eq_model_error_estimate},
  \eqref{eq_uniform_upper_bound_effective_potential}, 
\eqref{eq_hessian_uniform_bound},
 \eqref{eq_difference_from_maximum},
\eqref{eq_grassmann_formulations_all_bound} that
\begin{align}
&\lim_{L\to \infty\atop L\in \N}\lim_{h\to \infty\atop h\in
 \frac{2}{\beta}\N}L^{d}
\int_{\R^2}d\phi_1d\phi_2 1_{\bphi\notin \overline{B_{\delta}(\ba_L)}}
e^{\beta L^d(F_L(\bphi)-F_L(\ba_L))}\label{eq_2_point_formulation_vanishing_part}\\
&\qquad\cdot 
\int e^{-V(\psi)+W(\psi)}d\mu_{C(\phi)}(\psi)
\int A^1(\psi)d\mu_{C(\phi)}(\psi)=0,\notag\\
&\lim_{L\to \infty\atop L\in\N}\lim_{h\to \infty\atop
 h\in\frac{2}{\beta}\N}
\int_{\R^2}d\phi_1d\phi_2 1_{\|\bphi\|_{\R^2}\le L^{\frac{d}{2}}\delta}
e^{\beta\int_0^1dt(1-t)\<\bphi,H(F_L)(tL^{-\frac{d}{2}}\bphi+\ba_L)\bphi\>}\notag\\
&\qquad\cdot \Bigg(
\int e^{-V(\psi)+W(\psi)}A^1(\psi)
d\mu_{C(L^{-\frac{d}{2}}\phi+a_L(\g))}(\psi)\notag\\
&\qquad\qquad -
\int e^{-V(\psi)+W(\psi)}d\mu_{C(L^{-\frac{d}{2}}\phi+a_L(\g))}(\psi)
\int A^1(\psi)
d\mu_{C(L^{-\frac{d}{2}}\phi+a_L(\g))}(\psi)\Bigg)=0,\notag\\
&\lim_{L\to \infty\atop L\in \N}\lim_{h\to \infty\atop h\in
 \frac{2}{\beta}\N}L^d\int_{\R^2}d\phi_1d\phi_2 1_{\bphi\notin \overline{B_{\delta}(\ba_L)}}
e^{\beta L^d(F_L(\bphi)-F_L(\ba_L))}\notag\\
&\cdot\Bigg(
\int e^{-V(\psi)+W(\psi)}A^1(\psi)d\mu_{C(\phi)}(\psi) -\int e^{-V(\psi)+W(\psi)}d\mu_{C(\phi)}(\psi)
\int A^1(\psi)d\mu_{C(\phi)}(\psi)\Bigg)\notag\\
&=0.\notag
\end{align}
Moreover, we can apply 
 Proposition \ref{prop_L_limit},
 \eqref{eq_uniform_convergence_effective_potential},
 \eqref{eq_maximum_point_convergence}, \eqref{eq_hessian_uniform_bound},
\eqref{eq_grassmann_formulations_all_bound} to conclude that
\begin{align}
&\lim_{L\to \infty\atop L\in \N}\lim_{h\to \infty\atop h\in
 \frac{2}{\beta}\N}
\int_{\R^2}d\phi_1d\phi_2 1_{\|\bphi\|_{\R^2}\le L^{\frac{d}{2}}\delta}
e^{\beta\int_0^1dt(1-t)\<\bphi,H(F_L)(tL^{-\frac{d}{2}}\bphi+\ba_L)\bphi\>}\label{eq_2_point_formulation_surviving_part}\\
&\qquad\cdot
\int e^{-V(\psi)+W(\psi)}d\mu_{C(L^{-\frac{d}{2}}\phi+a_L(\g))}(\psi)
\int A^1(\psi)d\mu_{C(L^{-\frac{d}{2}}\phi+a_L(\g))}(\psi)\notag\\
&=\int_{\R^2}d\phi_1d\phi_2e^{\frac{\beta}{2}\<\bphi,H(F)(\ba)\bphi\>}
\lim_{L\to \infty\atop L\in \N}\lim_{h\to \infty\atop h\in
 \frac{2}{\beta}\N}
\int e^{-V(\psi)+W(\psi)}d\mu_{C(a(\g))}(\psi)\notag\\
&\quad\cdot \beta
\lim_{L\to \infty\atop L\in \N}C(a(\g))(1\b0 0,2\b0 0).\notag
\end{align}
Similarly we have that 
\begin{align}
&\int_{\R^2}d\phi_1d\phi_2 e^{\beta L^dF_L(\phi)}
\int e^{-V(\psi)+W(\psi)}d\mu_{C(\phi)}(\psi)
\label{eq_0_point_formulation_decomposition}\\
&=e^{\beta L^dF_L(\ba_L)}L^{-d}
\int_{\R^2}d\phi_1d\phi_2 1_{\|\bphi\|_{\R^2}\le L^{\frac{d}{2}}\delta}
e^{\beta\int_0^1dt(1-t)\<\bphi,H(F_L)(tL^{-\frac{d}{2}}\bphi+\ba_L)\bphi\>}\notag\\
&\qquad\cdot
\int e^{-V(\psi)+W(\psi)}d\mu_{C(L^{-\frac{d}{2}}\phi+a_L(\g))}(\psi)\notag\\
&\quad + 
e^{\beta L^dF_L(\ba_L)}
\int_{\R^2}d\phi_1d\phi_2 1_{\bphi\notin \overline{B_{\delta}(\ba_L)}}
e^{\beta L^d(F_L(\bphi)-F_L(\ba_L))}
\int e^{-V(\psi)+W(\psi)}d\mu_{C(\phi)}(\psi).\notag
\end{align}
\begin{align}
&\lim_{L\to \infty\atop L\in \N}\lim_{h\to \infty\atop h\in
 \frac{2}{\beta}\N}
\int_{\R^2}d\phi_1d\phi_2 1_{\|\bphi\|_{\R^2}\le L^{\frac{d}{2}}\delta}
e^{\beta\int_0^1dt(1-t)\<\bphi,H(F_L)(tL^{-\frac{d}{2}}\bphi+\ba_L)\bphi\>}\label{eq_0_point_formulation_surviving_part}\\
&\qquad\cdot
\int e^{-V(\psi)+W(\psi)}d\mu_{C(L^{-\frac{d}{2}}\phi+a_L(\g))}(\psi)\notag\\
&=\int_{\R^2}d\phi_1d\phi_2e^{\frac{\beta}{2}\<\bphi,H(F)(\ba)\bphi\>}
\lim_{L\to \infty\atop L\in \N}\lim_{h\to \infty\atop h\in
 \frac{2}{\beta}\N}
\int e^{-V(\psi)+W(\psi)}d\mu_{C(a(\g))}(\psi),\notag\\
&\lim_{L\to \infty\atop L\in \N}\lim_{h\to \infty\atop h\in
 \frac{2}{\beta}\N}L^{d}
\int_{\R^2}d\phi_1d\phi_2 1_{\bphi\notin \overline{B_{\delta}(\ba_L)}}
e^{\beta L^d(F_L(\bphi)-F_L(\ba_L))}
\int e^{-V(\psi)+W(\psi)}d\mu_{C(\phi)}(\psi)=0.\notag
\end{align}
The inequality \eqref{eq_model_good_smallness} implies that
\begin{align}
\lim_{L\to \infty\atop L\in \N}\lim_{h\to \infty\atop h\in
 \frac{2}{\beta}\N}
\int e^{-V(\psi)+W(\psi)}d\mu_{C(a(\g))}(\psi)\neq
 0.\label{eq_correction_term_limit_non_zero}
\end{align}
Note that by Lemma \ref{lem_grassmann_formulation_2_band}
 \eqref{item_integrand_time_continuum_limit} and
 \eqref{eq_grassmann_formulations_all_bound} we can change the order of
 the integral over $\R^2$ and the limit operation with $h$ in
 \eqref{eq_grassmann_formulation_2_band} with $\bla=(0,0)$ and
 \eqref{eq_grassmann_formulation_2_band_correlation}. Then, by using
  \eqref{eq_2_point_formulation_decomposition},
\eqref{eq_2_point_formulation_vanishing_part},
\eqref{eq_2_point_formulation_surviving_part},
 \eqref{eq_0_point_formulation_decomposition},
\eqref{eq_0_point_formulation_surviving_part},
\eqref{eq_correction_term_limit_non_zero} 
and \eqref{eq_exact_gap_equation_continuous}, 
Lemma \ref{lem_explicit_calculation_covariance}
we can derive from \eqref{eq_grassmann_formulation_2_band},
 \eqref{eq_grassmann_formulation_2_band_correlation} that
\begin{align*}
&\lim_{L\to \infty\atop L\in\N}\frac{\Tr (e^{-\beta
 (\sH+i\theta\sS_z+\sF)}\sA_1)}{\Tr e^{-\beta (\sH+i\theta\sS_z+\sF)}}\\
&=\lim_{L\to \infty\atop L\in \N}
\frac{\lim_{h\to \infty\atop h\in
 \frac{2}{\beta}\N}\int_{\R^2}d\phi_1d\phi_2  e^{\beta L^d F_L(\bphi)}
\int e^{-V(\psi)+W(\psi)}A^1(\psi)d\mu_{C(\phi)}(\psi)}
{\lim_{h\to \infty\atop h\in
 \frac{2}{\beta}\N}\beta\int_{\R^2}d\phi_1d\phi_2 e^{\beta L^d F_L(\bphi)}
\int e^{-V(\psi)+W(\psi)}d\mu_{C(\phi)}(\psi)}\\
&=\int_{\R^2}d\phi_1d\phi_2  
e^{\frac{\beta}{2}\<\bphi,H(F)(\ba)\bphi\>}
\lim_{L\to \infty\atop L\in \N}\lim_{h\to \infty\atop h\in
 \frac{2}{\beta}\N}
\int e^{-V(\psi)+W(\psi)}d\mu_{C(a(\g))}(\psi)\\
&\quad\cdot 
\lim_{L\to \infty\atop L\in \N}C(a(\g))(1\b0 0,2\b0 0)\\
&\quad\cdot\Bigg/\int_{\R^2}d\phi_1d\phi_2  
e^{\frac{\beta}{2}\<\bphi,H(F)(\ba)\bphi\>}
\lim_{L\to \infty\atop L\in \N}\lim_{h\to \infty\atop h\in
 \frac{2}{\beta}\N}
\int e^{-V(\psi)+W(\psi)}d\mu_{C(a(\g))}(\psi)\\
&=-\frac{a(\g)}{|U|}+\frac{\g}{|U|}.
\end{align*}
Thus, 
\begin{align*}
&\lim_{\g\searrow 0\atop \g\in(0,1]}\lim_{L\to \infty\atop L\in\N}\frac{\Tr (e^{-\beta
 (\sH+i\theta\sS_z+\sF)}\sA_1)}{\Tr e^{-\beta (\sH+i\theta\sS_z+\sF)}}=-\frac{\D}{|U|}.
\end{align*}
We let $c_1(d)$ be the constant $c(d)$ appearing in Lemma
 \ref{lem_sufficient_condition_gap_equation}
 \eqref{item_sufficient_solvability}. Then, if
\begin{align*}
|U|>c_1(d)(2d-|\mu|)^{1-d}\beta\Theta\left(
1_{\Theta\le \frac{1}{2}(2d-|\mu|)}+
1_{\Theta> \frac{1}{2}(2d-|\mu|)}
(2d-|\mu|)^{-1}\Theta\right),
\end{align*}
\eqref{eq_gap_equation_positive} holds and thus $\D>0$. 
This proves the claims \eqref{eq_gap_equation}, \eqref{eq_SSB}.
Note that 
$$
c_2(d)(1+\beta^{d+3}+(1+\beta^{-1})g_d(\Theta))^{-2}\le 
(1+\beta^{d+3})^{-2}\le 2\beta^{-1},\quad (\forall \beta \in \R_{>0}).
$$
Thus, if $\theta\in [0,\pi/\beta]$, Lemma
 \ref{lem_sufficient_condition_gap_equation}
 \eqref{item_sufficient_nonsolvability} implies that
 \eqref{eq_gap_equation_negative} holds and thus $\D=0$. Therefore, the first statement of
 \eqref{item_no_superconductivity} and the claim concerning SSB in
 \eqref{item_no_superconductivity} hold true.

Next let us prove the claim \eqref{eq_ODLRO} and the claim concerning
 ODLRO in \eqref{item_no_superconductivity}. The proof is in
 fact close to the proof of SSB above. However we present it for
 completeness. Let us define the function $f:\R\to \R$ by
\begin{align*}
f(x):=&-\frac{x^2}{|U|}+\frac{1}{\beta (2\pi)^d}\int_{[0,2\pi]^d}d\bk
\log\left(\cos\left(\frac{\beta\theta}{2}\right)+\cosh(\beta
 \sqrt{e(\bk)^2+x^2})\right)\\
&-\frac{1}{\beta (2\pi)^d}\int_{[0,2\pi]^d}d\bk
\log\left(\cos\left(\frac{\beta\theta}{2}\right)+\cosh(\beta
 e(\bk))\right).
\end{align*}
Then, we see that
\begin{itemize}
\item $f\in C^{\infty}(\R)$.
\item $f|_{[0,\infty)}:[0,\infty)\to \R$ takes its global maximum at and
      only at $x=\D$, where $f|_{[0,\infty)}$ denotes the restriction of
      $f$ on $[0,\infty)$.
\item 
$$
\frac{d^2 f}{d x^2}(\D)<0.
$$
\end{itemize}
The third statement above can be confirmed as follows.
\begin{align*}
&\frac{d^2 f}{d x^2}(\D)\\
&\le
 1_{\D=0}\left(-\frac{2}{|U|}+\frac{1}{(2\pi)^d}\int_{[0,2\pi]^d}d\bk
\frac{\sinh(\beta|e(\bk)|)}{(\cos(\beta\theta/2)+\cosh(\beta e(\bk)))|e(\bk)|}
\right)\\
&\quad +1_{\D>0} \D \sup_{\eta\in [-2d-|\mu|,2d+|\mu|]}\left(\frac{d}{dx}\left(
\frac{\sinh(\beta\sqrt{\eta^2+x^2})}{(\cos(\beta\theta/2)+\cosh(\beta\sqrt{\eta^2+x^2}))\sqrt{\eta^2+x^2}}\right)\Bigg|_{x=\D}\right)\\
&<0.
\end{align*}
Again we need to introduce the $L$-dependent version of $f$ as follows.
\begin{align*}
f_L(x):=&-\frac{x^2}{|U|}+\frac{1}{\beta L^d}\sum_{\bk\in\G^*}
\log\left(\cos\left(\frac{\beta\theta}{2}\right)+\cosh(\beta
 \sqrt{e(\bk)^2+x^2})\right)\\
&-\frac{1}{\beta L^d}\sum_{\bk\in\G^*}
\log\left(\cos\left(\frac{\beta\theta}{2}\right)+\cosh(\beta
 e(\bk))\right).
\end{align*}
We may assume that \eqref{eq_L_dependent_gap_equation} holds. When the
 left-hand side of \eqref{eq_L_dependent_gap_equation} is positive,
 there uniquely exists $\D_L\in (0,\infty)$ such that
\begin{align*}
-\frac{2}{|U|}+\frac{1}{L^d}\sum_{\bk\in\G^*}
\frac{\sinh(\beta\sqrt{e(\bk)^2+\D_L^2})}{(\cos(\beta\theta/2)+\cosh(\beta\sqrt{e(\bk)^2+\D_L^2}))\sqrt{e(\bk)^2+\D_L^2}}=0.
\end{align*}
If the left-hand side of \eqref{eq_L_dependent_gap_equation} is
 negative, we set $\D_L:=0$. It follows that
\begin{itemize}
\item $f_L|_{[0,\infty)}:[0,\infty)\to\R$ takes its global maximum at
      and only at $x=\D_L$, where $f_L|_{[0,\infty)}$ is the restriction
      of $f_L$ on $[0,\infty)$.
\end{itemize}
Based on this fact and that $f_L$ and the derivatives of $f_L$ locally
 uniformly converge to $f$ and those of $f$ respectively as $L\to \infty$, we can prove that
\begin{align}
\lim_{L\to \infty\atop
 L\in\N}\D_L=\D.\label{eq_maximum_point_convergence_1D}
\end{align}
Moreover, there exist $\delta\in\R_{>0}$ and $L_0\in\N$ such that the
 following statements hold true for any $L\in \N$ with $L\ge L_0$.
\begin{itemize}
\item For any $x\in [\D_L-\delta,\D_L+\delta]$, 
\begin{align}
&f_L(x)=f_L(\D_L)+\int_0^1dt(1-t)\frac{d^2f_L}{dx^2}(t(x-\D_L)+\D_L)(x-\D_L)^2,\label{eq_L_dependent_taylor_expansion_1_D}\\
&\frac{d^2f_L}{dx^2}(t(x-\D_L)+\D_L)\le
 \frac{1}{2}\frac{d^2f}{dx^2}(\D)<0,\quad (\forall t\in [0,1]).\label{eq_hessian_uniform_bound_1D}
\end{align}
\item For any $x\in [0,\infty)\backslash [\D_L-\delta,\D_L+\delta]$,
\begin{align}
f_L(x)-f_L(\D_L)\le
 \frac{1}{2}\sup_{x\in[0,\infty)\backslash
 [\D-\frac{\delta}{2},\D+\frac{\delta}{2}]}(f(x)-f(\D))<0.
\label{eq_difference_from_maximum_1D}
\end{align}
\end{itemize}
Observe that
\begin{align}
&\int_{\R^2}d\phi_1d\phi_2 e^{-\frac{\beta L^d}{|U|}|\phi|^2}
\frac{\prod_{\bk\in\G^*}(\cos(\beta
 \theta/2)+\cosh(\beta\sqrt{e(\bk)^2+|\phi|^2}))}{\prod_{\bk\in\G^*}(\cos(\beta \theta/2)+\cosh(\beta e(\bk)))}\label{eq_4_point_formulation_decomposition}\\
&\quad\cdot \int e^{-V(\psi)+W(\psi)}A^2(\psi)d\mu_{C(\phi)}(\psi)\notag\\
&=\int_{0}^{2\pi}d\xi \int_0^{\infty}dr r  e^{\beta L^df_L(r)}
\int e^{-V(\psi)+W(\psi)}d\mu_{C(re^{i\xi})(\psi)}
\int A^2(\psi)d\mu_{C(re^{i\xi})}(\psi)\notag\\
&\quad + \int_{0}^{2\pi}d\xi \int_0^{\infty}dr r  e^{\beta L^df_L(r)}
\Bigg(
\int e^{-V(\psi)+W(\psi)}A^2(\psi)d\mu_{C(re^{i\xi})}(\psi)\notag\\
&\qquad\qquad -\int e^{-V(\psi)+W(\psi)}d\mu_{C(re^{i\xi})}(\psi)
\int A^2(\psi)d\mu_{C(re^{i\xi})}(\psi)\Bigg)\notag\\
&=e^{\beta L^df_L(\D_L)}L^{-\frac{d}{2}}\int_0^{2\pi}d\xi 
\int_{-L^{d/2}\min\{\delta,\D_L\}}^{L^{d/2}\delta}dr
(L^{-\frac{d}{2}}r+\D_L)
e^{\beta \int_0^1dt (1-t)f_L''(tL^{-\frac{d}{2}}r+\D_L)r^2}\notag\\
&\qquad\cdot
\int e^{-V(\psi)+W(\psi)}d\mu_{C((L^{-\frac{d}{2}}r+\D_L)e^{i\xi})}(\psi)
\int A^2(\psi)d\mu_{C((L^{-\frac{d}{2}}r+\D_L)e^{i\xi})}(\psi)\notag\\
&\quad + 
e^{\beta L^df_L(\D_L)}
\int_0^{2\pi}d\xi 
\int_{[0,\infty)\backslash [\D_L-\delta,\D_L+\delta]}dr
r e^{\beta L^d(f_L(r)-f_L(\D_L))}\notag\\
&\qquad\cdot 
\int e^{-V(\psi)+W(\psi)}d\mu_{C(re^{i\xi})}(\psi)
\int A^2(\psi)d\mu_{C(re^{i\xi})}(\psi)\notag\\
&\quad + 
e^{\beta L^df_L(\D_L)}L^{-\frac{d}{2}}
\int_0^{2\pi}d\xi 
\int_{-L^{d/2}\min\{\delta,\D_L\}}^{L^{d/2}\delta}
dr(L^{-\frac{d}{2}}r+\D_L)
e^{\beta \int_0^1dt (1-t)f_L''(tL^{-\frac{d}{2}}r+\D_L)r^2}\notag\\
&\qquad\cdot
\Bigg(
\int
 e^{-V(\psi)+W(\psi)}A^2(\psi)d\mu_{C((L^{-\frac{d}{2}}r+\D_L)e^{i\xi})}(\psi)\notag\\&\qquad\qquad-
\int
 e^{-V(\psi)+W(\psi)}d\mu_{C((L^{-\frac{d}{2}}r+\D_L)e^{i\xi})}(\psi)
\int A^2(\psi)d\mu_{C((L^{-\frac{d}{2}}r+\D_L)e^{i\xi})}(\psi)\Bigg)\notag\\
&\quad + 
e^{\beta L^df_L(\D_L)}
\int_0^{2\pi}d\xi 
\int_{[0,\infty)\backslash[\D_L-\delta,\D_L+\delta]}dr
r e^{\beta L^d(f_L(r)-f_L(\D_L))}\notag\\
&\qquad \cdot \Bigg(
\int e^{-V(\psi)+W(\psi)}A^2(\psi)
d\mu_{C(re^{i\xi})}(\psi)\notag\\
&\qquad\qquad -\int e^{-V(\psi)+W(\psi)}d\mu_{C(re^{i\xi})}(\psi)
\int A^2(\psi)
d\mu_{C(re^{i\xi})}(\psi)\Bigg),\notag
\end{align}
where $\delta(\in \R_{>0})$ is that appearing in
\eqref{eq_L_dependent_taylor_expansion_1_D},  
\eqref{eq_hessian_uniform_bound_1D}, \eqref{eq_difference_from_maximum_1D}.
To prove convergent properties, we need to multiply different volume
 factors depending on whether $\D>0$ or $\D=0$. Lemma \ref{lem_grassmann_formulation_2_band}
 \eqref{item_integrand_time_continuum_limit}, the inequalities  
                 \eqref{eq_grassmann_formulations_all_bound},
                 \eqref{eq_difference_from_maximum_1D} and a variant of
 the inequality \eqref{eq_uniform_upper_bound_effective_potential}
 ensure that 
\begin{align}
&\lim_{L\to \infty\atop L\in\N}\lim_{h\to\infty\atop h\in
 \frac{2}{\beta}\N}
(1_{\D>0}L^{\frac{d}{2}}+1_{\D=0}L^d)
\int_0^{2\pi}d\xi 
\int_{[0,\infty)\backslash [\D_L-\delta,\D_L+\delta]}dr
r e^{\beta L^d(f_L(r)-f_L(\D_L))}\label{eq_4_point_formulation_vanishing_part}\\
&\qquad\cdot 
\int e^{-V(\psi)+W(\psi)}d\mu_{C(re^{i\xi})}(\psi)
\int A^2(\psi)d\mu_{C(re^{i\xi})}(\psi)=0,\notag\\
&\lim_{L\to \infty\atop L\in\N}\lim_{h\to\infty\atop h\in
 \frac{2}{\beta}\N}
(1_{\D>0}L^{\frac{d}{2}}+1_{\D=0}L^d)
\int_0^{2\pi}d\xi 
\int_{[0,\infty)\backslash [\D_L-\delta,\D_L+\delta]}dr
r e^{\beta L^d(f_L(r)-f_L(\D_L))}\notag\\
&\qquad \cdot \Bigg(
\int e^{-V(\psi)+W(\psi)}A^2(\psi)
d\mu_{C(re^{i\xi})}(\psi)\notag\\
&\qquad\qquad -\int e^{-V(\psi)+W(\psi)}d\mu_{C(re^{i\xi})}(\psi)
\int A^2(\psi)
d\mu_{C(re^{i\xi})}(\psi)\Bigg)=0.\notag
\end{align}
Moreover, it follows from Lemma \ref{lem_grassmann_formulation_2_band}
 \eqref{item_integrand_time_continuum_limit}, 
\eqref{eq_model_error_estimate},
Proposition \ref{prop_L_limit}, \eqref{eq_grassmann_formulations_all_bound},
 \eqref{eq_maximum_point_convergence_1D},
 \eqref{eq_hessian_uniform_bound_1D} and a variant of
 \eqref{eq_uniform_convergence_effective_potential}
 that
\begin{align}
&\lim_{L\to \infty\atop L\in\N}\lim_{h\to\infty\atop h\in
 \frac{2}{\beta}\N}
\int_0^{2\pi}d\xi 
\int_{-L^{d/2}\min\{\delta,\D_L\}}^{L^{d/2}\delta}dr
(1_{\D>0}(L^{-\frac{d}{2}}r+\D_L)+1_{\D=0}r)
e^{\beta \int_0^1dt (1-t)f_L''(tL^{-\frac{d}{2}}r+\D_L)r^2}\label{eq_4_point_formulation_surviving_part}\\
&\qquad\cdot
\int e^{-V(\psi)+W(\psi)}d\mu_{C((L^{-\frac{d}{2}}r+\D_L)e^{i\xi})}(\psi)
\int A^2(\psi)d\mu_{C((L^{-\frac{d}{2}}r+\D_L)e^{i\xi})}(\psi)\notag\\
&= \left(1_{\D>0}\int_{-\infty}^{\infty}dr\D+1_{\D=0}\int_0^{\infty}dr
r\right)e^{\frac{\beta}{2}f''(\D)r^2}\int_0^{2\pi}d\xi\notag\\ 
&\qquad\cdot \lim_{L\to \infty\atop L\in\N}
\lim_{h\to \infty\atop h\in\frac{2}{\beta}\N}
\int e^{-V(\psi)+W(\psi)}d\mu_{C(\D e^{i\xi})}(\psi)\notag\\
&\quad\cdot (-\beta)\lim_{L\to \infty\atop
 L\in\N}\det\left(\begin{array}{cc} C(\D)(1\hat{\bx}0,1\hat{\by} 0) &
	 C(\D)(1\b0 0, 2 \b0 0) \\
C(\D)(2 \b0 0,1\b0 0) &
	 C(\D)(2\hat{\by} 0, 2 \hat{\bx} 0)\end{array}\right),\notag\\
&\lim_{L\to \infty\atop L\in\N}\lim_{h\to\infty\atop h\in
 \frac{2}{\beta}\N}
\int_0^{2\pi}d\xi 
\int_{-L^{d/2}\min\{\delta,\D_L\}}^{L^{d/2}\delta}dr
(1_{\D>0}(L^{-\frac{d}{2}}r+\D_L)+1_{\D=0}r)
e^{\beta \int_0^1dt (1-t)f_L''(tL^{-\frac{d}{2}}r+\D_L)r^2}\notag\\
&\qquad\cdot\Bigg(
\int
 e^{-V(\psi)+W(\psi)}A^2(\psi)d\mu_{C((L^{-\frac{d}{2}}r+\D_L)e^{i\xi})}(\psi)\notag\\&\qquad\qquad -
\int e^{-V(\psi)+W(\psi)}d\mu_{C((L^{-\frac{d}{2}}r+\D_L)e^{i\xi})}(\psi)
\int A^2(\psi)d\mu_{C((L^{-\frac{d}{2}}r+\D_L)e^{i\xi})}(\psi)\Bigg)=0.\notag
\end{align}
For the same reason as above we have that 
\begin{align}
&\int_{\R^2}d\phi_1d\phi_2 e^{\beta L^df_L(|\phi|)}
 \int
 e^{-V(\psi)+W(\psi)}d\mu_{C(\phi)}(\psi)\label{eq_no_field_formulation_decomposition}\\
&=e^{\beta L^df_L(\D_L)}L^{-\frac{d}{2}}\int_0^{2\pi}d\xi 
\int_{-L^{d/2}\min\{\delta,\D_L\}}^{L^{d/2}\delta}dr
(L^{-\frac{d}{2}}r+\D_L)
e^{\beta \int_0^1dt (1-t) f_L''(tL^{-\frac{d}{2}}r+\D_L)r^2}\notag\\
&\qquad\cdot
\int e^{-V(\psi)+W(\psi)}d\mu_{C((L^{-\frac{d}{2}}r+\D_L)e^{i\xi})}(\psi)\notag\\
&\quad + 
e^{\beta L^df_L(\D_L)}
\int_0^{2\pi}d\xi 
\int_{[0,\infty)\backslash [\D_L-\delta,\D_L+\delta]}dr
r e^{\beta L^d(f_L(r)-f_L(\D_L))}\notag\\
&\qquad\cdot  
\int e^{-V(\psi)+W(\psi)}d\mu_{C(re^{i\xi})}(\psi)\notag
\end{align}
and
\begin{align}
&\lim_{L\to \infty\atop L\in\N}\lim_{h\to\infty\atop h\in
 \frac{2}{\beta}\N}
\int_0^{2\pi}d\xi 
\int_{-L^{d/2}\min\{\delta,\D_L\}}^{L^{d/2}\delta}dr
(1_{\D>0}(L^{-\frac{d}{2}}r+\D_L)+1_{\D=0}r)
e^{\beta \int_0^1dt (1-t) f_L''(tL^{-\frac{d}{2}}r+\D_L)r^2}\label{eq_no_field_formulation_surviving_part}\\
&\qquad\cdot
\int e^{-V(\psi)+W(\psi)}d\mu_{C((L^{-\frac{d}{2}}r+\D_L)e^{i\xi})}(\psi)\notag\\
&=\left(1_{\D>0}\int_{-\infty}^{\infty}dr\D+1_{\D=0}\int_0^{\infty}dr
r\right) e^{\frac{\beta}{2}f''(\D)r^2}\int_0^{2\pi}d\xi\notag\\ 
&\qquad\cdot\lim_{L\to \infty\atop L\in\N}
\lim_{h\to \infty\atop h\in\frac{2}{\beta}\N}
\int e^{-V(\psi)+W(\psi)}d\mu_{C(\D e^{i\xi})}(\psi),\notag\\
&\lim_{L\to \infty\atop L\in\N}\lim_{h\to\infty\atop h\in
 \frac{2}{\beta}\N}(1_{\D>0}L^{\frac{d}{2}}+1_{\D=0}L^d)
\int_0^{2\pi}d\xi \int_{[0,\infty)\backslash
 [\D_L-\delta,\D_L+\delta]}dr r
e^{\beta L^{d}(f_L(r)-f_L(\D_L))}\notag\\
&\qquad\cdot\int e^{-V(\psi)+W(\psi)}d\mu_{C(r e^{i\xi})}(\psi)=0.\notag
\end{align}
Furthermore, by \eqref{eq_model_good_smallness}
\begin{align}
\int_{0}^{2\pi}d\xi
\lim_{L\to \infty\atop L\in\N}
\lim_{h\to \infty\atop h\in\frac{2}{\beta}\N}
\int e^{-V(\psi)+W(\psi)}d\mu_{C(\D e^{i\xi})}(\psi)\neq
 0.\label{eq_correction_term_limit_non_zero_integral}
\end{align}
To prove the claim \eqref{eq_ODLRO}, we first change the order of the
 integration over $\R^2$ and the limit operation $h\to \infty$ in
 \eqref{eq_grassmann_formulation_2_band},
 \eqref{eq_grassmann_formulation_2_band_correlation}, which is justified
 by the uniform bound \eqref{eq_grassmann_formulations_all_bound} and the dominated
 convergence theorem, and then apply 
\eqref{eq_4_point_formulation_decomposition},
\eqref{eq_4_point_formulation_vanishing_part},
\eqref{eq_4_point_formulation_surviving_part},
\eqref{eq_no_field_formulation_decomposition},
\eqref{eq_no_field_formulation_surviving_part},
\eqref{eq_correction_term_limit_non_zero_integral}.
As the result,
\begin{align*}
&\lim_{L\to \infty\atop L\in\N}
\frac{\Tr (e^{-\beta
 (\sH+i\theta\sS_z)}\sA_2)}{\Tr e^{-\beta (\sH+i\theta\sS_z)}}\\
&=-\left(1_{\D>0}\int_{-\infty}^{\infty}dr\D+1_{\D=0}\int_0^{\infty}dr
r\right) e^{\frac{\beta}{2}f''(\D)r^2}\int_0^{2\pi}d\xi \notag\\
&\qquad\cdot \lim_{L\to \infty\atop L\in\N}
\lim_{h\to \infty\atop h\in\frac{2}{\beta}\N}
\int e^{-V(\psi)+W(\psi)}d\mu_{C(\D e^{i\xi})}(\psi)\notag\\
&\quad\cdot \lim_{L\to \infty\atop
 L\in\N}\det\left(\begin{array}{cc} C(\D)(1\hat{\bx}0,1\hat{\by} 0) &
	 C(\D)(1\b0 0, 2 \b0 0) \\
C(\D)(2 \b0 0,1\b0 0) &
	 C(\D)(2\hat{\by} 0, 2 \hat{\bx} 0)\end{array}\right)\notag\\
&\quad\cdot\Bigg/\Bigg(
\left(1_{\D>0}\int_{-\infty}^{\infty}dr\D+1_{\D=0}\int_0^{\infty}dr
r\right)e^{\frac{\beta}{2}f''(\D)r^2}\int_0^{2\pi}d\xi \notag\\
&\qquad\qquad\cdot\lim_{L\to \infty\atop L\in\N}
\lim_{h\to \infty\atop h\in\frac{2}{\beta}\N}
\int e^{-V(\psi)+W(\psi)}d\mu_{C(\D e^{i\xi})}(\psi)\Bigg)\notag\\
&=-\lim_{L\to \infty\atop
 L\in\N}\det\left(\begin{array}{cc} C(\D)(1\hat{\bx}0,1\hat{\by} 0) &
	 C(\D)(1\b0 0, 2 \b0 0) \\
C(\D)(2 \b0 0,1\b0 0) &
	 C(\D)(2\hat{\by} 0, 2 \hat{\bx} 0)\end{array}\right),\notag
\end{align*}
or by Lemma \ref{lem_explicit_calculation_covariance} and
 \eqref{eq_gap_equation_delta},
\begin{align*}
&\lim_{\|\hat{\bx}-\hat{\by}\|_{\R^d}\to\infty}
\lim_{L\to \infty\atop L\in\N}\frac{\Tr (e^{-\beta
 (\sH+i\theta\sS_z)}\sA_2)}{\Tr e^{-\beta (\sH+i\theta\sS_z)}}=\lim_{L\to \infty\atop L\in\N}C(\D)(1\b0 0, 2 \b0 0)
C(\D)(2 \b0 0,1\b0 0)
=\frac{\D^2}{U^2}.
\end{align*}
After reaching this equality we only need to repeat the same argument as
 in the end of the proof for SSB
 to complete the proof of the claim \eqref{eq_ODLRO} and the claim
 concerning ODLRO in \eqref{item_no_superconductivity}.

It remains to prove the claim \eqref{eq_free_energy_density}. Remark that by
 \eqref{eq_model_good_smallness}
\begin{align*}
&\Re \int_0^{2\pi}d\xi 
\int_{-L^{d/2}\min\{\delta,\D_L\}}^{L^{d/2}\delta}dr
(1_{\D>0}(L^{-\frac{d}{2}}r+\D_L)+1_{\D=0}r)
e^{\beta\int_0^1dt(1-t)f_L''(tL^{-\frac{d}{2}}r+\D_L)r^2}\notag\\
&\qquad\cdot
\int
 e^{-V(\psi)+W(\psi)}d\mu_{C((L^{-\frac{d}{2}}r+\D_L)e^{i\xi})}(\psi)\\
&\ge \pi \int_{-L^{d/2}\min\{\delta,\D_L\}}^{L^{d/2}\delta}dr
(1_{\D>0}(L^{-\frac{d}{2}}r+\D_L)+1_{\D=0}r)
e^{\beta\int_0^1dt(1-t)f_L''(tL^{-\frac{d}{2}}r+\D_L)r^2}>0,\\
&\Re \int_0^{2\pi}d\xi \int_{[0,\infty)\backslash
 [\D_L-\delta,\D_L+\delta]}dr r
e^{\beta L^{d}(f_L(r)-f_L(\D_L))}\int e^{-V(\psi)+W(\psi)}d\mu_{C(r
 e^{i\xi})}(\psi)\\
&\ge \pi\int_{[0,\infty)\backslash
 [\D_L-\delta,\D_L+\delta]}dr r
e^{\beta L^{d}(f_L(r)-f_L(\D_L))}>0
\end{align*}
for sufficiently large $L,h$. The following transformation based
 on \eqref{eq_no_field_formulation_decomposition} is justified.
\begin{align*}
&-\frac{1}{\beta L^d}\log\left(
\frac{\beta L^d}{\pi |U|}\int_{\R^2}d\phi_1d\phi_2
e^{\beta L^d f_L(|\phi|)}
\int e^{-V(\psi)+W(\psi)}d\mu_{C(\phi)}(\psi)
\right)\\
&=-\frac{1}{\beta L^d}\log\left(
\frac{\beta L^d}{\pi |U|}e^{\beta
 L^df_L(\D_L)}
(1_{\D>0}L^{-\frac{d}{2}}+1_{\D=0}L^{-d}) \right)\\
&\quad 
-\frac{1}{\beta L^d}\log\Bigg(
 \int_0^{2\pi}d\xi 
\int_{-L^{d/2}\min\{\delta,\D_L\}}^{L^{d/2}\delta}dr
(1_{\D>0}(L^{-\frac{d}{2}}r+\D_L)+1_{\D=0}r)\\
&\qquad\qquad\qquad\qquad\qquad\cdot e^{\beta\int_0^1dt(1-t)f_L''(tL^{-\frac{d}{2}}r+\D_L)r^2}
\int
 e^{-V(\psi)+W(\psi)}d\mu_{C((L^{-\frac{d}{2}}r+\D_L)e^{i\xi})}(\psi)\\
&\qquad\qquad\qquad+(1_{\D>0}L^{\frac{d}{2}}+1_{\D=0}L^d) \int_0^{2\pi}d\xi 
\int_{[0,\infty)\backslash [\D_L-\delta,\D_L+\delta]}dr r
e^{\beta L^{d}(f_L(r)-f_L(\D_L))}\\
&\qquad\qquad\qquad\qquad\qquad\cdot\int e^{-V(\psi)+W(\psi)}d\mu_{C(r e^{i\xi})}(\psi)
\Bigg).
\end{align*}
Then, by \eqref{eq_grassmann_formulation_2_band}, 
\eqref{eq_no_field_formulation_surviving_part}
and a variant of \eqref{eq_uniform_convergence_effective_potential},
\eqref{eq_maximum_point_convergence_1D},
\begin{align*}
&\lim_{L\to \infty\atop L\in \N}\left(-\frac{1}{\beta L^d}\log\left(
\frac{\Tr e^{-\beta (\sH+i\theta\sS_z)}}{\Tr e^{-\beta
 (\sH_0+i\theta\sS_z)}}\right)\right)\\
&=\lim_{L\to \infty\atop L\in \N}
\lim_{h\to \infty\atop h\in \frac{2}{\beta}\N}
\left(-\frac{1}{\beta L^d}\log\left(
\frac{\beta L^d}{\pi |U|}\int_{\R^2}d\phi_1d\phi_2
e^{\beta L^d f_L(|\phi|)}
\int e^{-V(\psi)+W(\psi)}d\mu_{C(\phi)}(\psi)
\right)\right)\\
&=-\lim_{L\to\infty\atop L\in \N}f_L(\D_L)=-f(\D).
\end{align*}
By combining this with \eqref{eq_free_partition_function} we finally
 obtain the equality \eqref{eq_free_energy_density}. 
\end{proof}

\appendix
\section{Proof of Proposition \ref{prop_P_S_bound}}\label{app_P_S_bound}

Here we provide a short proof of Proposition \ref{prop_P_S_bound} for 
readers' convenience. We should remark that the proof below is
essentially a digest of the general construction of \cite{PS}.
First let us recall a simple fact based on the Cauchy-Binet formula.
\begin{lemma}\label{lem_application_cauchy_binet}
Assume that $n\times n$ matrices $A=(A(i,j))_{1\le i,j\le n}$,
 $B=(B(i,j))_{1\le i,j\le n}$ satisfy that
$$
|\det(A(k_i,l_j))_{1\le i,j\le m}|\le D_A^{2m},\quad 
|\det(B(k_i,l_j))_{1\le i,j\le m}|\le D_B^{2m}
$$
with $D_A$, $D_B\in \R_{\ge 0}$ for any $\{k_i\}_{i=1}^m$,
 $\{l_i\}_{i=1}^m\subset\{1,2,\cdots,n\}$ satisfying $k_1<\cdots<k_m$,
 $l_1<\cdots<l_m$. Then,
$$
|\det(A+B)|\le (D_A+D_B)^{2n}.
$$
\end{lemma}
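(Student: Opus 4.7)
The plan is to reduce $\det(A+B)$ to a sum of products of subdeterminants of $A$ and $B$, to which the hypotheses apply directly, and then close with a binomial estimate. The basic tools are multilinearity of the determinant in its rows and the Laplace (generalized cofactor) expansion along several rows.

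Concretely, for each $S\subseteq\{1,\cdots,n\}$ let $M^S$ denote the $n\times n$ matrix whose $i$-th row is the $i$-th row of $A$ if $i\in S$ and the $i$-th row of $B$ otherwise. Multilinearity in rows gives
\begin{align*}
\det(A+B)=\sum_{S\subseteq\{1,\cdots,n\}}\det(M^S).
\end{align*}
For each fixed $S$, the Laplace expansion of $\det(M^S)$ along the rows indexed by $S$ yields
\begin{align*}
\det(M^S)=\sum_{T\subseteq\{1,\cdots,n\},\,|T|=|S|}\pm\det(A[S,T])\det(B[S^c,T^c]),
\end{align*}
where $A[S,T]$ (resp.\ $B[S^c,T^c]$) denotes the submatrix of $A$ (resp.\ $B$) with rows indexed by $S$ (resp.\ $S^c$) and columns indexed by $T$ (resp.\ $T^c$). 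By the hypotheses we have $|\det(A[S,T])|\le D_A^{2|S|}$ and $|\det(B[S^c,T^c])|\le D_B^{2|S^c|}$, since $S,T,S^c,T^c$ are precisely subsets of $\{1,\cdots,n\}$ of equal cardinalities. Taking absolute values and noting that there are $\binom{n}{k}$ choices of $S$ of size $k$ and, for each such $S$, $\binom{n}{k}$ choices of $T$ of size $k$, one obtains
\begin{align*}
|\det(A+B)|\le\sum_{k=0}^n\binom{n}{k}^2 D_A^{2k}D_B^{2(n-k)}.
\end{align*}

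The closing step is the combinatorial inequality $\binom{n}{k}^2\le\binom{2n}{2k}$, which follows from the Vandermonde identity $\binom{2n}{2k}=\sum_{j=0}^{2k}\binom{n}{j}\binom{n}{2k-j}$ by retaining only the (nonnegative) term $j=k$. Combined with the binomial theorem this gives
\begin{align*}
\sum_{k=0}^n\binom{n}{k}^2 D_A^{2k}D_B^{2(n-k)}\le\sum_{k=0}^n\binom{2n}{2k}D_A^{2k}D_B^{2(n-k)}\le (D_A+D_B)^{2n},
\end{align*}
which completes the proof. The only step requiring care is the sign bookkeeping in the Laplace expansion, but since we immediately pass to absolute values the signs are irrelevant; the argument is thus essentially a two-level expansion followed by a binomial count, and the main obstacle—if any—is simply organizing the sums so that the hypothesized bounds fit the right subdeterminants.
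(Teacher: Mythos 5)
Your proof is correct. It reaches exactly the same intermediate bound as the paper, namely
\begin{align*}
|\det(A+B)|\le\sum_{k=0}^n\binom{n}{k}^2D_A^{2k}D_B^{2(n-k)},
\end{align*}
and closes it with the same two estimates $\binom{n}{k}^2\le\binom{2n}{2k}$ and the binomial theorem. The only difference is how that intermediate sum is produced: the paper writes $A+B=(\begin{array}{cc}A&I_n\end{array})\bigl(\begin{array}{c}I_n\\ B\end{array}\bigr)$ and applies the Cauchy--Binet formula to this block factorization, whereas you expand $\det(A+B)$ by multilinearity in the rows and then apply the Laplace expansion along the rows coming from $A$. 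These are two standard derivations of the same classical identity expressing $\det(A+B)$ as a signed sum of products of complementary minors of $A$ and $B$; your route is slightly more elementary and self-contained (no named formula needed), while the paper's is a one-line invocation of Cauchy--Binet. One small point worth making explicit: the hypotheses are stated for nonempty index sets, so the terms with $k=0$ and $k=n$ in your sum rely on the convention that the determinant of an empty matrix equals $1$, which matches $D^0=1$; this is harmless but deserves a word.
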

\begin{proof}
By applying the Cauchy-Binet formula to the decomposition
$$
A+B=(\begin{array}{cc}A &
	  I_n\end{array})\left(\begin{array}{c}I_n \\ B \end{array}\right)
$$
we observe that 
\begin{align*}
&|\det(A+B)|\\
&\le \sum_{\g:\{1,\cdots,n\}\to \{1,\cdots,2n\}\atop
 \text{with }\g(1)<\cdots<\g(n)}|\det((\begin{array}{cc}A &
	  I_n\end{array})(i,\g(j)))_{1\le i,j\le n}|\left|
\det\left(\left(\begin{array}{c}I_n \\ B
		\end{array}\right)(\g(i),j)\right)_{1\le i,j\le
 n}\right|\\
&\le\sum_{m=0}^n\sum_{\g:\{1,\cdots,n\}\to \{1,\cdots,2n\}\atop
 \text{with }\g(1)<\cdots<\g(n)}1_{\g(m)\le
 n<\g(m+1)}D_A^{2m}D_B^{2(n-m)}=\sum_{m=0}^n\left(\begin{array}{c}n \\ m\end{array}\right)^2
D_A^{2m}D_B^{2(n-m)}\\
&\le \sum_{m=0}^n\left(\begin{array}{c}2n \\ 2m\end{array}\right)
D_A^{2m}D_B^{2(n-m)}\le (D_A+D_B)^{2n},
\end{align*}
where we set $\g(0):=0$, $\g(n+1):=n+1$.
\end{proof}

\begin{proof}[Proof of Proposition \ref{prop_P_S_bound}]
Take any $m,n\in \N$, $\bu_i,\bv_i\in\C^m$ with $\|\bu_i\|_{\C^m}$,
$\|\bv_i\|_{\C^m}\le 1$, $(\rho_i,\bx_i,s_i),(\eta_i,\by_i,t_i)\in
 \{1,2\}\times \G\times [0,\beta)$ $(i=1,2,\cdots,n)$, $j\in
 \{1,2\}$. Define the $n\times n$ matrix $M=(M_{k,l})_{1\le k,l\le n}$ by
\begin{align*}
&M_{k,l}:=\<\bu_k,\bv_l\>_{\C^m}1_{s_k\ge
 t_l}\<f_j^{\ge}(\rho_k\bx_ks_k), g_j^{\ge}(\eta_l\by_l
 t_l)\>_{\cH},\quad (l,k=1,2,\cdots,n).
\end{align*}
Let us prove that $|\det M|\le D^{2n}$. By permutating rows and columns
 if necessary we may assume that $s_1\ge \cdots \ge s_n$, $t_1\ge \cdots
 \ge t_n$. By the assumption of continuity the function 
\begin{align*}
&(\eps_1,\cdots,\eps_n,\delta_1,\cdots,\delta_n)\mapsto \\
&|\det(\<\bu_k,\bv_l\>_{\C^m}1_{s_k+\eps_k\ge t_l-\delta_l}
\<f_j^{\ge}(\rho_k\bx_k(s_k+\eps_k)),g_j^{\ge}(\eta_l\by_l(t_l-\delta_l))\>_{\cH})_{1\le
 k,l\le n}|\\
&:\R^{2n}_{\ge 0}\to\R
\end{align*}
is continuous at $\b0$. Thus, we can choose real sequences
 $(s_k^p)_{p=1}^{\infty}$, $(t_k^p)_{p=1}^{\infty}$ $(k=1,2,\cdots,n)$
such that
$s_1^{p}>\cdots> s_n^p$, $t_1^{p}>\cdots> t_n^p$,
$\{s_k^p\}_{k=1}^n\cap \{t_k^p\}_{k=1}^n=\emptyset$ for any $p\in\N$ and 
\begin{align*}
\lim_{p\to \infty}|\det(\<\bu_k,\bv_l\>_{\C^m}1_{s_k^p\ge t_l^p}
\<f_j^{\ge}(\rho_k\bx_k s_k^p),g_j^{\ge}(\eta_l\by_l t_l^p)\>_{\cH})_{1\le
 k,l\le n}|=|\det M|.
\end{align*}
Thus, by keeping in mind that we perform the limit operation in the end
 we may also assume that 
$s_1>\cdots> s_n$, $t_1>\cdots> t_n$,
$\{s_k\}_{k=1}^n\cap \{t_k\}_{k=1}^n=\emptyset$.

Define the vectors $f(s_k)$, $g(t_k)$ $(k=1,2,\cdots,n)$ of
 $\C^m\otimes\cH$ by 
$f(s_k):=\bu_k\otimes f_j^{\ge}(\rho_k\bx_k s_k)$, 
$g(t_k):=\bv_k\otimes g_j^{\ge}(\eta_k\by_k t_k)$. Let $\hat{\cH}$ be the
 finite-dimensional subspace of $\C^m\otimes \cH$ spanned by $f(s_k)$, $g(t_k)$ $(k=1,2,\cdots,n)$. For $f\in \hat{\cH}$ let $a(f)$ $(a(f)^*)$ be
 the annihilation (creation) operator on the Fermionic Fock space
 $F_f(\hat{\cH})$. It is well-known (see e.g. \cite[\mbox{Subsection 5.2.1}]{BR})
 that 
\begin{align}
&\{a(f),a(g)^*\}=\<f,g\>_{\hat{\cH}},\label{eq_CAR}\\
&\|a(f)\|_{\cB(F_f(\hat{\cH}))}=\|a(f)^*\|_{\cB(F_f(\hat{\cH}))}=\|f\|_{\hat{\cH}},\quad
 (\forall f,g\in \hat{\cH}),\label{eq_fermionic_operator_norm}
\end{align}
where $\|\cdot\|_{\cB(F_f(\hat{\cH}))}$ is the operator norm for operators on 
$F_f(\hat{\cH})$. 
For $(b,\xi)\in (\{s_k\}_{k=1}^n\times
 \{-1\})\cup (\{t_k\}_{k=1}^n\times \{1\})$ we set $a_{(b,\xi)}:=a(f(b))$
 if $\xi=-1$, $a(g(b))^*$ if $\xi=1$. Let $l\in \{1,\cdots,2n\}$. For any
 distinct $(b_1,\xi_1),\cdots,(b_l,\xi_l)\in (\{s_k\}_{k=1}^n\times \{-1\})\cup (\{t_k\}_{k=1}^n\times \{1\})$
there uniquely exists $\s\in \S_l$ such that
$b_{\s(1)}> b_{\s(2)}>\cdots >b_{\s(l)}$.  Then, we set
$$
\bT(a_{(b_1,\xi_1)}\cdots a_{(b_l,\xi_l)}):=\sgn(\s)a_{(b_{\s(1)},\xi_{\s(1)})}\cdots a_{(b_{\s(l)},\xi_{\s(l)})}.
$$

Let us prove that 
\begin{align}
\det M=(-1)^{\frac{n(n-1)}{2}}\<\hat{\O},\bT(a(f(s_1))\cdots a(f(s_n))a(g(t_1))^*\cdots a(g(t_n))^*)\hat{\O}\>_{F_f(\hat{\cH})}\label{eq_target_equality_P_S_proof}
\end{align}
by induction with $n$, where $\hat{\O}$ denotes the vacuum and
 $\<\cdot,\cdot\>_{F_f(\hat{\cH})}$ is the inner product of $F_f(\hat{\cH})$. It 
 clearly holds for $n=1$. Let us assume that it holds for $n-1$ with
 $n\ge 2$. If
 $t_1>s_1$,
\begin{align*}
&(\text{R.H.S of }\eqref{eq_target_equality_P_S_proof})\\
&=(-1)^{\frac{n(n-1)}{2}+n}
\<\hat{\O},a(g(t_1))^*\bT(a(f(s_1))\cdots a(f(s_n))a(g(t_2))^*\cdots
 a(g(t_n))^*)\hat{\O}\>_{F_f(\hat{\cH})}\\
&=0=\det M.
\end{align*}
Consider the case that $t_1\le s_1$. Then, there exists $k\in
 \{1,2,\cdots,n\}$ such that $(k\le n-1)\land (s_k>t_1>s_{k+1})$ or
 $(k=n)\land (s_k>t_1)$. Then, by \eqref{eq_CAR} and the induction
 hypothesis,
\begin{align*}
&(\text{R.H.S of }\eqref{eq_target_equality_P_S_proof})\\
&=(-1)^{\frac{n(n-1)}{2}+n-k}
\<\hat{\O},a(f(s_1))\cdots a(f(s_k))
a(g(t_1))^*\\
&\qquad\qquad\qquad\qquad\quad\cdot \bT(a(f(s_{k+1}))\cdots a(f(s_n))a(g(t_2))^*\cdots
 a(g(t_n))^*)\hat{\O}\>_{F_f(\hat{\cH})}\\
&=\sum_{l=1}^k(-1)^{\frac{n(n-1)}{2}+n+l}\<f(s_l),g(t_1)\>_{\hat{\cH}}\\
&\quad\cdot \<\hat{\O},a(f(s_1))\cdots a(f(s_{l-1}))a(f(s_{l+1}))\cdots a(f(s_k))\\
&\qquad\quad\cdot \bT(a(f(s_{k+1}))\cdots a(f(s_n))a(g(t_2))^*\cdots
 a(g(t_n))^*)\hat{\O}\>_{F_f(\hat{\cH})}\\
&=\sum_{l=1}^k(-1)^{\frac{n(n-1)}{2}+n+l}\<f(s_l),g(t_1)\>_{\hat{\cH}}\\
&\quad\cdot\<\hat{\O},\bT(a(f(s_1))\cdots a(f(s_{l-1}))a(f(s_{l+1}))\cdots a(f(s_n))
a(g(t_2))^*\cdots
 a(g(t_n))^*)\hat{\O}\>_{F_f(\hat{\cH})}\\
&=\sum_{l=1}^k(-1)^{l+1}M_{l,1}\det(M_{p,q})_{1\le p\le n,p\neq l\atop 2\le
 q\le n}
=\det M.
\end{align*}
Here we used that
$$
(-1)^{\frac{n(n-1)}{2}+n+l+\frac{(n-1)(n-2)}{2}}=(-1)^{l+1}.
$$
Thus, by induction \eqref{eq_target_equality_P_S_proof} holds for any
 $n\in \N$.

Then, by using \eqref{eq_fermionic_operator_norm} we can derive from
 \eqref{eq_target_equality_P_S_proof} that
\begin{align*}
&|\det M|\le \prod_{k=1}^n\|f_j^{\ge}(\rho_k\bx_k
 s_k)\|_{\cH}\|g_j^{\ge}(\eta_k\by_k t_k)\|_{\cH}\le D^{2n}.
\end{align*}
A parallel argument shows that 
$$
|\det(\<\bu_k,\bv_l\>_{\C^m}1_{s_k< t_l}
\<f_j^{<}(\rho_k\bx_k s_k),g_j^{<}(\eta_l\by_l t_l)\>_{\cH})_{1\le
 k,l\le n}|\le D^{2n}.
$$
In fact in this case we may assume that $s_1<\cdots<s_n$,
 $t_1<\cdots<t_n$, $\{s_k\}_{k=1}^n\cap \{t_k\}_{k=1}^n=\emptyset$ by
 the continuity argument. Then we only need to define $\bT$ to arrange
 in the opposite order. Now coming back to the decomposition
 \eqref{eq_P_S_representation}, we can repeatedly apply Lemma
 \ref{lem_application_cauchy_binet} to derive the claimed inequality.
\end{proof} 

\section*{Acknowledgments}
This work was supported by JSPS KAKENHI Grant Number 
26870110.

\section*{Notation}
\subsection*{Parameters and constants}
\begin{center}
\begin{tabular}{lll}
Notation & Description & Reference \\
\hline
$d$ & spatial dimension & Subsection \ref{subsec_main_results}\\ 
$L$ & size of the spatial lattice & Subsection
 \ref{subsec_main_results}\\
$hop$ & 0 or 1, parameter to determine sign of hopping & Subsection
 \ref{subsec_main_results}\\  
$\mu$  & chemical potential & Subsection \ref{subsec_main_results}\\
$U$  &  negative coupling constant  & Subsection \ref{subsec_main_results}\\ 
$\g$  & magnitude of symmetry breaking external field & Subsection \ref{subsec_main_results}\\
$\beta$ & inverse temperature  & Subsection \ref{subsec_main_results} \\ 
$\theta$  & magnitude of imaginary magnetic field & Subsection \ref{subsec_main_results}\\
$\la_1,\la_2$ & artificial parameters & Subsection \ref{subsec_one_band}\\
$h$ & inverse step size of time-discretization & Subsection \ref{subsec_one_band}\\ 
$N$ & $4\beta h L^d$, cardinality of $I$ & Subsection
 \ref{subsec_one_band}\\
$\Theta$ & $|\theta/2-\pi/\beta|$ & beginning of Section
 \ref{sec_proof_theorem}
\end{tabular}
\end{center}

\subsection*{Sets and spaces}
\begin{center}
\begin{tabular}{lll}
Notation & Description & Reference \\
\hline
$\G$ & $\{0,1,\cdots,L-1\}^d$ &  Subsection \ref{subsec_main_results}\\
$\G^*$ & $\{0,\frac{2\pi}{L},\frac{2\pi}{L}\cdot 2,\cdots, \frac{2\pi}{L}(L-1)\}^d$ &  Subsection \ref{subsec_one_band}\\
$[0,\beta)_h$ & $\{0,\frac{1}{h},\frac{2}{h},\cdots,\beta-\frac{1}{h}\}$
 & Subsection \ref{subsec_one_band}\\
$D(r)$ & $\{z\in\C\ |\ |z|<r\}$ 
 & Subsection \ref{subsec_one_band}\\
$I_0$ & $\{1,2\}\times\G\times [0,\beta)_h$ 
 & Subsection \ref{subsec_two_band}\\
$I$ & $I_0\times\{1,-1\}$ & Subsection \ref{subsec_two_band}\\
$\cV$ & complex vector space spanned by $\{\psi_X\}_{X\in I}$ &
 Subsection \ref{subsec_two_band}\\
$\bigwedge\cV$ & Grassmann algebra generated by $\{\psi_X\}_{X\in I}$ & Subsection \ref{subsec_two_band}\\
$I^0$ & $\{1,2\}\times\G\times \{0\}\times \{1,-1\}$ 
 & Subsection \ref{subsec_preliminaries}\\
$\bigwedge_{even}\cV$ & Subspace of $\bigwedge \cV$ consisting of even
 polynomials & Subsection \ref{subsec_preliminaries}\\
$\Map(A,B)$ & set of maps from $A$ to $B$ & Subsection \ref{subsec_UV_without_artificial}
\end{tabular}
\end{center}

\newpage
\subsection*{Functions and maps}
\begin{center}
\begin{tabular}{lll}
Notation & Description & Reference \\
\hline
$r_L$ & map from $\Z^d$ to $\G$ &  Subsection
 \ref{subsec_main_results}\\
$\sH_0$  & $\sum_{\bx\in \G,\s\in
 \spin}((-1)^{hop}\sum_{j=1}^d(\psi_{\bx\s}^*\psi_{\bx+\be_j\s}+\psi_{\bx\s}^*\psi_{\bx-\be_j\s})$ & Subsection \ref{subsec_main_results}\\
 & $\qquad\qquad\qquad -\mu\psi_{\bx\s}^*\psi_{\bx\s})$ & \\
$\sV$  & $\frac{U}{L^d}\sum_{\bx,\by\in
 \G}\psi_{\bx\ua}^*\psi_{\bx\da}^*\psi_{\by\da}\psi_{\by\ua}$ &
 Subsection \ref{subsec_main_results}\\
$\sH$ & $\sH_0+\sV$ & Subsection \ref{subsec_main_results}\\
$\sF$ & $\g \sum_{\bx\in
 \G}(\psi_{\bx\ua}^*\psi_{\bx\da}^*+\psi_{\bx\da}\psi_{\bx\ua})$ &
 Subsection \ref{subsec_main_results}\\
$\sS_z$ &
 $\frac{1}{2}\sum_{\bx\in\G}(\psi_{\bx\ua}^*\psi_{\bx\ua}-\psi_{\bx\da}^*\psi_{\bx\da})$
& Subsection \ref{subsec_main_results}\\
$\sA_1$ & $\psi_{\hat{\bx}\ua}^*\psi_{\hat{\bx}\da}^*$ &
 Subsection \ref{subsec_main_results}\\
$\sA_2$ & $\psi_{\hat{\bx}\ua}^*\psi_{\hat{\bx}\da}^*\psi_{\hat{\by}\da}\psi_{\hat{\by}\ua}$ &
Subsection \ref{subsec_main_results}\\
$e(\cdot)$ & $(-1)^{hop}2\sum_{j=1}^d\cos k_j-\mu$, free dispersion relation & Subsection
 \ref{subsec_main_results}\\
$g_d$ & function to control possible magnitude of &
 Equation \eqref{eq_magnitude_function}\\
 & coupling constant & \\
$C(\phi)$ & 2-band free covariance parameterized by $\phi\in\C$ &
  Equation \eqref{eq_covariance_2_band}\\
$E(\phi)$ & $(2\times 2)$-matrix-valued function parameterized &  Equation \eqref{eq_dispersion_matrix} \\
 & by $\phi\in\C$ &  \\
$Tree(S,\cC)$ & operator consisting of Grassmann left-derivatives &
 Subsection \ref{subsec_preliminaries}\\
$r_{\beta}$ & map from $\frac{1}{h}\Z$ to $[0,\beta)_h$ &
 Subsection \ref{subsec_general_estimation}\\
$\cR_{\beta}$ & map from $(\{1,2\}\times \G\times \frac{1}{h}\Z\times
 \{1,-1\})^n$ to $I^n$ & Subsection \ref{subsec_general_estimation}\\
   & or from $(\{1,2\}\times \G\times \frac{1}{h}\Z)^n$ to $I_0^n$ & 
\end{tabular}
\end{center}

\subsection*{Norms and semi-norms}
\begin{center}
\begin{tabular}{lll}
Notation & Description & Reference \\
\hline
$\|\cdot\|_{1,\infty}$ & integrating with all but one fixed variable &  Subsection
 \ref{subsec_preliminaries}\\
$\|\cdot\|_{1}$ & integrating with all variables &  Subsection
 \ref{subsec_preliminaries}\\
$\|\cdot\|_{1,\infty}'$ & norm defined on anti-symmetric function on $I^2$ &  Subsection
 \ref{subsec_preliminaries}\\
$\|\cdot\|$ & $\|\cdot\|_{1,\infty}'+\beta^{-1}\|\cdot\|_{1,\infty}$  &
 Subsection \ref{subsec_preliminaries}\\
$[\cdot,\cdot]_{1,\infty}$ & measurement of function on $I^m\times I^n$
 & Subsection
 \ref{subsec_preliminaries}\\
  & coupled with a function on $I^2$ &  \\
$[\cdot,\cdot]_{1}$ & measurement of function on $I^m\times I^n$ & Subsection
 \ref{subsec_preliminaries}\\
  & coupled with a function on $I^2$ & \\
$\|\cdot\|_{1,\infty,r}$ &
 $\sup_{u\in\overline{D(r)}}\|f(u)\|_{1,\infty}$ & Subsection
 \ref{subsec_UV_without_artificial}\\
$\|\cdot\|_{1,r,r'}$ &
 $\sup_{u\in\overline{D(r)},\bla\in\overline{D(r')}^2}\|f(u,\bla)\|_{1}$ & Subsection \ref{subsec_UV_with_artificial}
\end{tabular}
\end{center}

\subsection*{Other notations}
\begin{center}
\begin{tabular}{lll}
Notation & Description & Reference \\
\hline
$\be_j\ (j=1,\cdots,d)$ & standard basis of $\R^d$ &  Subsection
 \ref{subsec_main_results}\\
$V(\psi)$ & sum of quadratic and quartic polynomials &  Equation
 \eqref{eq_2_band_Grassmann_V} \\ 
 &  of $\bigwedge \cV$ &   \\
$W(\psi)$ & quartic polynomial of $\bigwedge \cV$ &  Equation \eqref{eq_2_band_Grassmann_W} \\
$A^1(\psi)$ & quadratic polynomial of $\bigwedge \cV$ &  Equation \eqref{eq_2_band_Grassmann_A_parts} \\
$A^2(\psi)$ & quartic polynomial of $\bigwedge \cV$ &  Equation \eqref{eq_2_band_Grassmann_A_parts} \\
$A(\psi)$ & $\la_1A^1(\psi)+\la_2A^2(\psi)$ &  Equation \eqref{eq_artificial_Grassmann_term} \\
$d_j(T)$ & degree of vertex $j$ in tree $T$ & Subsection
 \ref{subsec_general_estimation}
\end{tabular}
\end{center}

\end{document}